\theoremstyle{plain}
\newtheorem{thm}{Theorem}
\newtheorem{prop}[thm]{Proposition}
\newtheorem{lemma}[thm]{Lemma}
\newtheorem{cor}[thm]{Corollary}
\theoremstyle{definition}
\newtheorem{definition}[thm]{Definition}
\newtheorem{conjecture}[thm]{Conjecture}
\newtheorem{remark}[thm]{Remark}
\newtheorem{remarks}[thm]{Remarks}
\newtheorem{ind}[thm]{Inductive assumption}
\newcommand{\tn}[1]{\ensuremath{\mathbb{T}^{#1}}}
\newcommand{\rn}[1]{\ensuremath{\mathbb{R}^{#1}}}
\newcommand{\so}{\ensuremath{\mathbb{S}^{1}}}
\newcommand{\ro}{\ensuremath{\mathbb{R}}}
\newcommand{\no}{\ensuremath{\mathbb{N}}}
\newcommand{\zo}{\mathbb{Z}}
\newcommand{\tr}{\mathrm{tr}}
\newcommand{\Ein}{\mathrm{Ein}}
\newcommand{\tw}{\mathrm{tw}}
\newcommand{\pre}{\mathrm{pre}}
\newcommand{\rfr}{\mathrm{ref}}
\newcommand{\vel}{\mathrm{vel}}
\newcommand{\bas}{\mathrm{bas}}
\newcommand{\dS}{\mathrm{dS}}
\newcommand{\sca}{\mathrm{sc}}
\renewcommand{\a}{\alpha}
\renewcommand{\b}{\beta}
\newcommand{\g}{\gamma}
\newcommand{\G}{\Gamma}
\newcommand{\bga}{\bar{\gamma}}
\newcommand{\bn}{\overline{\nabla}}
\newcommand{\de}{\delta}
\renewcommand{\d}{\partial}
\newcommand{\dvt}{\d_{\vartheta}}
\newcommand{\me}{\mathcal{E}}
\newcommand{\mH}{\mathcal{H}}
\newcommand{\mQ}{\mathcal{Q}}
\newcommand{\mG}{\mathcal{G}}
\newcommand{\sfR}{\mathsf{R}}
\newcommand{\ma}{\mathcal{A}}
\newcommand{\hma}{\hat{\ma}}
\newcommand{\mP}{\mathcal{P}}
\newcommand{\irn}[1]{\int_{\rn{#1}}}
\newcommand{\is}{\int_{\so}}
\newcommand{\bx}{\bar{x}}
\newcommand{\bg}{\bar{g}}
\newcommand{\bk}{\bar{k}}
\newcommand{\ldr}[1]{\langle #1\rangle}
\newcommand{\hZ}{\hat{Z}}
\newcommand{\hPs}{\hat{\Psi}}
\newcommand{\hE}{\hat{E}}
\newcommand{\hF}{\hat{F}}
\newcommand{\hQ}{\hat{\mQ}}
\newcommand{\hl}{\hat{\lambda}}
\renewcommand{\l}{\lambda}
\newcommand{\e}{\epsilon}
\newcommand{\bsfp}{\bar{\mathsf{p}}}
\newcommand{\bsfx}{\bar{\mathsf{x}}}
\newcommand{\bsfy}{\bar{\mathsf{y}}}
\newcommand{\bsfz}{\bar{\mathsf{z}}}
\newcommand{\sfg}{\mathsf{g}}
\newcommand{\bsfg}{\bar{\sfg}}
\newcommand{\bxi}{\bar{\xi}}
\newcommand{\sfk}{\mathsf{k}}
\newcommand{\bsfk}{\bar{\sfk}}
\newcommand{\mff}{\mathfrak{f}}
\newcommand{\bmff}{\bar{\mathfrak{f}}}
\newcommand{\pros}[1]{\mathrm{pr}_{#1}}
\newcommand{\bp}{\bar{p}}
\newcommand{\bz}{\bar{z}}
\newcommand{\sffb}{\bar{\mathsf{f}}}
\newcommand{\roV}{\mathrm{Vl}}
\newcommand{\rosf}{\mathrm{sf}}
\newcommand{\krov}{K_{\roV}}
\newcommand{\fb}{\bar{f}}
\newcommand{\mup}[1]{\mu_{\mP_{#1}}}
\newcommand{\ip}[1]{\int_{\mP_{#1}}}
\newcommand{\ml}{\mathcal{L}}
\newcommand{\maco}[1]{\ensuremath{\mathsf{\Xi}_{\mathsf{#1}}}}
\newcommand{\bmaco}[1]{\ensuremath{\bar{\mathsf{\Xi}}_{\bar{\mathsf{#1}}}}}
\newcommand{\sfp}{\mathsf{p}}
\newcommand{\sfx}{\mathsf{x}}
\newcommand{\sff}{\mathsf{f}}
\newcommand{\dm}{\mathfrak{D}}
\newcommand{\lbpr}{\langle \bp\rangle}
\newcommand{\bsff}{\bar{\sff}}
\newcommand{\dhy}{\bar{\mathfrak{D}}}
\newcommand{\bchi}{\bar{\chi}}
\newcommand{\bvr}{\bar{\varrho}}
\newcommand{\current}{\bar{J}}
\newcommand{\br}{\bar{r}}
\newcommand{\bX}{\bar{X}}
\newcommand{\backg}{\mathrm{bg}}
\newcommand{\einj}{\e_{\mathrm{inj}}}
\newcommand{\Ric}{\mathrm{Ric}}
\newcommand{\stp}{r}
\newcommand{\bphi}{\bar{\phi}}
\newcommand{\rom}{\mathrm{m}}
\newcommand{\assh}{\bar{\mathfrak{v}}}
\newcommand{\sfh}{\mathsf{h}}
\newcommand{\bq}{\bar{q}}
\newcommand{\bsfh}{\bar{\sfh}}
\newcommand{\bmfq}{\bar{\mathfrak{q}}}
\newcommand{\sfhb}{\bar{\mathsf{h}}}
\begin{document}

\title{Proof of the cosmic no-hair conjecture in the 
$\tn{3}$-Gowdy symmetric Einstein-Vlasov setting}
\author{H\aa kan Andr\'{e}asson and Hans Ringstr\"{o}m}
\maketitle
\begin{abstract}
The currently preferred models of the universe undergo accelerated expansion
induced by dark energy. One model for dark energy is a positive cosmological 
constant. It is consequently of interest to 
study Einstein's equations with a positive cosmological constant coupled to 
matter satisfying the ordinary energy conditions; the dominant 
energy condition etc. Due to the difficulty of analysing the behaviour of 
solutions to Einstein's equations in general, it is common to either study
situations with symmetry, or to prove stability results. In the present paper,
we do both. In fact, we analyse, in detail, the future asymptotic behaviour
of $\tn{3}$-Gowdy symmetric solutions to the Einstein-Vlasov equations with 
a positive cosmological constant. In particular, we prove the cosmic no-hair
conjecture in this setting. However, we also prove that the solutions are
future stable (in the class of all solutions). Some of the results hold in 
a more general setting. In fact, we obtain conclusions concerning the 
causal structure of $\tn{2}$-symmetric solutions, assuming only the presence
of a positive cosmological constant, matter satisfying various energy conditions
and future global existence. Adding the assumption of $\tn{3}$-Gowdy symmetry
to this list of requirements, we obtain $C^{0}$-estimates for all but one of 
the metric components. There is consequently reason to expect that many of the 
results presented in this paper can be generalised to other types of matter. 
\end{abstract}

\section{Introduction}

At the end of 1998, two research teams studying supernovae of type Ia 
announced the unexpected conclusion that the universe is expanding at an 
accelerating rate; cf. \cite{retal,petal99}. After the observations had 
been corroborated by other sources, there was a corresponding shift in the class
of solutions to Einstein's equations used to model the universe. In particular,
physicists attributed the acceleration to a form of matter they referred to 
as 'dark energy'. However, as the nature of the dark energy remains unclear, 
there are several models for it. The simplest one is that of a positive
cosmological constant (which is the one we use in the present paper), but
there are several other possibilities; cf., e.g., 
\cite{alan,aslowroll,akessence} and references cited therein for some 
examples. Combining the different observational data, the currently preferred
model of the universe is spatially homogeneous and isotropic (i.e., the 
cosmological principle is assumed to be valid), 
spatially flat, and has matter of the following forms: ordinary matter
(usually modelled by a radiation fluid and dust), dark matter (often modelled
by dust), and dark energy (often modelled by a positive cosmological constant). 
In the present paper, we are interested in the Einstein-Vlasov system. This 
corresponds to a different description of the matter than the one usually 
used. However, this system can also be used in order
to obtain models consistent with observations; cf., e.g., 
\cite[Chapter~28]{stab}. In fact, Vlasov matter has the property that it 
naturally behaves as radiation close to the singularity and as dust in the 
expanding direction, a desirable feature which is usually put in by hand when 
using perfect fluids to model the matter. 

\textbf{The cosmic no-hair conjecture.}
The standard starting point in cosmology is the 
assumption of spatial homogeneity and isotropy. However, it is 
preferable to prove that solutions generally isotropise and that the
spatial variation (as seen by observers) becomes negligible. This is 
expected to happen in the presence of a positive cosmological constant;
in fact, solutions are in that case expected to appear de Sitter like to 
observers at late times. The latter expectation goes under the name of 
the \textit{cosmic no-hair conjecture}; cf. Conjecture~\ref{conj:conoha} for
a precise formulation. The main objective when studying
the expanding direction of solutions to Einstein's equations with a 
positive cosmological constant is to verify this conjecture. 

\textbf{Spatial homogeneity.} Turning to the results that have been obtained
in the past, it is natural to begin
with the spatially homogeneous setting. In 1983, Robert Wald wrote a short,
but remarkable, paper \cite{wald2}, in which he proves results concerning 
the future asymptotic behaviour of spatially homogeneous solutions to 
Einstein's equations with a positive cosmological constant. In particular, 
he confirms that the cosmic no-hair conjecture holds. What is remarkable
about the paper is the fact that he is able to obtain conclusions assuming
only that certain energy conditions hold and that the solution does not break 
down in finite time. Concerning the symmetry type, the only issue that 
comes up in the argument is whether it is compatible with 
the spatial hypersurfaces of homogeneity having positive scalar curvature or
not; positive scalar curvature of these hypersurfaces sometimes leads to 
recollapse. The results should be contrasted with the case of Einstein's 
vacuum equation in the spatially homogeneous setting, where the behaviour is 
strongly dependent on the symmetry type. Since Wald does not prove future 
global existence, 
it is necessary to carry out a further analysis in order to confirm the 
picture obtained in \cite{wald2} in specific cases. In the case of the 
Einstein-Vlasov system, this was done in \cite{hayounglambda}. It is also
of interest to note that it is possible to prove results analogous to those
of Wald for more general models for dark energy; cf., e.g., 
\cite{alan,aslowroll,akessence,hayoung}.

\textbf{Surface symmetry.}
Turning to the spatially inhomogeneous setting, there are results in the 
surface symmetric case with a positive cosmological constant; cf. 
\cite{TR,tan, tch1,letc}, and see \cite{alancs} 
for a definition of surface symmetry. In this case, the isometry group 
(on a suitable covering space) is $3$-dimensional. Nevertheless, the system
of equations that result after symmetry reduction is $1+1$-dimensional.
However, the extra symmetries do eliminate some of the degrees of freedom. 
Again, the main results are future causal geodesic completeness and a 
verification of the cosmic no-hair conjecture. 

\textbf{$\tn{2}$-symmetry.}
A natural next step to take after surface symmetry is to consider Gowdy
or $\tn{2}$-symmetry. That is the purpose of the present paper. In particular, 
we prove future causal geodesic completeness of solutions to the 
$\tn{3}$-Gowdy symmetric Einstein-Vlasov equations with a positive cosmological
constant (note, however, the caveat concerning global existence stated in 
Subsection~\ref{subsection:ge}). Moreover, we verify that the cosmic no-hair 
conjecture holds. It is
of interest to note that most of the arguments go through under the assumption
of $\tn{2}$-symmetry. However, in order to obtain the full picture in this 
setting, it is necessary to prove one crucial inequality, cf. 
Definition~\ref{def:las}, which we have not yet been able to do in general.

\textbf{Stability.} A fundamental question in the study of cosmological 
solutions is that of future stability: given initial data corresponding
to an expanding solution, do small perturbations thereof yield maximal globally
hyperbolic developments which are future causally geodesically complete and 
globally similar to the future? In the case of a positive cosmological 
constant, the first result was obtained by Helmut Friedrich; he proved 
stability of de Sitter space in $3+1$ dimensions in \cite{f}. Later, he 
and Michael Anderson generalised the result to higher (even) dimensions and 
to include various matter fields; cf. \cite{f3,anderson05}. Moreover, results 
concerning radiation fluids were obtained in \cite{lav}. However, conformal
invariance plays an important role in the arguments presented in
these papers. As a consequence, 
there seems to be a limitation of the types of matter models that can be 
dealt with using the corresponding methods. The paper \cite{mig4} was 
written with the goal of developing methods that are more generally 
applicable. The papers \cite{mig5,s,ras,speck12,speck12a,ial}, in which 
the methods developed in \cite{mig4} play a central role, indicate that 
this goal was achieved. In fact, a general future global non-linear stability
result for spatially homogeneous solutions to the Einstein-Vlasov equations
with a positive cosmological constant was obtained in \cite{stab}, the 
ideas developed in \cite{mig4} being at the core of the argument. In the 
present paper, we not only derive detailed future asymptotics of 
$\tn{3}$-Gowdy symmetric solutions to the Einstein-Vlasov equations with a 
positive cosmological constant. We also prove that all the resulting solutions
are future stable in the class of all solutions (without symmetry assumptions). 

\textbf{Outlook.} As we describe in the next subsection, some of the results 
concerning $\tn{3}$-Gowdy symmetric solutions hold irrespective of the matter 
model (as long as it satisfies the dominant energy condition and the 
non-negative pressure condition). As a consequence, we expect that it might 
be possible to derive detailed asymptotics in the case of the Einstein-Maxwell 
equations (with a positive cosmological constant), and in the case of the
Einstein-Euler system (though the issue of shocks may be relevant in the latter
case). Due to the stability results demonstrated in 
\cite{s,ras,speck12, speck12a}, it might also be possible to prove stability
of the corresponding solutions. 

\subsection{General results under the assumption of 
$\tn{2}$-symmetry}\label{subsection:ge}

\textbf{$\tn{2}$-symmetry.}
In the present paper, we are interested in $\tn{2}$-symmetric solutions 
to Einstein's equations. There are various geometric ways of imposing 
this type of symmetry (cf., e.g., \cite{c90,smulevici}), but for the purposes 
of the present paper, we simply assume the topology to be of the form 
$I\times\tn{3}$, where $I$ is an open interval contained in $(0,\infty)$. If 
$\theta$, $x$ and $y$ are 'coordinates' on $\tn{3}$ and $t$ is the 
coordinate on $I$, we also assume the metric to be of the form 
\begin{equation}\label{eq:metric}
g=t^{-1/2}e^{\lambda/2}(-dt^{2}+\a^{-1}d\theta^{2})+
te^{P}[dx+Qdy+(G+QH)d\theta]^{2}+
te^{-P}(dy+Hd\theta)^{2},
\end{equation}
where the functions $\a>0$, $\lambda$, $P$, $Q$, $G$ and $H$ only depend on 
$t$ and $\theta$; cf., e.g., \cite{smulevici}. Note that translation in the 
$x$ and $y$ directions defines 
a smooth action of $\tn{2}$ on the spacetime (as well as on each constant 
$t$-hypersurface). Moreover, the metric is invariant under this action, and the 
corresponding orbits are referred to as the symmetry orbits, given by 
$\{t\}\times\{\theta\}\times\tn{2}$. Note that the area of the symmetry orbits 
is proportional to $t$. For this reason, the foliation of the spacetime 
corresponding to the metric form (\ref{eq:metric}) is referred to as the 
constant areal time foliation. The case of $\tn{3}$-Gowdy symmetry
corresponds to the functions $G$ and $H$ being independent of time; again, 
there is a more geometric way of formulating this condition: the spacetime
is said to be Gowdy symmetric if the so-called \textit{twist quantities},
given by 
\begin{equation}\label{eq:twq}
J=\e_{\a\b\g\de}X^{\a}Y^{\b}\nabla^{\g}X^{\de},\ \ \
K=\e_{\a\b\g\de}X^{\a}Y^{\b}\nabla^{\g}Y^{\de},
\end{equation}
vanish, where $X=\d_{x}$ and $Y=\d_{y}$ are Killing fields of the above
metric and $\e$ is the volume form. A basic question to ask concerning
$\tn{2}$-symmetric solutions to Einstein's equations is whether the maximal 
globally hyperbolic development of initial data admits a constant areal time
foliation which is future global. There is a long history of proving such 
results. The first one was obtained by Vincent Moncrief, cf. \cite{m81},
in the case of vacuum solutions with $\tn{3}$-Gowdy symmetry. The case of 
$\tn{2}$-symmetric vacuum solutions with and without a positive cosmological 
constant have also been considered in \cite{cai} and \cite{bciam} respectively.
Turning to 
Vlasov matter, \cite{andreasson} contains an analysis of the existence of 
foliations in the $\tn{3}$-Gowdy symmetric Einstein-Vlasov setting. The
corresponding results were later extended to the $\tn{2}$-symmetric case in 
\cite{arw}. However, from our point of view, the most relevant result is that
of \cite{smulevici}. Due to the results of this paper, there is, given 
$\tn{2}$-symmetric initial 
data to the Einstein-Vlasov equations with a positive cosmological constant,
a future global foliation of the spacetime of the form (\ref{eq:metric}).
In other words $I=(t_{0},\infty)$.
Moreover, if the distribution function is not identically zero, then $t_{0}=0$.
Finally, if the initial data have Gowdy symmetry, then the same is true of the 
development. Strictly speaking, the future global existence result in 
\cite{smulevici} is based on the observation that the argument should not 
be significantly different from the proofs in \cite{bciam,cai,arw}. 
It would be preferable to have a complete proof of future global existence in 
the case of interest here, but we shall not provide it in this paper. 

\textbf{Results.} Turning to the results, it is of interest to note that some 
of the conclusions
can be obtained without making detailed assumptions concerning the matter 
content. For that reason, let us, for the remainder of this subsection, 
assume that we have a solution to Einstein's equations with
a positive cosmological constant, where the metric is of the form 
(\ref{eq:metric}), the existence interval $I$ is of the form $(t_{0},\infty)$
and the matter satisfies the dominant energy condition and the non-negative
pressure condition; recall that the matter is said to satisfy the 
\textit{dominant energy condition} if $T(u,v)\geq 0$ for all pairs $u,v$ of 
future directed timelike vectors (where $T$ is the stress energy tensor
associated with the matter); and that it is said to satisfy
the \textit{non-negative pressure condition} if $T(w,w)\geq 0$ for every 
spacelike vector $w$. To begin with, 
there is a constant $C>0$ such that $\a(t,\theta)\leq Ct^{-3}$ for all 
$(t,\theta)\in [t_{0}+2,\infty)\times\so$; cf. Proposition~\ref{prop:abd}.
In fact, this conclusion also holds if we replace the cosmological constant 
with a non-linear scalar field with a positive lower bound; cf. 
Remark~\ref{remark:nlsf}. One particular consequence of this estimate for
$\a$ is that the $\theta$-coordinate
of a causal curve converges. Moreover, observers whose $\theta$-coordinates
converge to different $\theta$-values are asymptotically unable to communicate.
In this sense, there is asymptotic silence. In the case of Gowdy symmetry, 
more can be deduced. In fact, for every $\e>0$, there is a $T>t_{0}$ such 
that 
\[
\l(t,\theta)\geq-3\ln t+2\ln\left(\frac{3}{4\Lambda}\right)-\e
\]
for all $(t,\theta)\in [T,\infty)\times\so$; cf. Proposition~\ref{prop:llb}.
This estimate turns out to be of crucial importance also in the general 
$\tn{2}$-symmetric case. For this reason, we introduce the following 
terminology. 

\begin{definition}\label{def:las}
A metric of the form (\ref{eq:metric}) which is defined for $t> t_{0}$
for some $t_{0}\geq 0$ is said to have $\lambda$-\textit{asymptotics} if 
there, for every $\e>0$, is a $T>t_{0}$ such that
\[
\l(t,\theta)\geq-3\ln t+2\ln\left(\frac{3}{4\Lambda}\right)-\e
\]
for all $(t,\theta)\in [T,\infty)\times\so$. 
\end{definition}
\begin{remark}
All Gowdy solutions have $\lambda$-asymptotics under the above assumptions;
cf. Proposition~\ref{prop:llb}.
\end{remark}

\begin{prop}
Consider a $\tn{2}$-symmetric solution to Einstein's equations with a 
positive cosmological constant. Assume that the matter satisfies the dominant 
energy condition and the non-negative pressure condition. Assume, moreover, 
that the corresponding metric admits a foliation of the form (\ref{eq:metric}),
on $I\times\tn{3}$, where $I=(t_{0},\infty)$ and $t_{0}\geq 0$. Finally, 
assume that the solution has $\lambda$-asymptotics and let $t_{1}=t_{0}+2$. 
Then there is a constant $C>0$ such that 
\begin{eqnarray*}
\left\|\lambda(t,\cdot)+3\ln t-2\ln\frac{3}{4\Lambda}\right\|_{C^{0}}
 & \leq & Ct^{-1/2},\\
t^{-3/2}\ldr{\a^{-1/2}(t,\cdot)}+
\|Q(t,\cdot)\|_{C^{0}}+\|P(t,\cdot)\|_{C^{0}} & \leq & C,\\
\|H_{t}(t,\cdot)\|_{L^{1}}+\|G_{t}(t,\cdot)\|_{L^{1}} & \leq & Ct^{-3/2}
\end{eqnarray*}
for all $(t,\theta)\in [t_{1},\infty)\times\so$. 
\end{prop}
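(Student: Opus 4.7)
The plan is to use the $\lambda$-asymptotics hypothesis as the starting point and bootstrap upward, combining it with the already-established bound $\alpha\le Ct^{-3}$ from Proposition \ref{prop:abd} and the Einstein evolution equations associated with the metric (\ref{eq:metric}) to extract matching upper bounds and decay rates. The argument should split naturally into three pieces: sharpening the lower bound on $\lambda$ into a two-sided bound; extracting quantitative decay for $H_{t}, G_{t}$ together with $C^{0}$ control of $P, Q$ via a polarization energy; and finally recovering the lower bound on $\alpha$ needed for the $\alpha^{-1/2}$ estimate.

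\textbf{Sharpening the $\lambda$-bound.} I would first write down the Einstein evolution equation for $\lambda_{t}$ associated with the metric (\ref{eq:metric}). Schematically this equation has a leading part $-(3/t) + \Lambda t^{-1/2}e^{\lambda/2}$ that drives $\lambda$ toward the de Sitter value $-3\ln t + 2\ln(3/(4\Lambda))$, plus non-negative kinetic contributions from $P, Q, H, G$, plus a matter term whose sign is controlled by the non-negative pressure and dominant energy conditions. Using the $\lambda$-asymptotics lower bound, together with $\alpha\le Ct^{-3}$, to control the error terms, and then integrating from $t_{1}$ to $t$, should yield the matching $O(t^{-1/2})$ upper bound on $\lambda + 3\ln t - 2\ln(3/(4\Lambda))$.

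\textbf{Decay of kinetic energies and $C^{0}$ bounds on $P, Q$.} Next I would introduce the natural polarization energy
\[
\me(t) = \int_{\so}\left[\alpha^{1/2}t^{1/2}(P_{\theta}^{2} + e^{2P}Q_{\theta}^{2}) + \alpha^{-1/2}t^{-1/2}(P_{t}^{2} + e^{2P}Q_{t}^{2})\right]d\theta + (\text{twist contributions}),
\]
differentiate it in time using the Einstein-matter equations, and exploit the dominant energy and non-negative pressure conditions to produce a differential inequality of the schematic form $\dot{\me}\le -(c/t)\me + (\text{decaying source})$. This should give power-law decay of $\me$, which combined with the two-sided bound on $\lambda$ from the first step converts into $L^{2}(\so)$-in-$\theta$ control of $P_{t}, e^{P}Q_{t}, H_{t}, G_{t}$ with a rate integrable in time. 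Integrating $|P_{t}|, |Q_{t}|$ in time then gives the $C^{0}$ bounds on $P$ and $Q$, while applying Cauchy-Schwarz against weights dictated by the $\alpha$ and $\lambda$ scalings yields the stated $L^{1}$ decay $\|H_{t}\|_{L^{1}} + \|G_{t}\|_{L^{1}}\le Ct^{-3/2}$.

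\textbf{Recovering the $\alpha$ lower bound and main obstacle.} To bound $t^{-3/2}\langle\alpha^{-1/2}(t,\cdot)\rangle$, I would use the Hamiltonian constraint, which algebraically relates $(\ln\alpha)_{\theta}$ (and $(\ln\alpha)_{t}$) to the kinetic energies of $P, Q, H, G$ and to the matter energy density. Feeding in the decay from the second step together with the $\lambda$-asymptotics, and integrating in $\theta$ and in $t$, should give the averaged lower bound $\alpha\gtrsim t^{-3}$. I expect the main difficulty to lie in the first step: the $\lambda$-asymptotics only supply an $\e$-loss at each late time, and promoting this to a quantitative $Ct^{-1/2}$ two-sided bound requires the detailed structure of the Einstein equations rather than just the signs coming from energy conditions. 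Moreover, the kinetic terms appearing in the $\lambda_{t}$ equation are themselves only controlled once the polarization energy from the second step is in hand, so the three steps will most likely be combined into a single bootstrap in which crude estimates are iteratively improved to the sharp powers claimed.
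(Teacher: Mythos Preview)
Your overall shape (energy plus bootstrap) matches the paper, but the order and the specific mechanisms are different in ways that matter.

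\textbf{The $\lambda$ upper bound cannot come first.} In (\ref{eq:altlteq}) the kinetic terms $t[P_t^2+\alpha P_\theta^2+e^{2P}(Q_t^2+\alpha Q_\theta^2)]$ and the pressure term $4t^{1/2}e^{\lambda/2}P_1$ are non-negative, so the equation for $\lambda_t$ only gives a lower bound pointwise; the $\lambda$-asymptotics hypothesis and $\alpha\le Ct^{-3}$ alone do not control these terms. The paper therefore \emph{first} proves the integrated energy bound $E(t)\le C_a t^a$ for $a>1/2$ (Lemma~\ref{lemma:Ebest}), using only the $\lambda$-asymptotics and the energy conditions applied to (\ref{eq:denden}). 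This energy is read off directly from the constraint (\ref{eq:Ezz}); it grows rather than decays, contrary to what you expect from your $\mathcal{E}$. Only after this does the paper obtain the two-sided $\lambda$ bound, and the key step there is not a pointwise ODE but the momentum constraint (\ref{eq:altltheq}), which yields $\int_{\so}|\lambda_\theta|\,d\theta\le Ct^{a-1}$ and hence reduces the problem to an ODE for the \emph{mean} $\langle\hat\lambda\rangle$. Your proposal does not mention this constraint, and without it the pointwise argument in your Step~1 does not close.

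\textbf{The $H_t,G_t$ bound goes through the twist quantities.} Your plan is to include $H_t,G_t$ in the polarization energy and read off $L^1$ decay. The paper instead passes to the twist quantities $J,K$ of (\ref{eq:JKdef}); their evolution equations (\ref{eq:Jth})--(\ref{eq:Kt}) have right-hand sides controlled by $\rho$ via the dominant energy condition, and combining $\int|J_\theta|,\int|J_t|$ estimates with the energy bound gives $\|J\|_{C^0},\|K-QJ\|_{C^0}\le Ct$. The $L^1$ bound on $H_t,G_t$ then follows from (\ref{eq:JKdef}) together with the previously established bounds on $\langle\alpha^{-1/2}\rangle$, $\lambda$, $P$, $Q$. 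This is a structurally different route from yours and relies on the specific form of the $J,K$ equations, not on a generic energy identity.

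\textbf{The $\alpha$ lower bound.} This is not obtained from the Hamiltonian constraint as you suggest, but by integrating the evolution equation (\ref{eq:altateq}) for $\partial_t\langle\alpha^{-1/2}\rangle$; the energy bound controls the twist and matter terms on the right-hand side, and the $\lambda$ estimate handles the $\Lambda$ term. In fact this step precedes the $P,Q$ bounds in the paper, because the Cauchy--Schwarz argument $\int|P_\theta|\le(\int\alpha^{1/2}P_\theta^2)^{1/2}(\int\alpha^{-1/2})^{1/2}$ needs $\langle\alpha^{-1/2}\rangle\le Ct^{3/2}$ as input.
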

\begin{remark}
The choice $t_{1}=t_{0}+2$ may seem unnatural. However, we need to stay
away from $t_{0}$ (since we do not control the solution close to $t_{0}$).
Moreover, in some situations we need to know that $\ln t$ is positive and
bounded away from zero. Since $t_{0}=0$ for most solutions, it is therefore
natural to only consider the interval $t\geq t_{0}+2$ in the study of the 
future asymptotics. 
\end{remark}
\begin{remark}\label{remark:average}
If $h$ is a scalar function on $\so$, we use the notation 
\begin{equation}\label{eq:ldrhdef}
\ldr{h}=\frac{1}{2\pi}\is hd\theta.
\end{equation}
Sometimes, we shall use the same notation for a scalar function
$h$ on $I\times\so$. In that case, $\ldr{h}$ is the function of $t$
defined by $\ldr{h(t,\cdot)}$. Finally, if $\bp\in\rn{3}$, we shall 
also use the notation $\ldr{\bp}$. However, in that case, 
$\ldr{\bp}=(1+|\bp|^{2})^{1/2}$; cf. Remark~\ref{remark:ws}.
\end{remark}
\begin{proof}
The statement is a consequence of Lemmas~\ref{lemma:lasstone},
\ref{lemma:pqfb} and \ref{lemma:twistest}.
\end{proof}
In particular, in the case of a $\tn{3}$-Gowdy symmetric solution,
there is asymptotic silence in the sense that the $\theta x y$-coordinates
of a causal curve converge, and causal curves whose asymptotic 
$\theta x y$-coordinates differ are asymptotically unable to communicate; cf. 
Proposition~\ref{prop:causal}.

\subsection{Results in the Einstein-Vlasov setting}\label{ssection:revl}

In order to be able to draw detailed conclusions, we need to restrict our
attention to a specific type of matter. In the present paper, we study the 
Einstein-Vlasov system. 

\textbf{A general description of Vlasov matter.}
Intuitively, Vlasov matter gives a statistical 
description of an ensemble of collections of particles. In practice, the 
matter is described by a distribution function defined on the space of 
states of particles. The possible states are given by the future directed
causal vectors (here and below, we assume the Lorentz manifolds under 
consideration to be time oriented). Usually, one distinguishes between massive 
and massless particles. In the latter case, the distribution function is 
defined on the future light cone, and in the former case, it is defined on 
the interior. In the present paper, we are interested in the massive case, 
and we assume all the particles to have unit mass (for a description of 
how to reduce the case of varying masses to the 
case of all particles having unit mass, see \cite{cah}). As a consequence, 
the distribution
function is a non-negative function on the mass shell $\mP$, defined to be
the set of future directed unit timelike vectors. In order to connect the 
matter to Einstein's equations, we need to associate a stress energy tensor
with the distribution function. It is given by 
\begin{equation}\label{eq:vset}
T^{\roV}_{\a\b}(\stp)=\ip{\stp} f p_{\a}p_{\b}\mup{\stp}.
\end{equation}
In this expression, $\mP_{\stp}$ denotes the set of future directed unit timelike
vectors based at the spacetime point $\stp$. In other words, if 
$T_{\rfr}\in T_{\stp}M$ is a future directed timelike vector, then 
\[
\mP_{\stp}=\{v\in T_{\stp}M\ |\ g(v,v)=-1,\ g(T_{\rfr},v)<0\}.
\]
Moreover, the Lorentz metric $g$ induces a Riemannian metric on $\mP_{\stp}$, 
and $\mup{\stp}$ denotes the corresponding volume form; cf. 
(\ref{eq:muprcor}) below for a coordinate representation of $\mup{\stp}$. 
Finally, $p_{\a}$ denotes the components of the 
one form obtained by lowering the index of $p\in \mP_{\stp}$ using the 
Lorentz metric $g$. Clearly, it is necessary to demand some degree of 
fall off of the distribution function $f$ in order for the integral 
(\ref{eq:vset}) to be well defined. In the present paper, we shall mainly
be interested in the case that the distribution function has compact 
support in the momentum directions (for a fixed spacetime point). However, 
in Subsections~\ref{ssection:stabno}--\ref{ssection:stable} we shall consider 
a somewhat more general
situation. Turning to the equation the distribution function has to satisfy, 
it is given by 
\begin{equation}\label{eq:gve}
\ml f=0.
\end{equation}
Here $\ml$ denotes the vector field induced on the mass shell by the 
geodesic flow; cf. (\ref{eq:mlcor}) below for a coordinate representation of
it. An alternate way to formulate this equation is to demand that
$f$ be constant along $\dot{\g}$ for every future directed unit timelike
geodesic $\g$. The intuitive interpretation of the Vlasov equation
(\ref{eq:gve}) is that collisions between particles are neglected. 
It is of interest to note that if $f$ satisfies the Vlasov equation, then 
the stress energy tensor is divergence free. To conclude, the 
\textit{Einstein-Vlasov equations with a positive cosmological constant} 
consist of (\ref{eq:gve}) and 
\begin{equation}\label{eq:efe}
\Ein+\Lambda g=T^{\roV},
\end{equation}
where $T^{\roV}$ is given by the right hand side of (\ref{eq:vset}) and $\Lambda$
is a positive constant. Moreover, 
\[
\Ein=\mathrm{Ric}-\frac{1}{2}Sg
\]
is the Einstein tensor, where 
$\mathrm{Ric}$ is the Ricci tensor and
$S$ is the scalar curvature of a Lorentz 
manifold $(M,g)$. The above description is somewhat brief, and the reader 
interested in more details is referred to, e.g.,
\cite{ehlers,rendall-97,andreassonLR,stab}. 

\textbf{Vlasov matter under the assumption of $\tn{2}$-symmetry.}
In the case of $\tn{2}$-symmetry, it is convenient to use a
symmetry reduced version of the distribution function. Introduce, to this
end, the orthonormal frame
\begin{equation}\label{eq:ONframe}
\begin{split}
e_{0} = & t^{1/4}e^{-\lambda/4}\d_{t},\ \ \
e_{1}=t^{1/4}e^{-\lambda/4}\a^{1/2}(\d_{\theta}-G\d_{x}-H\d_{y}),\\
e_{2} = & t^{-1/2}e^{-P/2}\d_{x},\ \ \
e_{3}=t^{-1/2}e^{P/2}(\d_{y}-Q\d_{x}).
\end{split}
\end{equation}
Since the distribution function $f$ is defined on the mass shell, it is 
convenient to parametrise this set; note that the manifolds we are
interested in here are parallelisable (i.e., they have a global frame). An 
element in $\mP$ can be written $v^{\a}e_{\a}$, where 
\[
v^{0}=[1+(v^{1})^{2}+(v^{2})^{2}+(v^{3})^{2}]^{1/2}.
\]
As a consequence, we can think of $f$ as depending on $v^{i}$, $i=1,2,3$, and 
the base point. However, due to the symmetry requirements, the distribution 
function only depends on the $t\theta$-coordinates of the base point. As a 
consequence, the distribution function can be considered to be a function of 
$(t,\theta,v)$, where $v=(v^{1}, v^{2}, v^{2})$. In what follows, we shall abuse
notation and denote the symmetry reduced function, defined on 
$I\times\so\times\rn{3}$, by $f$.  A symmetry reduced version of
the equations is to be found in Section~\ref{section:syeq}. 

\begin{remark}\label{remark:cptsupp}
In the $\tn{2}$-symmetric setting, we always assume the distribution function 
$f$ to have compact support when restricted to constant $t$-hypersurfaces. Under
the assumptions made in the present paper, $f$ has this 
property, assuming the initial datum for $f$ to have compact support. 
\end{remark}

The first question to ask concerning $\tn{2}$-symmetric solutions is that of 
existence of constant areal time foliations for an interval of the form 
$(t_{0},\infty)$. However, due to previous results, cf. \cite{smulevici}, we
know that $\tn{2}$-symmetric solutions to the Einstein-Vlasov equations with 
a positive cosmological constant are future global in this setting (keeping the
caveat stated in Subsection~\ref{subsection:ge} in mind). In other words, there
is a $t_{0}\geq 0$
such that the solution admits a foliation of the form (\ref{eq:metric})
on $I\times\tn{3}$, where $I=(t_{0},\infty)$. Consequently, the issue of 
interest here is that of the asymptotics. Unfortunately, we are unable to 
derive detailed asymptotics for all $\tn{2}$-symmetric solutions. However, 
we do obtain results for solutions with $\lambda$-asymptotics; recall that
all $\tn{3}$-Gowdy symmetric solutions fall into this class. 

\begin{thm}\label{thm:as}
Consider a $\tn{2}$-symmetric solution to the Einstein-Vlasov system with
a positive cosmological constant. Choose coordinates so that the corresponding
metric takes the form (\ref{eq:metric}) on $I\times\tn{3}$, where 
$I=(t_{0},\infty)$. Assume that the solution has $\lambda$-asymptotics
and let $t_{1}=t_{0}+2$. Then 
there are smooth functions $\a_{\infty}>0$, $P_{\infty}$, $Q_{\infty}$, $G_{\infty}$ 
and $H_{\infty}$ on $\so$, and, for every $0\leq N\in\zo$, a constant 
$C_{N}>0$ such that
\begin{eqnarray}
t\|H_{t}(t,\cdot)\|_{C^{N}}+t\|G_{t}(t,\cdot)\|_{C^{N}}+
\|H(t,\cdot)-H_{\infty}\|_{C^{N}}+\|G(t,\cdot)-G_{\infty}\|_{C^{N}}
 & \leq & C_{N}t^{-3/2},\label{eq:HGest}\\
t\|P_{t}(t,\cdot)\|_{C^{N}}+t\|Q_{t}(t,\cdot)\|_{C^{N}}+
\|P(t,\cdot)-P_{\infty}\|_{C^{N}}+\|Q(t,\cdot)-Q_{\infty}\|_{C^{N}}
 & \leq & C_{N}t^{-1},\label{eq:PQest}\\
\left\|\frac{\a_{t}}{\a}+\frac{3}{t}\right\|_{C^{N}}
+\left\|\lambda_{t}+\frac{3}{t}\right\|_{C^{N}} & \leq & C_{N}t^{-2},
\label{eq:altest}\\
\left\|t^{3}\a(t,\cdot)-\a_{\infty}\right\|_{C^{N}}+
\left\|\lambda(t,\cdot)+3\ln t-2\ln\frac{3}{4\Lambda}
\right\|_{C^{N}} & \leq & C_{N}t^{-1},\label{eq:alest}
\end{eqnarray}
for all $t\geq t_{1}$. Define $f_{\sca}$ via
$f_{\sca}(t,\theta,v)=f(t,\theta,t^{-1/2}v)$. Then there is an $R>0$ such that 
\[
\mathrm{supp} f_{\sca}(t,\cdot)\subseteq\so\times B_{R}(0)
\]
for all $t\geq t_{1}$, where $B_{R}(0)$ is the ball of radius $R$ in 
$\rn{3}$. Moreover, there is a smooth, non-negative function with compact 
support, say $f_{\sca,\infty}$, on $\so\times\rn{3}$, such that 
\[
t\|\d_{t}f_{\sca}(t,\cdot)\|_{C^{N}(\so\times\rn{3})}+
\|f_{\sca}(t,\cdot)-f_{\sca,\infty}\|_{C^{N}(\so\times\rn{3})}\leq C_{N}t^{-1}
\]
for all $t\geq t_{1}$. Turning to the geometry, let $\bg(t,\cdot)$ and
$\bk(t,\cdot)$ denote the metric and second fundamental form induced by 
$g$ on the hypersurface $\{t\}\times\tn{3}$, and let $\bg_{ij}(t,\cdot)$ 
denote the components of $\bg(t,\cdot)$ with respect to the vectorfields
$\d_{1}=\d_{\theta}$, $\d_{2}=\d_{x}$ and $\d_{3}=\d_{y}$ etc. Then 
\begin{equation}\label{eq:bgbkest}
\|t^{-1}\bar{g}_{ij}(t,\cdot)-\bar{g}_{\infty,ij}\|_{C^{N}}
+\|t^{-1}\bk_{ij}-\mH\bg_{\infty,ij}\|_{C^{N}}\leq C_{N}t^{-1}
\end{equation}
for all $t\geq t_{1}$, where $\mH=(\Lambda/3)^{1/2}$ and
\begin{equation}\label{eq:bginfdef}
\bar{g}_{\infty}=\frac{3}{4\Lambda\a_{\infty}}d\theta^{2}
+e^{P_{\infty}}[dx+Q_{\infty}dy+(G_{\infty}+Q_{\infty}H_{\infty})d\theta]^{2}+
e^{-P_{\infty}}(dy+H_{\infty}d\theta)^{2}.
\end{equation}
Moreover, the solution is future causally geodesically complete. 
\end{thm}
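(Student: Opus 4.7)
The strategy is a bootstrap starting from the zeroth-order information provided by the previous proposition: $\a\le Ct^{-3}$, $\|P\|_{C^{0}}+\|Q\|_{C^{0}}\le C$, $|\l+3\ln t-2\ln(3/(4\Lambda))|\le Ct^{-1/2}$ and $\|H_{t}\|_{L^{1}}+\|G_{t}\|_{L^{1}}\le Ct^{-3/2}$. I would handle the twist functions first, then the wave-type variables $P,Q$, then the scaled distribution function together with the sharpened lapse and area coefficient, and finally extract the geometric statements and geodesic completeness. In the $\tn{2}$-symmetric reduction the equations for $G_{t}$ and $H_{t}$ are essentially algebraic, expressing these quantities in terms of the conserved twist quantities \eqref{eq:twq}, the combination $\a^{1/2}e^{\l/2}$, and a Vlasov current. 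Feeding in $\a\le Ct^{-3}$ and the lower bound on $\l$ upgrades the pointwise statement to $|H_{t}|+|G_{t}|\le Ct^{-3/2}$, and integration in $t$ produces smooth limits $G_{\infty},H_{\infty}$ on $\so$ at the same rate. Commuting $\d_{\th}$ through these expressions and closing by induction on the order of differentiation yields the $C^{N}$ version of \eqref{eq:HGest}.

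\emph{Wave variables.} The reduced Einstein-Vlasov system gives semilinear wave equations for $P$ and $Q$ of the schematic form
\[
u_{tt}+t^{-1}u_{t}-\a\,u_{\th\th}=\mathcal{N}(u,\d u,f),
\]
where the spatial principal part carries the small weight $\a=O(t^{-3})$. I would work with the natural energy
\[
E(t)=\tfrac{1}{2}\is\bigl(P_{t}^{2}+e^{2P}Q_{t}^{2}+\a P_{\th}^{2}+\a e^{2P}Q_{\th}^{2}\bigr)d\th
\]
and show, using the bounds on $\a, \l$ and the momentum bound for $f$ obtained below, that $\tfrac{d}{dt}E\le -t^{-1}E+Ct^{-3}$; this integrates to $E(t)\le Ct^{-2}$, and reinsertion into the equations produces the pointwise $t^{-1}$-decay of $P_{t},Q_{t}$ and hence the limits $P_{\infty},Q_{\infty}$ in \eqref{eq:PQest}. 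Commuting $\d_{\th}^{j}$ with the wave equations and iterating lifts the statement to $C^{N}$, with no derivative loss because the principal part is linear and diagonal.

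\emph{Matter, lapse and geometry.} For the Vlasov equation I would pass to the rescaled momentum so that $f_{\sca}(t,\th,v)=f(t,\th,t^{-1/2}v)$; the characteristics written in this variable satisfy an ODE whose right-hand side is of size $O(t^{-1})|v|+O(t^{-3/2})$ once the metric bounds are inserted. Hence $|v|$ stays uniformly bounded along characteristics, which yields the compact-support statement for $f_{\sca}$ and, via $\d_{\th}$- and $\d_{v}$-commutators, its $C^{N}$-convergence to some $f_{\sca,\infty}$ at rate $t^{-1}$. Reinserting all the improved decay into the Hamiltonian constraint and the $\a$-evolution equation gives $\l_{t}+3/t=O(t^{-2})$ and $\a_{t}/\a+3/t=O(t^{-2})$, i.e.\ \eqref{eq:altest}, and integration yields \eqref{eq:alest}. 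The estimate \eqref{eq:bgbkest} is then a direct computation from \eqref{eq:metric} and the lapse $t^{-1/4}e^{\l/4}$, using $\mH=(\Lambda/3)^{1/2}$. Future causal geodesic completeness follows because, along any future-directed causal geodesic, the asymptotic de Sitter structure of $(M,g)$ combined with the asymptotic silence property of Proposition~\ref{prop:causal} forces the affine parameter to diverge.

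\emph{Main obstacle.} The delicate step is closing the energy estimate for $(P,Q)$ in the presence of the Vlasov source: the matter contribution to $\tfrac{d}{dt}E$ needs to decay like $t^{-3}$, which requires the rescaled momenta to stay bounded along the Vlasov characteristics, and this boundedness is proved using the same metric bounds that are simultaneously being improved. The wave-energy estimate for $(P,Q)$, the characteristic analysis for $f_{\sca}$ and the refinement of the lapse/area must therefore be bootstrapped as a single coupled system rather than in sequence, and propagating this coupling through all orders of differentiation is the heart of the argument.
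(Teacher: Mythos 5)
There is a genuine gap in your proposal, and also one quantitative error.

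The quantitative error first: you claim that the wave-energy estimate closes as $\tfrac{dE}{dt}\leq -t^{-1}E + Ct^{-3}$, integrating to $E\leq Ct^{-2}$, and that this ``produces the pointwise $t^{-1}$-decay of $P_t,Q_t$''. But \eqref{eq:PQest} requires $\|P_t\|_{C^N}\leq C_N t^{-2}$, not $t^{-1}$; indeed $t^{-1}$ is not integrable and would not even produce the limits $P_\infty,Q_\infty$. (Moreover your claimed ODE integrates only to $E\leq Ct^{-1}$, not $Ct^{-2}$.) The paper first obtains the intermediate rate $P_t,Q_t=O(t^{-3/2})$ from pointwise lightcone estimates along $\d_\pm$ for the quantities $\ma_\pm=(\d_\pm P)^2+e^{2P}(\d_\pm Q)^2$ (Lemma~\ref{lemma:Qocoest}), exploiting that $-(2/t-\a_t/\a)\approx -5/t$, and then sharpens to $O(t^{-2})$ via direct integration of $\d_t(t\a^{-1/2}P_t)$ using \eqref{eq:ttmthth} and \eqref{eq:Etth} (Corollary~\ref{cor:tderbet}). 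Your $L^2$-in-$\theta$ energy can certainly be made to work, but as written it does not deliver the claimed rates.

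The more serious gap is in your treatment of higher derivatives of the distribution function. You write that $\d_\theta$- and $\d_v$-commutators on the characteristic system yield $C^N$ convergence of $f_\sca$, and that there is ``no derivative loss'' for the wave part. But differentiating the characteristic system \eqref{eq:dThetads}--\eqref{eq:dVthds} once in $\theta$ (or in $v$) produces expressions such as $\d_\theta^2\lambda$, $\d_t\d_\theta P$, etc., i.e.\ second derivatives of the metric coefficients, which are not controlled at that stage of the bootstrap. This is exactly the obstruction the paper flags at the start of Section~\ref{section:charsys}, and the entire point of introducing the variables $\Psi$, $Z^1$, $Z^2$, $Z^3$ in \eqref{eq:Psidef}--\eqref{eq:Zthdef} (following \cite{andreasson}) is that the ODE system they satisfy (Lemma~\ref{lemma:Jacobisys}) involves only \emph{first} derivatives of $P$, $Q$, $\lambda$, $\a$, with coefficients for which the established decay is strong enough. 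Without this change of variables the induction on $N$ in Sections~\ref{section:holce}--\ref{section:hodecs} would be circular: the estimate for $\d_\theta^N$ of the matter sources needs $\d_\theta^N f$, which needs derivatives of the characteristics, which seem to need $\d_\theta^{N+1}$ of the metric at a moment when only $\d_\theta^N$ is available. Your proposal correctly identifies the coupling as ``the heart of the argument'' but does not say how the derivative count is prevented from running away; that is the central technical device in the paper, and it is missing from the plan.

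Finally, note that the paper does not prove geodesic completeness directly from ``asymptotic silence'' plus ``de Sitter structure''; in the proof of Theorem~\ref{thm:as} it refers to the argument of \cite[Propositions~3 and 4]{mig4} based on the second-fundamental-form estimate \eqref{eq:bgbkest}, and in any case defers the actual proof to Theorem~\ref{thm:stab}. Your sketch for that part is harmless but not what the paper does.
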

The proof of the above theorem is to be found in Section~\ref{section:stab}.

It is of interest to record how the spacetime appears to an observer. In 
particular, we wish to prove the cosmic no-hair conjecture in the present 
setting. The rough statement of this conjecture is that the spacetime
appears de Sitter like to late time observers. However, in order to be able
to state a theorem, we need a formal definition. Before proceeding
to the details, let us provide some intuition. Let 
\begin{equation}\label{eq:fldS}
g_{\dS}=-dt^{2}+e^{2\mH t}\bg_{E},
\end{equation}
where $\mH=(\Lambda/3)^{1/2}$ and $\bg_{E}$ denotes the standard flat Euclidean
metric. Then $(\rn{4},g_{\dS})$ corresponds to a part of de Sitter space. It may
seem more reasonable to consider de Sitter space itself. However, as far as
the asymptotic behaviour of de Sitter space is concerned, (\ref{eq:fldS}) is
as good a model as de Sitter space itself. Consider a future directed 
and inextendible causal curve in $(\rn{4},g_{\dS})$, say $\g=(\g^{0},\bga)$,
defined on $(s_{-},s_{+})$. Then $\bga(s)$ converges to some $\bx_{0}\in\rn{3}$
as $s\rightarrow s_{+}-$. Moreover, $\g(s)\in C_{\bx_{0},\Lambda}$ for all $s$, 
where
\[
C_{\bx_{0},\Lambda}=\{ (t,\bx):|\bx-\bx_{0}|\leq \mH^{-1}e^{-\mH t}\}.
\]
In practice, it is convenient to introduce a lower bound on the time 
coordinate and to introduce a margin in the spatial direction. Moreover,
it is convenient to work with open sets. We shall therefore be interested in 
sets of the form 
\begin{equation}\label{eq:clkt}
C_{\Lambda,K,T}=\{ (t,\bx):t> T,\ \ |\bx|< K\mH^{-1}e^{-\mH t}\};
\end{equation}
note that $\bx_{0}$ can be translated to zero by an isometry.
Since we are interested in the late time behaviour of solutions, it is natural 
to restrict attention to sets of the form $C_{\Lambda,K,T}$ for some 
$K\geq 1$ and $T>0$. 

\begin{definition}\label{def:fasdes}
Let $(M,g)$ be a time oriented, globally hyperbolic Lorentz manifold which is 
future causally geodesically complete. Assume, moreover, that $(M,g)$ is a 
solution to Einstein's equations with a positive cosmological constant 
$\Lambda$. Then $(M,g)$ is said to be 
\textit{future asymptotically de Sitter like} if there
is a Cauchy hypersurface $\Sigma$ in $(M,g)$ such that for every future 
oriented and inextendible causal curve $\g$ in $(M,g)$, the following holds:
\begin{itemize}
\item there is an open set $D$ in $(M,g)$, such that 
$J^{-}(\g)\cap J^{+}(\Sigma)\subset D$, and $D$ is diffeomorphic to 
$C_{\Lambda,K,T}$ for a suitable choice of $K\geq 1$ and $T>0$, 
\item using $\psi:C_{\Lambda,K,T}\rightarrow D$ to denote the diffeomorphism; 
letting $R(t)=K\mH^{-1}e^{-\mH t}$;
 using $\bg_{\dS}(t,\cdot)$ and $\bk_{\dS}(t,\cdot)$ to denote the metric 
and second fundamental form induced on $S_{t}=\{t\}\times B_{R(t)}(0)$ by 
$g_{\dS}$; using $\bg(t,\cdot)$ and $\bk(t,\cdot)$ to denote the metric and 
second fundamental form induced on $S_{t}$ by $\psi^{*}g$ (where $\psi^{*}g$ 
denotes the pullback of $g$ by $\psi$); and letting $N\in\no$, the following 
holds:
\begin{equation}\label{eq:asdS}
\lim_{t\rightarrow\infty}\left(
\|\bg_{\dS}(t,\cdot)-\bg(t,\cdot)\|_{C^{N}_{\dS}(S_{t})}
+\|\bk_{\dS}(t,\cdot)-\bk(t,\cdot)\|_{C^{N}_{\dS}(S_{t})}\right)=0.
\end{equation}
\end{itemize}
\end{definition}
\begin{remark}
In the definition, we use the notation 
\[
\|h\|_{C^{N}_{\dS}(S_{t})}=\left(\sup_{S_{t}}\sum_{l=0}^{N}
\bg_{\dS,i_{1}j_{1}}\cdots \bg_{\dS,i_{l}j_{l}}
\bg_{\dS}^{im}\bg_{\dS}^{jn}
\bn^{i_{1}}_{\dS}\cdots \bn^{i_{l}}_{\dS}h_{ij}
\bn^{j_{1}}_{\dS}\cdots \bn^{j_{l}}_{\dS}h_{mn}
\right)^{1/2}
\]
for a covariant $2$-tensor field $h$ on $S_{t}$, where $\bn_{\dS}$ denotes
the Levi-Civita connection associated with $\bg_{\dS}(t,\cdot)$. Note also
that, even though $R(t)$ shrinks to zero exponentially, the diameter of 
$S_{t}$, as measured with respect to $\bg_{\dS}(t,\cdot)$, is constant. 
\end{remark}
\begin{remark}
In some situations it might be more appropriate to adapt the Cauchy hypersurface
$\Sigma$ to the causal curve $\g$; i.e., to first fix $\g$ and then $\Sigma$. 
\end{remark}

The above definition leads to a formal statement of the cosmic no-hair
conjecture. 

\begin{conjecture}[Cosmic no-hair]\label{conj:conoha}
Let $\ma$ denote the class of initial data such that the corresponding
maximal globally hyperbolic developments (MGHD's) are
future causally geodesically complete solutions to 
Einstein's equations with a positive cosmological constant $\Lambda$ (for some
fixed matter model). Then generic elements of $\ma$ yield MGHD's that are 
\textit{future asymptotically de Sitter like}.
\end{conjecture}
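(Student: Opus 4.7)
The first step is to make precise the word ``generic'', which the statement deliberately leaves open, and then to partition $\ma$ into pieces amenable to distinct methods. The plan is to fix, for each matter model satisfying at least the dominant energy and non-negative pressure conditions, a background $3$-manifold $\Sigma$ and a $C^{\infty}$ topology on the corresponding space of admissible initial data on $\Sigma$, and to interpret ``generic'' as containing an open dense subset (the proof strategy is essentially insensitive to whether one strengthens this to residual, prevalent, or full-measure). I would then aim to display such a set as the union of a \emph{perturbative piece}, consisting of open neighbourhoods of initial data whose future asymptotics are already controlled, and a \emph{monotonicity piece}, recovered by a Wald-type argument without smallness.

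The perturbative piece is assembled from existing and forthcoming stability theorems. The framework \cite{mig4} and its successors \cite{s,ras,speck12,speck12a,stab} already produce an open neighbourhood in $\ma$ of spatially homogeneous de Sitter-like backgrounds $(\rn{4},g_{\dS})$ on which the conclusion of Definition~\ref{def:fasdes} holds. The present paper enlarges this neighbourhood in the Einstein--Vlasov case: Theorem~\ref{thm:as} pins down the precise future asymptotics of every $\tn{3}$-Gowdy solution, so composing it with a Cauchy stability argument about each such solution should produce an open set of non-symmetric perturbations, disjoint from the de Sitter stability regime, on which the conjecture holds. One then iterates this step for surface-symmetric solutions, and, invoking the Outlook of the introduction, for Einstein--Maxwell and Einstein--Euler; the union is an open subset of $\ma$ whose closure one hopes is large.

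Genuine genericity, however, requires the monotonicity piece, which is where the work is. I would fix a CMC foliation of the MGHD with mean curvature $\tau$ and exploit the Hamiltonian constraint together with $\Lambda>0$ and the dominant energy condition to show that $-\tau-\sqrt{3\Lambda}$ is monotone along the foliation and tends to zero. A Wald-style bootstrap on the rescaled geometry $(\bg,\bk)$ should then force shear, spatial scalar curvature and matter density to decay exponentially and yield $\bk\sim \mH\bg$; to land in Definition~\ref{def:fasdes}, one finally transports the CMC-normalised asymptotics to the cones $C_{\Lambda,K,T}$ along each inextendible causal curve $\g$, using an asymptotic silence mechanism (visible, in the symmetric case, in Proposition~\ref{prop:causal}) to confine $J^{-}(\g)\cap J^{+}(\Sigma)$ inside a shrinking region. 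The hard part will be the monotonicity step without any symmetry or smallness: Wald's original argument \cite{wald2} used spatial homogeneity to reduce the problem to an ODE, and without that reduction one must control the non-linear feedback of inhomogeneous shear and matter on $\tau$, show that no generic mechanism overwhelms the exponential damping from $\Lambda$, and establish existence of the CMC foliation itself for general matter. A realistic intermediate target is to prove that the set of data in $\ma$ for which the rescaled geometry fails to isotropise has infinite codimension, which under a residual interpretation of ``generic'' already proves the conjecture.
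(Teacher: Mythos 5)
The statement you are addressing is a \emph{conjecture}, and the paper does not prove it: its contribution is a verification in a restricted setting, namely Theorem~\ref{thm:conoha} for $\tn{2}$-symmetric Einstein--Vlasov solutions with $\lambda$-asymptotics (in particular all $\tn{3}$-Gowdy solutions), obtained from the detailed asymptotics of Theorem~\ref{thm:as}, together with the stability statement Theorem~\ref{thm:stab}. The paper moreover stresses that the notion of ``generic'' is deliberately left unspecified and that the Nariai spacetimes are future causally geodesically complete elements of $\ma$ which are \emph{not} future asymptotically de Sitter like; they are exactly why genericity is demanded. So there is no ``paper proof'' against which your argument could be matched; what you have written is a research programme, and its central steps are precisely the open problems.

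Concretely, the gap is in your ``monotonicity piece''. Wald's argument \cite{wald2} hinges on spatial homogeneity: the Hamiltonian constraint and the Raychaudhuri-type evolution of the mean curvature reduce to ODE inequalities because all spatial gradients vanish. Without symmetry you must first establish the existence of a global CMC (or other suitable) foliation for general matter models, which is not known, and then control the inhomogeneous terms (spatial curvature of indefinite sign, gradients of the shear, matter fluxes) that enter the evolution of $\tau$; no mechanism is offered for this, and the Nariai solutions show that the desired conclusion genuinely fails on a nonempty subset of $\ma$ satisfying all your hypotheses, so any such argument must isolate and discard an exceptional set by a real instability/genericity analysis rather than by the exponential damping from $\Lambda$ alone. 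The ``infinite codimension'' target is asserted without any supporting construction. The ``perturbative piece'' also does not do what you need: stability theorems (including Theorem~\ref{thm:stab} here, whose stated conclusion is only future causal geodesic completeness of the perturbed developments) give open neighbourhoods of particular backgrounds, and a union of such neighbourhoods is open but nowhere shown to be dense in $\ma$ in your chosen topology; the extensions to Einstein--Maxwell and Einstein--Euler invoked from the Outlook are themselves unproven. As it stands, the proposal reformulates the conjecture rather than proving it.
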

\begin{remark}
It is probably necessary to exclude certain matter models in order for the 
statement to be correct. Moreover, the statement, as it stands, is quite 
vague; there is no precise definition of the notion generic. However, what 
notion of genericity is most natural might depend on the situation.
\end{remark}
\begin{remark}
The Nariai spacetimes, discussed, e.g., in \cite[pp. 126-127]{mig4}, are
time oriented, globally hyperbolic, causally geodesically complete solutions
to Einstein's vacuum equations with a positive cosmological constant 
that do not exhibit future asymptotically de Sitter
like behaviour. They are thus potential counterexamples to the cosmic
no-hair conjecture.  There is a similar example in the 
Einstein-Maxwell setting (with a positive cosmological constant) in 
\cite[p. 127]{mig4}. However, both of these examples are rather special, and 
it is natural to conjecture them to be unstable. Nevertheless, they
constitute the motivation for demanding genericity. 
\end{remark}

Finally, we are in a position to phrase a result concerning the cosmic no 
hair conjecture in the $\tn{3}$-Gowdy symmetric setting. The
proof of the theorem below is to be found in Section~\ref{section:stab}.

\begin{thm}\label{thm:conoha}
Consider a $\tn{2}$-symmetric solution to the Einstein-Vlasov system with
a positive cosmological constant. Choose coordinates so that the corresponding
metric takes the form (\ref{eq:metric}) on $I\times\tn{3}$, where 
$I=(t_{0},\infty)$. Assume that the solution has $\lambda$-asymptotics. Then 
the solution is future asymptotically de Sitter like; i.e., the cosmic
no-hair conjecture holds. 
\end{thm}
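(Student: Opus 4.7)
Given a future-inextendible causal curve $\g$, the strategy is to build a chart $\psi:C_{\Lambda,K,T}\to M$ on a neighbourhood of the future part of $\g$ in which $\psi^{*}g$ is quantitatively close to $g_{\dS}=-d\tau^{2}+e^{2\mH\tau}\bg_{E}$, with the closeness controlled by Theorem~\ref{thm:as}. Parametrise $\g$ by the areal time $t$. The null-norm inequality together with $\a\leq Ct^{-3}$ from Proposition~\ref{prop:abd} forces $(\th')^{2}\leq\a=O(t^{-3})$, so $|\th'|\leq Ct^{-3/2}$ is integrable and $\th(\g(t))$ converges to some $\th_{0}\in\so$; plugging this back and invoking the $C^{0}$-bounds on $P,Q,G,H$ from Theorem~\ref{thm:as} yields $|x'+Qy'+(G+QH)\th'|+|y'+H\th'|=O(t^{-3/2})$, which upon integration produces limits $x_{0}$ and $y_{0}$ for the remaining spatial coordinates of $\g$ and the overall decay $O(t^{-1/2})=O(e^{-\mH\tau})$ with $\tau:=(2\mH)^{-1}\ln t$.

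\textbf{Construction of $\psi$ and metric comparison.} The limit spatial metric $\bg_{\infty}$ from (\ref{eq:bginfdef}) is smooth and Riemannian on $\tn{3}$. Let $F:V\to U\subset\tn{3}$ be $\bg_{\infty}$-normal coordinates centred at $(\th_{0},x_{0},y_{0})$, with $V\subset\rn{3}$ an open neighbourhood of $0$, so that $F^{*}\bg_{\infty}=\bg_{E}+O(|\bxi|^{2})$ in $C^{N}$. Choose $K\geq 2$ and $T$ so large that $B_{K\mH^{-1}e^{-\mH T}}(0)\Subset V$ and $e^{2\mH T}\geq t_{1}$, and define
\[
\psi:C_{\Lambda,K,T}\to M,\qquad \psi(\tau,\bxi)=\left(e^{2\mH\tau},F(\bxi)\right),
\]
a diffeomorphism onto an open set $D\subset M$. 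Since $g$ has no $dt\,d\th$, $dt\,dx$, $dt\,dy$ cross terms, $\psi^{*}g$ has zero shift for the $\{\tau\}$-foliation; the lapse part equals $-(1+O(t^{-1}))d\tau^{2}$ in $C^{N}$ by (\ref{eq:alest}) and $\mH^{2}=\Lambda/3$, while (\ref{eq:bgbkest}) combined with $F^{*}\bg_{\infty}=\bg_{E}+O(|\bxi|^{2})$ gives
\[
F^{*}\bg(t(\tau),\cdot)=e^{2\mH\tau}\bg_{E}+e^{2\mH\tau}O(|\bxi|^{2})+O(1)
\]
in $C^{N}$. On $S_{\tau}$ one has $|\bxi|\leq K\mH^{-1}e^{-\mH\tau}$, so both error summands are $O(1)$ pointwise. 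Weighting by $\bg_{\dS}^{-1}=e^{-2\mH\tau}\bg_{E}^{-1}$ as demanded by the $\|\cdot\|_{C^{N}_{\dS}}$-norm, and using that $\bn_{\dS}=\d_{\bxi}$ on $S_{\tau}$ (since $\bg_{\dS}|_{S_{\tau}}$ is flat in these coordinates), one obtains $\|F^{*}\bg(t(\tau),\cdot)-\bg_{\dS}(\tau,\cdot)\|_{C^{N}_{\dS}(S_{\tau})}=O(e^{-2\mH\tau})$. The same computation applied to the second half of (\ref{eq:bgbkest}), using that the $\{\tau\}$- and $\{t\}$-foliations share a common unit normal direction, gives the matching estimate for $\bk-\bk_{\dS}$ and hence (\ref{eq:asdS}).

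\textbf{Causal containment and main obstacle.} Set $\Sigma:=\{t=e^{2\mH T}\}$, which is a Cauchy hypersurface for $M$. A past-directed causal curve in $g_{\dS}$ issuing from a point of $S_{\tau_{0}}$ at the spatial origin stays inside $\{|\bxi|<\mH^{-1}e^{-\mH\tau}\}$; the metric closeness proved above, combined with a Gr\"onwall-type argument applied to the null-norm inequality for $\psi^{*}g$ (which yields $|\bxi'|\leq e^{-\mH\tau}(1+O(e^{-2\mH\tau}))$ along such a curve parametrised by $\tau$), traps every past-directed causal curve for $\psi^{*}g$ emanating from a point of $\g|_{t\geq e^{2\mH T}}$ inside the slightly wider cone $\{|\bxi|<K\mH^{-1}e^{-\mH\tau}\}$, provided $K$ is taken larger than $1+C\mH$ (with $C$ from Step 1) and $T$ large enough that the error terms are small. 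Step 1 ensures that $\g|_{t\geq e^{2\mH T}}$ itself lies in $D$, so $J^{-}(\g)\cap J^{+}(\Sigma)\subset D$, completing the verification of Definition~\ref{def:fasdes}. The main delicacy is precisely this causal comparison: the spatial Gauss-lemma error $e^{2\mH\tau}O(|\bxi|^{2})$ is controlled only via the saturation $|\bxi|\leq K\mH^{-1}e^{-\mH\tau}$ built into $C_{\Lambda,K,T}$, so the Gr\"onwall argument must be set up to exploit this balance between the exponential $t$-growth of the spatial metric and the $e^{-\mH\tau}$ shrinking of the cone, without first enlarging the cone beyond control.
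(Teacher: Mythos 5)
Your proof is correct and follows the same overall strategy as the paper's: reparametrise the causal curve $\g$ by areal time, use the decay of $\a$ and the $C^{0}$ bounds to obtain a spatial limit point $\bx_{0}$ with quantitative decay $O(t^{-1/2})=O(e^{-\mH\tau})$, build the chart $\psi$ out of $\bg_{\infty}$-geodesic normal coordinates centred at $\bx_{0}$ composed with the time rescaling $t=e^{2\mH\tau}$, and then combine the estimates $(\ref{eq:kmHg})$/$(\ref{eq:rsmmlm})$ with the second-order Taylor expansion of $\bsfg_{\infty,ij}$ about the origin (where $\bsfg_{\infty,ij}(0)=\de_{ij}$, $\d_{l}\bsfg_{\infty,ij}(0)=0$) to show the $\|\cdot\|_{C^{N}_{\dS}(S_{\tau})}$ differences are $O(e^{-2\mH\tau})$.

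Where you differ is the causal-containment step, $J^{-}(\g)\cap J^{+}(\Sigma)\subset D$. The paper handles this directly in the original $(t,\bx)$ coordinates: the causality of any curve with $\g^{0}(t)=t$, combined with the uniform convergence $t^{-1}\bg_{ij}\to\bg_{\infty,ij}$ and the lapse asymptotics, gives a pointwise bound $\bg_{\infty,ij}[\bga(t)]\dot\bga^{i}\dot\bga^{j}\leq\tfrac{1}{4}K_{0}^{2}\mH^{-2}t^{-3}$ with $K_{0}$ independent of the particular causal curve. Integrating this once immediately shows that every point of $J^{-}(\g)\cap J^{+}(\Sigma_{t_{-}})$ has $d_{\infty}$-distance to $\bx_{0}$ at most $K_{0}\mH^{-1}t^{-1/2}$, with no bootstrap needed. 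Your route instead pulls the causal comparison into the rescaled coordinates $(\tau,\bxi)$, where the spatial part of $\psi^{*}g$ grows like $e^{2\mH\tau}$, and therefore requires a Gr\"onwall/continuity argument that keeps the curve inside the outer cone so that the metric estimates remain available; you correctly identify this as the delicate point and your sketch of how to close it (choosing $K$ strictly larger than $K_{0}+1+\epsilon$ and running a bootstrap) does work. The paper's direct bound is cleaner precisely because it sidesteps the self-referential step you highlight: the $O(t^{-3})$ velocity bound is derived from causality and the asymptotics on all of $[t_{1},\infty)\times\so$, not only on the target cone, so there is nothing to bootstrap. Both approaches are sound; if you keep yours, you should write out the continuity argument explicitly rather than leaving it at ``set up the Gr\"onwall argument to exploit the balance''.
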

\begin{remark}
Recall that all $\tn{3}$-Gowdy symmetric solutions have $\lambda$-asymptotics.
\end{remark}
\begin{remark}
In the particular case of interest here, the equality (\ref{eq:asdS}) can
actually be improved to the estimate
\[
\|\bg_{\dS}(\tau,\cdot)-\bg(\tau,\cdot)\|_{C^{N}_{\dS}(S_{\tau})}
+\|\bk_{\dS}(\tau,\cdot)-\bk(\tau,\cdot)\|_{C^{N}_{\dS}(S_{\tau})}\leq C_{N}e^{-2\mH\tau}
\]
for all $\tau> T$ and a suitable constant $C_{N}$.
\end{remark}
\begin{remark}
The main estimate needed to prove the theorem is (\ref{eq:bgbkest}).
In situations where such an estimate holds, it is thus to be expected
that the solution is future asymptotically de Sitter like.
\end{remark}

\subsection{Stability, notation and function spaces}
\label{ssection:stabno}

Let us now turn to the subject of stability. Combining Theorem~\ref{thm:as}
with the results of \cite{stab}, it turns out to be possible to prove that 
the solutions to which Theorem~\ref{thm:as} applies are also future stable. 
In the present subsection, we begin by introducing the terminology necessary 
in order to make a formal statement of this result. 

Let $(M,g)$ be a time oriented $n+1$-dimensional Lorentz 
manifold. We say that $(\sfx,U)$ are \textit{canonical local coordinates}
if $\d_{\sfx^{0}}$ is future oriented timelike on $U$ and 
$g(\d_{\sfx^{i}}|_{\stp},\d_{\sfx^{j}}|_{\stp})$, $i,j=1,\dots,n$, are the components 
of a positive 
definite metric for every $\stp\in U$; cf. \cite[p. 87]{stab}. If 
$p\in \mP_{\stp}$ for some $\stp\in U$, we then define
\begin{equation}\label{eq:macodef}
\maco{x}(p)=\maco{x}(p^{\a}\d_{\sfx^{\a}}|_{\stp})=
[\sfx(\stp),\bp],
\end{equation}
where $\bp=(p^{1},\dots,p^{n})$. Note that $\maco{x}$ are local coordinates
on the mass shell. If $f$ is defined on the mass shell, we shall use the
notation $\sff_{\sfx}=f\circ\maco{x}^{-1}$. It is also convenient to introduce
the notation $\bsfp_{\sfx}$ according to 
$\maco{x}(p)=[\sfx(\stp),\bsfp_{\sfx}(p)]$, assuming $p\in \mP_{\stp}$. With this 
notation, the measure $\mu_{\mP_{r}}$ can be written
\begin{equation}\label{eq:muprcor}
\mu_{\mP_{r}}=-\frac{|g_{\sfx}(r)|^{1/2}}{\sfp_{\sfx,0}\circ\iota_{r}}
\iota_{r}^{*}d\bsfp_{\sfx},
\end{equation}
where $|g_{\sfx}|$ is the determinant of the metric $g$, when expressed with
respect to the $\sfx$-coordinates; $\iota_{r}:\mP_{r}\rightarrow\mP$ is the
inclusion; $\sfp_{\sfx}^{\a}(p)$ are the components of $p$ with respect to the
coordinates $\sfx$; and $\sfp_{\sfx,\a}(p)=g_{\sfx,\a\b}\sfp_{\sfx}^{\b}(p)$.
The reader interested in a derivation of (\ref{eq:muprcor}) is referred to 
\cite[Section~13.3]{stab}. Let us also note that the operator $\ml$ is given by 
\begin{equation}\label{eq:mlcor}
\ml=\sfp_{\sfx}^{\a}\frac{\d }{\d \sfx^{\a}}
-\G^{i}_{\a\b}\sfp_{\sfx}^{\a}\sfp_{\sfx}^{\b}\frac{\d }{\d \bsfp_{\sfx}^{i}}
\end{equation}
with respect to the above coordinates. 

In order to proceed, we need to introduce function spaces for the distribution
functions. Recall, to that end, \cite[Definition~7.1, p.~87]{stab}:

\begin{definition}\label{def:dinfintro}
Let $1\leq n\in\zo$, $\mu\in\ro$, $(M,g)$ be a time oriented $n+1$-dimensional 
Lorentz manifold and $\mP$ be the set of future directed unit timelike vectors. 
The space $\dm^{\infty}_{\mu}(\mP)$ is defined to consist of the smooth functions 
$f:\mP\rightarrow\ro$ such that, for every choice of canonical local coordinates 
$(\sfx,U)$, $n+1$-multiindex $\a$ and $n$-multiindex $\b$, the derivative
$\d^{\a}_{x}\d^{\b}_{\bp}\sff_{\sfx}$ (where $x$ symbolises the first $n+1$ and
$\bp$ the last $n$ variables), considered as a function from $\sfx(U)$ to the 
set of functions from $\rn{n}$ to $\ro$, belongs to 
\begin{equation}\label{eq:contpredeffs}
C[\sfx(U),L^{2}_{\mu+|\b|}(\rn{n})].
\end{equation}
\end{definition}
\begin{remark}\label{remark:ws}
The space $L^{2}_{\mu}(\rn{n})$ is the weighted $L^{2}$-space corresponding
to the norm
\begin{equation}\label{eq:wltsppre}
\|h\|_{L^{2}_{\mu}}=\left(\irn{n} \lbpr^{2\mu}|h(\bp)|^{2}d\bp\right)^{1/2},
\end{equation}
where $\lbpr=(1+|\bp|^{2})^{1/2}$; recall the comments made in
Remark~\ref{remark:average}. 
\end{remark}
\begin{remarks}
If $f\in\dm^{\infty}_{\mu}(\mP)$ for some $\mu>n/2+1$, then the stress energy
tensor is a well defined smooth function; cf. 
\cite[Proposition~15.37, p.~246]{stab}. Moreover, the stress energy tensor
is divergence free if $f$ satisfies the Vlasov equation. 
\end{remarks}

It is worth pointing out that it is possible to introduce more general 
function spaces, corresponding to a finite degree of differentiability;
cf. \cite[Definition~15.1, p.~234]{stab}. However, the above definition
is sufficient for our purposes. The above function spaces are suitable
when discussing functions on the mass shell. However, we also need to 
introduce function spaces for the initial datum for the distribution 
function. If $(\bsfx,U)$ are local coordinates on a manifold $\Sigma$, we
introduce local coordinates on $T\Sigma$ by 
$\bmaco{x}(\bp^{i}\d_{\bsfx^{i}}|_{\bxi})=(\bsfx(\bxi),\bp)$ in analogy with
(\ref{eq:macodef}). Moreover, if $\fb$ is defined on $T\Sigma$, we shall use
the notation $\bsff_{\bsfx}=\fb\circ\bmaco{x}^{-1}$. Let us recall
\cite[Definition~7.5, p.~89]{stab}:

\begin{definition}\label{def:indavlintro}
Let $1\leq n\in\zo$, $\mu\in\ro$ and $\Sigma$ be an $n$-dimensional
manifold. The 
space $\dhy^{\infty}_{\mu}(T\Sigma)$ is defined to consist of the smooth 
functions $\fb:T\Sigma\rightarrow\ro$ such that, for every choice of local 
coordinates $(\bsfx,U)$, $n$-multiindex $\a$ and $n$-multiindex $\b$, the 
derivative $\d^{\a}_{\bx}\d^{\b}_{\bp}\bsff_{\bsfx}$ (where $\bx$ symbolises the 
first $n$ and $\bp$ the last $n$ variables),
considered as a function from $\bsfx(U)$ to the set 
of functions from $\rn{n}$ to $\ro$, belongs to 
\[
C[\bsfx(U),L^{2}_{\mu+|\b|}(\rn{n})].
\]
\end{definition}
\begin{remark}
According to the criteria appearing in Definitions~\ref{def:dinfintro} and 
\ref{def:indavlintro}, we need to verify continuity conditions for every choice 
of local coordinates. However, it turns out to be sufficient to consider a 
collection of local coordinates covering the manifold of interest; cf. 
\cite[Lemma~15.9, p.~235]{stab} and \cite[Lemma~15.19, p.~237]{stab}.
\end{remark}

Finally, in order to be able to state a stability result, we need a
norm. Recall, to this end, \cite[Definition~7.7, pp.~89--90]{stab}:

\begin{definition}\label{definition:fshypintro}
Let $1\leq n\in\zo$, $0\leq l\in\zo$, $\mu\in\ro$ and $\Sigma$ be a compact 
$n$-dimensional manifold.  Let, moreover, $\bchi_{i}$, $i=1,...,j$, be a 
finite partition of unity subordinate to a cover consisting of coordinate 
neighbourhoods, say $(\bsfx_{i},U_{i})$. Then $\|\cdot\|_{H^{l}_{\roV,\mu}}$
is defined by
\begin{equation}\label{eq:fidnorm}
\|\fb\|_{H^{l}_{\roV,\mu}}=
\left(\sum_{i=1}^{j}
\sum_{|\a|+|\b|\leq l}\int_{\bsfx_{i}(U_{i})\times\rn{n}}
\langle \bvr\rangle^{2\mu+2|\b|}\bchi_{i}(\bxi)(\d^{\a}_{\bxi}
\d^{\b}_{\bvr}\bsff_{\bsfx_{i}})^{2}(\bxi,\bvr)d\bxi d\bvr\right)^{1/2}
\end{equation}
for each $\fb\in \dhy^{\infty}_{\mu}(T\Sigma)$.
\end{definition}
\begin{remark}
Clearly, the norm depends on the choice of partition of unity and on the 
choice of coordinates. However, different choices lead to equivalent norms. 
Here, we are mainly interested in the case $\Sigma=\tn{3}$, in which case
it is neither necessary to introduce local coordinates nor a partition of 
unity. 
\end{remark}

\subsection{The Einstein-Vlasov-non-linear scalar field system}

In the present paper, we are mainly interested in the Einstein-Vlasov
system with a positive cosmological constant. However, in the proof of future
stability of $\tn{3}$-Gowdy symmetric solutions,
we use two results. First, we use the fact that solutions that start out
close to de Sitter space are future stable. Second, we use Cauchy stability.
There are results of this type in the literature. However, they are formulated
in the Einstein-Vlasov-non-linear scalar field setting. In order to make it 
clear that the statements appearing in the literature can be applied in 
our setting, it is therefore necessary to briefly describe the 
Einstein-Vlasov-non-linear scalar field system. This is the purpose of the 
present subsection.

In $3+1$-dimensions, the Einstein-Vlasov-non-linear scalar field system can
be written
\begin{eqnarray}
R_{\a\b}- T_{\a\b}+\frac{1}{2}(\tr T) g_{\a\b} & = & 0,\label{eq:efeB}\\
\nabla^{\a}\nabla_{\a}\phi-V'\circ\phi & = & 0,\label{eq:sfeB}\\
\ml f & = & 0;
\label{eq:vezB}
\end{eqnarray}
cf. \cite[(7.13)--(7.15), p.~91]{stab}. In these equations,
$\phi\in C^{\infty}(M)$ is referred to as the \textit{scalar field};  
$V:\ro\rightarrow\ro$ is a smooth function referred to as the 
\textit{potential}; $\nabla$ is the Levi-Civita connection associated with 
the metric $g$; and 
\[
T_{\a\b}=T_{\a\b}^{\rosf}+T^{\roV}_{\a\b},
\]
where $T^{\roV}$ is defined in (\ref{eq:vset}) and
\[
T^{\rosf}_{\a\b}=\nabla_{\a}\phi\nabla_{\b}\phi
-\left[\frac{1}{2}\nabla^{\g}\phi\nabla_{\g}\phi+V(\phi)\right]g_{\a\b}.
\]
Assuming $V$ to be such that $V'(0)=0$, it is consistent to demand that $\phi$ 
be zero in (\ref{eq:sfeB}). Moreover, if $\phi=0$, then $T^{\rosf}=-V(0)g$. 
Letting $\Lambda=V(0)$, the equations (\ref{eq:efeB})--(\ref{eq:vezB}) then 
reduce to the Einstein-Vlasov system with a positive cosmological constant 
$\Lambda$, assuming $V(0)>0$. In order to prove future
stability in the Einstein-Vlasov-non-linear scalar field setting, it is not 
sufficient to demand that $V'(0)=0$ and $V(0)>0$. It is also of interest to 
know that $V''(0)>0$. We shall therefore make this assumption from now on. 
Given $V$ such that $V'(0)=0$, $V(0)>0$ and $V''(0)>0$, it is convenient to
introduce
\begin{equation}\label{eq:hdef}
\mH=(V(0)/3)^{1/2}
\end{equation}
and
\begin{equation}\label{eq:chidef}
\chi=V''(0)/\mH^{2};
\end{equation}
cf. \cite[(7.9) and (7.10), p.~90]{stab}. Note that in the non-linear scalar 
field setting, we always assume $V(0)$ to be positive and we equate it with 
$\Lambda$. In particular, (\ref{eq:hdef}) is thus consistent with previous
definitions of $\mH$; cf., e.g, the statement of Theorem~\ref{thm:as}. 
In case we are interested in the Einstein-Vlasov system with a positive 
cosmological constant $\Lambda$, it is sufficient to choose $V$ to be
\begin{equation}\label{eq:Vcc}
V(\phi)=\Lambda+\Lambda\phi^{2}.
\end{equation}
Then $V(0)=\Lambda>0$, $V'(0)=0$ and $V''(0)=2\Lambda>0$. Moreover, 
$\mH=(\Lambda/3)^{1/2}$ and $\chi=6$. Clearly, (\ref{eq:Vcc}) is an arbitrary
choice; there are many other possibilities. 

Let us now recall the definition of initial data given in 
\cite[Definition~7.11, pp.~93--94]{stab} (note that the dimension $n$ is 
here assumed to equal $3$):

\begin{definition}\label{def:idintro}
Let $5/2<\mu\in\ro$.
\textit{Initial data} for (\ref{eq:efeB})--(\ref{eq:vezB}) consist of 
an oriented $3$-dimensional manifold $\Sigma$, a non-negative function 
$\fb\in\dhy^{\infty}_{\mu}(T\Sigma)$, a Riemannian metric $\bg$, a symmetric
covariant $2$-tensor field $\bk$ and two functions $\bphi_{0}$ and 
$\bphi_{1}$ on $\Sigma$, all assumed to be smooth and to satisfy
\begin{eqnarray}
\br-\bk_{ij}\bk^{ij}+(\mathrm{tr}\bk)^{2} & = &
\bphi_{1}^{2}+\bn^{i}\bphi_{0} \bn_{i}\bphi_{0}+2V(\bphi_{0})+2\rho^{\roV},
\label{eq:hc}\\
\bn^{j}\bk_{ji}-\bn_{i}(\mathrm{tr}\bk) & = & \bphi_{1}\bn_{i}\bphi_{0}-\current_{i}^{\roV},\label{eq:mc}
\end{eqnarray}
where $\bn$ is the Levi-Civita connection of $\bg$, $\br$ is the associated 
scalar curvature, indices are raised and lowered by $\bg$ and $\rho^{\roV}$ and
$\current_{i}^{\roV}$ are given by (\ref{eq:rhoini}) and (\ref{eq:jiini}) below
respectively. Given initial data, the \textit{initial value problem} is that 
of finding a solution $(M,g,f,\phi)$ to (\ref{eq:efeB})--(\ref{eq:vezB})
(in other words, an $4$-dimensional manifold $M$, a smooth time oriented 
Lorentz metric $g$ on $M$, a non-negative function 
$f\in\dm^{\infty}_{\mu}(\mP)$ and a $\phi\in C^{\infty}(M)$ such 
that (\ref{eq:efeB})--(\ref{eq:vezB}) are satisfied), and an embedding
$i:\Sigma\rightarrow M$ such that  
\[
i^{*}g=\bg,\ \ \ \phi\circ i=\bphi_{0},\ \ \ 
\fb=i^{*}(f\circ\pros{i(\Sigma)}^{-1})
\]
and if $N$ is the future directed unit normal and $\kappa$ is the second
fundamental form of $i(\Sigma)$, then $i^{*}\kappa=\bk$ and 
$(N\phi)\circ i=\bphi_{1}$. Such a quadruple $(M,g,f,\phi)$ is referred to as 
a \textit{development}
of the initial data, the existence of an embedding $i$ being tacit. If, 
in addition to the above conditions, $i(\Sigma)$ is a Cauchy hypersurface
in $(M,g)$, the quadruple is said to be a \textit{globally hyperbolic
development}.
\end{definition}
\begin{remark}\label{remark:prosdef}
The map $\pros{i(\Sigma)}$ is the diffeomorphism from the mass shell 
above $i(\Sigma)$ to the tangent space of $i(\Sigma)$ defined by mapping
a vector $v$ to its component perpendicular to the normal of $i(\Sigma)$.
\end{remark}
\begin{remark}
If $\bphi_{0}=\bphi_{1}=0$, the equations (\ref{eq:hc}) and (\ref{eq:mc})
become
\begin{eqnarray}
\br-\bk_{ij}\bk^{ij}+(\mathrm{tr}\bk)^{2} & = &
2\Lambda+2\rho^{\roV},\label{eq:hcL}\\
\bn^{j}\bk_{ji}-\bn_{i}(\mathrm{tr}\bk) & = & -\current_{i}^{\roV}.\label{eq:mcL}
\end{eqnarray}
These are the constraint equations for the Einstein-Vlasov system with a 
positive cosmological constant $\Lambda$.
\end{remark}
The \textit{energy density} and \textit{current} induced by the initial data 
are given by
\begin{eqnarray}
\rho^{\roV}(\bxi) & = & 
\int_{T_{\bxi}\Sigma}\fb(\bp)[1+\bg(\bp,\bp)]^{1/2}
\bar{\mu}_{\bg,\bxi},\label{eq:rhoini}\\
\current^{\roV}(\bX) & = & \int_{T_{\bxi}\Sigma}\fb(\bp)\bg(\bX,\bp)
\bar{\mu}_{\bg,\bxi}.\label{eq:jiini}
\end{eqnarray}
In these expressions, $\bxi\in \Sigma$, $\bX\in T_{\bxi}\Sigma$, 
$\bar{\mu}_{\bg,\bxi}$ is the volume form on $T_{\bxi}\Sigma$ induced by 
$\bg$ and $\bp\in T_{\bxi}\Sigma$. 
It is important to note that under the assumptions of the above definition, 
the energy density is a smooth function and the current is a smooth 
one-form field on $\Sigma$; cf. \cite[Lemma~15.40, p.~246]{stab}. 

Given initial data, there is a unique maximal globally hyperbolic 
development thereof; cf. \cite[Corollary~23.44, p.~418]{stab}
and \cite[Lemma~23.2, p. 398]{stab}. The definition of a maximal 
globally hyperbolic development is given by 
\cite[Definition~7.14, p. 94]{stab}:

\begin{definition}\label{def:mghd}
Given initial data for (\ref{eq:efeB})--(\ref{eq:vezB}), a \textit{maximal
globally hyperbolic development} of the data is a globally hyperbolic 
development $(M,g,f,\phi)$, with embedding $i:\Sigma\rightarrow M$, such
that if $(M',g',f',\phi')$ is any other globally hyperbolic development of 
the same data, with embedding $i':\Sigma\rightarrow M'$, then 
there is a map $\psi:M'\rightarrow M$ which is a diffeomorphism onto
its image such that $\psi^{*}g=g'$, $\psi^{*}f=f'$, $\psi^{*}\phi=\phi'$ and 
$\psi\circ i'=i$.
\end{definition}
It is worth noting that the maximal globally hyperbolic development is 
independent of the parameter $\mu$. The above discussion of the 
initial value problem for the Einstein-Vlasov-non-linear scalar field
system is somewhat brief, and the 
reader interested in a more detailed discussion is referred to 
\cite[Chapter~7]{stab}.

\subsection{Future stability in the spatially homogeneous and 
isotropic setting}

In the proof of stability of the $\tn{3}$-Gowdy symmetric solutions, we need 
to refer to \cite[Theorem~7.16, pp.~104--106]{stab}.
However, the statement of this theorem is based on terminology introduced
in \cite{stab}. Moreover, in the statement of Theorem~\ref{thm:stab}, we refer 
to the conclusions of \cite[Theorem~7.16]{stab}. For this 
reason, we here provide not only the notational background, but also the 
statement of 
\cite[Theorem~7.16]{stab}. However, the reader interested in a 
discussion giving a justification for why the particular formulation of the 
theorem is natural is referred to \cite[Sections~7.6--7.7]{stab}. 

The rough idea of the statement is to only make local assumptions concerning 
the initial data and to derive future global conclusions concerning the 
solution. Given a $3$-manifold $\Sigma$, we 
therefore focus on a local coordinate patch $(\bsfx,U)$. Here $U$ is the 
neighbourhood in which we make assumptions in the statement of the theorem. 
The conditions on the initial data are phrased in terms of Sobolev norms 
on $U$. Given a tensor field $\mathfrak{T}$ on $\Sigma$, we therefore define
\begin{equation}\label{eq:tensobnormintro}
\begin{split}
\|\mathfrak{T}\|_{H^{l}(U)} =
\left(\sum_{i_{1},...,i_{s}=1}^{3}\sum_{j_{1},...,j_{r}=1}^{3}
\sum_{|\a|\leq l}\int_{\bsfx(U)}|\d^{\a}
\mathfrak{T}^{i_{1}\cdots i_{s}}_{j_{1}\cdots j_{r}}\circ \bsfx^{-1}|^{2}d\bx
\right)^{1/2}.
\end{split}
\end{equation}
In this expression, the components of $\mathfrak{T}$ are computed with respect 
to the coordinates $\bsfx$ and the derivatives are taken with respect to 
$\bsfx$. In what follows, norms of the type $\|\mathfrak{T}\|_{H^{l}(U)}$ are 
always computed using a particular choice of local coordinates. The choice we 
have in mind should be clear from the context. 
In the formulation of Theorem~\ref{thm:main}, we also use the notation
\begin{equation}\label{eq:dermetrisobintro}
\|\d_{m}\bg\|_{H^{l}(U)}=
\left(\sum_{i,j=1}^{3}
\sum_{|\a|\leq l}\int_{\bsfx(U)}|\d^{\a}\d_{m}\bg_{ij}\circ \bsfx^{-1}|^{2}d\bx
\right)^{1/2}.
\end{equation}
To measure the local size of the distribution function, we need a weighted 
Sobolev norm. However, it is also necessary to allow the freedom to rescale 
the momentum variable in the definition of the norm. Since we have already 
motivated the need for this rescaling freedom in 
\cite[Subsection~7.6.1, pp.~100--102]{stab}, we shall not do so here. Given
a constant $w$, we simply define the local norm for the distribution function 
by 
\begin{equation}\label{eq:flocnormmth}
\begin{split}
{}^{w}\|\fb\|_{H^{l}_{\roV,\mu}(U)} =\left(
\sum_{|\a|+|\b|\leq l}\irn{3}\int_{\bsfx(U)}(e^{-w})^{2|\b|}\langle
e^{w}\bp \rangle^{2\mu+2|\b|}|\d^{\a}_{\bxi}
\d^{\b}_{\bp}\sffb_{\bsfx}|^{2}(\bxi,\bp)d\bxi d\bp\right)^{1/2}.
\end{split}
\end{equation}
Here $\bmaco{x}$ are the coordinates on $TU$ associated with $\bsfx$
(cf. Subsection~\ref{ssection:stabno}), and 
$\sffb_{\bsfx}=\fb\circ\bmaco{x}^{-1}$. 

Given the above notation, \cite[Theorem~7.16, pp.~104--106]{stab} takes
the following form for $n=3$. 

\begin{thm}\label{thm:main}
Let $5/2<\mu\in\ro$ and $7/2<k_{0}\in\zo$. Let $V$ be a smooth 
function on $\ro$ such that $V(0)=V_{0}>0$, $V'(0)=0$ and $V''(0)>0$. Let 
$\mH,\chi>0$ be defined by (\ref{eq:hdef}) and (\ref{eq:chidef}) respectively and
let $\krov\geq 0$. There is an $\varepsilon>0$, depending only on $\mu$ 
and $V$, such that if 
\begin{itemize}
\item $(\Sigma,\bg,\bk, \fb,\bphi_{0},\bphi_{1})$ are initial data 
for (\ref{eq:efeB})--(\ref{eq:vezB}) with $\mathrm{dim}\Sigma=3$,
\item $\bsfx:U\rightarrow B_{1}(0)$ are local coordinates with 
$\bsfx(U)=B_{1}(0)$,
\item the inequality
\begin{equation}\label{eq:gijdeijsubest}
|e^{-2K}\bg_{ij}-\de_{ij}|\leq \varepsilon
\end{equation}
holds on $U$ for all $i,j=1,...,n$, where $K$ is defined by $e^{K}=4/\mH$, 
\item using the notation introduced in (\ref{eq:tensobnormintro}) and 
(\ref{eq:dermetrisobintro}), the inequality
\begin{equation}\label{eq:maincon}
\begin{split}
\sum_{j=1}^{3}\mH^{2}\|\d_{j}\bg\|_{H^{k_{0}}(U)}+
\mH\|\bk-\mH\bg\|_{H^{k_{0}}(U)}
+ \|\bphi_{0}\|_{H^{k_{0}+1}(U)}+
\mH^{-1}\|\bphi_{1}\|_{H^{k_{0}}(U)}\leq \varepsilon e^{-\krov}
\end{split}
\end{equation}
holds,
\item using the notation introduced in (\ref{eq:flocnormmth}), the inequality
\begin{equation}\label{eq:fbcond}
{}^{w}\|\fb\|_{H^{k_{0}}_{\roV,\mu}(U)}
\leq \mH^{2}\varepsilon^{5/2}e^{-3K/2-\krov}
\end{equation}
holds with $w=K+\krov$,
\end{itemize}
then the maximal globally hyperbolic development $(M,g,f,\phi)$ of the 
initial data has the property that if $i:\Sigma\rightarrow M$
is the associated embedding, then all causal 
geodesics that start in $i\circ\bsfx^{-1}[B_{1/4}(0)]$ are future
complete. Furthermore, there is a $t_{-}<0$ and a smooth map
\begin{equation}\label{eq:emb}
\psi: (t_{-},\infty)\times B_{5/8}(0)\rightarrow M,
\end{equation}
which is a diffeomorphism onto its image, such that all causal 
curves that start in $i\circ\bsfx^{-1}[B_{1/4}(0)]$ remain in 
the image of $\psi$ to the future, and $g$, $f$ and $\phi$ have expansions 
of the form (\ref{eq:glimi})--(\ref{eq:vlasijimp})
in the solid cylinder $[0,\infty)\times B_{5/8}(0)$ when pulled back by $\psi$.
Finally, $\psi(0,\bxi)=i\circ \bsfx^{-1}(\bxi)$ for $\bxi\in B_{5/8}(0)$.
In the formulae below, Latin indices refer to the natural Euclidean
coordinates on $B_{5/8}(0)$ and $t$ is the natural time coordinate
on the solid cylinder. 
Let $\zeta=4\chi/9$,
\[
\lambda_{\pre}=\left\{\begin{array}{cl} \frac{3}{2}[1-(1-\zeta)^{1/2}] & 
\zeta\in (0,1)\\
\frac{3}{2} & \zeta\geq 1
\end{array}
\right.
\]
and $\lambda_{\rom}=\min \{ 1,\lambda_{\pre}\}$.
There is a smooth Riemannian metric $\bvr$ on $B_{5/8}(0)$ and, for every 
$l\geq 0$, a constant $K_{l}$ such that 
\begin{eqnarray}
\|e^{2\mH t+2K}g^{ij}(t,\cdot)-\bvr^{ij}\|_{C^{l}}
+\|e^{-2\mH t-2K}g_{ij}(t,\cdot)-\bvr_{ij}\|_{C^{l}} & \leq & 
K_{l}e^{-2\lambda_{\rom}\mH t},\phantom{tram}\label{eq:glimi}\\
\|e^{-2\mH t-2K}\d_{t}g_{ij}(t,\cdot)
-2\mH\bvr_{ij}\|_{C^{l}} &\leq & K_{l}e^{-2\lambda_{\rom}\mH t},\label{eq:gtlimi}
\end{eqnarray}
for every $l\geq 0$ and $t\geq 0$. Here $\bvr^{ij}$ denotes the components 
of the inverse of $\bvr$. Furthermore, $C^{l}$ denotes the $C^{l}$-norm on 
$B_{5/8}(0)$. Turning to $g_{0m}$, there is a $b>0$ and, for every $l\geq 0$, a
constant $K_{l}$ such that  
\begin{equation}\label{eq:slimi}
\left\| g_{0m}(t,\cdot)-\assh_{m}\right\|_{C^{l}}
+\|\d_{0}g_{0m}(t,\cdot)\|_{C^{l}}\leq K_{l}e^{-b\mH t},
\end{equation}
for all $l\geq 0$ and $t\geq 0$, where 
\begin{equation}\label{eq:videfi}
\assh_{m}=\frac{1}{\mH}\bvr^{ij}\g_{imj}
\end{equation}
and $\gamma_{imj}$ denote the Christoffel symbols of the metric 
$\bvr$, given by 
\[
\gamma_{imj}=\frac{1}{2}(\d_{i}\bvr_{jm}+\d_{j}\bvr_{im}-\d_{m}\bvr_{ij}). 
\]
Let $\bk_{ij}$ denote the components of the second fundamental form (induced on the constant-$t$ hypersurfaces)
with respect to the standard coordinates on $B_{5/8}(0)$.  If $\lambda_{\rom}<1$, there is, for every $l\geq 0$, a constant $K_{l}$
such that 
\begin{eqnarray*}
\|g_{00}(t,\cdot)+1\|_{C^{l}}+\|\d_{0}g_{00}(t,\cdot)\|_{C^{l}}
& \leq & K_{l}e^{-2\lambda_{\rom}\mH t},\\
\|e^{-2\mH t-2K}\bk_{ij}(t,\cdot)
-\mH\bvr_{ij}\|_{C^{l}} &\leq & K_{l}e^{-2\lambda_{\rom}\mH t}
\end{eqnarray*}
for every $l\geq 0$ and $t\geq 0$. If $\lambda_{\rom}=1$, there is, for every $l\geq 0$,  a constant $K_{l}$
such that 
\begin{eqnarray*}
\|[\d_{0}g_{00}+2\mH(g_{00}+1)](t,\cdot)\|_{C^{l}} & \leq & K_{l}
e^{-2\mH t},\\
\|g_{00}(t,\cdot)+1\|_{C^{l}} & \leq & K_{l}(1+t^{2})^{1/2} e^{-2\mH t},\\
\|e^{-2\mH t-2K}\bk_{ij}(t,\cdot)
-\mH\bvr_{ij}\|_{C^{l}} &\leq & K_{l}(1+t^{2})^{1/2}e^{-2\mH t}
\end{eqnarray*}
for every $l\geq 0$ and $t\geq 0$. In order to describe the asymptotics concerning $\phi$, let $\varphi=e^{\lambda_{\pre} \mH t}\phi$. If $\zeta<1$, there is a smooth function $\varphi_{0}$, a constant $b>0$ and, for every $l\geq 0$, a constant 
$K_{l}$ such that 
\begin{equation}\label{eq:pcoi}
\|\varphi(t,\cdot)-\varphi_{0}\|_{C^{l}} 
+\|\d_{0}\varphi\|_{C^{l}} \leq  K_{l}e^{-b \mH t}
\end{equation}
for all $l\geq 0$ and $t\geq 0$. If $\zeta=1$, there are smooth functions $\varphi_{0}$ and $\varphi_{1}$, a constant $b>0$
and, for every $l\geq 0$, a constant $K_{l}$ such that 
\begin{equation}\label{eq:pcti}
\|\d_{0}\varphi(t,\cdot)-\varphi_{1}\|_{C^{l}} 
+\|\varphi(t,\cdot)-\varphi_{1}t-\varphi_{0}\|_{C^{l}}
\leq 
K_{l}e^{-b\mH t}
\end{equation}
for all $l\geq 0$ and $t\geq 0$.
Finally, if $\zeta>1$, there is an anti symmetric matrix $A$, given by
\[
A=\left(\begin{array}{cc}
               0 & \de \mH\\
              -\de \mH & 0
        \end{array}
  \right),
\]
where $\delta=3(\zeta-1)^{1/2}/2$, smooth functions $\varphi_{0}$ and
$\varphi_{1}$, a constant $b>0$ and, for every $l\geq 0$, a constant $K_{l}$ 
such that 
\begin{equation}\label{eq:pcthi}
\left\| e^{-At}\left(\begin{array}{c}
                                         \de \mH \varphi\\
					 \d_{0}\varphi
                                  \end{array}
                            \right)(t,\cdot)
                -\left(\begin{array}{c}
                                         \varphi_{0}\\
					 \varphi_{1}
                                  \end{array}
                            \right)\right\|_{C^{l}}\leq K_{l}e^{-b\mH t}
\end{equation}
for all $l\geq 0$ and $t\geq 0$. In order to describe the asymptotics
for the distribution function, let $\sfx=\psi^{-1}$. Then $(\sfx,U)$
are canonical local coordinates, where 
\[
U=\psi[(t_{-},\infty)\times B_{5/8}(0)]. 
\]
Let $\sff_{\sfx}=f\circ\maco{x}^{-1}$ and
\begin{equation}\label{eq:hfdefi}
\sfh(t,\bx,\bq)=\sff_{\sfx}(t,\bx,e^{-2\mH t-K-\krov}\bq).
\end{equation}
Introduce, moreover, the notation
\[
\|\bsff\|_{H^{l}_{\roV,\mu}[B_{5/8}(0)\times\rn{3}]}=\left(
\sum_{|\a|+|\b|\leq l}\int_{B_{5/8}(0)}\irn{3}
\lbpr^{2\mu+2|\b|}|\d^{\a}_{\bx}\d^{\b}_{\bp}\bsff(\bx,\bp)|^{2}d\bp d\bx
\right)^{1/2}
\]
for $\bsff\in C^{\infty}[B_{5/8}(0)\times\rn{3}]$.
Then there is a constant $b>0$ and, for every $l$, a constant $K_{l}$ such that
\begin{equation}\label{eq:dhdtdeci}
\|\d_{t}\sfh(t,\cdot)\|_{H^{l}_{\roV,\mu}[B_{5/8}(0)\times\rn{3}]}
\leq K_{l}e^{-b\mH t}
\end{equation}
holds for all $l\geq 0$ and $t\geq 0$. There is also a function 
$\bsfh\in C^{\infty}[B_{5/8}(0)\times\rn{3}]$, a constant $b>0$ and, for every 
$l$, a constant $K_{l}$ such that 
\begin{eqnarray}
\|\bsfh\|_{H^{l}_{\roV,\mu}[B_{5/8}(0)\times\rn{3}]} & < & \infty,\nonumber\\
\|\sfh(t,\cdot)-\bsfh\|_{H^{l}_{\roV,\mu}[B_{5/8}(0)\times\rn{3}]} & \leq & 
K_{l}e^{-b\mH t}\label{eq:hmhbesti}
\end{eqnarray}
hold for all $l\geq 0$ and $t\geq 0$.
Furthermore, $\bsfh\geq 0$. Concerning the stress energy tensor associated with
the Vlasov matter, there is a $b>0$ and, for every $l\geq 0$, a constant 
$K_{l}$ such that the estimates
\begin{eqnarray}
\left\|e^{3(\mH t+\krov)}T^{\roV}_{00}-\irn{3} \bsfh|\bvr|^{1/2}d\bq
\right\|_{C^{l}} & \leq & K_{l}e^{-b\mH t},
\phantom{nonse}\label{eq:vlaszz}\\
\left\|e^{3(\mH t+\krov)}T^{\roV}_{0i}+\irn{3} \bmfq_{i}\bsfh|\bvr|^{1/2}
d\bq\right\|_{C^{l}} & \leq & K_{l}e^{-b\mH t},
\label{eq:vlaszi}\\
\left\|e^{2\mH t+3\krov}T^{\roV}_{ij}\right\|_{C^{l}} & \leq & 
K_{l}\label{eq:vlasij}
\end{eqnarray}
hold for all $l\geq 0$ and all $t\geq 0$, where $|\bvr|$ denotes
the absolute value of the determinant of $\bvr$,
\[
\bmfq_{i}=\assh_{i}+e^{K-\krov}\bvr_{ij}\bq^{j}
\] 
and $\assh_{i}$ is defined in (\ref{eq:videfi}). Finally, if $\mu>9/2$, 
there is a constant $b>0$ and, for every $l\geq 0$, a constant $K_{l}$ 
such that 
\begin{equation}\label{eq:vlasijimp}
\left\|
e^{3(\mH t+\krov)}T^{\roV}_{ij}-
\irn{3} \sfhb \bmfq_{i}\bmfq_{j}|\bvr|^{1/2}
d\bq\right\|_{C^{l}}
\leq K_{l}e^{-b \mH t}
\end{equation}
holds for all $l\geq 0$ and $t\geq 0$.
\end{thm}
\begin{remark}
In case one is only interested in the Einstein-Vlasov setting with a 
positive cosmological constant, more detailed information can be 
obtained; cf. \cite[Proposition~32.8, pp.~609--611]{stab}. 
\end{remark}

\subsection{Cauchy stability}

In what follows, we also need a Cauchy stability result in the 
Einstein-Vlasov-non-linear scalar field setting. There are such 
results in the literature; cf. \cite{stab}. However, for the convenience
of the reader, we introduce the 
necessary terminology and quote the relevant result in the present 
subsection. 

To begin with, we need to introduce the notion of a background solution;
cf. \cite[Definition~24.2, p.~421]{stab}. In the $3$-dimensional case, 
this definition takes the following form.
\begin{definition}\label{definition:bgsol}
Let $5/2<\mu\in\ro$, $\Sigma$ be a closed 
$3$-dimensional manifold, and let $g$ be a smooth time oriented Lorentz metric 
on $M=I\times \Sigma$, where $I$ is an open interval. Let $\partial_{t}$ denote
differentiation with respect to the first coordinate and assume that 
$g(\partial_{t},\partial_{t})=g_{00}<0$ and that the hypersurfaces 
$\Sigma_{t}=\{ t\}\times \Sigma$ are spacelike with respect to 
$g$ for $t\in I$. Finally, assume that $\phi\in C^{\infty}
(M)$ and $f\in\dm^{\infty}_{\mu}(\mP)$, together with $g$, 
satisfy (\ref{eq:efeB})--(\ref{eq:vezB}).
Then $(M,g,f,\phi)$ is called a \textit{background solution}. 
\end{definition}
\begin{remark}
In the case of $\tn{2}$-symmetric solutions, the metric is of the form 
(\ref{eq:metric}). Moreover, the distribution functions of interest have 
compact support on constant time hypersurfaces. As a consequence, it is clear 
that the $\tn{2}$-symmetric solutions we consider in the present paper are 
background solutions in the sense of the above definition. 
\end{remark}

Next, we introduce the notion of induced initial data on constant $t$
hypersurfaces; cf. \cite[Definition~24.3, p.~421]{stab}. In the 
$3$-dimensional case, this definition takes the following form.

\begin{definition}
Let $5/2<\mu\in\ro$, $\Sigma$ be a closed
$3$-dimensional manifold, and let $g$ be a smooth time oriented Lorentz metric 
on $M=I\times \Sigma$, where $I$ is an open interval. Let, furthermore, 
$\phi\in C^{\infty}(M)$, $f\in\dm^{\infty}_{\mu}(\mP)$ and assume that 
$(g,f,\phi)$ solve (\ref{eq:efeB})--(\ref{eq:vezB}). Let $t\in I$ and assume
$\Sigma_{t}=\{t\}\times \Sigma$ to be spacelike with respect to $g$. Let,
furthermore, $\kappa$ be the second fundamental form and $N$ be
the future directed unit normal of $\Sigma_{t}$. Finally, let 
$\iota_{t}:\Sigma\rightarrow M$ be defined by $\iota_{t}(\bx)=(t,\bx)$ and
\[
\bg=\iota_{t}^{*}g,\ \ \
\bk=\iota_{t}^{*}\kappa,\ \ \ 
\fb=\iota_{t}^{*}(f\circ\pros{\Sigma_{t}}^{-1}),\ \ \
\bphi_{0}=\iota_{t}^{*}\phi,\ \ \
\bphi_{1}=\iota_{t}^{*}(N\phi).
\]
Then $(\bg,\bk,\fb,\bphi_{0},\bphi_{1})$ are referred to
as the \textit{initial data induced on} 
$\Sigma_{t}$ by 
$(g,f,\phi)$, or simply the initial data induced on $\Sigma_{t}$ 
if the solution is understood from the context. 
\end{definition}

Finally, we are in a position to formulate the Cauchy stability result
we need here; cf. \cite[Corollary~24.10, p.~432]{stab}. In the 
$3$-dimensional case, this result takes the following form.

\begin{thm}\label{thm:caus}
Let $5/2<\mu\in\ro$ and $5/2<l\in\zo$.
Let $(M_{\backg},g_{\backg},f_{\backg},\phi_{\backg})$ be a background 
solution with $M_{\backg}=I_{\backg}\times\Sigma$ and 
recall the notation $\Sigma$, $\Sigma_{t}$ etc. from Definition 
\ref{definition:bgsol} (the interval which was denoted by $I$ in Definition 
\ref{definition:bgsol} will here be denoted by
$I_{\backg}$).  Assume that $0\in I_{\backg}$ and let 
$(\bg_{\backg},\bk_{\backg},\fb_{\backg},\bphi_{\backg,0},\bphi_{\backg,1})$ be the 
initial data induced on $\Sigma_{0}$ by 
$(g_{\backg},f_{\backg},\phi_{\backg})$. Make a choice of 
$H^{l}_{\roV,\mu}(T\Sigma)$-norms and a choice of Sobolev 
norms $\|\cdot\|_{H^{l}}$ on tensor fields on $\Sigma$. Let $J\subset 
I_{\backg}$ be a compact interval and let $\e>0$. Then there 
is a $\de>0$ such that if $(\Sigma,\bg,\bk,\fb,\bphi_{0},\bphi_{1})$ are initial 
data for the Einstein-Vlasov-non-linear scalar field system satisfying
\begin{equation*}
\begin{split}
\|\bg-\bg_{\backg}\|_{H^{l+1}}+\|\bk-\bk_{\backg}\|_{H^{l}}
+\|\bphi_{0}-\bphi_{\backg,0}\|_{H^{l+1}}
+\|\bphi_{1}-\bphi_{\backg,1}\|_{H^{l}}
+\|\fb-\fb_{\backg}\|_{H^{l}_{\roV,\mu}(T\Sigma)} \leq \de,
\end{split}
\end{equation*}
then there is an open interval $I$ containing $0$ and a solution 
$(g,f,\phi)$ to (\ref{eq:efeB})--(\ref{eq:vezB}) on $M=I\times \Sigma$
such that 
\begin{itemize}
\item the initial data induced on $\Sigma_{0}$ by $(g,f,\phi)$ are given by 
$(\bg,\bk,\fb,\bphi_{0},\bphi_{1})$,
\item $\partial_{t}$ is timelike with respect to $g$ and 
$\Sigma_{t}$ is a spacelike Cauchy hypersurface with respect
to $g$ for all $t\in I$,
\item $J\subset I$ and if the initial data induced on $\Sigma_{t}$ 
(for $t\in I_{\backg}\cap I$) by 
$(g,f,\phi)$ and $(g_{\backg},f_{\backg},\phi_{\backg})$ are denoted by 
$(\bg_{t},\bk_{t},\fb_{t},\bphi_{t,0},\bphi_{t,1})$ and 
$(\bg_{\backg,t},\bk_{\backg,t},\fb_{\backg,t},\bphi_{\backg,t,0}
,\bphi_{\backg,t,1})$ respectively, then 
\begin{equation}\label{eq:causeest}
\begin{split}
\|\bg_{t}-\bg_{\backg,t}\|_{H^{l+1}}+\|\bk_{t}-\bk_{\backg,t}\|_{H^{l}}
+\|\bphi_{t,0}-\bphi_{\backg,t,0}\|_{H^{l+1}}& \\
+\|\bphi_{t,1}-\bphi_{\backg,t,1}\|_{H^{l}}
+\|\fb_{t}-\fb_{\backg,t}\|_{H^{l}_{\roV,\mu}(T\Sigma)} & \leq \e
\end{split}
\end{equation}
for all $t\in J$. 
\end{itemize}
\end{thm}

\subsection{Stability of $\tn{3}$-Gowdy symmetric solutions}
\label{ssection:stable}

Combining Theorems~\ref{thm:as}, \ref{thm:main} and \ref{thm:caus}
yields a future stability result for the $\tn{2}$-symmetric solutions
considered in Theorem~\ref{thm:as}. Moreover, the solutions are stable
in the Einstein-Vlasov-non-linear scalar field setting. 

\begin{thm}\label{thm:stab}
Consider a $\tn{2}$-symmetric solution to the Einstein-Vlasov system with
a positive cosmological constant $\Lambda$. Choose coordinates so that the 
corresponding metric takes the form (\ref{eq:metric}) on $I\times\tn{3}$, 
where $I=(t_{0},\infty)$. Assume that the solution has $\lambda$-asymptotics.
Choose a $t\in I$ and let $i:\tn{3}\rightarrow I\times\tn{3}$ be given 
by $i(\bx)=(t,\bx)$. Let $\bg_{\backg}=i^{*}g$ and let $\bk_{\backg}$ denote the 
pullback (under $i$) of the second fundamental form induced on $i(\tn{3})$ by 
$g$. Let, moreover, 
\[
\fb_{\backg}=i^{*}(f\circ\pros{i(\tn{3})}^{-1}).
\]
Make a choice of $\mu>5/2$, a choice of norms as in 
Definition~\ref{definition:fshypintro} and a choice of Sobolev norms on 
tensorfields on $\tn{3}$. Let, in addition, $V:\ro\rightarrow\ro$ be a smooth 
function such that $V(0)=\Lambda$, $V'(0)=0$ and $V''(0)>0$. Then there is an 
$\e>0$ such that if $(\tn{3},\bg,\bk,\fb,\bphi_{0},\bphi_{1})$ are initial data 
for (\ref{eq:efeB})--(\ref{eq:vezB}), with
$\fb\in\dhy^{\infty}_{\mu}(T\tn{3})$, satisfying 
\[
\|\bg-\bg_{\backg}\|_{H^{5}}+\|\bk-\bk_{\backg}\|_{H^{4}}
+\|\fb-\fb_{\backg}\|_{H^{4}_{\roV,\mu}}+\|\bphi_{0}\|_{H^{5}}
+\|\bphi_{1}\|_{H^{4}}\leq \e,
\] 
then  the maximal globally hyperbolic development $(M,g,f,\phi)$ of the 
initial data is future causally geodesically complete. Moreover, there is a 
Cauchy hypersurface $\Sigma$ in $(M,g)$ such that for each point of $\Sigma$, there
is a neighbourhood $(\bsfx,U)$ such that Theorem~\ref{thm:main} applies. In 
particular, the asymptotics stated in Theorem~\ref{thm:main} thus hold. 
\end{thm}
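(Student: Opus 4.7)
Theorem~\ref{thm:as} provides strong asymptotic control of the background solution: on $[t_{1},\infty)\times\tn{3}$ the metric, the second fundamental form and the rescaled distribution function converge at polynomial rates to an explicit asymptotic profile $(\bar g_{\infty},\mH\bar g_{\infty},f_{\sca,\infty})$. The plan is to exploit this by pushing the given $\e$-perturbed initial data forward from the chosen Cauchy slice at time $t$ to a far-future slice $\{T\}\times\tn{3}$ on which the background already lies inside the basin of attraction of the general nonlinear future stability theorem of \cite{stab}, and then to invoke that theorem to obtain future causal geodesic completeness.

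\textbf{Three-step argument.} In step~1, I use the estimates (\ref{eq:HGest})--(\ref{eq:alest}), (\ref{eq:bgbkest}) together with the convergence of $f_{\sca}$ stated in Theorem~\ref{thm:as} to pick $T>t$ so large that the induced background data on $\{T\}\times\tn{3}$ lies in a prescribed small Sobolev neighbourhood of the asymptotic reference $(\bar g_{\infty},\mH\bar g_{\infty},f_{\sca,\infty})$ that falls inside the perturbative regime of \cite{stab}. In step~2, I appeal to local well-posedness and Cauchy stability for the Einstein--Vlasov system with positive cosmological constant in the norm $H^{5}\times H^{4}\times H^{4}_{\roV,\mu}$ on the compact interval $[t,T]$ (cf.\ \cite[Chapter~7]{stab}): there is a constant $C_{T}$ (depending on $T$ and on the background) such that an initial $\e$-perturbation of $(\bg_{\backg},\bk_{\backg},\fb_{\backg})$ yields a development that exists up to $\{T\}\times\tn{3}$, and whose data there differ from the background by at most $C_{T}\e$ in the same norm. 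In step~3, I shrink $\e$ so that $C_{T}\e$ is within the tolerance of the stability theorem of \cite{stab}; that theorem then supplies a future causally geodesically complete extension on $[T,\infty)$, and concatenation with the piece on $[t,T]$ produces the required future causally geodesically complete maximal globally hyperbolic development of the original perturbed data.

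\textbf{Main obstacle.} The substantive difficulty lies in step~1: translating the asymptotic information of Theorem~\ref{thm:as} --- expressed in the areal gauge and in the variables $\a,\l,P,Q,G,H$ --- into the precise hypotheses of the stability theorem of \cite{stab}, which is developed in a different (essentially CMC-adapted) gauge and is phrased in terms of the weighted phase-space norms of Definitions~\ref{def:dinfintro}--\ref{definition:fshypintro}. The decay rates in (\ref{eq:HGest})--(\ref{eq:alest}) and the regularity of the limits $\bar g_{\infty}$ and $f_{\sca,\infty}$ are strong enough to carry out this translation, but one must verify the gauge transformation estimates relating the two foliations near the late slice $\{T\}\times\tn{3}$, check that the rescaled matter limit lies in $\dhy^{\infty}_{\mu}$ for $\mu>5/2$, and propagate the weighted Sobolev norm of the Vlasov perturbation along the flow on $[t,T]$. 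Once these ingredients are in place, the three-step scheme above delivers the theorem without further analytic input.
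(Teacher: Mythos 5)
Your high-level strategy — propagate forward with Cauchy stability to a late slice and then invoke a stability theorem from \cite{stab} — is the same template the paper uses. Steps 2 and 3 of your proposal correspond exactly to the paper's appeal to Cauchy stability \cite[Corollary~24.10]{stab} and the final shrinking of $\e$. However, there is a genuine gap in step 1 and in your characterisation of the ``main obstacle.''

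You write as though there is a single global stability theorem in \cite{stab} with a basin of attraction that the background eventually enters. This does not work here: the global nonlinear stability result in \cite{stab} concerns spatially homogeneous backgrounds, whereas the limiting data $(\bar g_{\infty},\mH\bar g_{\infty},f_{\sca,\infty})$ from Theorem~\ref{thm:as} are genuinely $\theta$-dependent (cf.\ (\ref{eq:bginfdef}), where $\a_{\infty},P_{\infty},Q_{\infty},G_{\infty},H_{\infty}$ are nonconstant functions on $\so$). Thus the background does not approach a spatially homogeneous solution, and there is no relevant global stability theorem whose hypotheses are met on all of $\tn{3}$. What the paper actually does is apply a \emph{local} stability theorem, \cite[Theorem~7.16]{stab}, in a neighbourhood of each point of a late slice: it constructs rescaled geodesic normal coordinates $\bsfy=e^{-K}t^{1/2}\bsfx$ centred at each $\bar x\in\tn{3}$, verifies the pointwise smallness condition (\ref{eq:kmHg}), the spatial-metric estimate (\ref{eq:rsmmlm}), and the weighted phase-space bound \cite[(7.37)]{stab} (with a time-dependent choice of $w=K+\krov$), and then extracts a finite subcovering of $\tn{3}$. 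This patching is the essential extra idea and is absent from your sketch. Relatedly, the obstacle you name — a translation between an ``areal'' and a ``CMC-adapted'' gauge — is not the real difficulty; the paper never discusses a CMC gauge in this step, and the substantive work is the careful construction of local adapted coordinates around each point, with uniform bounds independent of the base point, to make the local theorem applicable.

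A secondary issue: after applying the local theorem on each patch and invoking Cauchy stability, one still has to argue that the pieces assemble into a single future causally geodesically complete development of the perturbed data; your word ``concatenation'' glosses over this, and the paper handles it through the finite-subcovering and Cauchy-stability step rather than as a separate gluing argument. To repair your proposal you would need to replace the invocation of a global stability theorem by the local theorem \cite[Theorem~7.16]{stab} applied in rescaled geodesic normal coordinates around each point, verify the smallness conditions uniformly in the base point using (\ref{eq:HGest})--(\ref{eq:bgbkest}) and Lemma~\ref{lemma:fas}, and then cover $\tn{3}$ finitely.
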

\begin{remark}
Up to the point where we appeal to Theorem~\ref{thm:main}, Cauchy stability 
applies. It should thus be possible to obtain detailed control over the 
perturbed solutions for the entire future. The interested reader is encouraged 
to write down the details. 
\end{remark}
\begin{remark}
The function $\fb_{\backg}$ has compact support, but $\fb$ need not have 
compact support. 
\end{remark}
The proof is to be found in Section~\ref{section:stab}. 

\subsection{Outline}

Finally, let us give an outline of the paper. In Section~\ref{section:syeq},
we write down the equations in the case that the metric takes the form 
(\ref{eq:metric}) (though the reader interested in a derivation is 
referred to Appendix~\ref{section:derofequations}). In 
Section~\ref{section:prel}, we then collect the conclusions 
which are not dependent on the particular type of matter model (as long 
as it satisfies the dominant energy condition and the non-negative pressure
condition). The section ends with conclusions concerning the causal structure
of $\tn{3}$-Gowdy symmetric spacetimes. Turning to the more detailed 
conclusions, we specialise to the case of Vlasov matter. The natural
first step is to derive light cone estimates; i.e., to consider
the behaviour along characteristics. This is the subject of 
Section~\ref{section:lces}. As opposed to the vacuum case, we need to control
the characteristics associated with the Vlasov equation at the same time 
as the first derivatives of the metric components. Fortunately, the 
$e_{2}$- and $e_{3}$-components of the momentum are controlled automatically due 
to the symmetry. However, an argument is required in the case of the 
$e_{1}$-component. In order to obtain control of higher order derivatives, 
we need to take derivatives of the characteristic system (associated with 
the Vlasov equation; i.e. with the geodesic flow). Naively, this should 
require control of second order derivatives of the metric functions, something
we do not have. Nevertheless, by an appropriate choice of variables, 
controlling first order derivatives turns out to be sufficient.
It is of interest to note that a similar choice was already
suggested in \cite[Lemma~3, p.~363]{andreasson}; cf. also 
\cite[Lemma~3, p.~257]{arw}. However, in the present setting, it is not 
sufficient to derive a system involving only first order derivatives of the
metric functions. We also need to 
be able to use the system to derive the desired type of asymptotics for 
the derivatives of the characteristic system. It turns out to be possible
to do this, and we write down the required arguments in 
Section~\ref{section:charsys}. After we have obtained this conclusion, it 
turns out to be possible to proceed inductively in order to derive higher
order estimates for the characteristic system and the metric components.
The required arguments are written down in Sections~\ref{section:holce}
and \ref{section:hodecs}. In order to obtain the desired conclusions concerning
the distribution function, it turns out to be convenient to consider 
$L^{2}$-based energies. This subject is treated in Section~\ref{section:eedf}.
Finally, in Section~\ref{section:stab}, we prove the main theorems of the paper.
As an appendix to the paper, we include a derivation of Einstein's equations
as well as of the Vlasov equation; cf. Appendix~\ref{section:derofequations}.
We also provide a summary of the most important notation in 
Appendix~\ref{section:notation}.

\section{Symmetry assumptions and equations}\label{section:syeq}

In this paper, we study $\tn{2}$-symmetric solutions of Einstein's equations. 
Since it will turn out to be convenient to express the equations using the 
orthonormal frame (\ref{eq:ONframe}), let us introduce the notation 
\begin{equation}\label{eq:rhodefetc}
\rho=T(e_{0},e_{0}),\ \ \
J_{i}=-T(e_{0},e_{i}),\ \ \
P_{i}=T(e_{i},e_{i}),\ \ \
S_{ij}=T(e_{i},e_{j}),
\end{equation}
where we do not sum over any indices; here, and below, we tacitly assume 
Latin indices to range from $1$ to $3$ and Greek indices to range from $0$ to 
$3$. It is also convenient to introduce the notation 
\begin{equation}\label{eq:JKdef}
J=-t^{5/2}\a^{1/2}e^{P-\lambda/2}(G_{t}+QH_{t}),\ \ \
K=QJ-t^{5/2}\a^{1/2}e^{-P-\lambda/2}H_{t}.
\end{equation}
Note that these objects are the twist quantities introduced in (\ref{eq:twq});
cf. Appendix~\ref{ssection:twqua}. In order to derive Einstein's equations,
it is useful to calculate the Einstein tensor for a metric of the form 
(\ref{eq:metric}). The corresponding, somewhat lengthy, computations are 
to be found in Section~\ref{section:derofequations}. Using the above notation, 
the calculations yield the conclusion that the $00$ and $11$-components of 
Einstein's equations can be written 
\begin{eqnarray}
\lambda_{t}-2\frac{\a_{t}}{\a} & = &
t\left[P_{t}^{2}+\a P_{\theta}^{2}+e^{2P}(Q_{t}^{2}+\a Q_{\theta}^{2})\right]
+\frac{e^{\lambda/2-P}J^{2}}{t^{5/2}}
+\frac{e^{\lambda/2+P}(K-QJ)^{2}}{t^{5/2}}\label{eq:Ezz}\\
& & +4t^{1/2}e^{\lambda/2}(\rho+\Lambda),\nonumber\\
\lambda_{t} & = & t\left[P_{t}^{2}+\a P_{\theta}^{2}+e^{2P}(Q_{t}^{2}+\a Q_{\theta}^{2})
\right]-\frac{e^{\lambda/2-P}J^{2}}{t^{5/2}}
-\frac{e^{\lambda/2+P}(K-QJ)^{2}}{t^{5/2}}\label{eq:altlteq}\\
 & & +4t^{1/2}e^{\lambda/2}(P_{1}-\Lambda),\nonumber
\end{eqnarray}
respectively. The $22$-component minus the $33$-component can be written 
\begin{equation}\label{eq:ttmthth}
\begin{split}
\d_{t}(t\a^{-1/2}P_{t}) = & \d_{\theta}(t\a^{1/2}P_{\theta})
+t\a^{-1/2}e^{2P}(Q_{t}^{2}-\a Q_{\theta}^{2})+
\frac{\a^{-1/2}e^{\lambda/2-P}J^{2}}{2t^{5/2}}\\
 & -\frac{\a^{-1/2}e^{\lambda/2+P}(K-QJ)^{2}}{2t^{5/2}}
+t^{1/2}e^{\lambda/2}\a^{-1/2}(P_{2}-P_{3}).
\end{split}
\end{equation}
The $22$-component plus the $33$-component can be written 
\begin{equation}\label{eq:ttpthth}
\begin{split}
\d_{t}\left[t\a^{-1/2}\left(\lambda_{t}
-2\frac{\a_{t}}{\a}-\frac{3}{t}\right)\right] = &
\d_{\theta}\left(t\a^{1/2}\lambda_{\theta}\right)
-t\a^{-1/2}\left[P_{t}^{2}+e^{2P}Q_{t}^{2}-\a (P_{\theta}^{2}+e^{2P}Q_{\theta}^{2})
\right]\\
 & -2t\a^{-1/2}\left(\frac{e^{\lambda/2-P}J^{2}}{t^{7/2}}
+\frac{e^{\lambda/2+P}(K-QJ)^{2}}{t^{7/2}}\right)\\
 & +\a^{-1/2}\lambda_{t}+2t^{1/2}e^{\lambda/2}\a^{-1/2}(2\Lambda-P_{2}-P_{3}).
\end{split}
\end{equation}
The $01$, $02$, $03$, $12$ and $13$-components are equivalent to 
\begin{eqnarray}
\lambda_{\theta} & = & 2t(P_{t}P_{\theta}+e^{2P}Q_{t}Q_{\theta})
-4t^{1/2}e^{\lambda/2}\a^{-1/2}J_{1},\label{eq:altltheq}\\
J_{\theta} & = & 2t^{5/4}\a^{-1/2}e^{P/2+\lambda/4}J_{2},\label{eq:Jth}\\
K_{\theta} & = & 2t^{5/4}\a^{-1/2}e^{-P/2+\lambda/4}J_{3}
+2t^{5/4}\a^{-1/2}e^{P/2+\lambda/4}QJ_{2},\label{eq:Kth}\\
J_{t} & = & -2t^{5/4}e^{\lambda/4+P/2}S_{12},\label{eq:Jt}\\
K_{t} & = & -2t^{5/4}e^{\lambda/4+P/2}QS_{12}
-2t^{5/4}e^{-P/2+\lambda/4}S_{13},\label{eq:Kt}
\end{eqnarray}
respectively. Finally, the $23$-component reads
\begin{equation}\label{eq:Etth}
\d_{t}(t\a^{-1/2}e^{2P}Q_{t})-\d_{\theta}(t\a^{1/2}
e^{2P}Q_{\theta})=t^{-5/2}\a^{-1/2}e^{\lambda/2+P}J(K-QJ)
+2t^{1/2}\a^{-1/2}e^{\lambda/2+P}S_{23}.
\end{equation}
For future reference, it is also of interest to note that 
\begin{eqnarray}
\frac{\a_{t}}{\a} & = & -\frac{e^{-P+\lambda/2}J^{2}}{t^{5/2}}
-\frac{e^{P+\lambda/2}(K-QJ)^{2}}{t^{5/2}}-4t^{1/2}e^{\lambda/2}\Lambda
-2t^{1/2}e^{\lambda/2}(\rho-P_{1}),\label{eq:altateq}\\
\lambda_{t}-\frac{\a_{t}}{\a} & = & t\left[P_{t}^{2}+\a P_{\theta}^{2}+e^{2P}(Q_{t}^{2}
+\a Q_{\theta}^{2})\right]+2t^{1/2}e^{\lambda/2}(\rho+P_{1}).\label{eq:altltmat}
\end{eqnarray}

\subsection{Preliminary calculations}

Since the metric components only depend on two variables, it is natural
to derive estimates by integrating along characteristics. In the present
subsection, we record a general calculation which is of interest in that 
context. To begin with, let us define
\begin{equation}\label{eq:dpmapm}
\d_{\pm}=\d_{t}\pm\a^{1/2}\d_{\theta}, \ \ \
\ma_{\pm}=(\d_{\pm}P)^{2}+
e^{2P}(\d_{\pm}Q)^{2}.
\end{equation}
One reason for introducing $\ma_{\pm}$ is the equality (\ref{leg}) derived
below; since the right hand side only contains first derivatives of the 
metric components, it is possible to integrate along the characteristics to 
control $\ma_{\pm}$. 

\begin{lemma}
Consider a $\tn{2}$-symmetric solution to Einstein's equations with a 
cosmological constant $\Lambda$ such that the metric takes the form 
(\ref{eq:metric}). Then 
\begin{equation}\label{leg}
\begin{split}
\d_{\pm}{\cal A}_{\mp} = &
-\left(\frac{2}{t}-\frac{\alpha_{t}}{\alpha}\right)\ma_{\mp}\mp
\frac{2}{t}\a^{1/2}
(P_{\theta}\d_{\mp}P+e^{2P}Q_{\theta}\d_{\mp}Q)\\
&+\frac{e^{-P+\lambda/2}J^{2}}{t^{7/2}}\d_{\mp}P
-\frac{e^{P+\lambda/2}(K-QJ)^{2}}{t^{7/2}}\d_{\mp}P
+2\frac{e^{\lambda/2}J(K-QJ)}{t^{7/2}}e^{P}\d_{\mp}Q\\
&
+2t^{-1/2}e^{\lambda/2}(P_{2}-P_{3})\d_{\mp}P
+4t^{-1/2}e^{\lambda/2}S_{23}e^{P}\d_{\mp}Q.
\end{split}
\end{equation}
\end{lemma}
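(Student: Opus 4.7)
The identity (\ref{leg}) is a first-order energy-type calculation for the semilinear wave system (\ref{eq:ttmthth})--(\ref{eq:Etth}) governing $P$ and $Q$. The plan is to differentiate $\ma_{\mp}$ by $\d_{\pm}$ with the product rule, substitute for the second-order derivatives $\d_{\pm}\d_{\mp}P$ and $\d_{\pm}\d_{\mp}Q$ using those evolution equations, and then collect terms.

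First I would expand
\[
\d_{\pm}\ma_{\mp} = 2(\d_{\mp}P)\,\d_{\pm}\d_{\mp}P + 2(\d_{\pm}P)e^{2P}(\d_{\mp}Q)^{2} + 2e^{2P}(\d_{\mp}Q)\,\d_{\pm}\d_{\mp}Q.
\]
A direct computation gives $\d_{\pm}\d_{\mp}f = f_{tt} - \a f_{\theta\theta} \mp \tfrac{\a_{t}}{2\a^{1/2}}f_{\theta} - \tfrac{\a_{\theta}}{2}f_{\theta}$, so the principal part is the $(t,\theta)$-wave operator with speed $\a^{1/2}$. From (\ref{eq:ttmthth}), distributing the outer derivatives and dividing by $t\a^{-1/2}$, I would solve for $P_{tt} - \a P_{\theta\theta}$; combining the resulting $\tfrac{\a_{t}}{2\a}P_{t}$ with the $\mp \tfrac{\a_{t}}{2\a^{1/2}}P_{\theta}$ contributed by $\d_{\pm}\d_{\mp}$ through $P_{t} \mp \a^{1/2}P_{\theta} = \d_{\mp}P$ assembles a damping $\tfrac{\a_{t}}{2\a}\d_{\mp}P$, while the two $\tfrac{\a_{\theta}}{2}P_{\theta}$ terms cancel outright. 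The net expression for $\d_{\pm}\d_{\mp}P$ thus consists of a linear damping $-\tfrac{1}{t}P_{t} + \tfrac{\a_{t}}{2\a}\d_{\mp}P$, a cubic coupling $e^{2P}(\d_{+}Q)(\d_{-}Q)$ coming from the $e^{2P}(Q_{t}^{2} - \a Q_{\theta}^{2})$ source, and the explicit twist and matter terms. The analogous manipulation of (\ref{eq:Etth}), together with the identity $P_{t}Q_{t} - \a P_{\theta}Q_{\theta} = \tfrac{1}{2}(\d_{+}P\,\d_{-}Q + \d_{-}P\,\d_{+}Q)$, yields a similar expression for $\d_{\pm}\d_{\mp}Q$ in which the cubic coupling takes the form $-(\d_{\pm}P)(\d_{\mp}Q) - (\d_{\mp}P)(\d_{\pm}Q)$.

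Substituting both expressions into the product-rule expansion, all cubic $P$-$Q$ interactions cancel: the $2e^{2P}(\d_{\mp}P)(\d_{+}Q)(\d_{-}Q)$ piece produced from $\d_{\pm}\d_{\mp}P$ cancels the $-2e^{2P}(\d_{\mp}Q)(\d_{\mp}P)(\d_{\pm}Q)$ piece from $\d_{\pm}\d_{\mp}Q$, while the $-2e^{2P}(\d_{\pm}P)(\d_{\mp}Q)^{2}$ piece cancels the explicit $2(\d_{\pm}P)e^{2P}(\d_{\mp}Q)^{2}$ from the product rule. What remains is: the coefficient $\tfrac{\a_{t}}{\a}\ma_{\mp}$ directly from the two $\tfrac{\a_{t}}{2\a}\d_{\mp}P$, $\tfrac{\a_{t}}{2\a}\d_{\mp}Q$ contributions; the piece $-\tfrac{2}{t}[(\d_{\mp}P)P_{t} + e^{2P}(\d_{\mp}Q)Q_{t}]$, which I would rewrite via $P_{t} = \d_{\mp}P \pm \a^{1/2}P_{\theta}$ (and analogously for $Q_{t}$) as
\[
-\tfrac{2}{t}\ma_{\mp} \mp \tfrac{2}{t}\a^{1/2}\bigl(P_{\theta}\d_{\mp}P + e^{2P}Q_{\theta}\d_{\mp}Q\bigr);
\]
and the twist and matter source terms, which appear with exactly the coefficients stated in (\ref{leg}).

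The main obstacle is bookkeeping rather than ideas: one must carefully track the $\pm/\mp$ sign conventions through the commutator-like term $\mp\tfrac{\a_{t}}{2\a^{1/2}}f_{\theta}$ in $\d_{\pm}\d_{\mp}$, and verify the cubic cancellation exactly. No input is needed beyond the Einstein equations (\ref{eq:ttmthth}) and (\ref{eq:Etth}).
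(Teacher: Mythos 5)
Your calculation is correct and follows the same route as the paper: you derive $\d_{\pm}\d_{\mp}P$ from (\ref{eq:ttmthth}) and $\d_{\pm}\d_{\mp}Q$ from (\ref{eq:Etth}), substitute into the product-rule expansion of $\d_{\pm}\ma_{\mp}$, and observe the cubic cancellation, which is exactly the algebraic identity the paper records. The bookkeeping of the $\pm/\mp$ conventions and the assembly of the $\tfrac{\a_t}{2\a}\d_{\mp}$-damping are handled correctly, so the proposal matches the intended proof.
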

\begin{remark}
In this calculation, the cosmological constant need not be positive. 
\end{remark}
\begin{proof}
The statement follows from a lengthy computation. Let us, however, for the 
benefit of the reader, write down some of the intermediate steps. Using
(\ref{eq:ttmthth}), we obtain 
\begin{equation}\label{SD}
\begin{split}
\d_{\pm}\d_{\mp}P = & -\frac{1}{t}P_{t}+\frac{\alpha_{t}}{2\alpha}\d_{\mp}P
+e^{2P}(Q_{t}^{2}-\alpha Q_{\theta}^{2})\\
&+\frac{e^{-P+\lambda/2}J^{2}}{2t^{7/2}}
-\frac{e^{P+\lambda/2}(K-QJ)^{2}}{2t^{7/2}}
+t^{-1/2}e^{\lambda/2}(P_{2}-P_{3}).
\end{split}
\end{equation}
Similarly, due to (\ref{eq:Etth}), we obtain
\begin{equation}\label{eq:Qderint}
\d_{\pm}\d_{\mp}Q =  -\frac{1}{t}Q_{t}+\frac{\a_{t}}{2\a}\d_{\mp}Q
-2(Q_{t}P_{t}-\alpha Q_{\theta}P_{\theta})
 +\frac{e^{\lambda/2-P}J(K-QJ)}{t^{7/2}}+2t^{-1/2}e^{\lambda/2-P}S_{23}.
\end{equation}
Combining (\ref{SD}) and (\ref{eq:Qderint}) with the fact that 
\[
-4(Q_{t}P_{t}-\a P_{\theta}Q_{\theta})\d_{\mp}Q+2\d_{\pm}P(\d_{\mp}Q)^{2}=
-2\d_{\mp}P(Q_{t}^{2}-\a Q_{\theta}^{2}),
\]
a calculation yields the conclusion of the lemma. 
\end{proof}

\subsection{Vlasov matter}

The equations (\ref{eq:Ezz})--(\ref{eq:altltmat}) hold in general. However, we 
are here particularly interested in matter of Vlasov type. In order to 
derive the relevant form of the Vlasov equation, recall the 
conventions concerning $f$ introduced in Subsection~\ref{ssection:revl}. 
Recall, moreover, the fact that the Vlasov equation is equivalent to $f$ 
being constant along future directed unit timelike geodesics. As a 
consequence, it can be calculated (cf. Appendix~\ref{ssection:thevlasoveq}) 
that the Vlasov equation takes the form
\begin{equation}\label{vv}
\begin{split}
\frac{\partial f}{\partial
  t}+\frac{\a^{1/2}v^{1}}{v^{0}}\frac{\partial
  f}{\partial\theta}-&\left[
\frac{1}{4}\a^{1/2}\lambda_{\theta}\,v^{0}
+\frac{1}{4}\left(\lambda_{t}-\frac{2\alpha_t}{\alpha}-\frac{1}{t}\right)v^{1}
-\a^{1/2}e^{P}Q_{\theta}\frac{v^2v^3}{v^0}
\right.\\
&\left.\phantom{[}+\frac{1}{2}\a^{1/2}P_{\theta}\frac{(v^{3})^{2}
-(v^{2})^{2}}{v^{0}}
-t^{-7/4}e^{\lambda/4}\big(e^{-P/2}Jv^2+e^{P/2}(K-Q J)v^3\big)
\right]\frac{\partial
  f}{\partial v^{1}}\\
-&\left[\frac{1}{2}\left(P_{t}+\frac{1}{t}\right)v^{2}
+\frac{1}{2}\a^{1/2}P_{\theta}\frac{v^{1}v^{2}}{v^{0}}\right]
\frac{\partial f}{\partial v^{2}}\\
-&\left[\frac{1}{2}\left(\frac{1}{t}-P_{t}\right)v^{3}
-\frac{1}{2}\a^{1/2}P_{\theta}\frac{v^{1}v^{3}}{v^{0}}+e^{P}v^2
\left(Q_t+\a^{1/2}Q_{\theta}\frac{v^1}{v^0}\right)\right]\frac{\partial
  f}{\partial v^{3}}=0.
\end{split}
\end{equation}
Turning to the stress energy tensor, it satisfies
\begin{equation}\label{eq:setvl}
T(e_{\mu},e_{\nu})=\irn{3} v_{\mu}v_{\nu}f\frac{1}{-v_{0}}dv,
\end{equation}
where $v_{\a}=\eta_{\a\b}v^{\b}$ and $\eta=\mathrm{diag}\{-1,1,1,1\}$. In 
particular, in the case of Vlasov, we thus have 
\begin{equation}\label{eq:rhoPketcvl}
\rho=\int_{\mathbb{R}^{3}}v^{0}f\;dv,\ \ \
P_{k}=
\int_{\mathbb{R}^{3}}\frac{(v^{k})^{2}}{v^{0}}
f\;dv, \ \ \
 J_k=\int_{\mathbb{R}^{3}}v^{k}f\;dv, \ \ \
S_{jk}=\int_{\mathbb{R}^{3}}\frac{v^jv^k}{v^0}f\;dv,
\end{equation}
where $j,k=1,2,3$.

\section{Preliminary conclusions concerning the asymptotics}
\label{section:prel}

In the present section, we are interested in $\tn{2}$-symmetric solutions to 
Einstein's equations such that the corresponding metric admits a 
foliation of the form (\ref{eq:metric}) on $I\times\tn{3}$, where 
$I=(t_{0},\infty)$ and $t_{0}\geq 0$. For the sake of brevity, we shall below 
refer to solutions of this form as \textit{future global}, and we shall speak 
of $t_{0}$ and $t_{1}=t_{0}+2$ without further introduction. 

It is useful to begin by recalling the following
consequences of the non-negative pressure condition and the dominant
energy condition. 

\begin{lemma}\label{lemma:codecnnpr}
Consider a solution to Einstein's equations with a cosmological constant 
$\Lambda$ and a metric of the form (\ref{eq:metric}). Let $\rho$,
$P_{i}$, $J_{i}$ and $S_{ij}$, $i,j=1,2,3$, be 
defined by (\ref{eq:rhodefetc}). If the stress energy tensor satisfies
the non-negative pressure condition, then, for $i=1,2,3$, 
\begin{equation}\label{eq:nnpr}
0\leq P_{i}.
\end{equation}
If the stress energy tensor satisfies the dominant energy 
condition, then, for $i,j=1,2,3$, 
\begin{eqnarray}
0 & \leq & \rho,\label{eq:decco}\\
|P_{i}| & \leq & \rho,\label{eq:decct}\\
|J_{i}| & \leq & \rho,\label{eq:deccth}\\
|S_{ij}| & \leq & \rho.\label{eq:deccf}
\end{eqnarray}
\end{lemma}
\begin{proof}
By definition, $P_{i}=T(e_{i},e_{i})$. Since $e_{i}$ is a spacelike vector field, 
the non-negative pressure condition implies that (\ref{eq:nnpr}) holds. 
The dominant energy condition states that $T(u,v)\geq 0$ for future directed 
timelike vectors $u$ and $v$. By continuity, this inequality also holds
for future directed causal vectors. Since $e_{0}$ is future directed timelike,
$\rho=T(e_{0},e_{0})\geq 0$, so that (\ref{eq:decco}) follows. 
Note that $e_{0}\pm e_{i}$ is a future directed causal vector field. In 
particular,
\[
0\leq T(e_{0}-e_{i},e_{0}+e_{j})=\rho+J_{i}-J_{j}-S_{ij}.
\]
Since $S_{ij}$ is symmetric, adding this inequality with the one obtained by
interchanging $i$ and $j$ yields the conclusion that $S_{ij}\leq\rho$. 
Similarly, 
\[
0\leq T(e_{0}\pm e_{i},e_{0}\pm e_{j})=\rho\mp J_{i}\mp J_{j}+S_{ij}.
\]
Adding the two inequalities yields $-S_{ij}\leq\rho$. Thus (\ref{eq:deccf})
holds. The proof of (\ref{eq:decct}) is similar. Finally, 
\[
0\leq T(e_{0},e_{0}\pm e_{i})=\rho\mp J_{i},
\]
so that (\ref{eq:deccth}) holds. 
\end{proof}

Before deriving estimates describing the asymptotics of solutions, let us make 
the following remark. 

\begin{remark}
In what follows, the constants appearing in the estimates we state are
allowed to depend on the solution, unless otherwise indicated. 
\end{remark}

\begin{prop}\label{prop:abd}
Given a future global solution to Einstein's equations with a cosmological 
constant $\Lambda>0$, $\tn{2}$-symmetry and a stress energy tensor satisfying
the dominant energy condition, there 
is a constant $C>0$ such that
\begin{equation}\label{eq:baaest}
\a(t,\theta)\leq Ct^{-3}
\end{equation}
for all $(t,\theta)\in [t_{1},\infty)\times\so$. 
\end{prop}
\begin{remark}\label{remark:nlsf}
The same conclusion holds if we replace the cosmological constant with 
a non-linear scalar field with a potential with a positive lower bound;
in other words, if we set $\Lambda=0$ and consider stress energy tensors
of the form $T=T^{\mathrm{o}}+T^{\mathrm{sf}}$, where $T^{\mathrm{o}}$ is the stress 
energy tensor associated with matter fields satisfying the dominant energy 
condition, and $T^{\mathrm{sf}}$ is the stress energy tensor associated 
with a non-linear scalar field with a potential $V$ having a positive lower
bound. 
\end{remark}
\begin{proof}
Due to (\ref{eq:altltmat}) and the fact that the matter satisfies the 
dominant energy condition (so that (\ref{eq:decct}) holds), we 
conclude that $\lambda_{t}-\a_{t}/\a\geq 0$. There is thus a $c_{0}>0$ such that 
\[
(\a^{-1/2}e^{\lambda/2})(t,\theta)\geq c_{0}
\]
for all $(t,\theta)\in [t_{1},\infty)\times\so$. Combining this observation with
(\ref{eq:altateq}) and (\ref{eq:decct}),  we obtain
\[
\d_{t}\a^{-1/2}=-\frac{\a_{t}}{2\a}\a^{-1/2}\geq 
2t^{1/2}\a^{-1/2}e^{\lambda/2}\Lambda\geq c_{1}t^{1/2}
\]
for some constant $c_{1}>0$ and all $(t,\theta)\in [t_{1},\infty)\times\so$.
Integrating this inequality, we obtain the conclusion of the proposition. 
\end{proof}

In the Gowdy case, the second and third terms on the right hand side of 
(\ref{eq:altlteq}) are zero, and as a consequence, we can extract more 
information. In fact, we have the following observation. 

\begin{prop}\label{prop:llb}
Consider a future global solution to Einstein's equations with a cosmological 
constant $\Lambda>0$, $\tn{3}$-Gowdy symmetry and matter satisfying the 
non-negative pressure condition. Then there is, for every $\e>0$, a $T>t_{0}$ 
such that
\[
\l(t,\theta)\geq-3\ln t+2\ln\left(\frac{3}{4\Lambda}\right)-\e
\]
for all $(t,\theta)\in [T,\infty)\times\so$. 
\end{prop}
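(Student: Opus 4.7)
The plan is to exploit Gowdy symmetry, which forces the twist quantities $J$ and $K$ to vanish identically, in order to reduce (\ref{eq:altlteq}) to a Riccati-type differential inequality in $t$ at each fixed $\theta$. Specialising (\ref{eq:altlteq}) to the Gowdy case, the two twist terms drop out and the right hand side becomes the sum of the non-negative kinetic expression $t[P_t^2 + \a P_\theta^2 + e^{2P}(Q_t^2 + \a Q_\theta^2)]$ and $4t^{1/2} e^{\lambda/2}(P_1 - \Lambda)$. Using the non-negative pressure condition to discard $P_1 = T(e_1,e_1) \geq 0$ and throwing away the non-negative kinetic term, I arrive at the pointwise inequality
\[
\lambda_t(t,\theta) \geq -4\Lambda t^{1/2} e^{\lambda(t,\theta)/2}.
\]

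To integrate this, I linearise via the substitution $z(t,\theta) = e^{-\lambda(t,\theta)/2}$, which converts the above into $z_t \leq 2\Lambda t^{1/2}$. Integrating from $t_1$ to $t$ at fixed $\theta$ yields
\[
z(t,\theta) \leq z(t_1,\theta) + \tfrac{4}{3}\Lambda\bigl(t^{3/2} - t_1^{3/2}\bigr).
\]
By compactness of $\so$ and smoothness of the solution, $z(t_1,\theta)$ is bounded above by some constant $C_1$ independent of $\theta$, so inverting gives
\[
e^{\lambda(t,\theta)/2} \geq \frac{1}{C_1 + (4/3)\Lambda t^{3/2}}.
\]

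Taking logarithms and factoring out $(4/3)\Lambda t^{3/2}$, I obtain
\[
\lambda(t,\theta) \geq -3\ln t + 2\ln\frac{3}{4\Lambda} - 2\ln\!\Bigl(1 + \frac{C_1}{(4/3)\Lambda t^{3/2}}\Bigr),
\]
and the last term tends to zero uniformly in $\theta$ as $t \to \infty$. Given $\epsilon > 0$, choosing $T > t_0$ so large that the correction is bounded by $\epsilon$ for all $t \geq T$ completes the proof. There is no serious obstacle here: the essential structural point is that in Gowdy symmetry the twist contributions, which are the only terms capable of driving $\lambda_t$ further negative than the bare cosmological contribution in (\ref{eq:altlteq}), vanish. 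This is precisely the feature that makes $\lambda$-asymptotics automatic in the Gowdy case while leaving it open for general $\tn{2}$-symmetry, as flagged in Definition~\ref{def:las}.
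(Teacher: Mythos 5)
Your proof is correct, and the two arguments start from the same pointwise inequality $\lambda_t \geq -4\Lambda t^{1/2}e^{\lambda/2}$ (obtained in the same way: kill the twist terms by Gowdy symmetry, drop the kinetic term, and drop $P_1\geq 0$ by the non-negative pressure condition). Where you diverge from the paper is in how this differential inequality is integrated. The paper passes to $\hl=\lambda+3\ln t-2\ln\frac{3}{4\Lambda}$, rewrites the inequality as $\d_t\hl\geq\frac{3}{t}(1-e^{\hl/2})$, and then runs a qualitative comparison/monotonicity argument: whenever $\hl<-\e$ the right side is a positive non-integrable multiple of $1/t$, so $\hl$ must eventually rise above $-\e$ and, since $\d_t\hl>0$ throughout the region $\hl<0$, stay there. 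You instead substitute $z=e^{-\lambda/2}$, which turns the Riccati-type inequality into the linear bound $z_t\leq 2\Lambda t^{1/2}$, integrate explicitly, and read off a quantitative lower bound for $\lambda$ with an explicit $O(t^{-3/2})$ error inside the logarithm. The two are, at bottom, the same observation (your $z$ and the paper's $\hl$ are related by $z=\tfrac{4\Lambda}{3}t^{3/2}e^{-\hl/2}$), but your integration is more elementary and yields a sharper, explicit rate rather than just the qualitative statement that $\hl\geq-\e$ eventually; the paper's route is slightly better adapted to how $\hl$ is reused downstream (e.g.\ in Lemma~\ref{lemma:lasstone}), but for the proposition as stated your version is self-contained and arguably cleaner. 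One small point worth making explicit: you should fix a starting time $t_1\in(t_0,\infty)$ before integrating, and note that $z(t_1,\cdot)$ is bounded because $\so$ is compact and the solution is smooth — you do say this, so the argument is complete.
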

\begin{proof}
Let
\begin{equation}\label{eq:hldef}
\hl=\l+3\ln t-2\ln\left(\frac{3}{4\Lambda}\right).
\end{equation}
Then (\ref{eq:altlteq}) with $J=K=0$ yields
\[
\d_{t}\hl=t\left[P_{t}^{2}+\a P_{\theta}^{2}+e^{2P}(Q_{t}^{2}
+\a Q_{\theta}^{2})\right]+4t^{1/2}e^{\lambda/2}P_{1}
+\frac{3}{t}(1-e^{\hl/2}). 
\]
Since $P_{1}\geq 0$ due to the non-negative pressure condition, cf. 
(\ref{eq:nnpr}), we conclude that 
\[
\d_{t}\hat{\lambda}\geq\frac{3}{t}(1-e^{\hat{\lambda}/2}).
\]
For every $\e>0$, there is thus a $T$ such that $\hl(t,\theta)\geq-\e$
for all $(t,\theta)\in [T,\infty)\times\so$. The proposition follows.
\end{proof}

In order to proceed, it is convenient to introduce an energy:
\begin{equation}\label{eq:ebasdef}
\begin{split}
E_{\bas} = & \is t\a^{-1/2}\left(\lambda_{t}-2\frac{\a_{t}}{\a}-4t^{1/2}e^{\lambda/2}
\Lambda\right)d\theta\\
 = & \is\left(t^{2}\a^{-1/2}\left[P_{t}^{2}+\a P_{\theta}^{2}+e^{2P}(Q_{t}^{2}
+\a Q_{\theta}^{2})\right]
+\frac{\a^{-1/2}e^{\lambda/2-P}J^{2}}{t^{3/2}}
\right.\\
&\left.\phantom{\frac{1}{2}}
+\frac{\a^{-1/2}e^{\lambda/2+P}(K-QJ)^{2}}{t^{3/2}}
+4t^{3/2}\a^{-1/2}e^{\lambda/2}\rho\right)d\theta.
\end{split}
\end{equation}
Let us motivate this particular choice. 
The energy $E_{\bas}$ is quite similar to the energy defined in 
\cite[(42), p. 251]{arw}. However, there is one fundamental difference. 
The integrand in the energy defined in \cite{arw} contains a term of 
the form $\a^{-1/2}U_{t}^{2}$, where $U=(P+\ln t)/2$. Using $U$ instead of 
$P$ as a variable is convenient in global existence arguments, since some
of the formulae become less involved. However, the variable $U$ is poorly
adapted to the actual asymptotics of solutions. The reason for this is that,
in the end, it turns out that $P_{t}$ converges to zero as $t^{-2}$. The
dominant term in $U$ is thus $\ln t/2$. Using $U$ instead of $P$ in the energy,
the best estimate of $E_{\bas}$ one could hope for would be $E_{\bas}(t)\leq 
Ct^{3/2}$. Below, we prove that $E_{\bas}\leq Ct^{1/2}$; cf. 
Lemma~\ref{lemma:lasstone}. In addition to this, it is possible to 
derive a good estimate for the time derivative of $E_{\bas}$; cf. the proof
of Lemma~\ref{lemma:Ebest} below. 

On a more general level, it is natural to ask why it is necessary to use 
$L^{2}$-based energy estimates at all. Since the problem is $1+1$-dimensional,
should it not be sufficient to consider the behaviour along characteristics?
The problem in our setting is that we wish to derive detailed quantitative 
information for arbitrary initial data. In particular, we are not in a 
situation where we can use bootstrap arguments. For this reason, we need
to proceed step by step. First, it is necessary to derive not only rough,
but quite detailed, control of some of the metric components, in particular
$\lambda$. This leads, for example, to estimates of the form (\ref{eq:Jenest}) 
and (\ref{eq:KQJenest}). Only once we have estimates of this form is it 
meaningful to turn to the characteristic system; cf., e.g., the last two terms 
on the right hand side of (\ref{eq:dVods}) and the proof of 
Lemma~\ref{lemma:dVodtest}. 

In case the metric has $\lambda$-asymptotics (recall Definition~\ref{def:las}),
it turns out to be possible to estimate $E_{\bas}$. 

\begin{lemma}\label{lemma:Ebest}
Consider a future global solution to Einstein's equations with a cosmological 
constant $\Lambda>0$, $\tn{2}$-symmetry, $\lambda$-asymptotics, and a stress 
energy tensor satisfying the dominant energy condition and the non-negative 
pressure condition. Then for every $a>1/2$, there is a constant $C_{a}>0$ such
that 
\begin{equation}\label{eq:Ebasprela}
E_{\bas}(t)\leq C_{a}t^{a}
\end{equation}
for all $t\geq t_{1}$.
\end{lemma}
\begin{proof}
Due to (\ref{eq:ttpthth}), we obtain 
\begin{equation}\label{eq:denden}
\begin{split}
\d_{t}\left[t\a^{-1/2}\left(\lambda_{t}-2\frac{\a_{t}}{\a}-4t^{1/2}e^{\lambda/2}
\Lambda\right)\right] = & \d_{\theta}(t\a^{1/2}\lambda_{\theta})
+2t\a^{1/2}(P_{\theta}^{2}+e^{2P}Q_{\theta}^{2})\\
 & -\frac{3}{2}\frac{\a^{-1/2}e^{\lambda/2-P}J^{2}}{t^{5/2}}
-\frac{3}{2}\frac{\a^{-1/2}e^{\lambda/2+P}(K-QJ)^{2}}{t^{5/2}}\\
 & -2t^{5/2}\a^{-1/2}e^{\lambda/2}\Lambda [P_{t}^{2}+e^{2P}Q_{t}^{2}
+\a (P_{\theta}^{2}+e^{2P}Q_{\theta}^{2})]\\
 & +t^{1/2}\a^{-1/2}e^{\lambda/2}(3\rho+P_{1}-2P_{2}-2P_{3})\\
 & -4t^{2}\a^{-1/2}e^{\lambda}\Lambda (\rho+P_{1}).
\end{split}
\end{equation}
Since the matter satisfies the non-negative pressure condition, 
we know that $P_{i}\geq 0$, cf. (\ref{eq:nnpr}), so that 
\begin{equation*}
\begin{split}
\frac{dE_{\bas}}{dt} \leq & \is 2t\a^{1/2}(P_{\theta}^{2}+e^{2P}Q_{\theta}^{2})d\theta
-\is 2t^{5/2}\a^{1/2}e^{\lambda/2}\Lambda (P_{\theta}^{2}+e^{2P}Q_{\theta}^{2})d\theta\\
 & +\is t^{1/2}\a^{-1/2}e^{\lambda/2}(3\rho+P_{1})d\theta
-\is 4t^{2}\a^{-1/2}e^{\lambda}\Lambda (\rho+P_{1})d\theta.
\end{split}
\end{equation*}
Using the consequences of Lemma~\ref{lemma:codecnnpr} and the fact that the 
solution has $\lambda$-asymptotics, we conclude that
for every $a>1/2$, there is a $T\geq t_{1}$ such that 
\[
\frac{dE_{\bas}}{dt}\leq \frac{a}{t}E_{\bas}
\]
for all $t\geq T$. As a consequence, $E_{\bas}(t)\leq Ct^{a}$ for $t\geq t_{1}$. 
\end{proof}

Using the estimate for $E_{\bas}$ derived in Lemma~\ref{lemma:Ebest}, it 
is possible to extract more information concerning the asymptotics. 

\begin{lemma}\label{lemma:lasstone}
Consider a future global solution to Einstein's equations with a cosmological 
constant $\Lambda>0$, $\tn{2}$-symmetry, $\lambda$-asymptotics, and a stress 
energy tensor satisfying the dominant energy condition and the non-negative 
pressure condition. Then there is a constant
$C>0$ such that 
\begin{eqnarray}
\left\|\lambda(t,\cdot)+3\ln t-2\ln\frac{3}{4\Lambda}\right\|_{C^{0}}
 & \leq & Ct^{-1/2},\label{eq:lhpreest}\\
E_{\bas}(t) & \leq & Ct^{1/2}\label{eq:Ebasest}
\end{eqnarray}
for all $t\geq t_{1}$.
\end{lemma}
\begin{proof}
Due to the estimate $E_{\bas}(t)\leq C_{a}t^{a}$, the fact that 
$\a^{1/2}\leq Ct^{-3/2}$, (\ref{eq:altlteq}) and (\ref{eq:decct}), we 
conclude that 
\begin{equation}\label{eq:ldtav}
\ldr{\lambda_{t}}=-4t^{1/2}\ldr{e^{\lambda/2}}\Lambda+O(t^{a-5/2});
\end{equation}
recall that the notation $\ldr{\lambda_{t}}$ was introduced in 
Remark~\ref{remark:average}. Due to (\ref{eq:altltheq})
and (\ref{eq:deccth}), we also have
\[
|\lambda_{\theta}|\leq t\a^{-1/2}[P_{t}^{2}+\a P_{\theta}^{2}
+e^{2P}(Q_{t}^{2}+\a Q_{\theta}^{2})]+4t^{1/2}\a^{-1/2}e^{\lambda/2}\rho.
\]
Due to (\ref{eq:Ebasprela}), we thus obtain
\begin{equation}\label{eq:lvarest}
\is|\lambda_{\theta}|d\theta\leq Ct^{a-1}.
\end{equation}
Recall that $\hl$ is defined in (\ref{eq:hldef}) and note that, due to 
(\ref{eq:ldtav}),
\begin{equation}\label{eq:lavtft}
\ldr{\hl_{t}}=\frac{3}{t}(1-\ldr{e^{\hl/2}})+O(t^{a-5/2}).
\end{equation}
Let us first prove that $\ldr{\hl}$ converges to zero. Let, to this end, 
$\e>0$. Since the solution has $\lambda$-asymptotics, we know that there
is a $T$ such that $\ldr{\hl}(t)\geq -\e$ for all $t\geq T$. In order 
to prove that there is a $T$ such that $\ldr{\hl}(t)\leq \e$ for all 
$t\geq T$, let us assume that $\ldr{\hl}(t)\geq \e$ for some $t$. 
Due to (\ref{eq:lvarest}), we conclude that $\hl(t,\theta)\geq\e/2$
for all $\theta\in\so$ (assuming $t$ to be large enough). Inserting
this information into (\ref{eq:lavtft}), we conclude that 
\[
\ldr{\hl_{t}}\leq \frac{2}{t}(1-e^{\e/4}),
\]
assuming $t$ to be large enough. Since the right hand side is negative 
and non-integrable, we conclude that $\ldr{\hl}$ has to decay until
it is smaller than $\e$ (assuming the starting time $t$ to be large enough). 
Moreover, $\ldr{\hl}$ cannot exceed $\e$ at a later time. In order to 
obtain a quantitative estimate, note that  
\[
\ldr{\hl_{t}}=\frac{3}{t}(1-e^{\ldr{\hl}/2})+O(t^{a-2}),
\]
where we have used the fact that $\hl$ is bounded to the future as well as
(\ref{eq:lvarest}). As a consequence, 
\begin{equation*}
\begin{split}
\d_{t}\ldr{\hl}^{2} = & 2\ldr{\hl}\ldr{\hl_{t}}=
\frac{6}{t}\ldr{\hl}\left[1-\left(1+\frac{1}{2}\ldr{\hl}+O(\ldr{\hl}^{2})
\right)\right]+O(t^{a-2}\ldr{\hl})\\
 = & -\frac{3}{t}\ldr{\hl}^{2}
+\frac{1}{t}O(\ldr{\hl}^{3})+O(t^{a-2}\ldr{\hl}).
\end{split}
\end{equation*}
Let $0<b<1-a$ and define
\[
\me=t^{2b}\ldr{\hl}^{2}.
\]
Then
\[
\frac{d\me}{dt}=\frac{2b}{t}\me-\frac{3}{t}\me
+\frac{1}{t}O(\ldr{\hl}\me)+t^{-1}O(t^{b+a-1}\me^{1/2}).
\]
As a consequence, there is a constant $C>0$ such that $\d_{t}\me\leq 0$
when $\me\geq C$ and $t$ is large enough. In particular, $\me$ is thus bounded 
to the future. For every $0<b<1/2$, there is thus a constant $C_{b}$ such that 
\[
\left\|\lambda(t,\cdot)+3\ln t-2\ln\frac{3}{4\Lambda}\right\|_{C^{0}}
\leq C_{b}t^{-b}
\]
for all $t\geq t_{1}$. Due to this estimate, we can return to the argument presented
in the proof of Lemma~\ref{lemma:Ebest} and obtain the improvement 
$E_{\bas}(t)\leq Ct^{1/2}$ for $t\geq t_{1}$. As a consequence, we can go through the 
above arguments with $a=1/2$ and $b=1/2$. The lemma follows. 
\end{proof}

\begin{lemma}\label{lemma:pqfb}
Consider a future global solution to Einstein's equations with a cosmological 
constant $\Lambda>0$, $\tn{2}$-symmetry, $\lambda$-asymptotics, and a stress 
energy tensor satisfying the dominant energy condition and the non-negative 
pressure condition. Then there is a constant $C>0$ such that 
\begin{equation}\label{eq:PQbdgen}
t^{-3/2}\ldr{\a^{-1/2}(t,\cdot)}+
\|Q(t,\cdot)\|_{C^{0}}+\|P(t,\cdot)\|_{C^{0}}  \leq C
\end{equation}
for all $t\geq t_{1}$. 
\end{lemma}
\begin{proof}
Estimate, using (\ref{eq:altateq}), (\ref{eq:hldef}),
(\ref{eq:lhpreest}), (\ref{eq:Ebasest}) and Lemma~\ref{lemma:codecnnpr},
\begin{equation*}
\begin{split}
\d_{t}\ldr{\a^{-1/2}} = & -\frac{1}{2}
\left\langle\a^{-1/2}\frac{\a_{t}}{\a}\right\rangle
=\frac{1}{2\pi}\is\left[\frac{\a^{-1/2}e^{-P+\lambda/2}J^{2}}{2t^{5/2}}
+\frac{\a^{-1/2}e^{P+\lambda/2}(K-QJ)^{2}}{2t^{5/2}}\right]d\theta\\
& +\frac{1}{2\pi}\is[2t^{1/2}\a^{-1/2}e^{\lambda/2}\Lambda
+t^{1/2}\a^{-1/2}e^{\lambda/2}(\rho-P_{1})]d\theta\\
\leq & Ct^{-1/2}+\frac{1}{2\pi}\is 2t^{1/2}\a^{-1/2}e^{\lambda/2}\Lambda d\theta
\\
\leq & Ct^{-1/2}+\frac{3}{2t}\ldr{e^{\hl/2}\a^{-1/2}}
\leq\frac{3}{2t}\ldr{\a^{-1/2}}+Ct^{-3/2}\ldr{\a^{-1/2}}+
Ct^{-1/2}.
\end{split}
\end{equation*}
Let $A=\ldr{\a^{-1/2}}+t$. Then
\[
\frac{dA}{dt}=\d_{t}\ldr{\a^{-1/2}}+1\leq\frac{3}{2t}A+Ct^{-3/2}A.
\]
Consequently, 
\[
\ln\frac{A(t)}{A(t_{1})}\leq \frac{3}{2}\ln t+C_{0},
\]
so that $\ldr{\a^{-1/2}}\leq Ct^{3/2}$ for $t\geq t_{1}$. 
Combining this estimate with (\ref{eq:Ebasest}) yields
\begin{equation}\label{eq:Pthlo}
\is |P_{\theta}|d\theta\leq\left(\is\a^{1/2}P_{\theta}^{2}d\theta
\right)^{1/2}\left(\is\a^{-1/2}d\theta\right)^{1/2}\leq
Ct^{-3/4}t^{3/4}\leq C
\end{equation}
for $t\geq t_{1}$. On the other hand, using (\ref{eq:baaest}) and 
(\ref{eq:Ebasest}) yields 
\begin{equation}\label{eq:Ptlo}
\left|\d_{t}\ldr{P}\right|=\left|\ldr{P_{t}}\right|
\leq \frac{1}{2\pi}\is |P_{t}|d\theta\leq
\frac{1}{\sqrt{2\pi}}\left(\is P_{t}^{2}d\theta\right)^{1/2}
\leq Ct^{-3/2}.
\end{equation}
Consequently, $\ldr{P}$ is bounded to the future. Combining these two 
observations, we conclude that
\[
\|P(t,\cdot)\|_{C^{0}}\leq C
\]
for all $t\geq t_{1}$. Combining this estimate for $P$ with the bound
(\ref{eq:Ebasest}), it is possible to derive $L^{1}$ estimates for 
$Q_{\theta}$ and $Q_{t}$ analogous to (\ref{eq:Pthlo}) and (\ref{eq:Ptlo}).
Consequently, $Q$ is bounded to the future. The lemma follows. 
\end{proof}

\begin{lemma}\label{lemma:twistest}
Consider a future global solution to Einstein's equations with a cosmological 
constant $\Lambda>0$, $\tn{2}$-symmetry, $\lambda$-asymptotics, and a stress 
energy tensor satisfying the dominant energy condition and the non-negative 
pressure condition. Then there is a constant
$C>0$ such that 
\begin{eqnarray}
\left\|\frac{e^{\lambda/2-P}J^{2}}{t^{5/2}}\right\|_{C^{0}} & \leq & Ct^{-2},
\label{eq:Jenest}\\
\left\|\frac{e^{P+\lambda/2}(K-QJ)^{2}}{t^{5/2}}\right\|_{C^{0}} & \leq & Ct^{-2}
\label{eq:KQJenest}
\end{eqnarray}
for all $t\geq t_{1}$. Moreover, for $t\geq t_{1}$,
\[
\|H_{t}\|_{L^{1}}+\|G_{t}\|_{L^{1}}\leq Ct^{-3/2}.
\]
\end{lemma}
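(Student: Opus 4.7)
The plan is to first establish $L^{1}(\so)$ bounds on $A := e^{\lambda/2-P}J^{2}/t^{5/2}$ and $B := e^{\lambda/2+P}(K-QJ)^{2}/t^{5/2}$ from the energy $E$, then upgrade these to $C^{0}$ bounds using a one-dimensional Sobolev embedding on $\so$ together with a Gronwall-type absorption argument for the spatial derivatives, and finally deduce the $L^{1}$ bounds on $H_{t}$ and $G_{t}$ from the defining relations (\ref{eq:JKdef}). For the first step, Lemma~\ref{lemma:lasstone} gives $E(t)\le Ct^{1/2}$. Since the integrand of $E$ contains $\alpha^{-1/2}e^{\lambda/2-P}J^{2}/t^{3/2}$ and its $(K-QJ)$ companion, combining the resulting integrated bounds with $\alpha^{1/2}\le Ct^{-3/2}$ from Proposition~\ref{prop:abd} yields $\ldr{A}\le Ct^{-2}$, $\ldr{B}\le Ct^{-2}$, and also $\int_{\so}\alpha^{-1/2}A\,d\theta\le Ct^{-1/2}$, which will be needed below.

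Differentiating $A$ in $\theta$ and substituting (\ref{eq:Jth}) gives
\[
A_{\theta}=\left(\tfrac{1}{2}\lambda_{\theta}-P_{\theta}\right)A+4\alpha^{-1/2}e^{3\lambda/4-P/2}JJ_{2}/t^{5/4};
\]
invoking $|J_{2}|\le\rho$ (from the dominant energy condition) together with $|J|=A^{1/2}e^{-\lambda/4+P/2}t^{5/4}$ yields $|A_{\theta}|\le(|\lambda_{\theta}|/2+|P_{\theta}|)A+4\alpha^{-1/2}e^{\lambda/2}A^{1/2}\rho$. Writing $M^{2}=\|A\|_{C^{0}}$ and applying Cauchy--Schwarz using the bounds $\int_{\so}\alpha^{1/2}P_{\theta}^{2}d\theta\le Ct^{-3/2}$, $\int_{\so}|\lambda_{\theta}|d\theta\le Ct^{-1/2}$ (both derived during the proof of Lemma~\ref{lemma:lasstone}), $\int_{\so}\alpha^{-1/2}e^{\lambda/2}\rho\,d\theta\le Ct^{-1}$, and the $L^{1}$ estimates of the previous paragraph, I obtain $\|A_{\theta}\|_{L^{1}(\so)}\le CM^{2}t^{-1/2}+CMt^{-1}$. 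The Sobolev-type inequality $\|f\|_{C^{0}(\so)}\le\ldr{f}+\|f_{\theta}\|_{L^{1}(\so)}$, valid for $f\ge 0$, combined with $\ldr{A}\le Ct^{-2}$ then gives
\[
M^{2}\le Ct^{-2}+CM^{2}t^{-1/2}+CMt^{-1}.
\]
For $t$ sufficiently large the middle term is absorbed into the left, and solving the resulting quadratic inequality in $M$ yields $M\le Ct^{-1}$, which is (\ref{eq:Jenest}); on the compact interval $[t_{1},T]$ the bound follows by continuity. The argument for $B$ is essentially identical, using (\ref{eq:Kth}) and $|J_{3}|\le\rho$ in place of their counterparts.

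For the $H_{t},G_{t}$ estimates, (\ref{eq:JKdef}) gives $H_{t}=-(K-QJ)e^{P+\lambda/2}/(t^{5/2}\alpha^{1/2})$, hence $H_{t}^{2}=Be^{P+\lambda/2}/(t^{5/2}\alpha)$. The pointwise bound (\ref{eq:KQJenest}) together with the $\lambda$-asymptotics and $\|P\|_{C^{0}}\le C$ therefore yields $|H_{t}|\le Ct^{-3}\alpha^{-1/2}$; integrating and using $\int_{\so}\alpha^{-1/2}d\theta\le Ct^{3/2}$ (from Lemma~\ref{lemma:pqfb}) gives $\|H_{t}\|_{L^{1}}\le Ct^{-3/2}$. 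The bound on $G_{t}$ follows analogously from (\ref{eq:Jenest}), the estimate just derived on $H_{t}$, and $\|Q\|_{C^{0}}\le C$.

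The hard part is absorbing the $P_{\theta}$ contribution to $\|A_{\theta}\|_{L^{1}}$ in the second step: since $\int_{\so}|P_{\theta}|d\theta$ is only $O(1)$, treating this term naively produces a coefficient on $M^{2}$ that cannot be absorbed into the left-hand side of the Gronwall inequality. The resolution is to exploit the weighted bound $\int_{\so}\alpha^{1/2}P_{\theta}^{2}d\theta\le Ct^{-3/2}$ paired via Cauchy--Schwarz with $\int_{\so}\alpha^{-1/2}A\,d\theta\le Ct^{-1/2}$, which produces a term linear in $M$ with an additional $t^{-1}$ decay factor that suffices to close the argument. An analogous observation handles the matter cross term involving $J_{2}$.
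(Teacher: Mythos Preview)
Your argument is correct, but it takes a different route than the paper. The paper works directly with $J$ and $K-QJ$ rather than with the composite quantities $A$ and $B$: using (\ref{eq:Jth}) and (\ref{eq:Jt}) together with the energy bound $E\le Ct^{1/2}$, it shows that the $L^{1}$-norm of $J_{\theta}$ is $O(t)$ (so the spatial variation of $J$ is at most $Ct$) and that the $L^{1}$-norm of $J_{t}$ is $O(t^{-1/2})$ (so $|\ldr{J}|\le Ct^{1/2}$), whence $\|J\|_{C^{0}}\le Ct$ and (\ref{eq:Jenest}) follows immediately from the $\lambda$-asymptotics and the $C^{0}$ bound on $P$. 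The treatment of $K-QJ$ is analogous. This is more elementary than your Sobolev--absorption scheme: no quadratic inequality in $M$ needs to be closed, and the constants never have to be absorbed. On the other hand, your method has the merit of working entirely at the level of the energy densities $A$ and $B$, which is closer in spirit to how such quantities are handled later in the paper.

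One small point you should make explicit: when you pass to $B$, the derivative $(K-QJ)_{\theta}=K_{\theta}-QJ_{\theta}-Q_{\theta}J$ contains, besides the $J_{3}$ contribution you mention, the cross term $-Q_{\theta}J$. This produces an extra $2e^{P}|Q_{\theta}|A^{1/2}B^{1/2}$ in $|B_{\theta}|$. It is harmless once (\ref{eq:Jenest}) is in hand, since then $A^{1/2}\le Ct^{-1}$ and $\is|Q_{\theta}|\,d\theta\le C$ (as in the proof of Lemma~\ref{lemma:pqfb}) give a contribution $CNt^{-1}$ with $N^{2}=\|B\|_{C^{0}}$, which folds into your existing linear term; but the claim that the $B$ argument is ``essentially identical'' hides this step.
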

\begin{proof}
Combining (\ref{eq:Jth}), (\ref{eq:deccth}), (\ref{eq:lhpreest}), 
(\ref{eq:Ebasest}) and (\ref{eq:PQbdgen}), we conclude that 
\begin{equation}\label{eq:Jthloest}
\is|J_{\theta}|d\theta\leq Ct^{5/4}\is \a^{-1/2}e^{\lambda/4}\rho d\theta
\leq Ct^{5/4}t^{-3/2}t^{3/4}\is t^{3/2}\a^{-1/2}e^{\lambda/2}\rho d\theta
\leq Ct. 
\end{equation}
The spatial variation of $J$ is consequently not greater than $Ct$. 
Combining (\ref{eq:Jt}), (\ref{eq:deccf}), (\ref{eq:baaest}),
(\ref{eq:lhpreest}), (\ref{eq:Ebasest})
and (\ref{eq:PQbdgen}) yields
\begin{equation}\label{eq:Jtloest}
\is|J_{t}|d\theta\leq Ct^{5/4}\is e^{\lambda/4}\rho d\theta
\leq Ct^{5/4}t^{-3/2}t^{3/4}t^{-3/2}\is t^{3/2}\a^{-1/2}e^{\lambda/2}\rho d\theta
\leq Ct^{-1/2}.
\end{equation}
As a consequence, 
\begin{equation}\label{eq:ldrJest}
|\ldr{J}|\leq Ct^{1/2}.
\end{equation}
Combining (\ref{eq:Jthloest}) and (\ref{eq:ldrJest}) yields
\begin{equation}\label{eq:Jlofb}
\|J(t,\cdot)\|_{C^{0}}\leq Ct.
\end{equation}
Due to (\ref{eq:lhpreest}), (\ref{eq:PQbdgen}) and (\ref{eq:Jlofb}), we 
conclude that (\ref{eq:Jenest}) holds. 

Let us now turn to $K-QJ$. To begin with, note that the $L^{1}$-norm of 
$Q_{\theta}$ is bounded to the future. The argument to prove this statement 
is similar to (\ref{eq:Pthlo}), keeping in mind that (\ref{eq:PQbdgen}) holds. 
Moreover, 
\[
\is|K_{\theta}-Q_{\theta}J-QJ_{\theta}|d\theta\leq 
\is |K_{\theta}-QJ_{\theta}|d\theta+\is |Q_{\theta}J|d\theta
\leq \is |K_{\theta}-QJ_{\theta}|d\theta+Ct,
\]
where we have used (\ref{eq:Jlofb}) and the fact that the $L^{1}$-norm of 
$Q_{\theta}$ is bounded. On the other hand, (\ref{eq:Jth}) and 
(\ref{eq:Kth}) yield
\[
K_{\theta}-QJ_{\theta}=2t^{5/4}\a^{-1/2}e^{-P/2+\lambda/4}J_{3}.
\]
Keeping (\ref{eq:deccth}) in mind, we can thus argue as in the proof of 
(\ref{eq:Jthloest}) in order to conclude that 
\begin{equation}\label{eq:Kcombothest}
\is|K_{\theta}-Q_{\theta}J-QJ_{\theta}|d\theta\leq Ct.
\end{equation}
In particular, the spatial variation of $K-QJ$ is bounded by $Ct$. 
On the other hand, the $L^{1}$-norm of $Q_{t}$ can be be bounded by
$Ct^{-3/2}$; cf. (\ref{eq:Ptlo}) and (\ref{eq:PQbdgen}). Combining this
observation with (\ref{eq:Jlofb}) yields
\begin{equation}\label{eq:Kcombotest}
\is |K_{t}-Q_{t}J-QJ_{t}|d\theta\leq 
\is |K_{t}-QJ_{t}|d\theta+\is |Q_{t}J|d\theta\leq
\is |K_{t}-QJ_{t}|d\theta+Ct^{-1/2}.
\end{equation}
Moreover, due to (\ref{eq:Jt}) and (\ref{eq:Kt}), 
\[
K_{t}-QJ_{t}=-2t^{5/4}e^{-P/2+\lambda/4}S_{13}.
\]
Keeping (\ref{eq:deccf}) in mind, we can proceed as in (\ref{eq:Jtloest}) in 
order to obtain
\[
\is |K_{t}-QJ_{t}|d\theta\leq Ct^{-1/2}.
\]
Due to (\ref{eq:Kcombotest}) and this estimate, the mean value of $K-QJ$ 
cannot grow faster than $Ct^{1/2}$. Combining this observation with
(\ref{eq:Kcombothest}) yields
\begin{equation}\label{eq:KmQJest}
\|K-QJ\|_{C^{0}}\leq Ct.
\end{equation}
Keeping (\ref{eq:lhpreest}) and (\ref{eq:PQbdgen}) in mind, we obtain 
(\ref{eq:KQJenest}). Using the fact that (\ref{eq:JKdef}) holds, 
we conclude that 
\[
\is |H_{t}|d\theta\leq \is t^{-5/2}\a^{-1/2}e^{P+\lambda/2}|K-QJ|d\theta.
\]
Combining this inequality with (\ref{eq:lhpreest}), (\ref{eq:PQbdgen})
and (\ref{eq:KmQJest}) yields the
desired $L^{1}$-estimate for $H_{t}$. A similar argument for $G_{t}$
yields the remaining conclusion of the lemma. 
\end{proof}

\subsection{Causal structure of $\tn{3}$-Gowdy symmetric solutions}

It is of interest to note that in the $\tn{3}$-Gowdy symmetric case, it 
is sufficient to assume future global existence and energy conditions
in order to conclude that there is asymptotic silence. In fact, we have
the following result. 

\begin{prop}\label{prop:causal}
Consider a future global $\tn{3}$-Gowdy symmetric solution to Einstein's 
equations with a cosmological constant $\Lambda>0$ and a stress 
energy tensor satisfying the dominant energy condition and the non-negative 
pressure condition. Then there is a constant $C$, depending only on the 
solution, such that if 
\[
\g(s)=[s,\theta(s),x(s),y(s)]=[s,\bga(s)]
\]
is a causal curve, then 
\begin{equation}\label{eq:gadoabest}
|\dot{\bga}(s)|^{2}\leq Cs^{-3}
\end{equation}
for $s\geq t_{1}$. In particular, there is a point $\bx_{0}\in\tn{3}$ such
that 
\begin{equation}\label{eq:gasilence}
d[\bga(s),\bx_{0}]\leq Cs^{-1/2}
\end{equation}
for all $s\geq t_{1}$, where $d$ is the standard metric on $\tn{3}$. 
\end{prop}
\begin{proof}
The causality of the curve is equivalent to the estimate 
\begin{equation}\label{eq:gacau}
\a^{-1}\dot{\theta}^{2}+s^{3/2}e^{P-\lambda/2}[\dot{x}+Q\dot{y}+
(G+QH)\dot{\theta}]^{2}
+s^{3/2}e^{-P-\lambda/2}(\dot{y}+H\dot{\theta})^{2}\leq 1.
\end{equation}
Note that in the case of Gowdy symmetry, $G$ and $H$ are time-independent. 
In particular, they are thus bounded. Due to (\ref{eq:PQbdgen}) we also
know that $Q$ is bounded for $t\geq t_{1}$. On the other hand, combining
(\ref{eq:gacau}) with (\ref{eq:baaest}), (\ref{eq:lhpreest}) and 
(\ref{eq:PQbdgen}) yields
\begin{eqnarray*}
|\dot{\theta}| & \leq & Cs^{-3/2},\\
|\dot{y}+H\dot{\theta}| & \leq & Cs^{-3/2},\\
|\dot{x}+Q\dot{y}+(G+QH)\dot{\theta}| & \leq & Cs^{-3/2}
\end{eqnarray*}
for $s\geq t_{1}$. Thus (\ref{eq:gadoabest}) holds, an estimate which implies
(\ref{eq:gasilence}). 
\end{proof}

\section{Light cone estimates}\label{section:lces}

In the presence of matter of Vlasov type, it is necessary to consider
the characteristic system in parallel with the light cone estimates for
the metric components. Let us therefore begin by writing down the 
characteristic system. It is given by 
\begin{eqnarray}
\frac{d\Theta}{ds} & = & \a^{1/2}\frac{V^{1}}{V^{0}},\label{eq:dThetads}\\
\frac{dV^{1}}{ds} & = & -\frac{1}{4}\a^{1/2}\lambda_{\theta}V^{0}-\frac{1}{4}
\left(\lambda_{t}-2\frac{\a_{t}}{\a}-\frac{1}{s}\right)V^{1}
+\a^{1/2}e^{P}Q_{\theta}\frac{V^{2}V^{3}}{V^{0}}\label{eq:dVods}\\
& & -\frac{1}{2}\a^{1/2}P_{\theta}
\frac{(V^{3})^{2}-(V^{2})^{2}}{V^{0}}+\frac{e^{-P/2+\lambda/4}J}{s^{7/4}}V^{2}
+\frac{e^{P/2+\lambda/4}(K-QJ)}{s^{7/4}}V^{3},
\nonumber\\
\frac{dV^{2}}{ds} & = & -\frac{1}{2}\left(P_{t}+\frac{1}{s}\right)V^{2}
-\frac{1}{2}\a^{1/2}P_{\theta}\frac{V^{1}V^{2}}{V^{0}},
\label{eq:dVtds}\\
\frac{dV^{3}}{ds} & = & -\frac{1}{2}\left(\frac{1}{s}-P_{t}\right)V^{3}
+\frac{1}{2}\a^{1/2}P_{\theta}\frac{V^{1}V^{3}}{V^{0}}-e^{P}Q_{t}V^{2}
-\a^{1/2}e^{P}Q_{\theta}\frac{V^{1}V^{2}}{V^{0}}.\label{eq:dVthds}
\end{eqnarray}
Note that in this system of equations, functions such as $\a^{1/2}$ should 
be evaluated at $[s,\Theta(s)]$. 
In view of the Vlasov equation (\ref{vv}), it is clear that the distribution
function is constant along characteristics. It is important to note that only
in the case of $V^{1}$ is it necessary to carry out an analysis; concerning 
$V^{2}$ and $V^{3}$ we automatically obtain the following estimate. 

\begin{lemma}\label{lemma:vtthest}
Consider a $\tn{2}$-symmetric solution to the Einstein-Vlasov equations with a 
cosmological constant $\Lambda>0$ and existence interval $(t_{0},\infty)$,
where $t_{0}\geq 0$. Assume that the solution has $\lambda$-asymptotics and 
let $t_{1}=t_{0}+2$. Then there is a constant $C>0$, depending only on 
the solution, such that if $\Theta$, $V$ is a 
solution to (\ref{eq:dThetads})--(\ref{eq:dVthds}) with initial data 
$\Theta(t_{1})$, $V(t_{1})$ such that $[t_{1},\Theta(t_{1}),V(t_{1})]$ is in the 
support of $f$, then
\[
|V^{2}(s)|+|V^{3}(s)|\leq Cs^{-1/2}
\]
for all $s\geq t_{1}$.  
\end{lemma}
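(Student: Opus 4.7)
The key point is that $\d_{x}$ and $\d_{y}$ are Killing vector fields for the metric (\ref{eq:metric}), so the inner products $g(\dot\g,\d_{x})$ and $g(\dot\g,\d_{y})$ are constant along any geodesic $\g$. I will use this to produce two conserved quantities, solve for $V^{2},V^{3}$ in terms of them, and read off the desired decay using the pointwise bounds on $P$ and $Q$ from Lemma~\ref{lemma:pqfb}.

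\textbf{Step 1: identify the conserved quantities.} With respect to the orthonormal frame (\ref{eq:ONframe}), a direct computation of the dual coframe gives
\[
g(e_{2},\d_{x})=s^{1/2}e^{P/2},\quad g(e_{3},\d_{y})=s^{1/2}e^{-P/2},\quad g(e_{2},\d_{y})=Qs^{1/2}e^{P/2},
\]
with all remaining pairings among $\{e_{2},e_{3}\}$ and $\{\d_{x},\d_{y}\}$ equal to zero. If $\dot\g=V^{\a}e_{\a}$, the fact that $\d_{x}$ and $\d_{y}$ are Killing implies that the two functions
\[
W_{2}(s):=V^{2}(s)\,s^{1/2}e^{P(s,\Theta(s))/2},\qquad W_{3}(s):=V^{2}(s)Q\,s^{1/2}e^{P/2}+V^{3}(s)\,s^{1/2}e^{-P/2}
\]
are independent of $s$ along every solution of (\ref{eq:dThetads})--(\ref{eq:dVthds}). (One can, if desired, verify $dW_{2}/ds=0$ directly by plugging (\ref{eq:dVtds}) into the product rule; the cross term $\tfrac12\a^{1/2}P_{\theta}V^{1}V^{2}/V^{0}$ exactly cancels against the spatial piece of $dP/ds=P_{t}+\a^{1/2}(V^{1}/V^{0})P_{\theta}$, and the $1/(2s)$ terms cancel likewise. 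A similar calculation handles $W_{3}$.)

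\textbf{Step 2: bound the initial values.} By Remark~\ref{remark:cptsupp}, the distribution function $f$ has compact support on $\{t_{1}\}\times\tn{3}$, so there is a constant $R_{0}$ depending only on the solution with $|V^{i}(t_{1})|\leq R_{0}$ for $i=1,2,3$ whenever $[t_{1},\Theta(t_{1}),V(t_{1})]$ is in the support of $f$. Combined with Lemma~\ref{lemma:pqfb}, which gives $\|P\|_{C^{0}}+\|Q\|_{C^{0}}\leq C$ for $s\geq t_{1}$, the initial values $W_{2}(t_{1})$ and $W_{3}(t_{1})$ are bounded by a constant depending only on the solution. By Step~1, the same bound then holds for all $s\geq t_{1}$.

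\textbf{Step 3: conclude.} Solving the defining identities for $V^{2}$ and $V^{3}$ gives
\[
V^{2}(s)=W_{2}\,s^{-1/2}e^{-P/2},\qquad V^{3}(s)=\bigl(W_{3}-W_{2}Q\bigr)s^{-1/2}e^{P/2}.
\]
Using the $C^{0}$ bounds on $P$ and $Q$ once more yields $|V^{2}(s)|+|V^{3}(s)|\leq Cs^{-1/2}$, as claimed.

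There is no genuine obstacle here: the whole point is that the $\tn{2}$-symmetry provides two Killing fields whose conservation laws directly control the two transverse momentum components, whereas $V^{1}$ is the only component parallel to an inhomogeneous direction and therefore requires the nontrivial analysis of later sections.
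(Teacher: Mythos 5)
Your proof is correct and matches the paper's own argument: both identify the conserved quantities $s^{1/2}e^{P/2}V^{2}$ and $s^{1/2}Qe^{P/2}V^{2}+s^{1/2}e^{-P/2}V^{3}$ (these are exactly (\ref{eq:v23con})), then invoke the uniform $C^{0}$ bounds on $P$ and $Q$ from Lemma~\ref{lemma:pqfb}. The paper verifies the conservation directly from (\ref{eq:dVtds})--(\ref{eq:dVthds}), whereas you give the cleaner geometric explanation via the Killing fields $\d_{x},\d_{y}$; this is the underlying reason for the conservation but leads to the same computation.
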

\begin{remark}
As mentioned in Remark~\ref{remark:cptsupp}, we tacitly assume $f(t_{1},\cdot)$
to have compact support. 
\end{remark}
\begin{proof}
Due to (\ref{eq:dVtds}) and (\ref{eq:dVthds}), it can be verified that 
\begin{equation}\label{eq:v23con}
s^{1/2}e^{P/2}V^{2},\ \ \ 
s^{1/2}Qe^{P/2}V^{2}+s^{1/2}e^{-P/2}V^{3}
\end{equation}
are conserved along characteristics. Since we know $P$ and $Q$ to be
uniformly bounded, cf. Lemma~\ref{lemma:pqfb}, we obtain the conclusion of the 
lemma. 
\end{proof}

Let us now turn to $V^{1}$. To begin with, we have the following estimate.

\begin{lemma}\label{lemma:dVodtest}
Consider a $\tn{2}$-symmetric solution to the Einstein-Vlasov equations with a 
cosmological constant $\Lambda>0$ and existence interval $(t_{0},\infty)$, 
where $t_{0}\geq 0$. Assume that the solution has 
$\lambda$-asymptotics and let $t_{1}=t_{0}+2$. Then there is 
a constant $C>0$, depending only on the solution, 
such that if $\Theta$, $V$ is a solution to 
(\ref{eq:dThetads})--(\ref{eq:dVthds})
with initial data $\Theta(t_{1})$, $V(t_{1})$ such that 
$[t_{1},\Theta(t_{1}),V(t_{1})]$ is in the support of $f$, then
\begin{equation}\label{eq:dvodsest}
\begin{split}
\frac{d(V^{1})^{2}}{ds} \leq & -\frac{1}{s}(V^{1})^{2}
+Cs^{-1/2}e^{\lambda/2}(\mQ^{1})^{2}\frac{|V^{1}|}{V^{0}}\\
&+CsF\frac{|V^{1}|}{V^{0}}+Cs^{-1}F^{1/2}\frac{|V^{1}|}{V^{0}}
+Cs^{-3/2}(V^{1})^{2}+Cs^{-2}|V^{1}|
\end{split}
\end{equation}
for all $s\geq t_{1}$, where 
\begin{equation}\label{eq:mQodef}
\mQ^{1}(t):=\sup\{|v^{1}|: (t,\theta,v^{1},v^{2},v^{3})\in \mathrm{supp} 
f\}
\end{equation}
and
\begin{equation}\label{eq:Fdef}
F(t)=\sup_{\theta\in\so}\ma_{+}(t,\theta)+\sup_{\theta\in\so}\ma_{-}(t,\theta),
\end{equation}
where $\ma_{\pm}$ is defined in (\ref{eq:dpmapm}).
\end{lemma}
\begin{proof}
Due to (\ref{eq:dVods}), we have
\begin{equation*}
\begin{split}
\frac{d(V^{1})^{2}}{ds}  = & -\frac{1}{2}\a^{1/2}\lambda_{\theta}V^{0}V^{1}
-\frac{1}{2}
\left(\lambda_{t}-2\frac{\a_{t}}{\a}-\frac{1}{s}\right)(V^{1})^{2}
+2\a^{1/2}e^{P}Q_{\theta}\frac{V^{1}V^{2}V^{3}}{V^{0}}\\
 & -\a^{1/2}P_{\theta}V^{1}
\frac{(V^{3})^{2}-(V^{2})^{2}}{V^{0}}+2\frac{e^{-P/2+\lambda/4}J}{s^{7/4}}
V^{1}V^{2}
+2\frac{e^{P/2+\lambda/4}(K-QJ)}{s^{7/4}}V^{1}V^{3}.
\end{split}
\end{equation*}
However, due to Lemmas~\ref{lemma:twistest} and \ref{lemma:vtthest}, we 
can estimate the last two terms by $Cs^{-2}|V^{1}|$. We thus have 
\[
\frac{d(V^{1})^{2}}{ds}\leq -\frac{1}{2}\a^{1/2}\lambda_{\theta}V^{0}V^{1}
-\frac{1}{2}\left(\lambda_{t}-2\frac{\a_{t}}{\a}-\frac{1}{s}\right)(V^{1})^{2}
+Cs^{-1}F^{1/2}\frac{|V^{1}|}{V^{0}}+Cs^{-2}|V^{1}|,
\]
where we have used Lemma~\ref{lemma:vtthest}. Due to 
(\ref{eq:Ezz}) and (\ref{eq:altltheq}), the sum of the first and the 
second term on the right hand side can be written
\begin{equation}\label{eq:lderterVest}
\begin{split}
&2s^{1/2}e^{\lambda/2}(J_{1}V^{0}-\rho V^{1})V^{1}
-2s^{1/2}e^{\lambda/2}\Lambda(V^{1})^{2}
-\frac{e^{-P+\lambda/2}J^{2}}{2s^{5/2}}(V^{1})^{2}
-\frac{e^{P+\lambda/2}(K-QJ)^{2}}{2s^{5/2}}(V^{1})^{2}\\
&-s\a^{1/2}(P_{t}P_{\theta}+e^{2P}Q_{t}Q_{\theta})V^{0}V^{1}
-\frac{1}{2}s\left[P_{t}^{2}+\a P_{\theta}^{2}+e^{2P}(Q_{t}^{2}
+\a Q_{\theta}^{2})\right](V^{1})^{2}+\frac{1}{2s}(V^{1})^{2}.
\end{split}
\end{equation}
Note that 
\begin{equation*}
\begin{split}
-s\a^{1/2}P_{t}P_{\theta}V^{0}V^{1}
-\frac{1}{2}s[P_{t}^{2}+\a P_{\theta}^{2}](V^{1})^{2} \leq &
s\a^{1/2}|P_{t}P_{\theta}|V^{0}|V^{1}|
-\frac{1}{2}s[P_{t}^{2}+\a P_{\theta}^{2}](V^{1})^{2}\\
 \leq & s\a^{1/2}|P_{t}P_{\theta}||V^{1}|(V^{0}-|V^{1}|)\\
&-\frac{1}{2}s(|P_{t}|-\a^{1/2}|P_{\theta}|)^{2}(V^{1})^{2}.
\end{split}
\end{equation*}
Combining this estimate with a similar estimate for $Q$, we conclude that
the second and third last terms in (\ref{eq:lderterVest}) can be estimated
by 
\[
 s\a^{1/2}(|P_{t}P_{\theta}|+e^{2P}|Q_{t}Q_{\theta}|)|V^{1}|
\frac{1+(V^{2})^{2}+(V^{3})^{2}}{V^{0}+|V^{1}|}\leq CsF\frac{|V^{1}|}{V^{0}}.
\]
Due to Lemma~\ref{lemma:lasstone}, we have 
\[
-2s^{1/2}e^{\lambda/2}\Lambda+\frac{1}{2s}=-\frac{1}{s}+O(s^{-3/2}).
\]
Combining the above observations with Lemma~\ref{lemma:twistest}, we 
conclude that 
\begin{equation}\label{eq:dVosqestpre}
\begin{split}
\frac{d(V^{1})^{2}}{ds} \leq & -\frac{1}{s}(V^{1})^{2}
+2s^{1/2}e^{\lambda/2}(J_{1}V^{0}-\rho V^{1})V^{1}\\
&+CsF\frac{|V^{1}|}{V^{0}}+Cs^{-1}F^{1/2}\frac{|V^{1}|}{V^{0}}
+Cs^{-3/2}(V^{1})^{2}+Cs^{-2}|V^{1}|.
\end{split}
\end{equation}
Let us estimate the term
\begin{equation}\label{eq:Todef}
T_{1}:=2s^{1/2}e^{\lambda/2}(J_1 V^0-\rho V^{1})V^{1}
=2s^{1/2}e^{\lambda/2}(I^{-}+I^{+})V^{1},
\end{equation}
where
\begin{eqnarray}
\displaystyle I^{-}(s) &=&\int_{\mathbb{R}^{2}}\int_{-\infty}^{0}
[v^{1}V^0(s)-v^0V^{1}(s)]
f(s,\Theta(s),v)dv^{1}dv^{2}dv^{3},
\nonumber\\
\displaystyle I^{+}(s) &=&\int_{\mathbb{R}^{2}}\int_{0}^{\infty}
[v^{1}V^0(s)-v^0V^{1}(s)]
f(s,\Theta(s),v)dv^{1}dv^{2}dv^{3}.
\nonumber
\end{eqnarray}
There are two cases to be distinguished, $V^{1}(s)\geq 0$ and $V^{1}(s)<0$. 
When $V^{1}(s)\geq 0$, $I^{-}$ is non-positive and can be dropped.
Furthermore, for $v^{1}\geq 0$,
\begin{equation*}
\begin{split}
v^{1}V^0-v^0V^{1} = & 
\frac{(v^{1})^{2}(V^0)^{2}-(v^0)^{2}(V^{1})^{2}}{v^{1}V^0+v^0V^{1}}\\
 = & \frac{(v^{1})^{2}(1+(V^{2})^{2}+(V^{3})^{2})}{v^{1}V^0+v^0V^{1}}
-\frac{(V^{1})^{2}(1+(v^{2})^{2}+(v^{3})^{2})}{v^{1}V^0+v^0V^{1}}\\
\leq & \frac{v^{1}(1+(V^{2})^{2}+(V^{3})^{2})}{V^0}.
\end{split}
\end{equation*}
Letting $E_{\vel}=\{(v^{2},v^{3}):|v^{2}|\leq Cs^{-1/2},|v^{3}|\leq Cs^{-1/2}\}$, 
where 
$C$ is the constant appearing in Lemma~\ref{lemma:vtthest}, we obtain
\begin{equation}\label{T1est}
\begin{split}
 T_{1} \leq & 2\|f(t_{1},\cdot)\|_{\infty}s^{1/2}e^{\lambda/2}V^{1}\int_{E_{\vel}}
\int_{0}^{\mQ^{1}}\frac{v^{1}
(1+(V^{2})^{2}+(V^{3})^{2})}{V^{0}}dv^{1}dv^{2}dv^{3}\\
\leq & Cs^{-1/2}e^{\lambda/2}\int_{0}^{\mQ^{1}}v^{1}dv^{1}\frac{V^{1}}{V^{0}}
\leq Cs^{-1/2}e^{\lambda/2}(\mQ^{1})^{2}\frac{V^{1}}{V^{0}},
\end{split}
\end{equation}
where $\mQ^{1}(t)$ is defined as in the statement of the lemma.  
When $V^{1}<0$, an analogous argument can be 
given and it follows that in both cases
\begin{equation}\label{eq:Tofinest}
T_{1}\leq Cs^{-1/2}e^{\lambda/2}(\mQ^{1})^{2}\frac{|V^{1}|}{V^{0}}.
\end{equation}
Combining (\ref{eq:dVosqestpre}), (\ref{eq:Todef}) and 
(\ref{eq:Tofinest}) yields the conclusion of the lemma. 
\end{proof}

\begin{lemma}\label{lemma:Qocoest}
Consider a $\tn{2}$-symmetric solution to the Einstein-Vlasov equations with a 
cosmological constant $\Lambda>0$ and existence interval $(t_{0},\infty)$, 
where $t_{0}\geq 0$. Assume that the solution has $\lambda$-asymptotics and
let $t_{1}=t_{0}+2$. Then there is a constant $C>0$, depending only on the 
solution, such that 
\[
t^{3}\|P_{t}^{2}+\a P_{\theta}^{2}+e^{2P}(Q_{t}^{2}+\a Q_{\theta}^{2})\|_{C^{0}}
+t[\mQ^{1}(t)]^{2}\leq C
\]
for all $t\geq t_{1}$. 
\end{lemma}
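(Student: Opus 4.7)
The plan is to run a simultaneous bootstrap on the two quantities appearing in the statement. Suppose that on some half-open interval $[t_1,T^*)$ one has $F(t)\le 2C_1 t^{-3}$ and $(V^1(s))^2\le 2C_0/s$ along every characteristic whose trace at $t_1$ lies in $\mathrm{supp}\,f$, and aim to reprove both with strictly smaller constants depending only on the solution. Under the bootstrap, a sharper version of the preliminary estimates is available. From (\ref{eq:altateq}), Lemmas~\ref{lemma:lasstone} and \ref{lemma:twistest} one gets $\alpha_t/\alpha=-3/t+O(t^{-3/2})$, hence $2/t-\alpha_t/\alpha=5/t+O(t^{-3/2})$; while Lemma~\ref{lemma:vtthest} and the bootstrap on $\mQ^1$ give $|P_2-P_3|+|S_{23}|\le C t^{-2}\mQ^1\le C t^{-5/2}$, so that all Vlasov stress-energy contributions to (\ref{leg}) are $O(t^{-9/2})\sqrt{\ma_\mp}$, and the twist contributions are $O(t^{-3})\sqrt{\ma_\mp}$.

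The decisive manoeuvre for $F$ is to rewrite the term $\mp(2/t)\alpha^{1/2}(P_\theta\d_\mp P+e^{2P}Q_\theta\d_\mp Q)$ in (\ref{leg}) via the identity $\alpha^{1/2}P_\theta=(\d_+P-\d_-P)/2$ (and the analogue for $Q$), which converts it into
\[
-\frac{1}{t}\bigl(\d_+P\,\d_-P+e^{2P}\d_+Q\,\d_-Q\bigr)+\frac{1}{t}\ma_\mp.
\]
Combining with the sharpened expression for $\alpha_t/\alpha$, the Cauchy--Schwarz bound $|\d_+P\,\d_-P+e^{2P}\d_+Q\,\d_-Q|\le\sqrt{\ma_+\ma_-}$, and the preliminary estimates above, (\ref{leg}) becomes
\[
\d_\pm\ma_\mp\le -\frac{4}{t}\ma_\mp+O(t^{-3/2})\ma_\mp+\frac{1}{t}\sqrt{\ma_+\ma_-}+O(t^{-3})\sqrt{\ma_\mp}.
\]
Passing to $u_\pm:=t^2\sqrt{\ma_\pm}$ cancels the principal $-2/t$ damping of $\sqrt{\ma_\mp}$ against $\d_t t^2$, leaving
\[
\d_\pm u_\mp\le \frac{u_\pm}{2t}+O(t^{-3/2})u_\mp+C t^{-1}.
\]
Integrating along $\d_\pm$-characteristics from $t_1$, taking $\sup_\theta$, summing over signs and applying Gronwall to $V(t):=\sup_\theta u_+(t,\cdot)+\sup_\theta u_-(t,\cdot)$ yields $V(t)\le C\sqrt{t}$, i.e.\ $\sqrt{\ma_\pm}\le C t^{-3/2}$. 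The Gronwall constants come only from $1/(2s)$ and the $L^1$ error $O(s^{-3/2})$ and are independent of $C_0$ and $C_1$, so $C_1$ can be chosen to strictly improve the bootstrap.

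For $\mQ^1$, I plug the improved $F\le C_1' t^{-3}$ together with the bootstrap on $\mQ^1$ into (\ref{eq:dvodsest}). Multiplying by $s$ to convert $-(1/s)(V^1)^2$ into $\frac{d}{ds}[s(V^1)^2]$ and using $|V^1|\le\mQ^1\le C s^{-1/2}$ wherever it helps, every surviving right-hand side term is of the form $s^{-a}$ with $a>1$: the Vlasov moment becomes $C s^{-1}\mQ^1(s)^2\le C s^{-2}$, $s^2 F|V^1|/V^0\le C s^{1/2}F\le C s^{-5/2}$, and so on. Thus $\frac{d}{ds}[s(V^1)^2]\le\Phi(s)$ with $\Phi\in L^1([t_1,\infty))$ and $\int\Phi\,ds$ depending only on the solution. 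Integration gives $s(V^1(s))^2\le t_1|V^1(t_1)|^2+C''$ uniformly in the characteristic, which, since $\mathrm{supp}\,f(t_1,\cdot)$ is compact, strictly improves the bootstrap for $C_0$ chosen large enough. A standard continuity argument then extends the two estimates to $T^*=\infty$.

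The principal obstacle is the algebraic cancellation underlying the $F$-step: without rewriting $\alpha^{1/2}P_\theta$ in terms of $\d_\pm P$, the spatial term in (\ref{leg}) would contribute a cross-source of order $(2/t)\sqrt{\ma_+\ma_-}$ that cannot be absorbed by the $-(5/t)\ma_\mp$ damping, forcing $F$ to grow polynomially instead of decaying like $t^{-3}$. The rewrite isolates a $(1/t)\ma_\mp$ piece that downgrades the damping to $-(4/t)$ but reduces the cross-source to the borderline $u_\pm/(2t)$ after the passage to $\sqrt{\ma_\pm}$, which is in turn exactly matched by the $\sqrt{t/t_1}$ amplification produced by Gronwall---precisely what is needed to identify $\sqrt{\ma_\pm}$ as $O(t^{-3/2})$.
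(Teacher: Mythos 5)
Your algebraic treatment of the spatial cross-term in (\ref{leg}) is the same as the paper's (both rest on the identity $\alpha^{1/2}P_\theta=(\d_+P-\d_-P)/2$), and the choice of weight $t^2\sqrt{\ma_\pm}$ rather than the paper's $t^4\ma_\pm+t$ is essentially cosmetic (modulo the division by $\sqrt{\ma_\mp}$ at points where $\ma_\mp$ vanishes, which the $+t$ regularizes). The genuine divergence, and the gap, lies in the claim that a \emph{sequential} simultaneous bootstrap on $F$ and $\mQ^1$ closes. Your assertion that ``the Gronwall constants \dots are independent of $C_0$ and $C_1$'' is false: under the bootstrap $\mQ^1(s)\le\sqrt{2C_0/s}$, the Vlasov contribution is $|P_2-P_3|+|S_{23}|\le Cs^{-2}\mQ^1\le C\sqrt{C_0}\,s^{-5/2}$, so the nonhomogeneous term in the inequality for $u_\pm=t^2\sqrt{\ma_\pm}$ is $O(\sqrt{C_0}\,s^{-5/2})$ and the improved constant $C_1'$ scales like $(1+\sqrt{C_0})^2\sim C_0$. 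Feeding $F\le C_1't^{-3}$ back into (\ref{eq:dvodsest}) for the $\mQ^1$ step (note also the arithmetic slip ``$s^2F|V^1|/V^0\le Cs^{1/2}F$'': with $|V^1|/V^0\le\sqrt{2C_0}\,s^{-1/2}$ the bound is $Cs^{3/2}F$), the term $s\cdot CsF|V^1|/V^0$ integrates to a quantity of order $C_1'\sqrt{C_0}\sim C_0^{3/2}$, and the moment term $Cs^{-1}(\mQ^1)^2|V^1|$ integrates to order $C_0^{3/2}$ as well. Since one must take $C_0\ge t_1\mQ^1(t_1)^2/2$ so that the bootstrap holds initially, $C_0$ cannot be made small, and no choice of $C_0$ makes $C_0^{3/2}<2C_0$ once the solution-dependent prefactors exceed modest thresholds. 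The two improved constants feed each other superlinearly and the continuity argument does not close.

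The paper avoids exactly this circularity by never decoupling the two quantities. It derives integral inequalities for $\hF$ and $(\hQ^1)^2$ separately (with no bootstrap at all, carrying the cruder $\ln(1+\mQ^1)$-dependent Vlasov bound along), \emph{adds} them to form $\mG=\hF+(\hQ^1)^2$, and absorbs every cross-term such as $Cs^{-2}\hF^{1/2}\ln(1+\mQ^1)$ into $Cs^{-3/2}\mG$ by a Young-type inequality; a single Gronwall application then gives $\mG(t)\le Ct$ with a solution-dependent constant, hence $F\le Ct^{-3}$ and $\mQ^1$ bounded, and a second pass through (\ref{eq:dvodsest}) with this information upgrades $\hQ^1$ to a bounded quantity. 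If you want to rescue a bootstrap formulation, you should bootstrap the combined quantity $\mG$ (or an equivalent weighted sum) so that the coupling terms become homogeneous in the bootstrapped quantity, rather than improving $F$ and $\mQ^1$ in turn.
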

\begin{proof}
Let us use (\ref{leg}) to derive an estimate for $F$. Note, to begin
with, that 
\[
-\left(\frac{2}{t}-\frac{\a_{t}}{\a}\right)\leq
-\left(\frac{2}{t}+4t^{1/2}e^{\lambda/2}\Lambda\right)
\leq -\frac{5}{t}+O(t^{-3/2}),
\]
where we have used (\ref{eq:altateq}), (\ref{eq:decct}) and
(\ref{eq:lhpreest}). 
Note also that the second term on the right hand side of (\ref{leg}) can be
written
\[
\frac{1}{2t}(\ma_{\mp}-\ma_{\pm})
+\frac{2}{t}\a(P_{\theta}^{2}
+e^{2P}Q_{\theta}^{2})\leq \frac{1}{t}\ma_{\mp}+\frac{1}{2t}(\ma_{+}+\ma_{-}).
\]
Combining these estimates with Lemma~\ref{lemma:twistest}, we conclude
that 
\begin{equation*}
\begin{split}
\d_{\pm}\ma_{\mp} \leq & -\frac{4}{t}\ma_{\mp}+Ct^{-3/2}\ma_{\mp}
+\frac{1}{2t}(\ma_{+}+\ma_{-})+Ct^{-3}\ma_{\mp}^{1/2}\\
&+2t^{-1/2}e^{\lambda/2}(P_{2}-P_{3})\d_{\mp}P
+4t^{-1/2}e^{\lambda/2}S_{23}e^{P}\d_{\mp}Q.
\end{split}
\end{equation*}
Moreover, due to (\ref{eq:rhoPketcvl}) and Lemma~\ref{lemma:vtthest},
\[
|P_{k}|\leq Ct^{-2}\ln(1+\mQ^{1}),\ \ \
|S_{23}|\leq Ct^{-2}\ln(1+\mQ^{1})
\]
for $k=2,3$. As a consequence, 
\[
\d_{\pm}\ma_{\mp} \leq  -\frac{4}{t}\ma_{\mp}+Ct^{-3/2}\ma_{\mp}
+\frac{1}{2t}(\ma_{+}+\ma_{-})+Ct^{-3}\ma_{\mp}^{1/2}
+Ct^{-4}\ma_{\mp}^{1/2}\ln(1+\mQ^{1}).
\]
Defining 
\begin{equation}\label{eq:hmadef}
\hma_{\pm}=t^{4}\ma_{\pm}+t,
\end{equation}
we obtain
\[
\d_{\pm}\hma_{\mp} \leq \frac{1}{2t}(\hma_{+}+\hma_{-})+Ct^{-3/2}\hma_{\mp}
+Ct^{-2}\hma_{\mp}^{1/2}\ln(1+\mQ^{1}).
\]
Introducing 
\begin{equation}\label{eq:hFdef}
\hF(t)=\sup_{\theta\in \so}\hma_{+}(t,\theta)
+\sup_{\theta\in \so}\hma_{-}(t,\theta),
\end{equation}
we obtain 
\begin{equation}\label{eq:hFestprel}
\hF(t)\leq \hF(t_{1})+\int_{t_{1}}^{t}\left(\frac{1}{s}\hF(s)
+Cs^{-3/2}\hF(s)+Cs^{-2}\hF^{1/2}(s)\ln(1+\mQ^{1})\right)ds.
\end{equation}
Introducing 
\begin{equation}\label{eq:RohQodef}
R^{1}(s)=[s(V^{1}(s))^{2}+1]^{1/2}, \ \ \
\hQ^{1}(s)=[s(\mQ^{1}(s))^{2}+1]^{1/2},
\end{equation}
(\ref{eq:dvodsest}) implies that 
\[
\frac{d(R^{1})^{2}}{ds}\leq Cs^{-2}(\hQ^{1})^{2}+Cs^{-2}\hF+Cs^{-5/2}\hF
+Cs^{-3/2}(\hQ^{1})^{2}.
\]
Integrating this inequality from $t_{1}$ to $t$ and taking the supremum
over initial data belonging to the support of $f$, we obtain 
\begin{equation}\label{eq:hQestprel}
[\hQ^{1}(t)]^{2}\leq [\hQ^{1}(t_{1})]^{2}+\int_{t_{1}}^{t}\left(
Cs^{-2}\hF+Cs^{-3/2}(\hQ^{1})^{2}\right)ds.
\end{equation}
Adding (\ref{eq:hFestprel}) and (\ref{eq:hQestprel}) and introducing 
\begin{equation}\label{eq:mGdef}
\mG=\hF+(\hQ^{1})^{2},
\end{equation}
we obtain 
\[
\mG(t)\leq \mG(t_{1})+\int_{t_{1}}^{t}\left(\frac{1}{s}\mG(s)+Cs^{-3/2}\mG(s)\right)
ds.
\]
In particular, $\mG(t)\leq Ct$, so that $\hF(t)\leq Ct$ and $\mQ^{1}$ is bounded. 
Returning to (\ref{eq:dvodsest}) with this information in mind, we conclude
that 
\[
\frac{d(R^{1})^{2}}{ds}\leq Cs^{-3/2}(\hQ^{1})^{2}.
\]
By arguments similar to ones given above, we conclude that $\hQ^{1}$ is 
bounded. The lemma follows. 
\end{proof}

\section{Intermediate estimates}

Before proceeding, it is useful to collect the estimates that follow from
the above arguments. 

\begin{lemma}\label{lemma:interm}
Consider a $\tn{2}$-symmetric solution to the Einstein-Vlasov equations with a 
cosmological constant $\Lambda>0$ and existence interval $(t_{0},\infty)$, 
where $t_{0}\geq 0$. Assume that the solution has $\lambda$-asymptotics and
let $t_{1}=t_{0}+2$. Then there is a constant $C>0$, depending only on the 
solution, such that 
\begin{eqnarray}
\left\|\lambda(t,\cdot)+3\ln t-2\ln\frac{3}{4\Lambda}\right\|_{C^{0}}
 & \leq &  Ct^{-1},\label{eq:hlimpest}\\
\left\|\frac{\a_{t}}{\a}+\frac{3}{t}\right\|_{C^{0}}
+\left\|\lambda_{t}+\frac{3}{t}\right\|_{C^{0}}  & \leq & Ct^{-2},
\label{eq:atltsupest}\\
\left\|\a^{1/2}\lambda_{\theta}\right\|_{C^{0}} & \leq & Ct^{-2}
\label{eq:lthsupest}
\end{eqnarray}
for all $t\geq t_{1}$. Moreover, 
\begin{eqnarray}
\|J_{t}\|_{C^{0}}+\|K_{t}\|_{C^{0}} & \leq &  Ct^{-2},\label{eq:JtKtest}\\
\left\|\frac{e^{P+\lambda/2}(K-QJ)^{2}}{t^{5/2}}\right\|_{C^{0}}
+\left\|\frac{e^{-P+\lambda/2}J^{2}}{t^{5/2}}\right\|_{C^{0}}
 & \leq & Ct^{-4},\label{eq:setwistest}\\
\|J_{\theta}\|_{C^{0}}+\|K_{\theta}\|_{C^{0}} & \leq &  C,\label{eq:JthKthest}\\
\left\|\d_{\theta}\left(\frac{e^{P+\lambda/2}(K-QJ)^{2}}{t^{5/2}}\right)\right\|_{C^{0}}
+\left\|\d_{\theta}\left(\frac{e^{-P+\lambda/2}J^{2}}{t^{5/2}}\right)\right\|_{C^{0}}
 & \leq & Ct^{-4}\label{eq:dthsetwistest}
\end{eqnarray}
for all $t\geq t_{1}$. Finally, 
\begin{equation}\label{eq:czmat}
\|\rho\|_{C^{0}}+t^{1/2}\|J_{i}\|_{C^{0}}+t\|P_{i}\|_{C^{0}}
+t\|S_{im}\|_{C^{0}}  \leq Ct^{-3/2}
\end{equation}
for all $t\geq t_{1}$.
\end{lemma}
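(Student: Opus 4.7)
The plan is to bootstrap from Lemma~\ref{lemma:Qocoest}, which supplies both the pointwise decay $|P_t|, \a^{1/2}|P_\theta|, e^P|Q_t|, e^P\a^{1/2}|Q_\theta|\leq Ct^{-3/2}$ and the momentum support bound $\mQ^1\leq Ct^{-1/2}$. Combined with Lemma~\ref{lemma:vtthest} for $v^2,v^3$, this confines the $v$-support of $f(t,\cdot,\cdot)$ to a box of side $Ct^{-1/2}$, on which $v^0=1+O(t^{-1})$. Since $f$ is preserved along characteristics, the defining integrals in Section~\ref{section:syeq} for $\rho,J_i,P_i,S_{ij}$ are each controlled by $\|f(t_1,\cdot)\|_\infty$ times the volume of support ($\leq Ct^{-3/2}$) times the appropriate power of $|v|\leq Ct^{-1/2}$; this yields (\ref{eq:czmat}).

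\textbf{Twist bounds.} Inserting (\ref{eq:czmat}) into (\ref{eq:Jt})--(\ref{eq:Kt}) and using $e^{\lambda/4}\leq Ct^{-3/4}$ from Lemma~\ref{lemma:lasstone} proves (\ref{eq:JtKtest}). Integrating $J_t$ in time, and similarly $(K-QJ)_t=K_t-Q_tJ-QJ_t$ after noting $|Q_t|\leq Ce^{-P}t^{-3/2}$ from Lemma~\ref{lemma:Qocoest} together with the uniform boundedness of $P,Q$ from Lemma~\ref{lemma:pqfb}, gives uniform pointwise bounds $|J|,|K-QJ|\leq C$. Next I upgrade the mean bound on $\a^{-1/2}$ to a pointwise one: inserting the twist and matter estimates into (\ref{eq:altateq}) and using $\hat\lambda=O(t^{-1/2})$ yields $\a_t/\a=-3/t+O(t^{-3/2})$ pointwise, whence $\a^{-1/2}(t,\theta)\leq Ct^{3/2}$ uniformly in $\theta$. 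Armed with this, (\ref{eq:Jth})--(\ref{eq:Kth}) together with (\ref{eq:czmat}) give (\ref{eq:JthKthest}), and (\ref{eq:setwistest}) is immediate from $|J|,|K-QJ|\leq C$ and $e^{\lambda/2}\leq Ct^{-3/2}$.

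\textbf{Sharpened $\lambda$-asymptotics and consequences.} Using (\ref{eq:altlteq}), Lemma~\ref{lemma:Qocoest}, the twist bound (\ref{eq:setwistest}), the matter bound (\ref{eq:czmat}), and the identity $4t^{1/2}e^{\lambda/2}\Lambda=(3/t)e^{\hat\lambda/2}$ (recall (\ref{eq:hldef})), I expand $e^{\hat\lambda/2}-1$, exploiting the prior bound $\hat\lambda=O(t^{-1/2})$ from Lemma~\ref{lemma:lasstone} to bound the quadratic remainder by $O(t^{-2})$. This produces the pointwise ODE
\[
\hat\lambda_t=-\frac{3}{2t}\hat\lambda+O(t^{-2}),
\]
with error uniform in $\theta$. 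Multiplying by the integrating factor $t^{3/2}$ and integrating from $t_1$ to $t$ gives (\ref{eq:hlimpest}). Substituting back into (\ref{eq:altlteq}) and (\ref{eq:altateq}) then gives (\ref{eq:atltsupest}), while (\ref{eq:altltheq}) combined with Lemma~\ref{lemma:Qocoest} and $|J_1|\leq Ct^{-2}$ yields (\ref{eq:lthsupest}). Finally, (\ref{eq:dthsetwistest}) follows by differentiating $e^{\pm P+\lambda/2}(\cdot)^2/t^{5/2}$ in $\theta$: the prefactor is $O(t^{-4})$, $|P_\theta|\leq C$ and $|\lambda_\theta|\leq Ct^{-1/2}$ (from $\alpha^{1/2}|\lambda_\theta|\leq Ct^{-2}$ and $\alpha^{-1/2}\leq Ct^{3/2}$), the twists $J,K-QJ$ are $O(1)$, and $J_\theta,K_\theta$ are $O(1)$ by (\ref{eq:JthKthest}).

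\textbf{Main obstacle.} The delicate point is the improvement of $\hat\lambda$ from $O(t^{-1/2})$ to $O(t^{-1})$: every error term in the $\lambda_t$ equation must be pushed below $O(t^{-2})$, which requires completing the full bootstrap sequence (matter $\to$ time-derivatives of twist $\to$ uniform bounds on $J$ and $K-QJ$ $\to$ pointwise $\a^{-1/2}$ bound $\to$ $\theta$-derivative bounds on twist $\to$ twist energy decay) \emph{before} re-entering the $\lambda$-equation. Only then is the remainder $O(\hat\lambda^2/t)$ from the Taylor expansion small enough to fit within the target $O(t^{-2})$, so that the linear ODE argument closes.
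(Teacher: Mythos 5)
Your proof is correct and uses the same core ingredients as the paper's: the support bound $|v|\leq Ct^{-1/2}$ from Lemmas~\ref{lemma:vtthest} and \ref{lemma:Qocoest} to control the matter quantities, the equations (\ref{eq:Jt})--(\ref{eq:Kth}) to control the twists, and a Gronwall-type bootstrap on $\hl_t = \tfrac{3}{t}(1-e^{\hl/2}) + O(t^{-2})$ to sharpen $\hl$ from $O(t^{-1/2})$ to $O(t^{-1})$. The main difference is one of ordering, and your concluding remark about the ``delicate point'' somewhat overstates it: the paper establishes (\ref{eq:hlimpest}) \emph{first}, using only the weaker $O(t^{-2})$ twist bound already available from Lemma~\ref{lemma:twistest}, and derives the remaining estimates afterwards; you instead push the twist and matter estimates all the way to (\ref{eq:setwistest}) and (\ref{eq:czmat}) before returning to the $\lambda$-equation. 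Both orderings close, because the $\hl$ bootstrap only requires the error terms to be $O(t^{-2})$, not the sharper $O(t^{-4})$ that the final twist estimate supplies. Your integrating-factor argument ($\partial_t(t^{3/2}\hl) = O(t^{-1/2})$ using the prior bound $\hl=O(t^{-1/2})$ to absorb the quadratic Taylor remainder) is a clean equivalent of the paper's monotonicity argument for $L=t^2\hl^2$; both hinge on the same prior decay from Lemma~\ref{lemma:lasstone}.
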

\begin{remark}
Note that as a consequence of (\ref{eq:atltsupest}), 
\[
\left\|\d_{t}(\a^{-1/2}e^{\lambda/2})\right\|_{C^{0}}\leq Ct^{-2}
\]
for all $t\geq t_{1}$. 
\end{remark}
\begin{proof}
Due to (\ref{eq:rhoPketcvl}) and Lemmas~\ref{lemma:vtthest} and 
\ref{lemma:Qocoest}, the estimate (\ref{eq:czmat}) holds. 
Combining Lemmas~\ref{lemma:twistest} and \ref{lemma:Qocoest} with 
(\ref{eq:altlteq}) and (\ref{eq:czmat}), we conclude that 
\[
\d_{t}\hl=\frac{3}{t}-\frac{3}{t}e^{\hl/2}+O(t^{-2}),
\]
where we have used the notation (\ref{eq:hldef}). Combining this observation
with Lemma~\ref{lemma:lasstone}, we conclude that there is a constant $C$
such that 
\[
\d_{t}\hl^{2}\leq -\frac{3}{t}\hl^{2}+Ct^{-2}|\hl|.
\]
Introducing $L=t^{2}\hl^{2}$, we obtain
\[
\d_{t}L\leq-\frac{1}{t}L+\frac{C}{t}L^{1/2}.
\]
In particular, it is clear that $L$ decreases once it exceeds a certain 
value. As a consequence, $L$ is bounded, and we obtain 
(\ref{eq:hlimpest}). Combining (\ref{eq:altlteq}), (\ref{eq:altateq}),
(\ref{eq:hlimpest}) and (\ref{eq:czmat})
and Lemmas~\ref{lemma:twistest} and \ref{lemma:Qocoest}, we then obtain 
(\ref{eq:atltsupest}). As a consequence of this estimate, $t^{3}\a$
converges to a strictly positive function. In particular, there are constants
$C_{i}>0$, $i=1,2$, such that 
\begin{equation}\label{eq:roughalest}
C_{1}\leq t^{3}\a(t,\theta)\leq C_{2}
\end{equation}
for all $t\geq t_{1}$. 

Due to (\ref{eq:altltheq}), 
(\ref{eq:lhpreest}), (\ref{eq:czmat}) and Lemma~\ref{lemma:Qocoest}, we obtain 
(\ref{eq:lthsupest}). 
Returning to (\ref{eq:Jt}) and (\ref{eq:Kt}), keeping (\ref{eq:lhpreest}),
(\ref{eq:PQbdgen}) and (\ref{eq:czmat}) in mind, we conclude that 
(\ref{eq:JtKtest}) holds. As a consequence, it is clear that $J$ and $K$ are 
bounded. Combining this observation with (\ref{eq:lhpreest}) and 
(\ref{eq:PQbdgen}) yields (\ref{eq:setwistest}). Due to (\ref{eq:Jth}), 
(\ref{eq:Kth}), (\ref{eq:lhpreest}),
(\ref{eq:PQbdgen}), (\ref{eq:czmat}) and (\ref{eq:roughalest}), we also obtain 
(\ref{eq:JthKthest}).
Due to (\ref{eq:PQbdgen}), (\ref{eq:roughalest}) and Lemma~\ref{lemma:Qocoest},
we know that $P_{\theta}$ and $Q_{\theta}$ are bounded for $t\geq t_{1}$. Moreover,
$\lambda_{\theta}$ is bounded for $t\geq t_{1}$ due to (\ref{eq:lthsupest}) and  
(\ref{eq:roughalest}). Combining these observations with 
(\ref{eq:hlimpest}), (\ref{eq:JthKthest}) and the fact that $J$, $K$, $Q$ and 
$P$ are bounded, we obtain (\ref{eq:dthsetwistest}).
\end{proof}

\section{Derivatives of the characteristic 
system}\label{section:charsys}

Solutions to the Vlasov equation can be expressed in terms of the initial
datum for the distribution function and appropriate solutions to the 
characteristic system (\ref{eq:dThetads})--(\ref{eq:dVthds}). In order
to see this, let us begin by introducing the notation 
$\Theta,V$ for the solution to (\ref{eq:dThetads})--(\ref{eq:dVthds}) 
corresponding to the initial data
\begin{equation}\label{eq:ThVid}
\Theta(t;t,\theta,v)=\theta,\ \ \ 
V(t;t,\theta,v)=v.
\end{equation}
Here we write $\Theta(s;t,\theta,v)$ and $V(s;t,\theta,v)$ in
order to clarify the dependence on the initial data. In particular, 
$\Theta(s;t,\theta,v)$ and $V(s;t,\theta,v)$, considered as 
functions of the parameter $s$, thus constitute a solution to 
(\ref{eq:dThetads})--(\ref{eq:dVthds}). The purpose of the variables 
$(t,\theta,v)$ appearing after the semi-colon is simply to indicate that 
the relations (\ref{eq:ThVid}) hold. We shall use the notation $d\Theta/ds$
and $dV/ds$ to indicate differentiation with respect to the first variable.
Moreover, we shall take $\d_{t}\Theta$, $\d_{\theta}\Theta$, $\d_{v^{i}}\Theta$
etc. to denote differentiation with respect to the variables appearing 
after the semi-colon. 

Given a fixed $\tau\in (t_{0},\infty)$, where $(t_{0},\infty)$ is the 
existence interval of the solution to the Einstein-Vlasov system under
consideration, we know that 
\begin{equation}\label{eq:fitoid}
f(t,\theta,v)=f[\tau,\Theta(\tau;t,\theta,v),V(\tau;t,\theta,v)].
\end{equation}
Since $f(\tau,\cdot)$ is a smooth function with compact support, it is 
sufficient to estimate the derivatives of solutions to the characteristic 
system in order to estimate the derivatives of $f$. Unfortunately, 
differentiating the characteristic system leads to second order derivatives of 
$P$, $Q$ etc.,
quantities over which we have no control. However, using the ideas 
introduced in \cite{andreasson}, this problem can be circumvented. In fact, 
let $\d$ be a shorthand for $\d_{t}$, $\d_{\theta}$ or $\d_{v^{i}}$ and let
\begin{eqnarray}
\Psi & = & \a^{-1/2}e^{\lambda/2}\d\Theta,\label{eq:Psidef}\\
Z^{1} & = & \d V^{1}+\left[\frac{1}{4}\a^{-1/2}\left(\lambda_{t}
-2\frac{\a_{t}}{\a}-4s^{1/2}e^{\lambda/2}\Lambda\right)V^{0}
-\frac{1}{2}\a^{-1/2}P_{t}V^{0}
\frac{(V^{2})^{2}-(V^{3})^{2}}{(V^{0})^{2}-(V^{1})^{2}}\right.\nonumber\\
& & +\frac{1}{2}P_{\theta}V^{1}
\frac{(V^{2})^{2}-(V^{3})^{2}}{(V^{0})^{2}-(V^{1})^{2}}
-\a^{-1/2}e^{P}Q_{t}\frac{V^{0}V^{2}V^{3}}{(V^{0})^{2}-(V^{1})^{2}}\label{eq:Zodef}\\
& & \left.+e^{P}Q_{\theta}\frac{V^{1}V^{2}V^{3}}{(V^{0})^{2}-(V^{1})^{2}}\right]
\d\Theta,\nonumber\\
Z^{2} & = & \d V^{2}+\frac{1}{2}P_{\theta}V^{2}\d\Theta,\label{eq:Ztdef}\\
Z^{3} & = & \d V^{3}-\left(\frac{1}{2}P_{\theta}V^{3}-e^{P}Q_{\theta}V^{2}
\right)\d\Theta.\label{eq:Zthdef}
\end{eqnarray}
It is then possible to derive an ODE for $(\Psi,Z^{1},Z^{2},Z^{3})$ such that the 
coefficients are controlled due to previous arguments. The definitions 
(\ref{eq:Psidef})--(\ref{eq:Zthdef}) differ slightly from those of 
\cite{andreasson}. The reason for this is that in the present context, it is
not sufficient to know that no second order derivatives of $P$, $Q$ etc. occur;
we need to analyse, in detail, all the terms that appear, and to use the 
resulting system in order to derive specific estimates for $\d\Theta$
and $\d V^{i}$. The relevant result is the following. 

\begin{lemma}\label{lemma:Jacobisys}
Consider a $\tn{2}$-symmetric solution to the Einstein-Vlasov equations with a 
cosmological constant $\Lambda>0$ and existence interval $(t_{0},\infty)$, 
where $t_{0}\geq 0$. Assume that the solution has $\lambda$-asymptotics
and let $t_{1}=t_{0}+2$. Then there is a constant $C>0$, depending only on 
the solution, such that 
\begin{eqnarray}
\frac{dZ^{1}}{ds} & = & -\frac{1}{2s}Z^{1}+c_{1,\theta}\Psi
+c_{1,j}Z^{j},\label{eq:Zomanle}\\
\frac{dZ^{2}}{ds} & = & -\frac{1}{2s}Z^{2}+c_{2,2}Z^{2},\label{eq:Ztmanle}\\
\frac{dZ^{3}}{ds} & = & -\frac{1}{2s}Z^{3}+c_{3,2}Z^{2}+c_{3,3}Z^{3},
\label{eq:Zthmanle}\\
\frac{d\Psi}{ds} & = & c_{\theta,\theta}\Psi+c_{\theta,i}Z^{i},\label{eq:Psimanle}
\end{eqnarray}
where Einstein's summation convention applies to $i$ and $j$, 
\begin{eqnarray*}
|c_{i,j}(s;t,\theta,v)|+|c_{\theta,i}(s;t,\theta,v)|+
|c_{i,\theta}(s;t,\theta,v)|+s^{1/2}|c_{\theta,\theta}(s;t,\theta,v)| 
& \leq & Cs^{-3/2},
\end{eqnarray*}
and the estimate holds for all $(t,\theta,v)\in [t_{1},\infty)\times\so\times
\rn{3}$ in the support of $f$ and for all $s\in [t_{1},t]$. 
\end{lemma}
\begin{proof}
Let us begin by noting that 
\begin{eqnarray*}
e^{P/2}s^{1/2}Z^{2} & = & \d(s^{1/2}e^{P/2}V^{2}),\\
s^{1/2}e^{-P/2}Z^{3}+e^{P/2}Qs^{1/2}Z^{2} & = & 
\d (s^{1/2}Qe^{P/2}V^{2}+s^{1/2}e^{-P/2}V^{3}).
\end{eqnarray*}
Since the quantities appearing in (\ref{eq:v23con}) are preserved along
characteristics, we obtain
\begin{eqnarray}
\frac{d}{ds}\left(e^{P/2}s^{1/2}Z^{2}\right) & = & 0,\label{eq:Ztdot}\\
\frac{d}{ds}\left(s^{1/2}e^{-P/2}Z^{3}+e^{P/2}Qs^{1/2}Z^{2}\right) & = & 0.
\label{eq:Zthdot}
\end{eqnarray}
We also have 
\begin{equation}\label{eq:Psidot}
\frac{d\Psi}{ds}=\frac{1}{2}\left(\lambda_{t}-\frac{\a_{t}}{\a}\right)\Psi
+\frac{1}{2}\a^{1/2}\lambda_{\theta}\frac{V^{1}}{V^{0}}\Psi
+e^{\lambda/2}\d\left(\frac{V^{1}}{V^{0}}\right).
\end{equation}
In the end, we shall express $\d (V^{1}/V^{0})$ in terms of $Z^{i}$ and $\Psi$.
However, there is no immediate gain in doing so here. 
The most cumbersome part of the argument is to compute
the derivative of $Z^{1}$. This calculation can be divided into several parts. 
Let us begin by considering 
\[
\frac{d}{ds}(\d V^{1})=\d\left(\frac{dV^{1}}{ds}\right).
\]
When calculating the right hand side, it is convenient to divide the 
result into terms which include a $\d V^{i}$ factor, $i=1,2,3$,
and terms which do not. Combining Lemmas~\ref{lemma:vtthest}, 
\ref{lemma:Qocoest} and \ref{lemma:interm}, 
the terms which include such a factor can be written
\[
-\frac{1}{2s}\d V^{1}+c_{i}\d V^{i},
\]
where $c_{i}(s)=O(s^{-2})$; note that the factor in front of 
$V^{1}$ in (\ref{eq:dVods}) is given by 
\[
-\frac{1}{2s}+O(s^{-2}).
\] 
It is straightforward to calculate the remaining terms, and we conclude that 
\begin{equation*}
\begin{split}
\frac{d}{ds}(\d V^{1}) = & 
-\frac{1}{4}\d_{\theta}(\a^{1/2}\lambda_{\theta})V^{0}\d\Theta-\frac{1}{4}
\d_{\theta}\left(\lambda_{t}-2\frac{\a_{t}}{\a}\right)V^{1}\d\Theta
+\d_{\theta}(\a^{1/2}e^{P}Q_{\theta})\frac{V^{2}V^{3}}{V^{0}}\d\Theta\\
& -\frac{1}{2}\d_{\theta}(\a^{1/2}P_{\theta})
\frac{(V^{3})^{2}-(V^{2})^{2}}{V^{0}}\d\Theta
+s^{-7/4}\d_{\theta}(e^{\lambda/4}e^{-P/2}J)V^{2}\d\Theta\\
& +s^{-7/4}\d_{\theta}[e^{\lambda/4}e^{P/2}(K-QJ)]V^{3}\d\Theta-\frac{1}{2s}\d V^{1}
+c_{i}\d V^{i},
\end{split}
\end{equation*}
where $c_{i}(s)=O(s^{-2})$. Combining an argument which is identical to the 
proof of (\ref{eq:dthsetwistest}) with Lemma~\ref{lemma:vtthest}, we can 
estimate the 
factors multiplying $\d\Theta$ in the third and fourth last terms on the right
hand side. In fact, we obtain 
\begin{equation*}
\begin{split}
\frac{d}{ds}(\d V^{1}) = & -\frac{1}{2s}\d V^{1}
-\frac{1}{4}\d_{\theta}(\a^{1/2}\lambda_{\theta})V^{0}\d\Theta-\frac{1}{4}
\d_{\theta}\left(\lambda_{t}-2\frac{\a_{t}}{\a}\right)V^{1}\d\Theta
+\d_{\theta}(\a^{1/2}e^{P}Q_{\theta})\frac{V^{2}V^{3}}{V^{0}}\d\Theta\\
& -\frac{1}{2}\d_{\theta}(\a^{1/2}P_{\theta})
\frac{(V^{3})^{2}-(V^{2})^{2}}{V^{0}}\d\Theta+c_{\theta}\d\Theta
+c_{i}\d V^{i},
\end{split}
\end{equation*}
where $c_{i}(s)=O(s^{-2})$ and $c_{\theta}(s)=O(s^{-3})$. As a next
step, it is of interest to consider the terms that arise when $d/ds$
hits a $V^{\a}$ in the second term in the definition of 
$Z^{1}$. Before writing down the result, let us note that, due to 
Lemmas~\ref{lemma:vtthest}, \ref{lemma:Qocoest} and \ref{lemma:interm},
\begin{equation}\label{eq:dVdscrude}
\frac{dV^{i}}{ds}=-\frac{1}{2s}V^{i}+O(s^{-2}),\ \ \
\frac{dV^{0}}{ds}=O(s^{-2}).
\end{equation}
Due to these observations, Lemmas~\ref{lemma:vtthest}, \ref{lemma:Qocoest} and 
\ref{lemma:interm}, as well as (\ref{eq:roughalest}) and the definition of 
$Z^{1}$, we conclude that when $d/ds$ hits a $V^{\a}$ in the second term in
$Z^{1}$, the resulting expression can be written $c_{\theta}\d\Theta$,
where $c_{\theta}(s)=O(s^{-2})$. 

Note that every term appearing in the second
term in the definition of $Z^{1}$ can be written in the form
\begin{equation}\label{eq:hwhatnotterm}
h(\cdot,\Theta)\psi(V)\a^{-1/2}(\cdot,\Theta)\d\Theta.
\end{equation}
We have already estimated the terms that arise when $d/ds$ hits $\psi$. Let
us therefore consider the terms that arise when the derivative hits the 
remaining factors. Omitting the arguments, we need to consider 
\begin{equation}\label{eq:ddsnotpsi}
\begin{split}
 & \left(h_{t}+\a^{1/2}h_{\theta}\frac{V^{1}}{V^{0}}\right)\psi\a^{-1/2}\d\Theta
-\frac{\a_{t}}{2\a^{3/2}}h\psi\d\Theta\\
& +h\psi\left[-\frac{\a_{\theta}}{2\a^{3/2}}\a^{1/2}\frac{V^{1}}{V^{0}}\d\Theta
+\a^{-1/2}\frac{\a_{\theta}}{2\a^{1/2}}\frac{V^{1}}{V^{0}}\d\Theta
+\d\left(\frac{V^{1}}{V^{0}}\right)\right]\\
 = & \left(\d_{t}(\a^{-1/2}h)+h_{\theta}\frac{V^{1}}{V^{0}}\right)\psi\d\Theta
+h\psi\d\left(\frac{V^{1}}{V^{0}}\right).
\end{split}
\end{equation}
In all the terms of interest, $h\psi=O(s^{-2})$, where we have used
Lemmas~\ref{lemma:vtthest}, \ref{lemma:Qocoest} and 
\ref{lemma:interm}. As a consequence, this 
expression can be written 
\[
\left(\d_{t}(\a^{-1/2}h)+h_{\theta}\frac{V^{1}}{V^{0}}\right)\psi\d\Theta
+c_{i}\d V^{i},
\]
where $c_{i}(s)=O(s^{-2})$. Adding up the 
above observations, we conclude that 
\begin{equation}\label{eq:hugedZods}
\begin{split}
\frac{dZ^{1}}{ds} = & -\frac{1}{2s}\d V^{1}
-\frac{1}{4}\d_{\theta}(\a^{1/2}\lambda_{\theta})V^{0}\d\Theta-\frac{1}{4}
\d_{\theta}\left(\lambda_{t}-2\frac{\a_{t}}{\a}\right)V^{1}\d\Theta
+\d_{\theta}(\a^{1/2}e^{P}Q_{\theta})\frac{V^{2}V^{3}}{V^{0}}\d\Theta\\
& -\frac{1}{2}\d_{\theta}(\a^{1/2}P_{\theta})
\frac{(V^{3})^{2}-(V^{2})^{2}}{V^{0}}\d\Theta
+\frac{1}{4}\d_{t}\left[\a^{-1/2}\left(\lambda_{t}
-2\frac{\a_{t}}{\a}-4s^{1/2}e^{\lambda/2}\Lambda\right)\right]V^{0}\d\Theta\\
&+\frac{1}{4}\d_{\theta}\left(\lambda_{t}
-2\frac{\a_{t}}{\a}-4s^{1/2}e^{\lambda/2}\Lambda\right)V^{1}\d\Theta
-\frac{1}{2}\d_{t}(\a^{-1/2}P_{t})V^{0}
\frac{(V^{2})^{2}-(V^{3})^{2}}{(V^{0})^{2}-(V^{1})^{2}}\d\Theta\\
&  -\frac{1}{2}P_{t\theta}V^{1}
\frac{(V^{2})^{2}-(V^{3})^{2}}{(V^{0})^{2}-(V^{1})^{2}}\d\Theta
+\frac{1}{2}P_{t\theta}V^{1}
\frac{(V^{2})^{2}-(V^{3})^{2}}{(V^{0})^{2}-(V^{1})^{2}}\d\Theta\\
 & +\frac{1}{2}\d_{\theta}
(\a^{1/2}P_{\theta})\frac{(V^{1})^{2}}{V^{0}}
\frac{(V^{2})^{2}-(V^{3})^{2}}{(V^{0})^{2}-(V^{1})^{2}}\d\Theta
-\d_{t}(\a^{-1/2}e^{P}Q_{t})\frac{V^{0}V^{2}V^{3}}{(V^{0})^{2}-(V^{1})^{2}}\d\Theta\\
&  
-\d_{\theta}(e^{P}Q_{t})\frac{V^{1}V^{2}V^{3}}{(V^{0})^{2}-(V^{1})^{2}}\d\Theta
+\d_{t}(e^{P}Q_{\theta})\frac{V^{1}V^{2}V^{3}}{(V^{0})^{2}-(V^{1})^{2}}\d\Theta\\
&  +\d_{\theta}(\a^{1/2}e^{P}Q_{\theta})\frac{V^{1}}{V^{0}}
\frac{V^{1}V^{2}V^{3}}{(V^{0})^{2}-(V^{1})^{2}}\d\Theta
+c_{\theta}\d\Theta
+c_{i}\d V^{i},
\end{split}
\end{equation}
where $c_{\theta}(s)=O(s^{-2})$ and $c_{i}(s)=O(s^{-2})$. How to interpret the
terms in this equation should be clear from (\ref{eq:hwhatnotterm}) and
(\ref{eq:ddsnotpsi}). However, there is one term which is slightly 
ambiguous, namely the sixth one on the right hand side of 
(\ref{eq:hugedZods}). For the sake of clarity, let us point out that the 
factor in front of $V^{0}\d\Theta$ in this 
term should be interpreted as the time derivative of 
\[
\frac{1}{4}\left[\a^{-1/2}\left(\lambda_{t}
-2\frac{\a_{t}}{\a}-4t^{1/2}e^{\lambda/2}\Lambda\right)\right],
\]
evaluated at $[s,\Theta(s)]$. The expression (\ref{eq:hugedZods})
can be simplified somewhat. To begin with, the terms involving $P_{t\theta}$
cancel. Moreover, 
\[
-\d_{\theta}(e^{P}Q_{t})+\d_{t}(e^{P}Q_{\theta})=e^{P}(P_{t}Q_{\theta}-P_{\theta}Q_{t}).
\]
Using Lemma~\ref{lemma:Qocoest} and (\ref{eq:roughalest}), we can estimate 
this expression in order
to conclude that the sum of the fourth and fifth last terms is of the 
form $c_{\theta}\d \Theta$, where $c_{\theta}=O(s^{-3})$ (in order to obtain 
this conclusion, we have also used Lemma~\ref{lemma:vtthest}).
Using Lemma~\ref{lemma:Qocoest}, (\ref{eq:hlimpest}), (\ref{eq:lthsupest}) and 
(\ref{eq:roughalest}), the sum of the third 
and the seventh term can be written $c_{\theta}\d \Theta$, where 
$c_{\theta}=O(s^{-2})$. Since 
\begin{eqnarray*}
-\frac{(V^{3})^{2}-(V^{2})^{2}}{V^{0}}+\frac{(V^{1})^{2}}{V^{0}}
\frac{(V^{2})^{2}-(V^{3})^{2}}{(V^{0})^{2}-(V^{1})^{2}} & = & 
V^{0}\frac{(V^{2})^{2}-(V^{3})^{2}}{(V^{0})^{2}-(V^{1})^{2}}\\
\frac{V^{2}V^{3}}{V^{0}}+\frac{V^{1}}{V^{0}}
\frac{V^{1}V^{2}V^{3}}{(V^{0})^{2}-(V^{1})^{2}} & = & 
\frac{V^{0}V^{2}V^{3}}{(V^{0})^{2}-(V^{1})^{2}},
\end{eqnarray*}
the terms involving a factor of $\d_{\theta}(\a^{1/2}P_{\theta})$
can be written
\[
\frac{1}{2}\d_{\theta}(\a^{1/2}P_{\theta})
V^{0}\frac{(V^{2})^{2}-(V^{3})^{2}}{(V^{0})^{2}-(V^{1})^{2}}\d\Theta,
\]
and the terms involving a factor of $\d_{\theta}(\a^{1/2}e^{P}Q_{\theta})$
can be written
\[
\d_{\theta}(\a^{1/2}e^{P}Q_{\theta})
\frac{V^{0}V^{2}V^{3}}{(V^{0})^{2}-(V^{1})^{2}}\d\Theta.
\]
Combining these observations yields
\begin{equation}\label{eq:dZoexinte}
\begin{split}
\frac{dZ^{1}}{ds} = & -\frac{1}{2s}\d V^{1}
+\frac{1}{4}\d_{t}\left[\a^{-1/2}\left(\lambda_{t}
-2\frac{\a_{t}}{\a}-4s^{1/2}e^{\lambda/2}\Lambda\right)\right]V^{0}\d\Theta
-\frac{1}{4}\d_{\theta}(\a^{1/2}\lambda_{\theta})V^{0}\d\Theta\\
&-\frac{1}{2}[\d_{t}(\a^{-1/2}P_{t})-\d_{\theta}(\a^{1/2}P_{\theta})]V^{0}
\frac{(V^{2})^{2}-(V^{3})^{2}}{(V^{0})^{2}-(V^{1})^{2}}\d\Theta\\
 & -[\d_{t}(\a^{-1/2}e^{P}Q_{t})-\d_{\theta}(\a^{1/2}e^{P}Q_{\theta})]
\frac{V^{0}V^{2}V^{3}}{(V^{0})^{2}-(V^{1})^{2}}\d\Theta
+c_{\theta}\d\Theta
+c_{i}\d V^{i},
\end{split}
\end{equation}
where $c_{\theta}(s)=O(s^{-2})$ and $c_{i}(s)=O(s^{-2})$. Combining
(\ref{eq:ttmthth}), (\ref{eq:roughalest}) and Lemmas~\ref{lemma:Qocoest}
and \ref{lemma:interm} yields
\[
\d_{t}(\a^{-1/2}P_{t})-\d_{\theta}(\a^{1/2}P_{\theta})=O(t^{-1}).
\]
Due to this estimate and Lemmas~\ref{lemma:Qocoest}
and \ref{lemma:vtthest}, the fourth term on the right hand side of 
(\ref{eq:dZoexinte}) can be written $c_{\theta}\d\Theta$, where 
$c_{\theta}(s)=O(s^{-2})$. Keeping (\ref{eq:Etth}) in mind, a similar 
argument yields the same conclusion concerning the fifth term on the right 
hand side of (\ref{eq:dZoexinte}). Finally, keeping (\ref{eq:denden}) in
mind, a similar argument yields the conclusion that the combination of 
the second and third terms on the right hand side of (\ref{eq:dZoexinte})
can be written $c_{\theta}\d\Theta$, where 
$c_{\theta}(s)=O(s^{-3/2})$. To conclude,
\begin{equation}\label{eq:dZodsalth}
\frac{dZ^{1}}{ds}=-\frac{1}{2s}\d V^{1}+c_{\theta}\d\Theta
+c_{i}\d V^{i},
\end{equation}
where $c_{\theta}(s)=O(s^{-3/2})$ and $c_{i}(s)=O(s^{-2})$. On the other hand,
due to (\ref{eq:hlimpest}) and (\ref{eq:roughalest}), the function
$\a^{-1/2}e^{\lambda/2}$ can be bounded from above and below by positive
constants (for $t\geq t_{1}$). In other words, $\d\Theta$ and $\Psi$
are interchangeable when deriving equations of the form
(\ref{eq:Zomanle})--(\ref{eq:Psimanle}). Moreover,
due to (\ref{eq:roughalest}) and Lemmas~\ref{lemma:vtthest}, 
\ref{lemma:Qocoest} and \ref{lemma:interm},
\begin{equation}\label{eq:ZiVirel}
Z^{i}=\d V^{i}+c_{i,\theta}\Psi,
\end{equation}
where $c_{i,\theta}(s)=O(s^{-1/2})$. We are now in a position to prove the 
lemma. To begin with, combining (\ref{eq:Ztdot}), (\ref{eq:Zthdot})
and Lemma~\ref{lemma:Qocoest} yields (\ref{eq:Ztmanle}) and 
(\ref{eq:Zthmanle}). Combining (\ref{eq:Psidot}) and (\ref{eq:ZiVirel}) with 
Lemmas~\ref{lemma:vtthest}, \ref{lemma:Qocoest} and \ref{lemma:interm}
yields (\ref{eq:Psimanle}). Finally, combining (\ref{eq:dZodsalth}) and 
(\ref{eq:ZiVirel}) yields (\ref{eq:Zomanle}).
\end{proof}

\begin{lemma}\label{leqmma:dercs}
Consider a $\tn{2}$-symmetric solution to the Einstein-Vlasov equations with a 
cosmological constant $\Lambda>0$ and existence interval $(t_{0},\infty)$, 
where $t_{0}\geq 0$. Assume that the solution has $\lambda$-asymptotics and 
let $t_{1}=t_{0}+2$. Then there is a constant 
$C>0$, depending only on the solution, such that 
\begin{equation}\label{eq:dth}
(\ln s)^{2}\left|\d_{\theta}\Theta(s;t,\theta,v)\right|
+s^{1/2}\left|\d_{\theta}V^{i}(s;t,\theta,v)\right|  \leq
C(\ln t)^{2}
\end{equation}
for all $s\in [t_{1},t]$ and $(t,\theta,v)\in [t_{1},\infty)\times\so\times
\rn{3}$ in the support of $f$.
\end{lemma}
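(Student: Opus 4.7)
The plan is to integrate the system of Lemma~\ref{lemma:Jacobisys} backward from $s=t$. Since $\Theta(t;t,\theta,v)=\theta$ and $V^i(t;t,\theta,v)=v^i$, the initial data read $\d_\theta\Theta(t)=1$ and $\d_\theta V^i(t)=0$, which by the definitions (\ref{eq:Psidef})--(\ref{eq:Zthdef}) translate into $\Psi(t)=\a^{-1/2}e^{\l/2}|_{(t,\theta)}=O(1)$ (bounded above and below by the estimates of Lemma~\ref{lemma:interm}) and $Z^i(t)=c_{i,\theta}(t;t,\theta,v)\Psi(t)=O(t^{-1/2})$. The natural substitution is $W^i:=s^{1/2}Z^i$, which cancels the $-Z^i/(2s)$ term in the $Z^i$-equations and gives $|W^i(t)|\leq C$ at the initial time.

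The subsystem for $W^2,W^3$ decouples from $(W^1,\Psi)$, reading $dW^2/ds=c_{2,2}W^2$ and $dW^3/ds=c_{3,2}W^2+c_{3,3}W^3$ with all coefficients bounded by $Cs^{-3/2}$. Since $\int_{t_1}^\infty u^{-3/2}\,du<\infty$, Gronwall yields $|W^i(s)|\leq C$, hence $|Z^i(s)|\leq Cs^{-1/2}$, for $i=2,3$ and all $s\in[t_1,t]$. Feeding these back into the $W^1$ and $\Psi$ equations, and absorbing the integrable self-coefficients ($c_{1,1}=O(s^{-3/2})$ and $c_{\theta,\theta}=O(s^{-2})$) into bounded integrating factors, leaves the coupled integral inequalities
\[
|\Psi(s)|\leq C+C\int_s^t u^{-2}|W^1(u)|\,du,\qquad
|W^1(s)|\leq C+C\int_s^t u^{-1}|\Psi(u)|\,du
\]
on $[t_1,t]$, in which the driving $W^2,W^3$ contributions have been absorbed into the constants.

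The hard part is to extract the correct logarithmic growth from this coupled pair: the coefficient $u^{-1}$ in the $W^1$-inequality is exactly non-integrable, so a naive Gronwall iteration would yield polynomial rather than logarithmic loss. To close the bootstrap, I would substitute the $\Psi$-inequality into the $W^1$-inequality and use the elementary identity $\int_s^t u^{-2}\ln(t/u)\,du=s^{-1}\ln(t/s)-s^{-1}+t^{-1}$ to verify that the iterate admits the ansatz $|W^1(s)|\leq C(1+\ln(t/s))$ and $|\Psi(s)|\leq C(1+s^{-1}\ln(t/s))$; a sup-norm bootstrap on $[s,t]$ (using the monotonicity of the sup of $|\Psi|$ and $|W^1|$ over $[s,t]$ as $s$ decreases) turns this into a rigorous argument. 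Translating back via $\d_\theta\Theta=\a^{1/2}e^{-\l/2}\Psi$ (a bounded factor by Lemma~\ref{lemma:interm}) and $\d_\theta V^i=Z^i-\bigl(\text{correction of size }O(s^{-3/2})\bigr)\Psi$, together with the elementary inequality $\sup_{s\geq 1}(\ln s)^2/s=4/e^2$, converts the $s^{-1}\ln(t/s)$ growth of $\Psi$ into exactly the claimed bound $(\ln s)^2|\d_\theta\Theta|+s^{1/2}|\d_\theta V^i|\leq C(\ln t)^2$ for all $s\in[t_1,t]$ and all $(t,\theta,v)$ in the support of $f$.
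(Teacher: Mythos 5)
Your approach is genuinely different from the paper's: the paper weights $\Psi$ by $(\ln s)^{2}$ \emph{and} $Z^{i}$ by $s^{1/2}$, introduces the energy $\hE=(\hPs)^{2}+\sum(\hZ^{i})^{2}$, and observes that with this weighting \emph{every} coupling coefficient in the resulting ODE system is $O\bigl(1/(s(\ln s)^{2})\bigr)$ — in particular the dangerous $W^{1}\!\leftrightarrow\!\Psi$ coupling becomes $s^{1/2}c_{1,\theta}(\ln s)^{-2}=O\bigl(s^{-1}(\ln s)^{-2}\bigr)$. Since $\int_{t_{1}}^{\infty}\frac{du}{u(\ln u)^{2}}<\infty$, a single Gronwall gives $\hE(s)\leq C\hE(t)=O\bigl((\ln t)^{4}\bigr)$ and the lemma follows immediately, with no bootstrap and no iteration. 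Your strategy, by contrast, weights only $Z^{i}$ and so keeps the non-integrable $u^{-1}$ factor in the $W^{1}$-equation, leading to the coupled Volterra inequalities you wrote down.

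The gap is in the claim that ``a sup-norm bootstrap turns this into a rigorous argument.'' Substituting the $\Psi$-inequality into the $W^{1}$-inequality and exchanging the order of integration gives
\[
|W^{1}(u)|\leq C+C^{2}\ln(t/u)+C^{2}\int_{u}^{t}r^{-2}\ln(r/u)\,|W^{1}(r)|\,dr,
\]
and since $\int_{u}^{t}r^{-2}\ln(r/u)\,dr\leq u^{-1}$, the monotone sup $M(s):=\sup_{[s,t]}|W^{1}|$ obeys $M(s)\leq C(1+\ln(t/s))+C^{2}s^{-1}M(s)$. The absorption constant is $C^{2}/s$, which one cannot assume to be $<1$: the implied constant $C$ depends on the solution and may well exceed $\sqrt{t_{1}}$, since $t_{1}=t_{0}+2$ is not at our disposal. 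So the bootstrap closes only for $s\geq T_{0}:=2C^{2}$, and the claim that the ansatz ``admits the iterate'' is not a proof on $[t_{1},T_{0}]$. The fix is standard but must be said: on $[T_{0},t]$ absorb as above; on the fixed compact interval $[t_{1},T_{0}]$ the linearly-coupled ODE for $(\Psi,W^{i})$ has coefficients bounded in terms of the solution, so Gronwall propagates the bound from $s=T_{0}$ back to $s=t_{1}$ at the cost of a fixed multiplicative constant. (Alternatively, one can sum the Volterra--Neumann series for the kernel $C^{2}r^{-2}\ln(r/u)$: in the variables $\tau=\ln s$, $\rho=\ln r$ the $n$-th iterate carries an extra factor $e^{-(n-1)\tau}/((n-1)!)^{2}$, so the resolvent converges uniformly on $s\geq t_{1}$ and even yields the sharper bound $|W^{1}(s)|\leq C(1+\ln(t/s))$ you proposed.) Your final translation back to $\d_{\theta}\Theta$ and $\d_{\theta}V^{i}$, including the identity $\int_{s}^{t}u^{-2}\ln(t/u)\,du=s^{-1}\ln(t/s)-s^{-1}+t^{-1}$ and the bound $\sup_{s\geq 1}(\ln s)^{2}/s=4/e^{2}$, is correct once the bootstrap is closed.
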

\begin{remark}
It is possible to use arguments similar to the ones given below in order to 
derive estimates for $\d_{t}\Theta$, $\d_{v^{i}}\Theta$ etc. 
\end{remark}
\begin{proof}
Let
\begin{equation}\label{eq:hZhPsdef}
\hZ^{i}(s;t,\theta,v)=s^{1/2}Z^{i}(s;t,\theta,v),\ \ \
\hPs(s;t,\theta,v)=(\ln s)^{2}\Psi(s;t,\theta,v).
\end{equation}
Then, due to Lemma~\ref{lemma:Jacobisys}, 
\begin{eqnarray*}
\frac{d\hZ^{1}}{ds} & = & c_{1,\theta}s^{1/2}(\ln s)^{-2}\hPs
+c_{1,j}\hZ^{j},\\
\frac{d\hZ^{2}}{ds} & = & c_{2,2}\hZ^{2},\\
\frac{d\hZ^{3}}{ds} & = & c_{3,2}\hZ^{2}+c_{3,3}\hZ^{3},\\
\frac{d\hPs}{ds} & = & \frac{2}{s\ln s}\hPs
+c_{\theta,\theta}\hPs+c_{\theta,i}s^{-1/2}(\ln s)^{2}\hZ^{i}
\end{eqnarray*}
for $s\in [t_{1},t]$, 
with coefficients as in Lemma~\ref{lemma:Jacobisys}. Introducing 
\begin{equation}\label{eq:hEdef}
\hE=\sum_{i=1}^{3}(\hZ^{i})^{2}+(\hPs)^{2},
\end{equation}
we conclude that there is a constant $C>0$, depending only on 
the solution, such that 
\[
\frac{d\hE}{ds}\geq -C\frac{1}{s(\ln s)^{2}}\hE
\]
for $s\in [t_{1},t]$. As a consequence, 
\begin{equation}\label{eq:mainhEest}
\hE(s;t,\theta,v)\leq C\hE(t;t,\theta,v)
\end{equation}
for $s\in [t_{1},t]$. 
Let us now assume $\d=\d_{\theta}$. Then
\[
\hPs(t;t,\theta,v)=O[(\ln t)^{2}].
\]
Moreover, 
\[
\hZ^{i}(t;t,\theta,v)=[t^{1/2}\d_{\theta}V^{i}+O(1)\Psi](t;t,\theta,v)=O(1).
\]
As a consequence, $\hE(t;t,\theta,v)=O[(\ln t)^{4}]$. Thus
(\ref{eq:mainhEest}) implies (\ref{eq:dth}); note that the 
estimate for $\d_{\theta}\Theta$ is immediate and that 
\[
|\d_{\theta}V^{i}|\leq |Z^{i}|+Cs^{-1/2}|\d_{\theta}\Theta|.
\] 
The lemma follows. 
\end{proof}

\section{Higher order light cone estimates}\label{section:holce}


Before proceeding to the higher order light cone estimates, let us
record some consequences of the estimates obtained in Lemma~\ref{leqmma:dercs}.

\begin{lemma}\label{leqmma:dermq}
Consider a $\tn{2}$-symmetric solution to the Einstein-Vlasov equations with a 
cosmological constant $\Lambda>0$ and existence interval $(t_{0},\infty)$, 
where $t_{0}\geq 0$. Assume that the solution has $\lambda$-asymptotics and
let $t_{1}=t_{0}+2$. Then there is a constant $C>0$, depending only on the 
solution, such that
\begin{eqnarray}
\|\d_{\theta}\rho\|_{C^{0}}+t^{1/2}\|\d_{\theta}J_{i}\|_{C^{0}}
+t\|\d_{\theta}S_{ij}\|_{C^{0}}+t\|\d_{\theta}P_{k}\|_{C^{0}}
& \leq & Ct^{-3/2}(\ln t)^{2},\label{eq:math}\\
\left\|\d_{\theta}\left(\frac{\a_{t}}{\a}\right)\right\|_{C^{0}}
 & \leq & Ct^{-3/2},\label{eq:dtthlna}\\
\left\|\frac{\a_{\theta}}{\a}\right\|_{C^{0}}
 & \leq & C\label{eq:dthlna}
\end{eqnarray}
for all $t\geq t_{1}$.
\end{lemma}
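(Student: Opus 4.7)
The plan is to derive the three estimates in the order they appear, since each depends on the previous. For (\ref{eq:math}), I would start from the representation (\ref{eq:fitoid}) at the fixed time $\tau=t_{1}$, which gives
\[
\d_{\theta}f(t,\theta,v) = \d_{\theta}\Theta(t_{1};t,\theta,v)\cdot(\d_{\theta}f)[t_{1},\Theta(t_{1}),V(t_{1})]
+\sum_{i}\d_{\theta}V^{i}(t_{1};t,\theta,v)\cdot(\d_{v^{i}}f)[t_{1},\Theta(t_{1}),V(t_{1})].
\]
Since $f(t_{1},\cdot)$ is smooth with compact support, the terms $\d_{\theta}f$ and $\d_{v^{i}}f$ at time $t_{1}$ are bounded uniformly. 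Evaluating (\ref{eq:dth}) at $s=t_{1}$ gives $|\d_{\theta}\Theta(t_{1};t,\theta,v)|+|\d_{\theta}V^{i}(t_{1};t,\theta,v)|\leq C(\ln t)^{2}$ for $(t,\theta,v)\in\mathrm{supp}\,f$, hence $|\d_{\theta}f(t,\theta,v)|\leq C(\ln t)^{2}$ on the support. Combined with the $v$-support bounds $|v^{1}|,|v^{2}|,|v^{3}|\leq Cs^{-1/2}$ from Lemmas~\ref{lemma:vtthest} and \ref{lemma:Qocoest}, so that the support of $f(t,\cdot)$ in $v$ has volume $\leq Ct^{-3/2}$, and the respective velocity weights (at most $C$ for $\rho$, $Ct^{-1/2}$ for $J_{i}$, $Ct^{-1}$ for $P_{k}$ and $S_{ij}$), a termwise estimate of $\d_{\theta}\rho=\int v^{0}\d_{\theta}f\,dv$ and analogous integrals yields (\ref{eq:math}).

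For (\ref{eq:dtthlna}) I would differentiate (\ref{eq:altateq}) in $\theta$. The $\d_{\theta}$ of the two twist terms is already controlled by (\ref{eq:dthsetwistest}) by $Ct^{-4}$; the cosmological term $4t^{1/2}e^{\lambda/2}\Lambda$ contributes $2t^{1/2}e^{\lambda/2}\lambda_{\theta}\Lambda$; and the matter term $2t^{1/2}e^{\lambda/2}(\rho-P_{1})$ contributes a $\lambda_{\theta}$-piece and a $\d_{\theta}(\rho-P_{1})$-piece multiplied by $t^{1/2}e^{\lambda/2}$. The crucial auxiliary fact is a \emph{pointwise} upper bound $\a^{-1/2}(t,\theta)\leq Ct^{3/2}$, obtained by rewriting (\ref{eq:atltsupest}) as $|\d_{t}\ln(\a t^{3})|\leq Ct^{-2}$ and integrating: this gives $|\ln(\a(t,\theta)t^{3})-\ln(\a(t_{1},\theta)t_{1}^{3})|\leq C$, so that $\a(t,\theta)t^{3}$ is bounded above and below on $[t_{1},\infty)\times\so$. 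Together with (\ref{eq:lthsupest}) this yields $|\lambda_{\theta}|\leq Ct^{-1/2}$ pointwise, so the cosmological contribution is $Ct^{1/2}\cdot Ct^{-3/2}\cdot Ct^{-1/2}=Ct^{-3/2}$. Using $e^{\lambda/2}\leq Ct^{-3/2}$, (\ref{eq:czmat}) and the already proven bound for $\d_{\theta}\rho$ and $\d_{\theta}P_{1}$, the matter contribution is $O(t^{-5/2}(\ln t)^{2})+O(t^{-3})$. Adding everything gives (\ref{eq:dtthlna}).

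For (\ref{eq:dthlna}) I would use the commutation identity $\d_{t}(\a_{\theta}/\a)=\d_{\theta}(\a_{t}/\a)$, valid because both sides equal $\d_{t}\d_{\theta}\ln\a$. Integrating from $t_{1}$ and using (\ref{eq:dtthlna}),
\[
\left|\frac{\a_{\theta}}{\a}(t,\theta)-\frac{\a_{\theta}}{\a}(t_{1},\theta)\right|
\leq \int_{t_{1}}^{t}Cs^{-3/2}\,ds\leq C.
\]
Since the solution is smooth on $(t_{0},\infty)\times\so$, the initial value $\a_{\theta}/\a(t_{1},\cdot)$ is bounded on the compact circle, and (\ref{eq:dthlna}) follows.

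The main obstacle is the need for a \emph{pointwise} upper bound on $\a^{-1/2}$: the $C^{0}$ bound on $\a^{1/2}\lambda_{\theta}$ from (\ref{eq:lthsupest}) is by itself insufficient to control the $\lambda_{\theta}$ factor in the cosmological term of $\d_{\theta}(\a_{t}/\a)$, and only the $L^{1}$ bound on $\a^{-1/2}$ from Lemma~\ref{lemma:pqfb} is directly available. The resolution is the observation that the already-established bound (\ref{eq:atltsupest}) on $\a_{t}/\a+3/t$ gives pointwise (not merely averaged) control of $\a t^{3}$ from above and below, which in turn yields the required pointwise bound on $\lambda_{\theta}$.
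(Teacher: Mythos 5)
Your proof follows essentially the same route as the paper's: differentiate the characteristic representation (\ref{eq:fitoid}) using (\ref{eq:dth}) and the momentum--support bound to obtain (\ref{eq:math}), differentiate (\ref{eq:altateq}) term by term for (\ref{eq:dtthlna}), and integrate in $t$ using $\d_{t}(\a_{\theta}/\a)=\d_{\theta}(\a_{t}/\a)$ for (\ref{eq:dthlna}). The one step you make explicit that the paper's compressed proof leaves tacit is the pointwise bound $\a^{-1/2}\leq Ct^{3/2}$ (equivalently, a uniform two--sided bound on $\a t^{3}$), which is indeed required to upgrade (\ref{eq:lthsupest}) to $|\lambda_{\theta}|\leq Ct^{-1/2}$; your derivation of it by integrating (\ref{eq:atltsupest}) is correct.
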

\begin{remark}
Note that $\d_{\theta}(\a_{t}/\a)=\d_{t}(\a_{\theta}/\a)=\d_{t}\d_{\theta}\ln\a$.
\end{remark}
\begin{proof}
The estimate (\ref{eq:math}) follows from the fact that
(\ref{eq:fitoid}) and (\ref{eq:dth}) hold and the fact that
$|v^{i}|\leq Ct^{-1/2}$ in the support of $f(t,\cdot)$. Consider
(\ref{eq:altateq}). Since Lemmas~\ref{lemma:Qocoest} and \ref{lemma:interm}
together with (\ref{eq:roughalest}) imply that $J$, $K$, $Q$ and $P$ are 
bounded in $C^{1}$ and that $\lambda_{\theta}$ is $O(t^{-1/2})$, the first two 
terms on the right hand side of (\ref{eq:altateq}) are $O(t^{-4})$ in $C^{1}$. 
Since $\lambda_{\theta}$
is $O(t^{-1/2})$, the $\theta$-derivative of the third term on the right hand
side of (\ref{eq:altateq}) is $O(t^{-3/2})$. Due to (\ref{eq:math}), the
$\theta$-derivative of the last term is better. Thus (\ref{eq:dtthlna})
holds, so that 
\[
\left\|\d_{t}\left(\frac{\a_{\theta}}{\a}\right)\right\|_{C^{0}}
\leq\left\|\d_{\theta}\left(\frac{\a_{t}}{\a}\right)\right\|_{C^{0}}\leq Ct^{-3/2}.
\]
Integrating this estimate yields (\ref{eq:dthlna}).  
\end{proof}

In what follows, we shall proceed inductively in order to derive estimates for
higher order derivatives. Let us therefore assume that we have a 
$\tn{2}$-symmetric solution to the Einstein-Vlasov equations with a 
cosmological constant $\Lambda>0$ and existence interval $(t_{0},\infty)$, 
where $t_{0}\geq 0$. Assume, moreover, that the solution has 
$\lambda$-asymptotics and let $t_{1}=t_{0}+2$. Let us make the following 
inductive assumption. 

\begin{ind}\label{ind:step2}
For some $1\leq N\in\zo$, there are constants $0\leq m_{j}\in\zo$ and $C_{j}$, 
$j=1,\dots,N$, (depending only on $N$ and the solution) such that
\begin{eqnarray}
s^{1/2}\left|\frac{\d^{j}V}{\d\theta^{j}}(s;t,\theta,v)\right|+
\left|\frac{\d^{j}\Theta}{\d\theta^{j}}(s;t,\theta,v)\right| & \leq &
C_{j}(\ln t)^{m_{j}},\label{eq:ThetaVinds2}\\
\|P_{\theta}\|_{C^{N-1}}+\|Q_{\theta}\|_{C^{N-1}}+t^{3/2}\|P_{t}\|_{C^{N-1}}
+t^{3/2}\|Q_{t}\|_{C^{N-1}} & \leq & C_{N-1},\label{eq:PQinds2}
\end{eqnarray}
for all $j=1,\dots,N$, 
$(t,\theta,v)\in [t_{1},\infty)\times\so\times\rn{3}$ in the support of
$f$ and $s\in [t_{1},t]$.
\end{ind}

\begin{remarks}
The induction hypothesis holds for $N=1$. In what follows, $C_{j}$ and
$m_{j}$ will change from line to line. However, they are only allowed to
depend on $N$ and the solution.
\end{remarks}
In this section we prove that, given that Inductive assumption~\ref{ind:step2}
holds, then (\ref{eq:PQinds2}) holds with $N$ replaced 
by $N+1$. In the next section, we close the induction argument by proving 
that (\ref{eq:ThetaVinds2}) holds with $j$ replaced by $N+1$.

We shall need the following consequences of the inductive assumption.

\begin{lemma}\label{lemma:indhcons2}
Consider a $\tn{2}$-symmetric solution to the Einstein-Vlasov equations with a 
cosmological constant $\Lambda>0$ and existence interval $(t_{0},\infty)$, 
where $t_{0}\geq 0$. Assume that the solution has $\lambda$-asymptotics and
let $t_{1}=t_{0}+2$. Assume, moreover, that 
Inductive assumption~\ref{ind:step2} holds for some $1\leq N\in\zo$. Then
there are constants $C_{j}$, $j=0,\dots,N$, and $m_{N}$, depending only on $N$
and the solution, such that
\begin{eqnarray}
\|\rho\|_{C^{N}}+t^{1/2}\|J_{i}\|_{C^{N}}+t\|P_{i}\|_{C^{N}}
+t\|S_{im}\|_{C^{N}} & \leq & C_{N}t^{-3/2}(\ln t)^{m_{N}},\label{eq:mathetas2}\\
t^{1/2}\|\d^{l+1}_{\theta}\lambda\|_{C^{0}} & \leq & C_{l},
\label{eq:laindfins2}\\
\left\|\a^{-1}\d^{j}_{\theta}\a\right\|_{C^{0}}& \leq & C_{j},
\label{eq:alindfins2}\\
\|J_{\theta}\|_{C^{N}}+\|K_{\theta}\|_{C^{N}} & \leq & C_{N}
(\ln t)^{m_{N}},\label{eq:JKderests2}\\
\left\|\d_{\theta}^{l+1}\left(\frac{\a_{t}}{\a}\right)\right\|_{C^{0}}
 & \leq & C_{l}t^{-3/2}
\label{eq:ataltests2}
\end{eqnarray}
for $t\geq t_{1}$, $0\leq j\leq N$, $0\leq l\leq N-1$ and $i,m=1,2,3$.
\end{lemma}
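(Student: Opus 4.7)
The plan is to establish (\ref{eq:mathetas2}) directly from the characteristic representation of $f$, then derive (\ref{eq:laindfins2}), (\ref{eq:ataltests2}), (\ref{eq:alindfins2}) by a joint induction on the $\theta$-derivative order $j$, and finally read off (\ref{eq:JKderests2}).

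For (\ref{eq:mathetas2}), I would start from the identity
\[
f(t,\theta,v)=f(t_{1},\Theta(t_{1};t,\theta,v),V(t_{1};t,\theta,v)),
\]
valid because $f$ is constant along characteristics. Differentiating in $\theta$ up to order $N$, the chain rule expresses $\d_{\theta}^{j}f$ as a polynomial in $\d_{\theta}^{k}\Theta(t_{1};t,\theta,v)$ and $\d_{\theta}^{k}V(t_{1};t,\theta,v)$ for $1\leq k\leq j$, times derivatives of the fixed compactly supported function $f(t_{1},\cdot,\cdot)$. Inductive assumption~\ref{ind:step2} bounds each such factor by $C(\ln t)^{m_{j}}$, so $\|\d_{\theta}^{j}f\|_{C^{0}}\leq C(\ln t)^{m_{j}}$. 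Lemmas~\ref{lemma:vtthest} and~\ref{lemma:Qocoest} force $|v^{i}|\leq Ct^{-1/2}$ in the support of $f(t,\cdot)$, so the $v$-volume of that support is at most $Ct^{-3/2}$. Integrating $\d_{\theta}^{j}f$ against $v^{0}$, $v^{i}$, $(v^{i})^{2}/v^{0}$, or $v^{i}v^{m}/v^{0}$ then yields the advertised $t^{-3/2}(\ln t)^{m_{N}}$ decay, with the extra $t^{-1/2}$ and $t^{-1}$ factors on $J_{i}$, $P_{i}$, $S_{im}$.

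For (\ref{eq:laindfins2}), (\ref{eq:ataltests2}), (\ref{eq:alindfins2}), I run a joint induction on $j=1,\ldots,N$; the cases $j\leq 1$ are supplied by Lemmas~\ref{lemma:interm} and~\ref{leqmma:dermq}. At stage $j$, I first apply $\d_{\theta}^{j-1}$ to (\ref{eq:altltheq}). The quadratic term $2t(P_{t}P_{\theta}+e^{2P}Q_{t}Q_{\theta})$ is controlled in $C^{N-1}$ by $Ct^{-1/2}$ using Inductive assumption~\ref{ind:step2}. The remaining term $-4t^{1/2}e^{\lambda/2}\a^{-1/2}J_{1}$ has size $Ct^{-3/2}$ (since $e^{\lambda/2}\sim t^{-3/2}$ and $\a^{-1/2}\sim t^{3/2}$), and its $\theta$-derivatives up to order $j-1$ are obtained by expanding $\d_{\theta}^{m}e^{\lambda/2}$ and $\d_{\theta}^{m}\a^{-1/2}$ as polynomials in $\d_{\theta}^{k}\lambda$, $\d_{\theta}^{k}\ln\a$ for $k\leq m\leq j-1$ (all available from the previous stage), combined with (\ref{eq:mathetas2}) for $J_{1}$. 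This produces (\ref{eq:laindfins2}) at order $j$. Next I apply $\d_{\theta}^{j}$ to (\ref{eq:altateq}): the right-hand side contains no factor of $\a$, so the only top-order input is $\d_{\theta}^{j}\lambda$ arising from $\d_{\theta}^{j}e^{\lambda/2}$, which has just been bounded, while $\d_{\theta}^{k}J$ and $\d_{\theta}^{k}K$ for $k\leq j$ are obtained by differentiating (\ref{eq:Jth}), (\ref{eq:Kth}) $k-1$ times in $\theta$ using prior-stage bounds; this yields (\ref{eq:ataltests2}) at order $j$. Finally, $\d_{t}\d_{\theta}^{j}\ln\a=\d_{\theta}^{j}(\a_{t}/\a)$, and integrating from $t_{1}$ against the bounded initial datum together with (\ref{eq:ataltests2}) gives $\|\d_{\theta}^{j}\ln\a\|_{C^{0}}\leq C$; a chain-rule expansion of $\a=e^{\ln\a}$ converts this to (\ref{eq:alindfins2}) at order $j$.

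The twist estimate (\ref{eq:JKderests2}) then follows by applying $\d_{\theta}^{N}$ to (\ref{eq:Jth}) and (\ref{eq:Kth}): the right-hand sides are products of $t^{5/4}\a^{-1/2}$, $e^{P/2\pm\lambda/4}$ (and $Q$), and matter factors $J_{2}$, $J_{3}$, each now controlled in $C^{N}$ by (\ref{eq:alindfins2}), (\ref{eq:laindfins2}), the inductive bounds on $P, Q$, and (\ref{eq:mathetas2}). The main obstacle is the apparent circularity among (\ref{eq:laindfins2}), (\ref{eq:ataltests2}), (\ref{eq:alindfins2}) at each order $j$: the bound for $\a^{-1}\d_{\theta}^{j}\a$ is only available after time-integrating $\d_{\theta}^{j}(\a_{t}/\a)$, which in turn requires control of $\d_{\theta}^{j}\lambda$, while $\d_{\theta}^{j}\lambda$ depends on $\d_{\theta}^{m}\a^{-1/2}$ for $m\leq j-1$. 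The resolution is that (\ref{eq:altltheq}) is $\a_{t}$-free, so at every stage (\ref{eq:laindfins2}) can be proved first using only stage-$(j-1)$ information on $\a$, after which (\ref{eq:ataltests2}) and (\ref{eq:alindfins2}) at order $j$ follow in turn.
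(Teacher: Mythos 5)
Your proposal is correct and follows essentially the same route as the paper's own proof: first deduce (\ref{eq:mathetas2}) by differentiating the characteristic representation $f(t,\theta,v)=f[t_{1},\Theta(t_{1};\cdot),V(t_{1};\cdot)]$ and integrating over the $O(t^{-1/2})$-sized $v$-support, then run a nested induction on the $\theta$-derivative order in which (\ref{eq:altltheq}) (which is $\a_{t}$-free) yields the new $\lambda$-derivative bound from prior-order $\a$ control, after which differentiating (\ref{eq:Jth}), (\ref{eq:Kth}) and then (\ref{eq:altateq}) and time-integrating produces the new $\a$-derivative bound. Your explicit remark on why the apparent circularity among (\ref{eq:laindfins2}), (\ref{eq:ataltests2}), (\ref{eq:alindfins2}) resolves is precisely the implicit ordering the paper uses.
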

\begin{proof}
For $N=1$, the conclusions follow from Lemmas~\ref{lemma:Qocoest}, 
\ref{lemma:interm} and \ref{leqmma:dermq}, (\ref{eq:roughalest})
and the equations (\ref{eq:Jth}) and (\ref{eq:Kth}).
We may thus, without loss of generality, assume that $N\geq 2$. 
An immediate consequence of the inductive assumption is that, for
$0\leq j\leq N$ and $t\geq t_{1}$,
\[
\left|\frac{\d^{j}f}{\d\theta^{j}}\right|\leq C_{j}(\ln t)^{m_{j}};
\]
cf. (\ref{eq:fitoid}) and (\ref{eq:ThetaVinds2}). As a consequence of this
estimate, we obtain (\ref{eq:mathetas2}).
In order to obtain control of the $\theta$-derivatives
of $\a$ and $\lambda$, we need to proceed inductively. Let us make the
inductive assumption that
\begin{eqnarray}
\left\|\a^{-1}\d^{j}_{\theta}\a\right\|_{C^{0}} & \leq & C_{j},
\label{eq:alind}\\
\|\d^{j}_{\theta}\lambda\|_{C^{0}} & \leq & C_{j}t^{-1/2}\label{eq:laind}
\end{eqnarray}
for $1\leq j\leq l<N$. Note that we know the inductive hypothesis to be
true for $l=1$. Differentiating (\ref{eq:altltheq}) $l$ times with respect
to $\theta$ and appealing to (\ref{eq:PQinds2}), (\ref{eq:mathetas2}),
(\ref{eq:alind}) and (\ref{eq:laind}), we conclude that (\ref{eq:laind}) holds 
with $j$ replaced by $l+1$. 
In order to improve our knowledge concerning $\a$, let us begin by 
improving our estimates for the $\theta$-derivatives for $J$ and $K$.
Differentiating (\ref{eq:Jth}) and (\ref{eq:Kth}) $0\leq j\leq l$ times and
using (\ref{eq:PQinds2}), (\ref{eq:mathetas2}), (\ref{eq:alind}) and 
(\ref{eq:laind}), we conclude that
\begin{equation}\label{eq:dthjpoJest}
\|\d_{\theta}^{j+1}J\|_{C^{0}}+
\|\d_{\theta}^{j+1}K\|_{C^{0}}\leq C_{j}(\ln t)^{m_{j}}
\end{equation}
for $t\geq t_{1}$ and $0\leq j\leq l$. 
Differentiating (\ref{eq:altateq}) $l+1$ times with respect to $\theta$, using 
(\ref{eq:PQinds2}), (\ref{eq:mathetas2}), 
(\ref{eq:dthjpoJest}) as well as the fact that (\ref{eq:laind}) holds
for $1\leq j\leq l+1$, we obtain
\begin{equation}\label{eq:dtheldaqu}
\left\|\d_{\theta}^{l+1}\left(\frac{\a_{t}}{\a}\right)\right\|_{C^{0}}
=\left\|\d_{t}\d_{\theta}^{l}\left(\frac{\a_{\theta}}{\a}\right)\right\|_{C^{0}}
\leq C_{l}t^{-3/2}
\end{equation}
for $t\geq t_{1}$. Thus
\[
\left\|\d_{\theta}^{l}\left(\frac{\a_{\theta}}{\a}\right)\right\|_{C^{0}}
\leq C_{l}
\]
for $t\geq t_{1}$. Combining this estimate with the inductive hypothesis,
we conclude that (\ref{eq:alind}) holds with $j$ replaced by $l+1$. 
Thus (\ref{eq:laind}) and (\ref{eq:alind}) hold for $1\leq j\leq N$. 
We thus conclude that (\ref{eq:laindfins2}), (\ref{eq:alindfins2}) and (\ref{eq:JKderests2}) hold.
In addition, (\ref{eq:dtheldaqu}) implies that (\ref{eq:ataltests2}) holds.
\end{proof}
We are now in a position to derive higher order light cone estimates.

\begin{lemma}\label{lemma:indstep1}
Consider a $\tn{2}$-symmetric solution to the Einstein-Vlasov equations with a 
cosmological constant $\Lambda>0$ and existence interval $(t_{0},\infty)$, 
where $t_{0}\geq 0$. Assume that the solution has $\lambda$-asymptotics and
let $t_{1}=t_{0}+2$. Assume, moreover, that 
Inductive assumption~\ref{ind:step2} holds for some $1\leq N\in\zo$.
Then there is a constant $C_{N}>0$, depending only on $N$ and the 
solution, such that
\begin{equation}\label{eq:highorder}
t^{3/2}\|\d_{\theta}^N P_{t}\|_{C^{0}}+\|\d_{\theta}^N P_{\theta}\|_{C^{0}}
+t^{3/2}\|e^{P}\d_{\theta}^N Q_{t}\|_{C^{0}}+\|e^{P}\d_{\theta}^NQ_{\theta}\|_{C^{0}}
\leq C_{N}
\end{equation}
for all $t\geq t_{1}$. As a consequence, (\ref{eq:PQinds2}) holds with $N$ 
replaced by $N+1$.
\end{lemma}
\begin{proof}
Let us begin by pointing out that if $N=1$, then the lower bound is larger 
than the upper bound in some of the sums below. In that case, the 
corresponding sum should be equated with zero. Moreover, terms which are 
bounded by $Ct^{-3}$ for $t\geq t_{1}$ will sometimes be written $O(t^{-3})$. 
Let us compute
\begin{equation*}
\begin{split}
\d_{\pm}[\d_{\theta}^{N}P_{t}\mp\d_{\theta}^{N}(\a^{1/2}P_{\theta})] = &
\d_{\theta}^N P_{tt}\mp\d_{\theta}^N\left(\frac{\a_{t}}{2\a}\a^{1/2}P_{\theta}+\a^{1/2}P_{t\theta}
\right)\pm\a^{1/2}\d_{\theta}^{N+1}P_{t}\\
 & -\a^{1/2}\d_{\theta}^{N+1}(\a^{1/2}P_{\theta})\\
 = & \d_{\theta}^N P_{tt}\mp\frac{1}{2}\sum_{j=0}^{N-1}\beta_j\d_{\theta}^{N-j}\left(\frac{\a_{t}}{\a}\right)
\d_{\theta}^{j}(\a^{1/2}P_{\theta})\mp\frac{1}{2}\frac{\a_{t}}{\a}
\d_{\theta}^N(\a^{1/2}P_{\theta})\\
& \mp\frac{N\a_{\theta}}{2\a^{1/2}}\d_{\theta}^NP_t \mp \sum_{j=0}^{N-2}\beta_j\d_{\theta}^{N-j}(\a^{1/2})\d_{\theta}^{j+1}(P_{t})-\d_{\theta}^{N}[\a^{1/2}\d_{\theta}(\a^{1/2}P_{\theta})]\\
& + \frac{N\a_{\theta}}{2\a^{1/2}}\d_{\theta}^N(\a^{1/2}P_{\theta})
+\sum_{j=0}^{N-2}\beta_j\d_{\theta}^{N-j}(\a^{1/2})\d_{\theta}^{j+1}(\a^{1/2}P_{\theta}),
\end{split}
\end{equation*}
where the $\beta_j$ are binomial coefficients. Note that all the sums
are $O(t^{-3})$ due to Inductive assumption~\ref{ind:step2}, Lemma~\ref{lemma:indhcons2}
and (\ref{eq:roughalest}). Let us use (\ref{eq:ttmthth}) in order to compute
\begin{equation*}
\begin{split}
\d_{\theta}^{N}[P_{tt}-\a^{1/2}\d_{\theta}(\a^{1/2}P_{\theta})] = &
\d_{\theta}^{N}\left(P_{tt}-\a P_{\theta\theta}-\frac{\a_{\theta}}{2}P_{\theta}\right)\\
 = & -\frac{1}{t}\d_{\theta}^NP_{t}+\frac{\a_{t}}{2\a}\d_{\theta}^N P_{t}
 +\sum_{j=0}^{N-1}\beta_j\d_{\theta}^{N-j}\left(\frac{\a_{t}}{2\a}\right)\d_{\theta}^j P_{t}\\
& +\sum_{j=0}^{N-1}\beta_j \d_{\theta}^{N-j}(e^{2P})\d_{\theta}^j(Q_{t}^{2}-\a Q_{\theta}^{2})\\
 & +e^{2P}\sum_{j=1}^{N-1}\beta_j(\d_{\theta}^{N-j}(Q_t-\alpha^{1/2}Q_{\theta}))(\d_{\theta}^j(Q_t+\alpha^{1/2}Q_{\theta}))\\
 & +
 2e^{2P}[Q_{t}\d_{\theta}^NQ_{t}-\a^{1/2}Q_{\theta}\d_{\theta}^N(\a^{1/2}Q_{\theta})]
-\d_{\theta}^N\left(\frac{e^{P+\lambda/2}(K-QJ)^{2}}{2t^{7/2}}\right)\\
 & +\d_{\theta}^N\left(\frac{e^{-P+\lambda/2}J^{2}}{2t^{7/2}}\right)
+t^{-1/2}\d_{\theta}^N[e^{\lambda/2}(P_{2}-P_{3})].
\end{split}
\end{equation*}
Due to Inductive assumption~\ref{ind:step2}, Lemmas~\ref{lemma:indhcons2} and \ref{lemma:interm},
and (\ref{eq:roughalest}), the sums are $O(t^{-3})$, as well as the last three terms on the
right hand side. We thus obtain
\begin{equation*}
\begin{split}
\d_{\pm}[\d_{\theta}^N P_{t}\mp\d_{\theta}^N(\a^{1/2}P_{\theta})] = &
-\frac{1}{t}\d_{\theta}^N P_{t}+\frac{\a_{t}}{2\a}
[\d_{\theta}^N P_{t}\mp \d_{\theta}^N(\a^{1/2}P_{\theta})]
\mp \frac{N\a_{\theta}}{2\a^{1/2}}[\d_{\theta}^NP_{t}\mp
\d_{\theta}^N(\a^{1/2}P_{\theta})]\\
 & +2e^{2P}[Q_{t}\d_{\theta}^N Q_{t}-\a^{1/2}Q_{\theta}\d_{\theta}^N(\a^{1/2}Q_{\theta})]
+O(t^{-3}).
\end{split}
\end{equation*}
Introducing
\begin{equation}\label{eq:manpopm}
\ma_{N+1,\pm}=[\d_{\theta}^N P_{t}\pm\d_{\theta}^N(\a^{1/2}P_{\theta})]^{2}
+e^{2P}[\d_{\theta}^N Q_{t}\pm\d_{\theta}^N(\a^{1/2}Q_{\theta})]^{2},
\end{equation}
we conclude that
\begin{equation*}
\begin{split}
\d_{\pm}[\d_{\theta}^N P_{t}\mp\d_{\theta}^N (\a^{1/2}P_{\theta})]^{2} \leq &
-\frac{5}{t}[\d_{\theta}^N P_{t}\mp\d_{\theta}^N (\a^{1/2}P_{\theta})]^{2}
\mp\frac{2}{t}\d_{\theta}^N(\a^{1/2}P_{\theta})
[\d_{\theta}^N P_{t}\mp\d_{\theta}^N (\a^{1/2}P_{\theta})]\\
 & +C_{N}t^{-3}\ma_{N+1,\mp}^{1/2}+C_{N}t^{-3/2}(\ma_{N+1,+}+\ma_{N+1,-}),
\end{split}
\end{equation*}
where we have used Inductive assumption~\ref{ind:step2}, Lemma~\ref{lemma:indhcons2}, 
(\ref{eq:atltsupest}) and (\ref{eq:roughalest}). Let us now consider
\begin{equation*}
\begin{split}
\d_{\pm}[\d_{\theta}^NQ_{t}\mp\d_{\theta}^N(\a^{1/2}Q_{\theta})] = &
\d_{\theta}^N Q_{tt}\mp\frac{1}{2}\sum_{j=0}^{N-1}\beta_j \d_{\theta}^{N-j}\left(\frac{\a_{t}}{\a}\right)
\d_{\theta}^j(\a^{1/2}Q_{\theta})\mp\frac{1}{2}\frac{\a_{t}}{\a}
\d_{\theta}^N(\a^{1/2}Q_{\theta})\\
&\mp\frac{N\a_{\theta}}{2\a^{1/2}}\d_{\theta}^NQ_t \mp\sum_{j=0}^{N-2}\beta_j \d_{\theta}^{N-j}(\a^{1/2})\d_{\theta}^jQ_{t\theta}
 -\d_{\theta}^N[\a^{1/2}\d_{\theta}(\a^{1/2}Q_{\theta})]\\
&+\frac{N\a_{\theta}}{2\a^{1/2}}\d_{\theta}^N(\a^{1/2}Q_{\theta}) 
+\sum_{j=0}^{N-2}\beta_j\d_{\theta}^{N-j}(\a^{1/2}) \d_{\theta}^{j+1}(\a^{1/2}Q_{\theta}).
\end{split}
\end{equation*}
As above, all the sums are $O(t^{-3})$ due to Inductive assumption~\ref{ind:step2}, 
Lemma~\ref{lemma:indhcons2} and (\ref{eq:roughalest}). Compute, using (\ref{eq:Etth}),
\begin{equation*}
\begin{split}
\d_{\theta}^N[Q_{tt}-\a^{1/2}\d_{\theta}(\a^{1/2}Q_{\theta})] = &
-\frac{1}{t}\d_{\theta}^N Q_{t}+\frac{\a_{t}}{2\a}\d_{\theta}^N Q_{t}
+\sum_{j=0}^{N-1}\beta_j \d_{\theta}^{N-j}\left(\frac{\a_{t}}{2\a}\right)\d_{\theta}^jQ_{t}
-2(\d_{\theta}^NP_{t})Q_{t}\\
& -2P_{t}(\d_{\theta}^N Q_{t})
  +2\d_{\theta}^N(\a^{1/2} P_{\theta})
\a^{1/2}Q_{\theta}+2\a^{1/2}P_{\theta}\d_{\theta}^N(\a^{1/2} Q_{\theta})\\
&-2\sum_{j=1}^{N-1}\beta_j(\d_{\theta}^{N-j}P_t)(\d_{\theta}^jQ_t)+
2\sum_{j=1}^{N-1}\beta_j(\d_{\theta}^{N-j}(\a^{1/2}P_{\theta}))(\d_{\theta}^j(\a^{1/2}Q_{\theta}))\\
& +\d_{\theta}^N\left(\frac{e^{\lambda/2-P}J(K-QJ)}{t^{7/2}}\right)
+2t^{-1/2}\d_{\theta}^N(e^{\lambda/2-P}S_{23}).
\end{split}
\end{equation*}
Due to Inductive assumption~\ref{ind:step2}, Lemmas~\ref{lemma:indhcons2} and \ref{lemma:interm},
and (\ref{eq:roughalest}), the sums are $O(t^{-3})$, as well as the last two terms on the
right hand side. Thus
\begin{equation*}
\begin{split}
\d_{\pm}[\d_{\theta}^NQ_{t}\mp\d_{\theta}^N(\a^{1/2}Q_{\theta})] = &
-\frac{1}{t}\d_{\theta}^N Q_{t}+\frac{\a_{t}}{2\a}
[\d_{\theta}^NQ_{t}\mp \d_{\theta}^N(\a^{1/2}Q_{\theta})]
\mp\frac{N\a_{\theta}}{2\a^{1/2}}[\d_{\theta}^NQ_{t}\mp \d_{\theta}^N(\a^{1/2}Q_{\theta})]\\
& -2(\d_{\theta}^NP_{t})Q_{t} -2P_{t}(\d_{\theta}^N Q_{t})
  +2\d_{\theta}^N(\a^{1/2} P_{\theta})
\a^{1/2}Q_{\theta}\\
& +2\a^{1/2}P_{\theta}\d_{\theta}^N(\a^{1/2} Q_{\theta})
+O(t^{-3}).
\end{split}
\end{equation*}
Consequently,
\begin{equation*}
\begin{split}
\d_{\pm}[\d_{\theta}^N Q_{t}\mp\d_{\theta}^N(\a^{1/2}Q_{\theta})]^{2} \leq &
-\frac{5}{t}[\d_{\theta}^N Q_{t}\mp\d_{\theta}^N(\a^{1/2}Q_{\theta})]^{2}
\mp\frac{2}{t}\d_{\theta}^N(\a^{1/2}Q_{\theta})
[\d_{\theta}^N Q_{t}\mp\d_{\theta}^N (\a^{1/2}Q_{\theta})]\\
 & +C_{N}t^{-3}\ma_{N+1,\mp}^{1/2}+C_{N}t^{-3/2}(\ma_{N+1,+}+\ma_{N+1,-}),
\end{split}
\end{equation*}
where we have used Inductive assumption~\ref{ind:step2}, Lemma~\ref{lemma:indhcons2}, 
(\ref{eq:atltsupest}) and (\ref{eq:roughalest}).
Adding up the above estimates, we conclude that
\begin{equation*}
\begin{split}
\d_{\pm}\ma_{N+1,\mp} \leq & -\frac{5}{t}\ma_{N+1,\mp}+\frac{1}{2t}(\ma_{N+1,\mp}
-\ma_{N+1,\pm})+\frac{1}{t}(\ma_{N+1,+}+\ma_{N+1,-})\\
 & +C_{N}t^{-3/2}(\ma_{N+1,+}+\ma_{N+1,-})+C_{N}t^{-3}\ma_{N+1,\mp}^{1/2}.
\end{split}
\end{equation*}
Let us introduce
\begin{equation}\label{eq:hmaNdefhFNdef}
\hma_{N+1,\pm}=t^{7/2}\ma_{N+1,\pm}+t^{1/2},\ \ \
\hF_{N+1,\pm}=\sup_{\theta\in\so}\hma_{N+1,\pm},\ \ \
\hF_{N+1}=\hF_{N+1,+}+\hF_{N+1,-}.
\end{equation}
Then
\[
\d_{\pm}\hma_{N+1,\mp}\leq\frac{1}{2t}\hma_{N+1,\pm}+C_{N}t^{-3/2}(\hma_{N+1,+}+\hma_{N+1,-}).
\]
Integrating this differential inequality, taking
the supremum etc., we obtain
\[
\hF_{N+1}(t)\leq \hF_{N+1}(t_{1})+\int_{t_{1}}^{t}\left(
\frac{1}{2s}\hF_{N+1}(s)+C_{N}s^{-3/2}\hF_{N+1}(s)\right)ds.
\]
Thus $\hF_{N+1}(t)\leq C_{N}t^{1/2}$. Combining this estimate with 
Inductive assumption~\ref{ind:step2}, Lemma~\ref{lemma:indhcons2}
and (\ref{eq:roughalest}), we obtain (\ref{eq:highorder}).
\end{proof}


\section{Higher order derivatives of the characteristic system}
\label{section:hodecs}

In the previous section we showed that (\ref{eq:PQinds2}) holds 
with $N$ replaced by $N+1$; i.e., that 
\begin{equation}\label{eq:PQinds2Nplus1}
\|P_{\theta}\|_{C^{N}}+\|Q_{\theta}\|_{C^{N}}+t^{3/2}\|P_{t}\|_{C^{N}}
+t^{3/2}\|Q_{t}\|_{C^{N}}  \leq  C_{N}
\end{equation}
holds for all $t\geq t_{1}$. We also need to prove that 
(\ref{eq:ThetaVinds2}) holds with $j$ replaced with $N+1$. Before stating the
relevant result, let us make the following preliminary observation. 

\begin{lemma}\label{lemma:indhcon}
Consider a $\tn{2}$-symmetric solution to the Einstein-Vlasov equations with a 
cosmological constant $\Lambda>0$ and existence interval $(t_{0},\infty)$, 
where $t_{0}\geq 0$. Assume that the solution has $\lambda$-asymptotics and
let $t_{1}=t_{0}+2$. Assume, moreover, that 
Inductive assumption~\ref{ind:step2} holds for some $1\leq N\in\zo$. Then
there are constants $C_{j}$ and $m_{j}$, $j=0,\dots,N$, depending only on $N$
and the solution, such that
\begin{eqnarray}
\|\rho\|_{C^{N}}+t^{1/2}\|J_{i}\|_{C^{N}}+t\|P_{i}\|_{C^{N}}
+t\|S_{im}\|_{C^{N}} & \leq & C_{N}t^{-3/2}(\ln t)^{m_{N}},\label{eq:matheta}\\
\left\|\a^{-1}\d^{j}_{\theta}\a\right\|_{C^{0}}
+t^{1/2}\|\d^{j+1}_{\theta}\lambda\|_{C^{0}} & \leq & C_{j},
\label{eq:allaindfin}\\
\|J_{\theta}\|_{C^{N}}+\|K_{\theta}\|_{C^{N}} & \leq & C_{N}
(\ln t)^{m_{N}},\label{eq:JKderest}\\
\left\|\d_{\theta}^{l+1}\left(\frac{\a_{t}}{\a}\right)\right\|_{C^{0}}
+\|\d_{\theta}^{l+1}\lambda_{t}\|_{C^{0}} & \leq & C_{l}t^{-3/2},
\label{eq:ataltest}\\
\left\|\lambda_{t}-2\frac{\a_{t}}{\a}-4t^{1/2}e^{\lambda/2}\Lambda
\right\|_{C^{N}} & \leq & C_{N}t^{-2},\label{eq:hoest}\\
\left\|\lambda_{t}-\frac{\a_{t}}{\a}\right\|_{C^{N}} & \leq & C_{N}t^{-2}
\label{eq:atltest}
\end{eqnarray}
for $t\geq t_{1}$, $0\leq j\leq N$, $0\leq l\leq N-1$ and $i,m=1,2,3$.
Moreover, using the notation
\begin{equation}\label{eq:hodchsys}
\Psi_{j}=\d^{j}_{\theta}\Psi,\ \ \
Z^{i}_{j}=\d_{\theta}^{j}Z^{i}, \ \ \
V^{i}_{j}=\d_{\theta}^{j}V^{i},\ \ \
\Theta_{j}=\d_{\theta}^{j}\Theta
\end{equation}
(where the $\d$-operator used to define $Z$ and $\Psi$ is given by $\d_{\theta}$),
there are functions $c_{i,\theta}$, $i=1,2,3$, such that the following estimates 
hold:
\begin{eqnarray}
|\Psi_{l}(s;t,\theta,v)|+
s^{1/2}|Z_{l}(s;t,\theta,v)|
 & \leq &  C_{l}(\ln t)^{m_{l}},\label{eq:ZPld}\\
|\Psi_{j}(s;t,\theta,v)-(\a^{-1/2}e^{\lambda/2})[s,\Theta(s;t,\theta,v)]
\Theta_{j+1}(s;t,\theta,v)|
 & \leq & C_{j}(\ln t)^{m_{j}},\label{eq:PjTj}\\
|Z^{i}_{j}(s;t,\theta,v)-V^{i}_{j+1}(s;t,\theta,v)-
(c_{i,\theta}\Psi_{j})(s;t,\theta,v)| & \leq & C_{j}
s^{-1/2}(\ln t)^{m_{j}},\label{eq:ZjVj}\\
|c_{i,\theta}(s;t,\theta,v)| & \leq & C_{0}s^{-1/2}\label{eq:cithest}
\end{eqnarray}
for all $(t,\theta,v)\in[t_{1},\infty)\times\so\times\rn{3}$ in the support of
$f$, $0\leq j\leq N$, $0\leq l\leq N-1$ and $s\in [t_{1},t]$.
\end{lemma}
\begin{remark}
Due to (\ref{eq:allaindfin}),
\[
|\d_{\theta}^{j}\a^{p}|\leq C_{p,j}\a^{p}
\]
for all $(t,\theta)\in [t_{1},\infty)\times\so$, $p\in\ro$ and $0\leq j\leq N$.
In particular, spatial derivatives of powers of $\a$ can thus effectively
be ignored. In the derivation of the estimates below, it is useful to keep
this observation in mind. 
\end{remark}
\begin{proof}
Combining Lemma~\ref{lemma:indhcons2} with (\ref{eq:PQinds2Nplus1}),
(\ref{eq:altlteq}) and (\ref{eq:altltheq}), we obtain 
(\ref{eq:matheta})--(\ref{eq:ataltest}); recall that $P$, $Q$, $J$ and $K$ are bounded to the 
future. The estimate (\ref{eq:hoest}) is a
consequence of Lemma~\ref{lemma:indhcons2}, (\ref{eq:PQinds2Nplus1}) and
the fact that
\begin{equation*}
\begin{split}
\lambda_{t}-2\frac{\a_{t}}{\a}-4t^{1/2}e^{\lambda/2}\Lambda
 = & t\left[P_{t}^{2}+\a P_{\theta}^{2}+e^{2P}(Q_{t}^{2}
+\a Q_{\theta}^{2})\right]+\frac{e^{-P+\lambda/2}J^{2}}{t^{5/2}}\\
& +\frac{e^{P+\lambda/2}(K-QJ)^{2}}{t^{5/2}}
+4t^{1/2}e^{\lambda/2}\rho;
\end{split}
\end{equation*}
cf. (\ref{eq:altateq}) and (\ref{eq:altltmat}).
For similar reasons, (\ref{eq:atltest}) holds; cf. (\ref{eq:altltmat}).
Turning to (\ref{eq:ZPld})--(\ref{eq:cithest}), let us note that 
$\d_{\theta}^{j}\Psi(s;t,\theta,v)$ can, up to numerical factors, be written as 
a sum of terms of the form
\[
\d_{\theta}^{k}(\a^{-1/2}e^{\lambda/2})[s,\Theta(s;t,\theta,v)]
\Theta_{i_{1}}(s;t,\theta,v)\cdots \Theta_{i_{k+1}}(s;t,\theta,v),
\]
where $i_{1}+\dots+i_{k+1}=j+1$. Since $k\leq j\leq N$, the first factor is 
bounded due to (\ref{eq:allaindfin}). Due to Inductive 
assumption~\ref{ind:step2}, the factors $\Theta_{i_{j}}$ can be 
estimated by $C(\ln t)^{m_{j}}$ if $i_{j}\leq N$. The only way a factor
$\Theta_{N+1}$ could occur is if $k=0$ and all the derivatives hit $\Theta_{1}$ 
in the definition of $\Psi$. Due to these observations, (\ref{eq:PjTj})
and the estimate
\[
|\Psi_{l}(s;t,\theta,v)| \leq  C_{l}(\ln t)^{m_{l}},
\]
$0\leq l\leq N-1$, follow. The proof of the remaining estimates is similar
in nature, but somewhat more involved. 
\end{proof}

We now finish the induction argument by proving that (\ref{eq:ThetaVinds2}) holds
with $j$ replaced by $N+1$.

\begin{lemma}\label{lemma:indstep2}
Consider a $\tn{2}$-symmetric solution to the Einstein-Vlasov equations with a 
cosmological constant $\Lambda>0$ and existence interval $(t_{0},\infty)$, 
where $t_{0}\geq 0$. Assume that the solution has $\lambda$-asymptotics and
let $t_{1}=t_{0}+2$. Assume, moreover, that 
Inductive assumption~\ref{ind:step2} holds for some $1\leq N\in\zo$.
Then (\ref{eq:ThetaVinds2}) holds with $j$ replaced by $N+1$.
\end{lemma}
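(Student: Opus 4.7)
We close the induction by extending the estimates (\ref{eq:ZPld}) from $l\le N-1$ up to $l=N$, i.e., by proving
\[
|\Psi_N(s;t,\theta,v)|+s^{1/2}|Z^i_N(s;t,\theta,v)|\le C_N(\ln t)^{m_{N+1}}
\]
for $s\in[t_1,t]$ and data in $\mathrm{supp}\,f$. Given this, the relations (\ref{eq:PjTj})--(\ref{eq:cithest}) immediately supply the corresponding bound for $\d_\theta^{N+1}\Theta$ and $\d_\theta^{N+1}V^i$, which is (\ref{eq:ThetaVinds2}) with $j=N+1$. The key structural point is that the variables $(\Psi,Z^i)$ were engineered in (\ref{eq:Psidef})--(\ref{eq:Zthdef}) so that Lemma~\ref{lemma:Jacobisys} expresses their evolution purely in terms of first-order $\theta$-derivatives of the metric components, a feature which is preserved under further $\theta$-differentiation; hence no second-order $\theta$-derivatives of the metric (uncontrolled quantities) will ever appear in the argument.

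\textbf{The differentiated Jacobi system.} Applying $\d_\theta^N$ and the Leibniz rule to the four equations of Lemma~\ref{lemma:Jacobisys} yields, for the $Z^1$ and $\Psi$ equations,
\[
\frac{dZ^1_N}{ds}=-\frac{1}{2s}Z^1_N+c_{1,\theta}\Psi_N+c_{1,j}Z^j_N+\mathcal{R}^1_N,\qquad\frac{d\Psi_N}{ds}=c_{\theta,\theta}\Psi_N+c_{\theta,i}Z^i_N+\mathcal{R}^\Psi_N,
\]
with analogous expressions for $Z^2_N$ and $Z^3_N$. Each remainder $\mathcal{R}$ is a finite sum of products $(\d_\theta^l c_{\bullet,\bullet})\,(\Psi_{N-l}\text{ or }Z^i_{N-l})$ with $1\le l\le N$. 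Each coefficient $c_{\bullet,\bullet}$ is polynomial in $s$, in the first-order metric combinations $\a^{\pm1/2},\a_t/\a,\lambda_t,\lambda_\theta,e^{\lambda/2},P_t,P_\theta,Q_t,Q_\theta,e^{\pm P},J,K$ and the ``Einstein combinations'' $\d_t(\a^{-1/2}P_t)-\d_\theta(\a^{1/2}P_\theta)$ etc., in the components $V^\a$, and in the factor $[(V^0)^2-(V^1)^2]^{-1}=[1+(V^2)^2+(V^3)^2]^{-1}$, which is bounded on $\mathrm{supp}\,f$ by Lemma~\ref{lemma:vtthest}. The chain rule expands $\d_\theta^l c_{\bullet,\bullet}$ as a polynomial in $\d_\theta^k\Theta,\d_\theta^kV^i$ ($k\le l$) and $\d_\theta^k$ of the listed metric quantities ($k\le l$). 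The former factors are controlled by the inductive hypothesis (\ref{eq:ThetaVinds2}); the latter by Lemma~\ref{lemma:indhcon} together with the just-established (\ref{eq:PQinds2Nplus1}). Combining, one verifies
\[
|\mathcal{R}^\Psi_N(s)|+s^{1/2}\sum_i|\mathcal{R}^i_N(s)|\le C_Ns^{-3/2}(\ln t)^{m'_N}
\]
on $[t_1,t]$ and on $\mathrm{supp}\,f$.

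\textbf{Energy estimate.} Following Lemma~\ref{leqmma:dercs} verbatim, set $\hZ^i_N=s^{1/2}Z^i_N$, $\hPs_N=(\ln s)^2\Psi_N$ and $\hE_N=\sum_{i=1}^{3}(\hZ^i_N)^2+(\hPs_N)^2$. The principal parts of the differentiated system reproduce exactly the structure of Lemma~\ref{lemma:Jacobisys}, so the computation of Lemma~\ref{leqmma:dercs} yields
\[
\frac{d\hE_N}{ds}\ge-\frac{C}{s(\ln s)^2}\hE_N-C_Ns^{-3/2}(\ln t)^{m'_N}\hE_N^{1/2},\qquad s\in[t_1,t].
\]
At $s=t$ the initial conditions $\Theta(t;t,\theta,v)=\theta$ and $V(t;t,\theta,v)=v$ give $\d_\theta^j\Theta(t;\cdot)=\delta_{j,1}$ and $\d_\theta^jV^i(t;\cdot)=0$ for $j\ge 1$, so Lemma~\ref{lemma:indhcon} implies $\hE_N(t;t,\theta,v)\le C_N(\ln t)^{m''_N}$. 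Since the function $s\mapsto 1/[s(\ln s)^2]$ is integrable near infinity, Gr\"onwall applied backward from $s=t$ down to $s\in[t_1,t]$ yields $\hE_N(s;t,\theta,v)\le C_N(\ln t)^{m_{N+1}}$; feeding the resulting bounds on $\Psi_N$ and $Z^i_N$ into (\ref{eq:PjTj})--(\ref{eq:cithest}) produces (\ref{eq:ThetaVinds2}) with $j=N+1$.

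\textbf{Main obstacle.} The one non-routine step is the bookkeeping in $\mathcal{R}$: every time Leibniz strips a $\d_\theta$ off some $c_{\bullet,\bullet}$, one must check that the resulting factor pairs with the accompanying $\Psi_{N-l}$ or $Z^i_{N-l}$ to produce something of size at worst $O(s^{-3/2}(\ln t)^m)$. This is precisely what Lemma~\ref{lemma:indhcon}, (\ref{eq:PQinds2Nplus1}) and the inductive hypothesis were tailored to achieve; but it would fail altogether without the specific structural form of $\Psi$ and $Z^i$ in (\ref{eq:Psidef})--(\ref{eq:Zthdef}), whose whole raison d'\^etre is to cancel the second-order $\theta$-derivatives of the metric that the naive variables $\d\Theta,\d V^i$ would create.
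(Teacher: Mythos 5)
Your proposal is correct and follows essentially the same approach as the paper: differentiate the Jacobi-type system for $(\Psi,Z^{i})$ $N$ times, control the remainders using Lemma~\ref{lemma:indhcon} together with the just-established (\ref{eq:PQinds2Nplus1}) (the paper carries out this bookkeeping by re-deriving the $Z^{1}_{N}$ equation directly from the raw characteristic system and the adjustment term in (\ref{eq:Zodef}) — this is precisely where one verifies that the Einstein-combination cancellations survive $\theta$-differentiation — rather than formally differentiating the coefficients $c_{\bullet,\bullet}$, but the two routes are equivalent), and close with the weighted Gr\"onwall argument from Lemma~\ref{leqmma:dercs} using the initial data analysis at $s=t$. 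One minor inaccuracy: the remainder in the $Z^{1}_{N}$ equation is only $O[s^{-3/2}(\ln t)^{m_{N}}]$, so after the $s^{1/2}$ weighting the inhomogeneous contribution to $d\hE_{N}/ds$ is $O[s^{-1}(\ln t)^{m_{N}}]$ — merely logarithmically divergent rather than integrable as your stated bound would give — which costs one extra power of $\ln t$ in $m_{N+1}$ but does not affect the conclusion.
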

\begin{proof}
The strategy of the proof is very similar to that of the proof of 
Lemma~\ref{lemma:Jacobisys}. The idea is to derive a system of 
ODE's for $Z^{i}_{N}$ and $\Psi_{N}$ analogous to 
(\ref{eq:Zomanle})--(\ref{eq:Psimanle}), and then to use arguments  
similar to those presented in the proof of Lemma~\ref{leqmma:dercs}. Deriving 
appropriate equations
for $Z^{i}_{N}$, $i=2,3$, turns out to be relatively easy; this is due to 
(\ref{eq:Ztdot}) and (\ref{eq:Zthdot}). In fact, we obtain the desired
conclusions concerning $Z^{i}_{N}$, $i=2,3$, without much effort; cf. 
(\ref{eq:Ztkest}) and (\ref{eq:Zthkest}) below. Deriving an equation
for $\Psi_{N}$ also turns out to be quite easy; cf. (\ref{eq:Psikeq})
below. Similarly to the proof of Lemma~\ref{lemma:Jacobisys}, the main
difficulty consists of deriving an equation for $Z_{N}^{1}$. Once the 
desired equation has been obtained, we rescale $Z^{1}_{N}$ and $\Psi_{N}$
according to (\ref{eq:hZoNdef}) and introduce an energy according
to (\ref{eq:hENdef}); note that these definitions are 
analogous to the ones given in the proof of Lemma~\ref{leqmma:dercs}. 
Finally, the equations imply a differential inequality for the energy 
$\hE_{N}$ which can be integrated to yield the desired estimate. 

Before proceeding to a proof of the statement of the lemma, it is of interest 
to introduce
some notation. Let $b$ be a $C^{1}$ function on $M=(t_{0},\infty)\times\so$.
Evaluating this function along a characteristic, we obtain
\[
B(s;t,\theta,v)=b[s,\Theta(s;t,\theta,v)].
\]
Differentiating $B$ with respect to $\theta$, we obtain
\begin{equation}\label{eq:Hhrel}
\frac{\d B}{\d\theta}(s;t,\theta,v)=\frac{\d b}{\d \theta}
[s,\Theta(s;t,\theta,v)]\frac{\d\Theta}{\d\theta}(s;t,\theta,v).
\end{equation}
On the other hand, distinguishing between $B$ and $b$ is quite cumbersome
in the arguments that we are about to carry out. As a consequence, we shall
write $b$ when we mean $B$. Moreover, we shall use the notation $\dvt b$ as a
shorthand for $\d_{\theta}B$, whereas $\d_{\theta}b$ should be interpreted
as the first factor on the right hand side of (\ref{eq:Hhrel}) and
$\d_{t}b$ should be interpreted as the function mapping $(s;t,\theta,v)$ to
\[
\frac{\d b}{\d t}[s,\Theta(s;t,\theta,v)].
\]
In particular, we thus have $b_{\vartheta}=b_{\theta} \Theta_{1}$. Finally,
let us point out that if in some expression a $\vartheta$-derivative hits a
$V$ or a $\Theta$, it is to be interpreted as an ordinary $\theta$-derivative.

Note, to begin with, that
\[
\frac{d\Psi}{ds}=\frac{1}{2}\left(\lambda_{t}-\frac{\a_{t}}{\a}\right)\Psi
+\frac{1}{2}\a^{1/2}\lambda_{\theta}\frac{V^{1}}{V^{0}}\Psi
+e^{\lambda/2}\frac{\d V^{1}}{V^{0}}
-e^{\lambda/2}\frac{V^{1}}{(V^{0})^{3}}\sum_{i=1}^{3}V^{i}\d V^{i}
\]
due to (\ref{eq:Psidot}). Differentiating this equality $N$ times with
respect to $\vartheta$, we obtain
\[
\frac{d\Psi_{N}}{ds}=\frac{1}{2}\left(\lambda_{t}-\frac{\a_{t}}{\a}\right)
\Psi_{N}+\frac{1}{2}\a^{1/2}\lambda_{\theta}\frac{V^{1}}{V^{0}}\Psi_{N}
+e^{\lambda/2}\frac{V_{N+1}^{1}}{V^{0}}
-e^{\lambda/2}\frac{V^{1}}{(V^{0})^{3}}\sum_{i=1}^{3}V^{i} V^{i}_{N+1}
+O[s^{-2}(\ln t)^{m_{N}}],
\]
where we have used (\ref{eq:roughalest}), Inductive assumption~\ref{ind:step2}
and Lemma~\ref{lemma:indhcon}. Due to (\ref{eq:ZjVj}), this equation can be 
written
\begin{equation}\label{eq:Psikeq}
\frac{d\Psi_{N}}{ds}=c_{\theta,\theta}^{N}\Psi_{N}
+c_{\theta,i}^{N}Z_{N}^{i}+O[s^{-2}(\ln t)^{m_{N}}],
\end{equation}
where $c_{\theta,\theta}^{N}=O(s^{-2})$, $c_{\theta,i}^{N}=O(s^{-3/2})$
and we sum over $i$ but not $N$. Turning to $Z^{2}$, we have
\[
\frac{dZ^{2}}{ds}=-\frac{1}{2s}Z^{2}-\frac{1}{2}\left(P_{t}+\a^{1/2}P_{\theta}
\frac{V^{1}}{V^{0}}\right)Z^{2};
\]
cf. (\ref{eq:Ztdot}).
Differentiating this equality $N$ times with respect to $\vartheta$, we obtain
\[
\frac{dZ^{2}_{N}}{ds}
=-\frac{1}{2s}Z^{2}_{N}-\frac{1}{2}\left(P_{t}+\a^{1/2}P_{\theta}
\frac{V^{1}}{V^{0}}\right)Z^{2}_{N}+O[s^{-2}(\ln t)^{m_{N}}],
\]
where we have used (\ref{eq:roughalest}), 
(\ref{eq:PQinds2Nplus1}), Inductive assumption~\ref{ind:step2}
and Lemma~\ref{lemma:indhcon}.
As a consequence,
\[
\frac{d}{ds}\left(s^{1/2}e^{P/2}Z^{2}_{N}\right)=O[s^{-3/2}(\ln t)^{m_{N}}].
\]
Integrating this equality from $s$ to $t$, we obtain (assuming $N\geq 1$)
\[
-\left(s^{1/2}e^{P/2}Z^{2}_{N}\right)(s;t,\theta,v)=O[(\ln t)^{m_{N}}];
\]
note that
\[
\left(t^{1/2}e^{P/2}Z^{2}_{N}\right)(t;t,\theta,v)=t^{1/2}e^{P/2}
\frac{1}{2}(\d_{\theta}^{N+1}P) v^{2}=O(1)
\]
due to (\ref{eq:PQinds2Nplus1}). In particular, we thus have
\begin{equation}\label{eq:Ztkest}
|Z^{2}_{N}(s;t,\theta,v)|\leq C_{N}s^{-1/2}(\ln t)^{m_{N}}
\end{equation}
for $s\in [t_{1},t]$. Turning to $Z^{3}$, we have
\[
\frac{d Z^{3}}{ds}=-\frac{1}{2s}Z^{3}+\frac{1}{2}\left(P_{t}+\a^{1/2}P_{\theta}
\frac{V^{1}}{V^{0}}\right)Z^{3}
-e^{P}\left(Q_{t}+\a^{1/2}Q_{\theta}
\frac{V^{1}}{V^{0}}\right)Z^{2};
\]
cf. (\ref{eq:Zthdot}) and (\ref{eq:Ztdot}).
Differentiating $N$ times with respect to $\vartheta$, we obtain
\[
\frac{d Z^{3}_{N}}{ds}=-\frac{1}{2s}Z^{3}_{N}
+\frac{1}{2}\left(P_{t}+\a^{1/2}P_{\theta}
\frac{V^{1}}{V^{0}}\right)Z^{3}_{N}
-e^{P}\left(Q_{t}+\a^{1/2}Q_{\theta}
\frac{V^{1}}{V^{0}}\right)Z^{2}_{N}+O[s^{-2}(\ln t)^{m_{N}}].
\]
Due to (\ref{eq:Ztkest}), the third term
on the right hand side is $O[s^{-2}(\ln t)^{m_{N}}]$. We can thus proceed as
in the proof of (\ref{eq:Ztkest}) in order to obtain
\begin{equation}\label{eq:Zthkest}
|Z^{3}_{N}(s;t,\theta,v)|\leq C_{N}s^{-1/2}(\ln t)^{m_{N}}
\end{equation}
for $s\in [t_{1},t]$.
Finally, we need to derive an equation for $Z^{1}_{N}$. Just as in the
derivation of the equation for $Z^{1}$, it is natural to divide the
analysis into several steps. Consider, to begin with,
\[
\dvt^{j+1}\left(\frac{dV^{1}}{ds}\right)
\]
for $0\leq j\leq N$.
All the terms appearing in $dV^{1}/ds$ can be written $h\, \psi\circ V$.
When differentiating an expression of this form, the terms that arise are (up
to numerical factors) of the form $\dvt^{k}h \d_{\theta}^{l}\psi\circ V$.
If both $k$ and $l$ are $\geq 1$, the resulting term is
$O[s^{-2}(\ln t)^{m_{j}}]$.
If all the derivatives hit $\psi$, we obtain (after summing over all the
terms appearing in $dV^{1}/ds$)
\[
-\frac{1}{2s}V^{1}_{j+1}+c_{i}^{j}V^{i}_{j+1}+O[s^{-2}(\ln t)^{m_{j}}]
\]
where $c_{i}^{j}=O(s^{-2})$ and we sum over $i$ but not $j$. If all
the derivatives hit $h$, we obtain (after summing over all the
terms appearing in $dV^{1}/ds$)
\begin{equation*}
\begin{split}
&-\frac{1}{4}\dvt^{j+1}(\a^{1/2}\lambda_{\theta})V^{0}-\frac{1}{4}
\dvt^{j+1}\left(\lambda_{t}-2\frac{\a_{t}}{\a}\right)V^{1}
+\dvt^{j+1}(\a^{1/2}e^{P}Q_{\theta})\frac{V^{2}V^{3}}{V^{0}}\\
& -\frac{1}{2}\dvt^{j+1}(\a^{1/2}P_{\theta})
\frac{(V^{3})^{2}-(V^{2})^{2}}{V^{0}}+c_{\theta}^{j}\Theta_{j+1}
+O[s^{-3}(\ln t)^{m_{j}}],
\end{split}
\end{equation*}
where $c_{\theta}^{j}=O(s^{-3})$ and we have used 
(\ref{eq:JthKthest}); note that, due to (\ref{eq:PQinds2Nplus1}) and 
Lemma~\ref{lemma:indhcon}, we
control $N+1$ $\theta$-derivatives of the first factor in each of the last
two terms appearing on the right hand side of (\ref{eq:dVods}).
Adding up, we conclude that
\begin{equation}\label{eq:dthkpoVo}
\begin{split}
\dvt^{j+1}\left(\frac{dV^{1}}{ds}\right) = &
-\frac{1}{2s}V^{1}_{j+1}+c_{i}^{j}V^{i}_{j+1}+c_{\theta}^{j}\Theta_{j+1}
-\frac{1}{4}\dvt^{j+1}(\a^{1/2}\lambda_{\theta})V^{0}-\frac{1}{4}
\dvt^{j+1}\left(\lambda_{t}-2\frac{\a_{t}}{\a}\right)V^{1}\\
&+\dvt^{j+1}(\a^{1/2}e^{P}Q_{\theta})\frac{V^{2}V^{3}}{V^{0}}
 -\frac{1}{2}\dvt^{j+1}(\a^{1/2}P_{\theta})
\frac{(V^{3})^{2}-(V^{2})^{2}}{V^{0}}
+O[s^{-2}(\ln t)^{m_{j}}],
\end{split}
\end{equation}
where $c_{i}^{j}=O(s^{-2})$ and $c_{\theta}^{j}=O(s^{-3})$ and we sum over
$i$ but not $j$.

The second term in the definition of $Z^{1}$ consists of a sum of
terms of the form
\begin{equation}\label{eq:hpsistr}
h\, \psi\circ V\, \a^{-1/2}\, \Theta_{1}.
\end{equation}
The relevant $h$'s are
\[
h_{1}=\frac{1}{4}\left(\lambda_{t}
-2\frac{\a_{t}}{\a}-4s^{1/2}e^{\lambda/2}\Lambda\right),\
h_{2}=-\frac{1}{2}P_{t},\
h_{3}=\frac{1}{2}\a^{1/2}P_{\theta},\
h_{4}=-e^{P}Q_{t},\
h_{5}=\a^{1/2}e^{P}Q_{\theta},
\]
and the relevant $\psi$'s are
\begin{equation*}
\begin{split}
\psi_{1} = & V^{0},\ \
\psi_{2}=V^{0}\frac{(V^{2})^{2}-(V^{3})^{2}}{(V^{0})^{2}-(V^{1})^{2}},\ \
\psi_{3}=V^{1}\frac{(V^{2})^{2}-(V^{3})^{2}}{(V^{0})^{2}-(V^{1})^{2}},\ \
\psi_{4}=\frac{V^{0}V^{2}V^{3}}{(V^{0})^{2}-(V^{1})^{2}},\\
\psi_{5} = & \frac{V^{1}V^{2}V^{3}}{(V^{0})^{2}-(V^{1})^{2}}.
\end{split}
\end{equation*}
We want to differentiate (\ref{eq:hpsistr}) with respect to $s$ and then
$N$ times with respect to $\vartheta$. Before turning to the details,
let us record the following estimate:
\begin{equation}\label{eq:hiest}
s^{1/2}|(\dvt^{j}h_{1})(s;t,\theta,v)|
+\sum_{i=2}^{5}|(\dvt^{j}h_{i})(s;t,\theta,v)|\leq C_{N}s^{-3/2}(\ln t)^{m_{N}}
\end{equation}
for $0\leq j\leq N$, $(t,v,\theta)\in [t_{1},\infty)\times\so\times\rn{3}$
in the support of $f$ and $s\in [t_{1},t]$. In the case of $h_{i}$,
$i=2,\dots,5$, (\ref{eq:hiest}) is an immediate consequence of the
inductive hypothesis, (\ref{eq:roughalest}), (\ref{eq:PQinds2Nplus1}) and 
Lemma~\ref{lemma:indhcon}, 
and in the case of $h_{1}$, it is a consequence of (\ref{eq:hoest}). We also 
have
\begin{equation}\label{eq:psiest}
|\psi_{1}(s;t,\theta,v)|+s\sum_{i=2}^{5}|\psi_{i}(s;t,\theta,v)|\leq C, \ \ \
\sum_{i=1}^{5}|(\d_{\theta}^{j+1}\psi_{i}\circ V)(s;t,\theta,v)|\leq
C_{j}s^{-1}(\ln t)^{m_{j}}
\end{equation}
for $0\leq j\leq N-1$, $(t,v,\theta)\in [t_{1},\infty)\times\so\times\rn{3}$
in the support of $f$ and $s\in [t_{1},t]$; this is an immediate consequence
of the inductive hypothesis.

Let us consider the term that arises when $d/ds$ hits the $\psi$-factor
in (\ref{eq:hpsistr}). Note, to this end, that 
\[
\left|\dvt^{j}\left(\frac{dV^{i}}{ds}\right)\right|\leq
C_{N}s^{-3/2}(\ln t)^{m_{j}}
\]
for all $i=1,2,3$ and all $0\leq j\leq N$; for $j=0$, the estimate is
a consequence of (\ref{eq:dVdscrude}); for $j\geq 1$ and $i=1$, it is a 
consequence of (\ref{eq:dthkpoVo}); and in the case of $i=2,3$, it follows 
immediately from (\ref{eq:dVtds}), (\ref{eq:dVthds}) and the induction 
hypothesis. Due to the above observations,
\[
\dvt^{N}\left(h\frac{d\psi\circ V}{ds}
\a^{-1/2}\Theta_{1}\right)=c_{\theta}^{N}
\Theta_{N+1}+O[s^{-2}(\ln t)^{m_{j}}],
\]
where $c_{\theta}^{N}=O(s^{-2})$ and $h$ is one of $h_{1},\dots,h_{5}$. When the 
$s$-derivative hits the remaining terms in (\ref{eq:hpsistr}) (not $\psi$), we 
obtain
\[
\dvt^{N}\left[\left(\d_{t}(\a^{-1/2}h)
+h_{\theta}\frac{V^{1}}{V^{0}}\right)\psi\Theta_{1}
+h\psi\frac{\d V^{1}}{V^{0}}
-h\psi\frac{V^{1}}{(V^{0})^{3}}\sum_{i=1}^{3}
V^{i}\d V^{i}\right];
\]
cf. (\ref{eq:ddsnotpsi}). Due to (\ref{eq:hiest}) and (\ref{eq:psiest}),
this expression can be written
\[
\dvt^{N}\left[\left(\d_{t}(\a^{-1/2}h)
+h_{\theta}\frac{V^{1}}{V^{0}}\right)\psi\Theta_{1}\right]
+\sum_{i=1}^{3}c_{i}^{N}V^{i}_{N+1}+O[s^{-2}(\ln t)^{m_{N}}],
\]
where $c_{i}^{N}=O(s^{-2})$. Differentiating the
second term appearing in the definition of $Z^{1}$ once with respect to
$s$ and $N$ times with respect to $\vartheta$, we obtain (by adding up the 
above)
\begin{equation}\label{eq:wnintstep}
\sum_{i=1}^{5}\dvt^{N}\left[\left(\d_{t}(\a^{-1/2}h_{i})
+(\d_{\theta}h_{i})\frac{V^{1}}{V^{0}}\right)\psi_{i}\Theta_{1}\right]
+\sum_{i=1}^{3}c_{i}^{N}V^{i}_{N+1}+c_{\theta}^{N}
\Theta_{N+1}+O[s^{-2}(\ln t)^{m_{N}}],
\end{equation}
where $c_{\theta}^{N}=O(s^{-2})$ and $c_{i}^{N}=O(s^{-2})$.
In order to obtain the desired equation we need to add this expression
to (\ref{eq:dthkpoVo}) with $j=N$. However, before doing so, note that
\[
-\frac{1}{4}\dvt^{N+1}(\a^{1/2}\lambda_{\theta})V^{0}
=-\frac{1}{4}\dvt^{N}[\d_{\theta}(\a^{1/2}\lambda_{\theta})V^{0}\Theta_{1}]
+O[s^{-2}(\ln t)^{m_{N}}]
\]
etc. Due to this observation, we can argue as in the proof of
Lemma~\ref{lemma:Jacobisys}. In particular, we obtain a formula 
analogous to (\ref{eq:hugedZods}): the difference is that 
$\d V^{1}$ should be replaced by $V^{1}_{N+1}$ in the first term on 
the right hand side of (\ref{eq:hugedZods}); that $\dvt^{N}$ should
be applied to all but the first and last two terms on the right hand 
side of (\ref{eq:hugedZods}); and that the last two terms should be replaced
by ones analogous to the last three terms on the right hand side of 
(\ref{eq:wnintstep}). 
Proceeding as in the proof of Lemma~\ref{lemma:Jacobisys}, the 
corresponding expression can be simplified; cf. the derivation of 
(\ref{eq:dZoexinte}). Most of the steps involved in the derivation of 
(\ref{eq:dZoexinte}) consist of algebraic manipulations. However, there
are two exceptions. The combination of the fourth and fifth last terms
on the right hand side of (\ref{eq:hugedZods}) can be written
\[
e^{P}(P_{t}Q_{\theta}-P_{\theta}Q_{t})
\frac{V^{1}V^{2}V^{3}}{(V^{0})^{2}-(V^{1})^{2}}\Theta_{1}.
\]
The analogous expression in the present setting is 
\[
\dvt^{N}\left(e^{P}(P_{t}Q_{\theta}-P_{\theta}Q_{t})
\frac{V^{1}V^{2}V^{3}}{(V^{0})^{2}-(V^{1})^{2}}\Theta_{1}\right)
=c_{\theta}^{N}
\Theta_{N+1}+O[s^{-3}(\ln t)^{m_{N}}],
\]
where $c_{\theta}^{N}=O(s^{-3})$ and we have used (\ref{eq:PQinds2Nplus1}), 
Inductive assumption~\ref{ind:step2} and Lemma~\ref{lemma:indhcon}. 
The combination of the third and the seventh term on the right hand side of 
(\ref{eq:hugedZods}) can be written
\[
-\d_{\theta}(s^{1/2}e^{\lambda/2}\Lambda) V^{1}\Theta_{1}.
\]
In the present setting, the analogous term is 
\[
-\dvt^{N}[\d_{\theta}(s^{1/2}e^{\lambda/2}\Lambda) V^{1}\Theta_{1}]=c_{\theta}^{N}
\Theta_{N+1}+O[s^{-2}(\ln t)^{m_{N}}],
\]
where $c_{\theta}^{N}=O(s^{-2})$ and we have used Inductive 
assumption~\ref{ind:step2} and Lemma~\ref{lemma:indhcon}. Summing up,
we obtain
\begin{equation}\label{eq:dZodsalfin}
\begin{split}
\frac{dZ^{1}_{N}}{ds} = & -\frac{1}{2s}V^{1}_{N+1}
+\dvt^{N}\left[\frac{1}{4}\d_{t}\left[\a^{-1/2}\left(\lambda_{t}
-2\frac{\a_{t}}{\a}-4s^{1/2}e^{\lambda/2}\Lambda\right)\right]V^{0}\Theta_{1}
-\frac{1}{4}\d_{\theta}(\a^{1/2}\lambda_{\theta})V^{0}\Theta_{1}\right.\\
&-\frac{1}{2}[\d_{t}(\a^{-1/2}P_{t})-\d_{\theta}(\a^{1/2}P_{\theta})]V^{0}
\frac{(V^{2})^{2}-(V^{3})^{2}}{(V^{0})^{2}-(V^{1})^{2}}\Theta_{1}\\
 & \left.-[\d_{t}(\a^{-1/2}e^{P}Q_{t})-\d_{\theta}(\a^{1/2}e^{P}Q_{\theta})]
\frac{V^{0}V^{2}V^{3}}{(V^{0})^{2}-(V^{1})^{2}}\Theta_{1}\right]\\
 & +c_{\theta}^{N}\Theta_{N+1}
+c^{N}_{i}V^{i}_{N+1}+O[s^{-2}(\ln t)^{m_{N}}],
\end{split}
\end{equation}
where $c^{N}_{\theta}=O(s^{-2})$, $c_{i}^{N}=O(s^{-2})$ and we sum over $i$ but
not over $N$. The term of importance is the second one on the right hand side.
If all the $\vartheta$-derivatives hit $\Theta_{1}$, the resulting term can be
dealt with as in the proof of Lemma~\ref{lemma:Jacobisys}, and we obtain a
$c_{\theta}^{N}\Theta_{N+1}$-term, where $c^{N}_{\theta}=O(s^{-3/2})$. In the case
of all the remaining terms, it is possible to use the equations (as in the
proof of Lemma~\ref{lemma:Jacobisys}) in order to obtain terms of the form 
$O[s^{-3/2}(\ln t)^{m_{N}}]$. Let us go through the argument in detail for
\[
\dvt^{N}\left[-\frac{1}{2}[\d_{t}(\a^{-1/2}P_{t})-\d_{\theta}(\a^{1/2}P_{\theta})]V^{0}
\frac{(V^{2})^{2}-(V^{3})^{2}}{(V^{0})^{2}-(V^{1})^{2}}\Theta_{1}\right].
\]
Using (\ref{eq:ttmthth}), this expression can be written
\begin{equation*}
\begin{split}
 & -\frac{1}{2}\dvt^{N}\left[\left(-\frac{1}{s}\a^{-1/2}P_{t}
+\a^{-1/2}e^{2P}(Q_{t}^{2}-\a Q_{\theta}^{2})+
\frac{\a^{-1/2}e^{\lambda/2-P}J^{2}}{2s^{7/2}}
-\frac{\a^{-1/2}e^{\lambda/2+P}(K-QJ)^{2}}{2s^{7/2}}\right.\right.\\
& \left.\left. +s^{-1/2}e^{\lambda/2}\a^{-1/2}(P_{2}-P_{3})\right)V^{0}
\frac{(V^{2})^{2}-(V^{3})^{2}}{(V^{0})^{2}-(V^{1})^{2}}\Theta_{1}\right]
=c_{\theta}^{N}\Theta_{N+1}+O[s^{-2}(\ln t)^{m_{N}}],
\end{split}
\end{equation*}
where $c_{\theta}=O(s^{-2})$ and we have used (\ref{eq:roughalest}), 
(\ref{eq:PQinds2Nplus1}), Inductive assumption~\ref{ind:step2} and 
Lemma~\ref{lemma:indhcon}. Due to this
argument, and similar ones for the remaining terms in (\ref{eq:dZodsalfin}),
we obtain
\[
\frac{dZ^{1}_{N}}{ds} =  -\frac{1}{2s}V^{1}_{N+1}+c_{\theta}^{N}\Theta_{N+1}
+c^{N}_{i}V^{i}_{N+1}+O[s^{-3/2}(\ln t)^{m_{N}}],
\]
where $c^{N}_{\theta}=O(s^{-3/2})$, $c_{i}^{N}=O(s^{-3/2})$ and we sum over $i$ but
not over $N$. Due to (\ref{eq:PjTj}) and (\ref{eq:ZjVj}), we conclude that
\[
\frac{dZ^{1}_{N}}{ds} =  -\frac{1}{2s}Z^{1}_{N}+c_{\theta}^{N}\Psi_{N}
+c^{N}_{i}Z^{i}_{N}+O[s^{-3/2}(\ln t)^{m_{N}}],
\]
where $c^{N}_{\theta}=O(s^{-3/2})$, $c_{i}^{N}=O(s^{-3/2})$ and we sum over $i$ but
not over $N$. Combining this equation with (\ref{eq:Psikeq}),
(\ref{eq:Ztkest}) and (\ref{eq:Zthkest}) yields
\begin{eqnarray*}
\frac{d\hPs_{N}}{ds} & = & \frac{2}{s\ln s}\hPs_{N}+c_{\theta,\theta}^{N}\hPs_{N}
+c_{\theta,1}^{N}s^{-1/2}(\ln s)^{2}\hZ_{N}^{1}
+O[s^{-2}(\ln s)^{2}(\ln t)^{m_{N}}],\\
\frac{d\hZ^{1}_{N}}{ds} & = & c_{1,\theta}^{N}s^{1/2}(\ln s)^{-2}\hPs_{N}
+c^{N}_{1,1}\hZ^{1}_{N}+O[s^{-1}(\ln t)^{m_{N}}],
\end{eqnarray*}
where $c^{N}_{\theta,\theta}=O(s^{-2})$, $c_{\theta,1}^{N}=O(s^{-3/2})$,
$c^{N}_{1,\theta}=O(s^{-3/2})$, $c_{1,1}^{N}=O(s^{-3/2})$, there is no
summation over $N$ and we have used the notation
\begin{equation}\label{eq:hZoNdef}
\hZ^{1}_{N}(s;t,\theta,v)=s^{1/2}Z^{1}_{N}(s;t,\theta,v),\ \ \
\hPs_{N}(s;t,\theta,v)=(\ln s)^{2}\Psi_{N}(s;t,\theta,v).
\end{equation}
Introducing the energy
\begin{equation}\label{eq:hENdef}
\hE_{N}=(\hPs_{N})^{2}+(\hZ^{1}_{N})^{2},
\end{equation}
we conclude that
\[
\frac{d\hE_{N}}{ds}\geq -\frac{C_{N}}{s(\ln s)^{2}}\hE_{N}
-C_{N}s^{-1}(\ln t)^{m_{N}}\hE_{N}^{1/2}.
\]
Letting $r_{N}$ be such that $r_{N}(t_{1})=0$ and its derivative is the first
factor in the first term on the right hand side, we obtain
\[
\frac{d\me_{N}}{ds}\geq -C_{N}s^{-1}(\ln t)^{m_{N}}\me_{N}^{1/2},
\]
where $\me_{N}=\exp(-r_{N})\hE_{N}$. Dividing by $\me^{1/2}_{N}$ and integrating
from $s$ to $t$, we obtain
\[
\me_{N}^{1/2}(s;t,\theta,v)\leq \me_{N}^{1/2}(t;t,\theta,v)
+C_{N}(\ln t)^{m_{N}}.
\]
However, the first term on the right hand side can be estimated by
$C_{N}(\ln t)^{2}$. Combining the resulting estimate with (\ref{eq:PjTj}),
(\ref{eq:ZjVj}), (\ref{eq:Ztkest}) and (\ref{eq:Zthkest}), we conclude that 
(\ref{eq:ThetaVinds2}) holds with $j=N+1$.
\end{proof}

\begin{cor}\label{cor:tderbet}
Consider a $\tn{2}$-symmetric solution to the Einstein-Vlasov equations with a 
cosmological constant $\Lambda>0$ and existence interval $(t_{0},\infty)$, 
where $t_{0}\geq 0$. Assume that the solution has $\lambda$-asymptotics and
let $t_{1}=t_{0}+2$. Let $0\leq k\in\zo$. Then there is a constant $C_{k}$, 
depending only on $k$ and the solution, such that
\begin{equation}\label{eq:tderbet}
\|P_{t}\|_{C^{k}}+\|Q_{t}\|_{C^{k}}+t^{1/2}\|\a^{1/2}\lambda_{\theta}\|_{C^{k}}
\leq C_{k}t^{-2}
\end{equation}
for all $t\geq t_{1}$. Moreover,
\[
\left\|\frac{\a_{t}}{\a}+\frac{3}{t}\right\|_{C^{k}}
+\left\|\lambda_{t}+\frac{3}{t}\right\|_{C^{k}}\leq C_{k}t^{-2}
\]
for all $t\geq t_{1}$.
\end{cor}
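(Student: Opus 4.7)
The strategy is to sharpen the $t^{-3/2}$-decay of $P_t$ and $Q_t$ inherited from Inductive assumption~\ref{ind:step2} (now known to hold for every $N\in\zo$, by Sections~\ref{section:holce}--\ref{section:hodecs}) to the announced $t^{-2}$, and then to read off the remaining two estimates from the constraint equations of Section~\ref{section:syeq}. All auxiliary bounds needed on $P,Q,\a,\lambda,J,K$ and the matter moments in every $C^k$-norm are already collected in Lemma~\ref{lemma:indhcon}.

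\textbf{Step 1: sharpening $P_t$ and $Q_t$.} I view (\ref{eq:ttmthth}) and (\ref{eq:Etth}) as evolution equations for the scaled quantities $t\a^{-1/2}P_t$ and $t\a^{-1/2}e^{2P}Q_t$. The $C^N$-bounds (\ref{eq:PQinds2Nplus1}), (\ref{eq:JKderest}) and the matter estimate (\ref{eq:matheta}), together with the pointwise control $\a^{1/2}\sim t^{-3/2}$ (from Proposition~\ref{prop:abd} and (\ref{eq:dthlna})), force every term on the right-hand side of either equation to be $O(t^{-1/2})$ in $C^k$: the leading contributions $\d_\theta(t\a^{1/2}P_\theta)$, $t\a^{-1/2}e^{2P}Q_t^2$ and $t\a^{1/2}e^{2P}Q_\theta^2$ are all $O(t^{-1/2})$, while the twist and matter pieces decay strictly faster. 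Integrating from $t_1$ to $t$ yields $\|t\a^{-1/2}P_t\|_{C^k}+\|t\a^{-1/2}e^{2P}Q_t\|_{C^k}\le C_kt^{1/2}$; multiplying through by $\a^{1/2}/t=O(t^{-5/2})$ and using (\ref{eq:alindfins2}) to keep every $\theta$-derivative of $\a^{\pm 1/2}$ under control then gives $\|P_t\|_{C^k}+\|Q_k\|_{C^k}\le C_kt^{-2}$.

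\textbf{Step 2: reading off the constraint estimates.} Substituting the new rates into the momentum constraint (\ref{eq:altltheq}),
\[
\a^{1/2}\lambda_\theta=2t\a^{1/2}(P_tP_\theta+e^{2P}Q_tQ_\theta)-4t^{1/2}e^{\lambda/2}J_1,
\]
turns the first summand into $O(t^{-5/2})$ in $C^k$, and the matter bound on $J_1$ in Lemma~\ref{lemma:indhcon} makes the second even smaller, which yields the desired estimate for $\a^{1/2}\lambda_\theta$. Next, set $\hl=\lambda+3\ln t-2\ln(3/(4\Lambda))$, so that $4t^{1/2}e^{\lambda/2}\Lambda=(3/t)e^{\hl/2}$. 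The $C^0$-bound (\ref{eq:hlimpest}) together with the argument that produced Step~1, applied once more to (\ref{eq:altltheq}), shows that $\hl=O(t^{-1})$ in every $C^k$, and hence Taylor expansion gives $(3/t)e^{\hl/2}=3/t+O(t^{-2})$ in $C^k$. Plugging this, Step~1, the twist bounds from Lemma~\ref{lemma:indhcon} and the matter estimates into (\ref{eq:altateq}) produces $\|\a_t/\a+3/t\|_{C^k}\le C_kt^{-2}$, and performing the same substitutions in (\ref{eq:altlteq}) gives $\|\lambda_t+3/t\|_{C^k}\le C_kt^{-2}$.

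\textbf{Main obstacle.} The only non-routine point is $C^k$-bookkeeping for composite expressions such as $\a^{\pm 1/2}$ and $e^{\lambda/2}$. Here I lean on (\ref{eq:alindfins2}), which guarantees $\|\a^{-1}\d_\theta^j\a\|_{C^0}\le C_j$ and therefore ensures that every $\theta$-derivative of $\a^{\pm 1/2}$ picks up at most the same power of $t$ as $\a^{\pm 1/2}$ itself, together with the self-improvement of every $\theta$-derivative of $\hl$ from $O(t^{-1/2})$ to $O(t^{-1})$ that becomes available once Step~1 has been established. Once these are checked, the remainder of the proof is purely algebraic: differentiate the equations of Section~\ref{section:syeq} the required number of times and apply the $C^N$-bounds already proved.
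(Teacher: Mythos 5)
Your proposal matches the paper's proof in both strategy and execution: you integrate the wave equations (\ref{eq:ttmthth}) and (\ref{eq:Etth}) written for $t\a^{-1/2}P_t$ and $t\a^{-1/2}e^{2P}Q_t$ to obtain $O(t^{1/2})$ control, convert back via $\a^{1/2}/t\sim t^{-5/2}$ with (\ref{eq:alindfins2}) handling the $C^k$-bookkeeping, then feed the sharpened rates into the constraints (\ref{eq:altltheq}), (\ref{eq:altateq}), (\ref{eq:altlteq}) together with the $C^0$ control of $\hl$ to read off the remaining bounds; the paper's proof is a terse version of exactly this. (Two cosmetic notes: you wrote $\|Q_k\|_{C^k}$ where $\|Q_t\|_{C^k}$ is meant, and the paper phrases the conversion from $t\a^{-1/2}Q_t$ rather than $t\a^{-1/2}e^{2P}Q_t$, but since $e^{2P}$ is bounded in $C^k$ with $O(t^{-3/2})$ time derivative, the two are interchangeable.)
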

\begin{proof}
By combining Lemmas~\ref{lemma:indstep1} and  \ref{lemma:indstep2},
we know that Inductive assumption~\ref{ind:step2} holds for all $N$.
In particular, we thus know that the conclusions of Lemma~\ref{lemma:indhcon}
hold for all $N$. Combining this information with (\ref{eq:ttmthth}) and 
(\ref{eq:Etth}) yields
\[
\|\d_{t}(t\a^{-1/2}P_{t})\|_{C^{k}}+
\|\d_{t}(t\a^{-1/2}Q_{t})\|_{C^{k}}\leq C_{k}t^{-1/2}.
\]
As a consequence,
\[
\|t\a^{-1/2}P_{t}\|_{C^{k}}+\|t\a^{-1/2}Q_{t}\|_{C^{k}}\leq C_{k}t^{1/2}.
\]
Due to this estimate, as well as (\ref{eq:roughalest})
and (\ref{eq:allaindfin}), we can proceed inductively in order
to conclude that
\[
\|P_{t}\|_{C^{k}}+\|Q_{t}\|_{C^{k}}\leq C_{k}t^{-2}.
\]
Combining this estimate with (\ref{eq:altltheq}),
(\ref{eq:roughalest}), (\ref{eq:PQinds2Nplus1}) and
Lemma~\ref{lemma:indhcon},
we obtain (\ref{eq:tderbet}). Combining (\ref{eq:tderbet}) with
(\ref{eq:altlteq}), (\ref{eq:altateq}) and (\ref{eq:atltsupest}), we obtain
the final estimate stated in the corollary.
\end{proof}

\section{Energy estimates for the distribution function}
\label{section:eedf}

In the proof of the existence of $f_{\sca,\infty}$, cf. Theorem~\ref{thm:as},
a natural first step is to estimate $L^{2}$-based energies for $f$. In the
process of deriving such estimates, it is useful to consider equations for the
derivatives of the distribution function. Such equations take the following
general form:
\begin{equation}\label{eq:vlstru}
\frac{\d h}{\d t}+\frac{\a^{1/2}v^{1}}{v^{0}}\frac{\d h}{\d \theta}
-\frac{1}{2t}v^{i}\frac{\d h}{\d v^{i}}=R.
\end{equation}
In case $h=f$, $R$ is given by
\begin{equation}\label{eq:Rdecf}
R=L^{i}\frac{\d f}{\d v^{i}},
\end{equation}
where
\begin{eqnarray}
L^{1} & = & \frac{1}{4}\a^{1/2}\lambda_{\theta}\,v^{0}
+\frac{1}{4}\left(\lambda_{t}-\frac{2\alpha_t}{\alpha}-\frac{3}{t}\right)v^{1}
-\a^{1/2}e^{P}Q_{\theta}\frac{v^2v^3}{v^0}
+\frac{1}{2}\a^{1/2}P_{\theta}\frac{(v^{3})^{2}
-(v^{2})^{2}}{v^{0}}\nonumber\\
 & & -t^{-7/4}e^{\lambda/4}\big(e^{-P/2}Jv^2+e^{P/2}(K-Q J)v^3\big),
\label{eq:Lodef}\\
L^{2} & = & \frac{1}{2}P_{t}v^{2}
+\frac{1}{2}\a^{1/2}P_{\theta}\frac{v^{1}v^{2}}{v^{0}},\label{eq:Ltdef}\\
L^{3} & = & -\frac{1}{2}P_{t}v^{3}
-\frac{1}{2}\a^{1/2}P_{\theta}\frac{v^{1}v^{3}}{v^{0}}+e^{P}v^2
\left(Q_t+\a^{1/2}Q_{\theta}\frac{v^1}{v^0}\right).\label{eq:Lthdef}
\end{eqnarray}
The energies we shall consider are 
\begin{equation}\label{eq:ekdef}
E_{k}[h](t)=\sum_{l+|\b|\leq k}\is\irn{3}t^{-|\b|}
|\d_{\theta}^{l}\d^{\b}_{v}h(t,\theta,v)|^{2}\a^{-1/2}t^{-3/2}dvd\theta.
\end{equation}
We shall also use the notation $E=E_{0}$.
\begin{remarks}
The purpose of the factor $\a^{-1/2}t^{-3/2}$ is to simplify some of the
terms that result upon carrying out partial integrations. We could equally
well consider energies of the form
\[
H_{k}[f](t)=\sum_{l+|\b|\leq k}\is\irn{3}t^{-|\b|}\ldr{t^{1/2}v}^{2\mu+2|\b|}
|\d_{\theta}^{l}\d^{\b}_{v}f(t,\theta,v)|^{2}dvd\theta
\]
for $\mu\geq 0$; cf. \cite{stab}.
However, there is a constant $C>1$, depending only on the solution, $\mu$ and
$\b$, such that
\[
C^{-1}\leq \ldr{t^{1/2}v}^{2\mu+2|\b|}\leq C
\]
for $t\geq t_{1}$ (where $t_{1}$ is defined as in the statement of previous 
lemmas) and $(t,\theta,v)$ in the support of $f$. As a consequence,
the corresponding weight is of no practical importance.
\end{remarks}

\begin{lemma}\label{lemma:vlfisten}
Consider a $\tn{2}$-symmetric solution to the Einstein-Vlasov equations with a 
cosmological constant $\Lambda>0$ and existence interval $(t_{0},\infty)$, 
where $t_{0}\geq 0$. Assume that the solution has $\lambda$-asymptotics and
let $t_{1}=t_{0}+2$. Let $h$ be a smooth
solution to (\ref{eq:vlstru}) (where the function $\a$ is the object appearing
in the Einstein-Vlasov equations and $R$ is some function) which has compact
support when restricted to compact time intervals. Then there is a constant
$C>0$, depending only on the solution to the Einstein-Vlasov
equations, such that
\[
\frac{dE[h]}{dt}\leq -\frac{3}{2t}E[h]
+2\is\irn{3}hR\a^{-1/2}t^{-3/2}dv d\theta+Ct^{-2}E[h]
\]
for all $t\geq t_{1}$.
\end{lemma}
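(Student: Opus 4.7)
The plan is to differentiate $E[h]$ in time, substitute the equation~(\ref{eq:vlstru}) for $\d_t h$, and then integrate by parts in $\theta$ and in $v$ to convert the two transport terms into: (i) a term that vanishes, and (ii) the claimed decay factor $-\tfrac{3}{2t}E[h]$. The remaining piece comes from differentiating the weight $\a^{-1/2}t^{-3/2}$ in $t$, and is controlled via Lemma~\ref{lemma:interm}.

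Concretely, I would compute
\[
\frac{dE[h]}{dt} = \is\irn{3}\bigl(2h\,\d_t h\bigr)\a^{-1/2}t^{-3/2}\,dv\,d\theta
+ \is\irn{3}h^{2}\,\d_t\!\bigl(\a^{-1/2}t^{-3/2}\bigr)\,dv\,d\theta.
\]
Using (\ref{eq:vlstru}), the factor $2h\d_t h$ equals $2hR - \tfrac{\a^{1/2}v^{1}}{v^{0}}\d_{\theta}(h^{2}) + \tfrac{v^{i}}{2t}\d_{v^{i}}(h^{2})$. The $2hR$ piece is exactly the source term in the claimed inequality.

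For the $\theta$-transport term, integration by parts in $\theta$ produces a factor $\d_{\theta}\!\bigl(\a^{1/2}v^{1}/v^{0}\cdot\a^{-1/2}\bigr) = \d_{\theta}(v^{1}/v^{0}) = 0$, since $v^{0}=[1+|v|^{2}]^{1/2}$ has no $\theta$-dependence; this is precisely the reason the weight $\a^{-1/2}$ was chosen. For the $v$-transport term, integration by parts in $v^{i}$ over $\rn{3}$ (valid since $h$ has compact support in $v$ on compact $t$-intervals) produces the factor $-\d_{v^{i}}(v^{i}/2t) = -3/(2t)$, giving exactly $-\tfrac{3}{2t}E[h]$. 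Finally, the weight's time derivative satisfies
\[
\d_t\!\bigl(\a^{-1/2}t^{-3/2}\bigr) = \a^{-1/2}t^{-3/2}\left(-\tfrac{1}{2}\tfrac{\a_t}{\a} - \tfrac{3}{2t}\right)
= -\tfrac{1}{2}\a^{-1/2}t^{-3/2}\left(\tfrac{\a_t}{\a}+\tfrac{3}{t}\right),
\]
and by (\ref{eq:atltsupest}) of Lemma~\ref{lemma:interm} the last parenthesis is $O(t^{-2})$ in $C^{0}$, so this contribution is bounded by $Ct^{-2}E[h]$. Adding the three pieces gives the stated inequality.

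There is no real obstacle here; the lemma is a clean energy identity, and the only non-cosmetic input is the sup-norm bound on $\a_t/\a + 3/t$ already proven. The mildly subtle point is the cancellation in the $\theta$-integration by parts, which is engineered by the particular combination $\a^{-1/2}$ in the weight — without this factor one would have to deal with $\a_{\theta}/\a$ terms, for which $C^{0}$ control is only available after the characteristic estimates of Section~\ref{section:lces}--\ref{section:charsys}; here the algebraic cancellation sidesteps that entirely.
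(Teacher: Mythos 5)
Your proposal is correct and follows essentially the same route as the paper: differentiate $E[h]$, substitute the transport equation, integrate by parts (zero in $\theta$, $-3/(2t)$ in $v$), and absorb the weight's time derivative using (\ref{eq:atltsupest}). The paper's own proof is nearly identical, though it states the vanishing of the $\theta$-term without spelling out the $\a^{1/2}\cdot\a^{-1/2}$ cancellation that you (correctly) identify as the reason for the choice of weight.
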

\begin{remark}
It is important to note that the constant $C$ does not depend on $h$.
Moreover, $R$ should be thought of as being defined by (\ref{eq:vlstru}).
In particular, due to the assumptions concerning $h$, the function $R$ is
smooth and has compact support when restricted to compact time intervals.
\end{remark}
\begin{proof}
Differentiating $E$ with respect to time, we obtain
\begin{equation}\label{eq:difzeenvl}
\frac{dE}{dt} = 2\is\irn{3}h\d_{t}h\a^{-1/2}t^{-3/2}dv d\theta
+\is\irn{3}h^{2}\left(-\frac{3}{2t}-\frac{\a_{t}}{2\a}
\right)\a^{-1/2}t^{-3/2}dv d\theta.
\end{equation}
Due to (\ref{eq:atltsupest}), we can estimate the second term on
the right hand side. Consider the first term on the right hand side
of (\ref{eq:difzeenvl}). Using (\ref{eq:vlstru}), it can be
written
\[
2\is\irn{3}h\left(-\frac{\a^{1/2}v^{1}}{v^{0}}
\frac{\d h}{\d \theta}+\frac{1}{2t}v^{i}\frac{\d h}{\d v^{i}}
+R\right)\a^{-1/2}t^{-3/2}dv d\theta.
\]
The term involving $\d_{\theta}h$ can be integrated to zero. The term involving
$R$ we leave as it is. What remains is to estimate the term
\[
\frac{1}{2t}\is\irn{3}v^{i}\frac{\d h^{2}}{\d v^{i}}
\a^{-1/2}t^{-3/2}dv d\theta =
-\frac{3}{2t}\is\irn{3}h^{2}
\a^{-1/2}t^{-3/2}dv d\theta.
\]
The lemma follows.
\end{proof}

Let us turn to the higher order derivatives of the distribution function.

\begin{lemma}
Consider a $\tn{2}$-symmetric solution to the Einstein-Vlasov equations with a 
cosmological constant $\Lambda>0$ and existence interval $(t_{0},\infty)$, 
where $t_{0}\geq 0$. Assume that the solution has $\lambda$-asymptotics and
let $t_{1}=t_{0}+2$. Fix $0\leq k\in\zo$. 
Then there is a constant $C_{k}>0$, depending only on $k$ and the 
solution to the Einstein-Vlasov equations, such that
\[
\frac{dE_{k}[f]}{dt}\leq -\frac{3}{2t}E_{k}[f]+
C_{k}t^{-3/2}E_{k}[f]
\]
for all $t\geq t_{1}$. In particular, $t^{3/2}E_{k}[f]$ is bounded to the
future.
\end{lemma}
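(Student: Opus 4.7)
The plan is to derive, for each multiindex with $l+|\b|\leq k$, a transport equation of the form (\ref{eq:vlstru}) for $h_{l,\b}:=\d_\theta^l\d_v^\b f$, apply Lemma~\ref{lemma:vlfisten} to each $h_{l,\b}$, and sum. Writing the Vlasov equation as $Pf=L^i\d_{v^i}f$ with $P:=\d_t+\tfrac{\a^{1/2}v^1}{v^0}\d_\theta-\tfrac{1}{2t}v^i\d_{v^i}$ and $D:=\d_\theta^l\d_v^\b$, the only algebraically nontrivial commutator is $[\d_v^\b,v^i\d_{v^i}]=|\b|\d_v^\b$, and hence $Ph_{l,\b}=R_{l,\b}$ with
\[
R_{l,\b}=L^i\d_{v^i}h_{l,\b}+\tfrac{|\b|}{2t}h_{l,\b}+\mathrm{Com}_{l,\b}.
\]
Here $\mathrm{Com}_{l,\b}$ collects the commutator contributions, i.e.\ products of $\d_\theta^{l'}\d_v^{\b'}A$ (where $A:=\a^{1/2}v^1/v^0$) or $\d_\theta^{l'}\d_v^{\b'}L^i$, $(l',\b')\neq(0,0)$, with strictly lower-order derivatives of $f$; these lower-order derivatives have total order at most $k$ and hence appear in $E_k[f]$.

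Applying Lemma~\ref{lemma:vlfisten} to $h_{l,\b}$ and using $E_{l,\b}[f]:=t^{-|\b|}E[h_{l,\b}]$ (the $(l,\b)$-summand of $E_k[f]$), the factor $t^{-|\b|}$ produces, via $\d_tt^{-|\b|}=-(|\b|/t)t^{-|\b|}$, a term $-\tfrac{|\b|}{t}E_{l,\b}[f]$; the $\tfrac{|\b|}{2t}h_{l,\b}$ piece of $R_{l,\b}$ simultaneously contributes $+\tfrac{|\b|}{t}E_{l,\b}[f]$ to the cross integral, and these cancel. This is the structural reason the weight $t^{-|\b|}$ in (\ref{eq:ekdef}) is the right choice.

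It then suffices to bound the $L^i\d_{v^i}h_{l,\b}$ and $\mathrm{Com}_{l,\b}$ contributions by $C_kt^{-3/2}E_k[f]$. For the transport piece, integration by parts in $v^i$---clean because $h_{l,\b}$ has compact $v$-support---produces $-\tfrac{1}{2}\is\irn{3}(\d_{v^i}L^i)h_{l,\b}^2\a^{-1/2}t^{-3/2-|\b|}dv\,d\theta$; a term-by-term inspection of (\ref{eq:Lodef})--(\ref{eq:Lthdef}), combined with Corollary~\ref{cor:tderbet} and the characteristic bound $|v^i|\leq Ct^{-1/2}$ from Lemmas~\ref{lemma:vtthest} and \ref{lemma:Qocoest}, gives $|\d_{v^i}L^i|\leq Ct^{-2}$ on $\mathrm{supp}\,f$, yielding $Ct^{-2}E_{l,\b}[f]$. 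For the commutator terms, Corollary~\ref{cor:tderbet} and Lemma~\ref{lemma:indhcon} furnish $|\d_\theta^{l'}\d_v^{\b'}A|\leq Ct^{-3/2}$ and $|\d_\theta^{l'}\d_v^{\b'}L^i|\leq Ct^{-3/2}$ on $\mathrm{supp}\,f$ when $(l',\b')\neq(0,0)$, with a stronger $t^{-5/2}$ when $\b'=0$. Cauchy--Schwarz then pairs $h_{l,\b}$ with the lower-order derivative of $f$, and the mismatch between the weight $t^{-|\b|}$ in the integrand and the weight native to the corresponding summand of $E_k[f]$ produces only non-positive powers of $t$ (or, in the $\b'=0$ case, a $t^{1/2}$ compensated by the stronger coefficient bound), leading to the bound $C_kt^{-3/2}E_k[f]$.

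Summing over $l+|\b|\leq k$ yields the stated inequality $\tfrac{d}{dt}E_k[f]\leq-\tfrac{3}{2t}E_k[f]+C_kt^{-3/2}E_k[f]$, equivalently $\tfrac{d}{dt}(t^{3/2}E_k[f])\leq C_kt^{-3/2}(t^{3/2}E_k[f])$; integrability of $t^{-3/2}$ on $[t_1,\infty)$ together with Gronwall gives the boundedness of $t^{3/2}E_k[f]$. The main technical obstacle is the commutator bookkeeping: each $v$-derivative falling on a coefficient must either pick up a $t^{-1/2}$ via the support bound or be compensated by the weight mismatch, and the powers must balance exactly. No single step is deep; what matters is that the algebraic cancellation of the $|\b|/t$ term makes the weight $t^{-|\b|}$ natural, and that every coefficient derivative that appears is already controlled by Corollary~\ref{cor:tderbet} and Lemma~\ref{lemma:indhcon}.
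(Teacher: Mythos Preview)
Your argument is essentially the paper's: commute, apply Lemma~\ref{lemma:vlfisten} term by term, observe the $|\b|/t$ cancellation coming from the weight $t^{-|\b|}$, integrate by parts in the top-order $L^i\d_{v^i}h_{l,\b}$ piece, and control the commutator terms by the already-established coefficient estimates. One small correction: the sharpened bound you quote for $\b'=0$ is $t^{-2}$, not $t^{-5/2}$ (the term $\tfrac14(\lambda_t-2\a_t/\a-3/t)v^1$ in $L^1$ only gives $O(t^{-2})$ after a $\theta$-derivative, via (\ref{eq:ataltest}) and $|v^1|\le Ct^{-1/2}$); this is exactly the bound the paper uses and is precisely what is needed to absorb the $t^{1/2}$ weight mismatch, so your argument still closes.
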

\begin{proof}
Differentiating (\ref{eq:vlstru}) with $h=f$, we obtain
\begin{equation}\label{eq:diffvl}
\frac{\d f_{\b,l}}{\d t}+\frac{\a^{1/2}v^{1}}{v^{0}}\frac{\d f_{\b,l}}{\d \theta}
-\frac{1}{2t}v^{i}\frac{\d f_{\b,l}}{\d v^{i}}=\d^{\b}_{v}\d^{l}_{\theta}R+
\left[\frac{\a^{1/2}v^{1}}{v^{0}}\d_{\theta},\d^{\b}_{v}\d^{l}_{\theta}\right]f
-\frac{1}{2t}\left[v^{i}\d_{v^{i}},\d^{\b}_{v}\d^{l}_{\theta}\right]f,
\end{equation}
where we use the notation $f_{\b,l}=\d^{\b}_{v}\d^{l}_{\theta}f$ and assume that
$|\b|+l\leq k$.
Let us denote the right hand side of (\ref{eq:diffvl}) by $\sfR_{\b,l}$.
Due to Lemma~\ref{lemma:vlfisten}, it is of interest to estimate
\begin{equation}\label{eq:inhocont}
2\is\irn{3}t^{-|\b|}f_{\b,l}\sfR_{\b,l}\a^{-1/2}t^{-3/2}dv d\theta.
\end{equation}
By an inductive argument, it can be proven that the third term on the
right hand side of (\ref{eq:diffvl}) is given by $|\b|f_{\b,l}/2t$.
The corresponding contribution to (\ref{eq:inhocont}) is thus
\[
\frac{|\b|}{t}t^{-|\b|}E[f_{\b,l}].
\]
Turning to the second term on the right hand side of (\ref{eq:diffvl}),
it can (up to numerical factors) be written as a sum of terms of the form
\[
\d^{\b_{1}}_{v}\d^{l_{1}}_{\theta}\left(\frac{\a^{1/2}v^{1}}{v^{0}}\right)
\d^{\b_{2}}_{v}\d^{l_{2}+1}_{\theta}f,
\]
where $\b_{1}+\b_{2}=\b$, $l_{1}+l_{2}=l$ and $|\b_{1}|+l_{1}\geq 1$. Note that
the first factor can always be estimated by $C_{k}t^{-3/2}$. In case $\b_{1}=0$,
it can be estimated by $C_{k}t^{-2}$ (on the support of $f$). Due to these 
observations, we have
\[
t^{-|\b|/2}\left|
\left[\frac{\a^{1/2}v^{1}}{v^{0}}\d_{\theta},\d^{\b}_{v}\d^{l}_{\theta}\right]f
\right|\leq C_{k}t^{-2}\sum_{l_{1}+|\b_{1}|\leq k}t^{-|\b_{1}|/2}|f_{\b_{1},l_{1}}|.
\]
The corresponding contribution to (\ref{eq:inhocont}) can thus be estimated
by
\[
C_{k}t^{-2}E_{k}[f].
\]
Finally, let us consider the first term on the right hand side of
(\ref{eq:diffvl}). Since $R$ is given by (\ref{eq:Rdecf}), the
expression $\d^{\b}_{v}\d^{l}_{\theta}R$ is given by the sum of
\begin{equation}\label{eq:alldhf}
L^{i}\d_{v^{i}}\d^{\b}_{v}\d^{l}_{\theta}f
\end{equation}
and terms which (up to numerical factors) can be written
\begin{equation}\label{eq:nalldhf}
(\d^{\b_{1}}_{v}\d^{l_{1}}_{\theta}L^{i})
\d_{v^{i}}\d^{\b_{2}}_{v}\d^{l_{2}}_{\theta}f,
\end{equation}
where $|\b_{1}|+l_{1}\geq 1$.
The contribution to (\ref{eq:inhocont}) from (\ref{eq:alldhf}) can be written
\[
\is\irn{3}t^{-|\b|}L^{i}(\d_{v^{i}}f_{\b,l}^{2})\a^{-1/2}t^{-3/2}dv d\theta.
\]
Integrating partially with respect to $v^{i}$ and keeping in mind that
the $L^{i}$ are given by (\ref{eq:Lodef})--(\ref{eq:Lthdef}),
we conclude that this expression can be estimated by
\begin{equation}\label{eq:slasken}
C_{k}t^{-3/2}E_{k}[f],
\end{equation}
where we have used Lemma~\ref{lemma:indhcon} and 
Corollary~\ref{cor:tderbet}.
Let us now consider the contribution arising from terms of the form
(\ref{eq:nalldhf}). It is natural to divide these terms into two different
categories. Either $\b_{1}=0$, or $\b_{1}\neq 0$. In case $\b_{1}=0$,
the expression (\ref{eq:nalldhf}) can be estimated by
\[
C_{k}t^{-3/2}t^{-1/2}|\d_{v^{i}}\d^{\b}_{v}\d^{l_{2}}_{\theta}f|.
\]
In case $\b_{1}\neq 0$, the expression (\ref{eq:nalldhf}) can be estimated by
\[
C_{k}t^{-3/2}|\d_{v^{i}}\d^{\b_{2}}_{v}\d^{l_{2}}_{\theta}f|.
\]
In order to obtain these estimates, we have appealed to 
Lemma~\ref{lemma:indhcon} and Corollary~\ref{cor:tderbet}.
As a consequence, the contribution to (\ref{eq:inhocont}) from terms of the
form (\ref{eq:nalldhf}) can be estimated by (\ref{eq:slasken}). Adding up
the above observations, we conclude that
\begin{equation*}
\begin{split}
\frac{d}{dt}\left(t^{-|\b|}E[f_{\b,l}]\right) = & -\frac{|\b|}{t}t^{-|\b|}E[f_{\b,l}]
+t^{-|\b|}\frac{d E[f_{\b,l}]}{dt}\leq
-\frac{|\b|}{t}t^{-|\b|}E[f_{\b,l}]
-\frac{3}{2t}t^{-|\b|}E[f_{\b,l}]\\
 & +2\is\irn{3}t^{-|\b|}\sfR_{\b,l}f_{\b,l}\a^{-1/2}t^{-3/2}dv d\theta
+Ct^{-2}t^{-|\b|}E[f_{\b,l}]\\
\leq & -\frac{3}{2t}t^{-|\b|}E[f_{\b,l}]+C_{k}t^{-3/2}E_{k}[f].
\end{split}
\end{equation*}
Summing over $\b$ and $l$, we obtain
\[
\frac{d E_{k}[f]}{dt}\leq -\frac{3}{2t}E_{k}[f]+C_{k}t^{-3/2}E_{k}[f].
\]
The lemma follows.
\end{proof}

In order to obtain a better understanding for the asymptotics, it is
convenient to rescale the distribution function according to
\[
f_{\sca}(t,\theta,v)=f(t,\theta,t^{-1/2}v).
\]
We have the following conclusions concerning $f_{\sca}$.

\begin{lemma}\label{lemma:fas}
Consider a $\tn{2}$-symmetric solution to the Einstein-Vlasov equations with a 
cosmological constant $\Lambda>0$ and existence interval $(t_{0},\infty)$, 
where $t_{0}\geq 0$. Assume that the solution has $\lambda$-asymptotics and
let $t_{1}=t_{0}+2$. Fix 
$0\leq k\in\zo$. Then there is a constant $C$, depending only on
the solution, such that in order for $(t,\theta,v)\in
[t_{1},\infty)\times\so\times\rn{3}$ to be in the support of $f_{\sca}$,
$v$ has to satisfy $|v|\leq C$. Moreover, there is a constant $C_{k}>0$,
depending only on $k$ and the solution to the Einstein-Vlasov
equations, such that
\[
\|\d_{t}f_{\sca}(t,\cdot)\|_{C^{k}(\so\times\rn{3})}
\leq C_{k}t^{-2}
\]
for all $t\geq t_{1}$. In particular, there is thus a smooth, non-negative
function with compact support, say $f_{\sca,\infty}$, on $\so\times\rn{3}$
such that
\[
\|f_{\sca}(t,\cdot)-f_{\sca,\infty}\|_{C^{k}(\so\times\rn{3})}
\leq C_{k}t^{-1}
\]
for all $t\geq t_{1}$.
\end{lemma}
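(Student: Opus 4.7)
The support bound is immediate from previous results: by Lemmas~\ref{lemma:vtthest} and \ref{lemma:Qocoest}, any $v$ in the support of $f(t,\cdot)$ satisfies $|v|\leq Ct^{-1/2}$, so $f_{\sca}(t,\cdot)$ is supported in $\so\times B_{R}(0)$ for a fixed $R$. The essential task is the $C^{k}$ bound on $\d_{t}f_{\sca}$. To produce it I would first derive a clean transport-type equation for $f_{\sca}$. With $\tilde{v}=t^{-1/2}v$, applying the chain rule to $f_{\sca}(t,\theta,v)=f(t,\theta,\tilde{v})$ and using the form (\ref{eq:vlstru}) of the Vlasov equation, the rescaling contribution $-\tfrac{1}{2}t^{-3/2}v^{i}[\d_{v^{i}}f](t,\theta,\tilde{v})$ exactly cancels the Liouville drift $\tfrac{1}{2t}\tilde{v}^{i}[\d_{v^{i}}f]$ appearing when (\ref{eq:vlstru}) is solved for $\d_{t}f$. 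Since $[\d_{\theta}f]=\d_{\theta}f_{\sca}$ and $[\d_{v^{i}}f]=t^{1/2}\d_{v^{i}}f_{\sca}$, one arrives at
\begin{equation*}
\d_{t}f_{\sca}=-\frac{\a^{1/2}\tilde{v}^{1}}{\tilde{v}^{0}}\,\d_{\theta}f_{\sca}+t^{1/2}L^{i}(t,\theta,\tilde{v})\,\d_{v^{i}}f_{\sca},
\end{equation*}
with $L^{i}$ as in (\ref{eq:Lodef})--(\ref{eq:Lthdef}).

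The crucial point is that both coefficients above are bounded in $C^{k}(\so\times B_{R}(0))$ by $C_{k}t^{-2}$. For the $\d_{\theta}f_{\sca}$ coefficient this follows from $\a^{1/2}\leq Ct^{-3/2}$, $|\tilde{v}^{1}|\leq Ct^{-1/2}$, and the bounds on $\theta$-derivatives of $\a^{1/2}$ supplied by Lemma~\ref{lemma:indhcon}. For $t^{1/2}L^{i}$ the analysis is more delicate: after substitution of $\tilde{v}=t^{-1/2}v$, every summand must actually be $O(t^{-5/2})$. The leading term $\tfrac{1}{4}\a^{1/2}\lambda_{\theta}\tilde{v}^{0}$ in $L^{1}$ requires the sharp bound $\|\a^{1/2}\lambda_{\theta}\|_{C^{k}}\leq C_{k}t^{-5/2}$ from Corollary~\ref{cor:tderbet}; the remaining summands pick up one or two additional factors of $t^{-1/2}$ from the rescaled momentum components and are controlled by the $C^{k}$ bounds on $P_{t}$, $Q_{t}$, $\a^{1/2}P_{\theta}$, $\a^{1/2}Q_{\theta}$, $\lambda_{t}-2\a_{t}/\a-3/t$ and the twist quantities, supplied by Corollary~\ref{cor:tderbet} and Lemma~\ref{lemma:indhcon}.

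It remains to show that $f_{\sca}$ is itself uniformly $C^{k}$-bounded in $t$. I would extract this from the preceding lemma by the change of variables $v=t^{-1/2}v'$ in the energy $E_{k}[f]$ of (\ref{eq:ekdef}): the weights $t^{-|\b|}$ absorb the chain-rule factors $t^{|\b|/2}$ relating $\d_{v}^{\b}f$ to $\d_{v'}^{\b}f_{\sca}$, the change of variable contributes $t^{-3/2}$, and $\a^{-1/2}$ is comparable to $t^{3/2}$ (via Lemma~\ref{lemma:interm}), so $t^{3/2}E_{k}[f]$ is equivalent to $\|f_{\sca}(t,\cdot)\|_{H^{k}(\so\times\rn{3})}^{2}$. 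The preceding lemma yields $t^{3/2}E_{k}[f]\leq C_{k}$ for every $k$, and since $f_{\sca}(t,\cdot)$ is supported in the fixed compact set $\so\times B_{R}(0)$, Sobolev embedding on the four-dimensional domain $\so\times B_{R}(0)$ converts these uniform high-$k$ Sobolev bounds into uniform $C^{k}$ bounds for every $k$.

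Combining the last two paragraphs with the transport equation yields $\|\d_{t}f_{\sca}(t,\cdot)\|_{C^{k}(\so\times\rn{3})}\leq C_{k}t^{-2}$ for all $t\geq t_{1}$. Since $t^{-2}$ is integrable at infinity, $\{f_{\sca}(t,\cdot)\}_{t\geq t_{1}}$ is Cauchy in every $C^{k}$; the common limit $f_{\sca,\infty}$ is thus smooth, non-negative, and supported in $\so\times B_{R}(0)$, and integrating the pointwise estimate from $t$ to infinity produces $\|f_{\sca}(t,\cdot)-f_{\sca,\infty}\|_{C^{k}}\leq C_{k}t^{-1}$. The principal obstacle throughout is extracting the sharp $t^{-2}$ decay in place of the naive $t^{-3/2}$: this is why the extra factor $t^{1/2}$ in front of $\|\a^{1/2}\lambda_{\theta}\|_{C^{k}}$ in Corollary~\ref{cor:tderbet}, combined with the observation that each summand of $L^{i}(t,\theta,\tilde{v})$ picks up an additional $t^{-1/2}$ once evaluated at the rescaled momentum, is essential to the argument.
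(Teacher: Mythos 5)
Your proposal is correct and follows essentially the same route as the paper: you derive the rescaled transport equation $\d_{t}f_{\sca}=-\a^{1/2}\tilde{v}^{1}\tilde{v}^{0\,-1}\d_{\theta}f_{\sca}+L^{i}_{\sca}\d_{v^{i}}f_{\sca}$, bound both coefficients in $C^{k}$ by $C_{k}t^{-2}$ using Corollary~\ref{cor:tderbet} and the rescaled momentum support, translate the energy bound $t^{3/2}E_{k}[f]\leq C_{k}$ into uniform $C^{k}$ control of $f_{\sca}$ via the change of variables and Sobolev embedding on the fixed compact support, and then integrate the $t^{-2}$ decay to obtain $f_{\sca,\infty}$. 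This matches the paper's argument step for step, including the identification of the $t^{-2}$ integrability as the decisive point.
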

\begin{proof}
The statement concerning the support is an immediate consequence of
Lemmas~\ref{lemma:vtthest} and \ref{lemma:Qocoest}. In order to derive 
the desired estimates, let us compute
\[
(\d_{t}f_{\sca})(t,\theta,v)=(\d_{t}f)(t,\theta,t^{-1/2}v)
-\frac{1}{2t}t^{-1/2}v^{i}(\d_{v^{i}}f)(t,\theta,t^{-1/2}v).
\]
Using the Vlasov equation, we conclude that
\[
\d_{t}f_{\sca}=-\frac{\a^{1/2}t^{-1/2}v^{1}}{\ldr{t^{-1/2}v}}
\d_{\theta}f_{\sca}+R_{\sca},
\]
where
\[
R_{\sca}(t,\theta,v)=
L^{i}(t,\theta,t^{-1/2}v)(\d_{v^{i}}f)(t,\theta,t^{-1/2}v)
=t^{1/2}L^{i}(t,\theta,t^{-1/2}v)(\d_{v^{i}}f_{\sca})(t,\theta,v)
\]
and the $L^{i}$ are defined in (\ref{eq:Lodef})--(\ref{eq:Lthdef}). Introducing
\[
L^{i}_{\sca}(t,\theta,v)=t^{1/2}L^{i}(t,\theta,t^{-1/2}v),
\]
we thus have
\begin{equation}\label{eq:fscaeq}
\d_{t}f_{\sca}=-\frac{\a^{1/2}t^{-1/2}v^{1}}{\ldr{t^{-1/2}v}}
\d_{\theta}f_{\sca}+L^{i}_{\sca}\d_{v^{i}}f_{\sca}.
\end{equation}
Due to the properties of the support of $f_{\sca}$, 
Lemma~\ref{lemma:indhcon} and Corollary~\ref{cor:tderbet}, there 
is a constant $C_{k}$ for each $0\leq k\in\zo$ such that
\begin{equation}\label{eq:Lisceq}
\sum_{l+|\b|\leq k}|(\d^{\b}_{v}\d^{l}_{\theta}L^{i}_{\sca})(t,\theta,v)|
\leq C_{k}t^{-2}
\end{equation}
for all $i=1,2,3$ and $(t,\theta,v)\in
[t_{1},\infty)\times\so\times\rn{3}$ in the support of $f_{\sca}$.
In order to estimate the derivatives of $f_{\sca}$ in $C^{k}$, it is
convenient to translate the estimate $E_{k}\leq C_{k}t^{-3/2}$ into an estimate
for $f_{\sca}$. However,
\begin{equation*}
\begin{split}
\is\irn{3}|(\d_{\theta}^{l}\d^{\b}_{v}f_{\sca})(t,\theta,v)|^{2}d\theta dv
 = & \is\irn{3}t^{-|\b|}
|(\d_{\theta}^{l}\d^{\b}_{v}f)(t,\theta,t^{-1/2}v)|^{2}d\theta dv\\
 = & t^{3/2}\is\irn{3}t^{-|\b|}
|(\d_{\theta}^{l}\d^{\b}_{v}f)(t,\theta,v)|^{2}d\theta dv
\leq Ct^{3/2}E_{k}(t)\leq C_{k},
\end{split}
\end{equation*}
assuming $l+|\b|\leq k$. Due to this estimate and Sobolev embedding, we
conclude that all derivatives of $f_{\sca}$ are bounded for $t\geq t_{1}$.
Combining this observation with (\ref{eq:fscaeq}) and (\ref{eq:Lisceq}),
we conclude that
\[
\sum_{l+|\b|\leq k}|\d^{\b}_{v}\d^{l}_{\theta}\d_{t}f_{\sca}|
\leq C_{k}t^{-2}
\]
for $t\geq t_{1}$. The lemma follows.
\end{proof}

\section{Proof of the main theorems}\label{section:stab}

Finally, we are in a position to prove the main theorems. Let us 
begin with Theorem~\ref{thm:as}.

\begin{proof}[Theorem~\ref{thm:as}]
To begin with, the conclusions concerning the distribution function 
are direct consequences of Lemma~\ref{lemma:fas}. Turning to $H$ and 
$G$, we have 
\begin{equation}\label{eq:HtGt}
H_{t}=-t^{-5/2}\a^{-1/2}e^{P+\lambda/2}(K-QJ),\ \ \
G_{t}=-QH_{t}-t^{-5/2}\a^{-1/2}e^{-P+\lambda/2}J;
\end{equation}
cf. (\ref{eq:JKdef}). In order to estimate $H_{t}$ and $G_{t}$, it is 
useful to note that $P$ and $Q$ are bounded in every $C^{k}$-norm for
$t\geq t_{0}$; this follows by integrating (\ref{eq:tderbet}). 
Combining this observation with (\ref{eq:Jt}), (\ref{eq:Kt}),
(\ref{eq:lhpreest}) and Lemma~\ref{lemma:indhcon} yields
\[
\|J_{t}\|_{C^{N}}+\|K_{t}\|_{C^{N}}\leq C_{N}t^{-2}(\ln t)^{m_{N}}.
\]
Thus $J$ and $K$ are uniformly bounded in $C^{N}$. Combining 
this observation with (\ref{eq:HtGt}), (\ref{eq:roughalest}),
(\ref{eq:lhpreest}), the bound on $P$ and $Q$ in every
$C^{N}$ norm and Lemma~\ref{lemma:indhcon}, we conclude that (\ref{eq:HGest}) 
holds. Due to Corollary~\ref{cor:tderbet}, we know that (\ref{eq:PQest}) and 
(\ref{eq:altest}) hold. Combining (\ref{eq:altest}) with 
(\ref{eq:lhpreest}) and (\ref{eq:roughalest}) yields (\ref{eq:alest}). 
Let us turn to the second fundamental form. By definition,
\[
\bk_{ij}=\bk(\d_{i},\d_{j})=\langle \nabla_{\d_{i}}e_{0},\d_{j}\rangle
=\langle \nabla_{\d_{i}}(t^{1/4}e^{-\lambda/4}\d_{t}),\d_{j}\rangle
=t^{1/4}e^{-\lambda/4}\langle \nabla_{\d_{i}}\d_{t},\d_{j}\rangle
=\frac{1}{2}t^{1/4}e^{-\lambda/4}\d_{t}g_{ij},
\]
where we have used the fact that $\d_{t}$ and $\d_{i}$ are perpendicular. 
In what follows, we would like to prove that 
\begin{equation}\label{eq:kmHg}
\|\bk_{ij}-\mH\bg_{ij}\|_{C^{N}}\leq C_{N},
\end{equation}
where $\mH=(\Lambda/3)^{1/2}$. Consider the spatial components of the metric
(\ref{eq:metric}). If a time derivative hits one of $P$, $Q$, $G$ or $H$ in
such a component, then the resulting expression is bounded in $C^{N}$ after
it has been multiplied by $t^{1/4}e^{-\lambda/4}/2$; this is due to 
(\ref{eq:HGest}), (\ref{eq:PQest}) and (\ref{eq:alest}). As a consequence, 
what we need to consider are the components of the tensor field
\begin{equation*}
\begin{split}
& \frac{1}{2}t^{1/4}e^{-\lambda/4}\left[
\left(-\frac{1}{2t}+\frac{1}{2}\lambda_{t}-\frac{\a_{t}}{\a}\right)
t^{-1/2}e^{\lambda/2}\a^{-1}d\theta^{2}+
e^{P}[dx+Qdy+(G+QH)d\theta]^{2}\right.\\
& \left.\phantom{\frac{1}{2}t^{5/4}}+
e^{-P}(dy+Hd\theta)^{2}\right]=\frac{1}{2}t^{-3/4}e^{-\lambda/4}\bar{g}
+\frac{1}{2}t^{-1/4}\a^{-1}e^{\lambda/4}\left[\frac{1}{2}\left(\lambda_{t}+\frac{3}{t}
\right)-\left(\frac{\a_{t}}{\a}+\frac{3}{t}\right)\right]d\theta^{2},
\end{split}
\end{equation*}
where $\bar{g}$ is the spatial part of the metric. Note that the components of 
the second term on the right hand side are bounded in $C^{N}$; this is a 
consequence of (\ref{eq:altest}) and (\ref{eq:alest}). Moreover,
\[
\frac{1}{2}t^{-3/4}e^{-\lambda/4}=\mH e^{-\hl/4},
\]
where $\hl$ is defined in (\ref{eq:hldef}). In order to prove (\ref{eq:kmHg}), 
it is thus sufficient to demonstrate that 
\[
\|(e^{-\hl/4}-1)\bg_{ij}\|_{C^{N}}\leq C_{N}.
\]
However, $t(e^{-\hl/4}-1)$ is bounded in $C^{N}$ due to (\ref{eq:alest})
and $t^{-1}\bg_{ij}$ is bounded in $C^{N}$ due to  (\ref{eq:HGest}),
(\ref{eq:PQest}) and (\ref{eq:alest}). Thus (\ref{eq:kmHg}) holds. Let us 
define $\bg_{\infty}$ by (\ref{eq:bginfdef}); note that this is a smooth 
Riemannian metric on $\tn{3}$. Moreover,
\begin{equation}\label{eq:rsmmlm}
\|t^{-1}\bar{g}_{ij}(t,\cdot)-\bar{g}_{\infty,ij}\|_{C^{N}}\leq C_{N}t^{-1}
\end{equation}
due to  (\ref{eq:HGest}), (\ref{eq:PQest}) and (\ref{eq:alest}).
Combining this estimate with (\ref{eq:kmHg}), we obtain (\ref{eq:bgbkest}).
The proof of future causal geodesic completeness is not very complicated, given
the above estimates. It is, e.g., possible to proceed as in the proof of 
\cite[Propositions~3 and 4, pp.~189--191]{mig4}. However, we shall not 
write down the details, since the result follows from the proof of 
Theorem~\ref{thm:stab}.
\end{proof}

Let us now turn to the proof of the cosmic no-hair conjecture. 

\begin{proof}[Theorem~\ref{thm:conoha}]
We need to verify that the conditions stated in Definition~\ref{def:fasdes}
are fulfilled. To begin with, note that $\Sigma_{t}=\{t\}\times\tn{3}$
is a Cauchy hypersurface for each $t\in (t_{0},\infty)$. An argument is 
required in order to justify this statement, but since the details are 
quite standard (cf., e.g., the proof of \cite[Proposition~20.3, p. 215]{minbok},
in particular \cite[p. 217]{minbok}), we have omitted the details. 
Let $\g=(\g^{0},\bga)$ be a future directed
and inextendible causal curve, defined on $I_{\g}=(s_{-},s_{+})$. Reparametrising
the curve, if necessary, we can assume that $\g^{0}(s)=s$ and that 
$I_{\g}=(t_{0},\infty)$. Due to the causality of the curve, we know that 
\[
\bg_{ij}[\g(t)]\dot{\bga}^{i}(t)\dot{\bga}^{j}(t)\leq -g_{00}[\g(t)]\leq
Ct^{-2}
\]
for $t\geq t_{1}$, where $t_{1}=t_{0}+2$ and we have used (\ref{eq:alest}). 
Combining this 
estimate with (\ref{eq:bgbkest}), we conclude that there is a 
constant $K_{0}>1$ (independent of the curve $\g$, as long as 
$\g^{0}(t)=t$) such that 
\[
\bg_{\infty,ij}[\bga(t)]\dot{\bga}^{i}(t)\dot{\bga}^{j}(t)\leq 
\frac{1}{4}K_{0}^{2}\mH^{-2}t^{-3}
\]
for all $t\geq t_{1}$. In particular, there is thus 
an $\bx_{0}\in\tn{3}$ such that 
$d_{\infty}[\bga(t),\bx_{0}]\leq K_{0}\mH^{-1}t^{-1/2}$ for all $t\geq t_{1}$, 
where $d_{\infty}$ is the topological metric on $\tn{3}$ induced by $\bg_{\infty}$.
Let $\einj>0$ denote the injectivity radius of $(\tn{3},\bg_{\infty})$; 
the injectivity radius of a point $p$ of a Riemannian manifold, denoted 
$\mathrm{inj}(p)$, is defined in \cite[Definition~9.2, p. 142]{petersen},
and the injectivity radius of a Riemannian manifold is the infimum of 
the injectivity radii of the points of the manifold; that $\einj>0$ follows
from the continuity of $\mathrm{inj}$, cf. \cite[p. 178]{petersen};
readers interested in a more quantitative bound on the injectivity radius
are referred to
\cite[Lemma~51, p.~319]{petersen}. Then, given $\bx\in\tn{3}$, there are 
normal coordinates on $B_{\einj}(\bx)$, where distances are computed 
using $d_{\infty}$; cf. \cite[pp.~72--73]{oneill} for a definition of
normal coordinates. Fix $t_{-}>K_{0}^{2}\mH^{-2}\einj^{-2}+1$
(note that $t_{-}$ is independent of the curve). Due to the above arguments
and definitions, 
\begin{equation}\label{eq:Jmgpsi}
J^{-}(\g)\cap J^{+}(\Sigma_{t_{-}})\subseteq\{(t,\bx)\in I\times\tn{3} :
t\geq t_{-}, \ d_{\infty}(\bx,\bx_{0})\leq K_{0}\mH^{-1}t^{-1/2}\}
\end{equation}
and the closed ball of radius $K_{0}\mH^{-1}t_{-}^{-1/2}$ (with respect to 
$d_{\infty}$) and 
centre $\bx_{0}$ is contained in the domain of definition of  
normal coordinates $\bsfx$ with centre at $\bx_{0}$. Denote the set appearing 
on the right hand side of (\ref{eq:Jmgpsi}) by $D_{t_{-},K_{0},\bx_{0}}$. Let us 
define
\[
\psi(\tau,\bxi)=[e^{2\mH\tau},\bsfx^{-1}(\bxi)].
\]
Then 
\[
\psi^{-1}(D_{t_{-},K_{0},\bx_{0}})=\{(\tau,\bxi)\in J\times\rn{3}:\tau\geq T_{0},\ \
|\bxi|\leq K_{0}\mH^{-1}e^{-\mH\tau}\},
\]
where $T_{0}=\mH^{-1}\ln t_{-}/2$, and $J=(\tau_{0},\infty)$, where 
$\tau_{0}=\mH^{-1}\ln t_{0}/2$; if $t_{0}=0$, then $\tau_{0}=-\infty$. Letting 
$T$ be slightly smaller than $T_{0}$
and $K$ be slightly larger than $K_{0}$, the map $\psi$ is still defined
on $C_{\Lambda,K,T}$; cf. (\ref{eq:clkt}). In analogy with 
Definition~\ref{def:fasdes}, let $D=\psi(C_{\Lambda,K,T})$ and 
$R(\tau)=K\mH^{-1}e^{-\mH\tau}$. Due to the above arguments, we have already 
verified all of the requirements of Definition~\ref{def:fasdes} 
(with $\Sigma=\Sigma_{t_{-}}$ etc.) but
the last one; i.e., that (\ref{eq:asdS}) hold.

In order to proceed, let $\bsfg_{\infty,ij}$ denote the components of 
$\bg_{\infty}$ with respect to the coordinates $\bsfx$. Let 
$\bsfg_{ij}(\tau,\cdot)$ and $\bsfk_{ij}(\tau,\cdot)$ denote the 
components of $\bg(e^{2\mH\tau},\cdot)$ and $\bk(e^{2\mH\tau},\cdot)$, respectively, 
with respect to the coordinates $\bsfx$. Moreover, consider 
$\bsfg_{\infty,ij}$, $\bsfg_{ij}(\tau,\cdot)$ and $\bsfk_{ij}(\tau,\cdot)$
to be functions on the image of $\bsfx$; i.e., on $B_{\einj}(0)$, with 
the origin corresponding to $\bx_{0}$. Note that the estimates 
(\ref{eq:kmHg}) and (\ref{eq:rsmmlm}) hold with $\bg_{ij}$ replaced by 
$\bsfg_{ij}$ etc., assuming the domain on which the $C^{N}$ norm is computed 
is suitably restricted. In particular, letting $S_{\tau}$ be as in 
Definition~\ref{def:fasdes}, the following estimate holds:
\[
\|e^{-2\mH\tau}\bsfk_{ij}(\tau,\cdot)-\mH\bsfg_{\infty,ij}\|_{C^{N}(S_{\tau})}+
\|e^{-2\mH\tau}\bsfg_{ij}(\tau,\cdot)-\bsfg_{\infty,ij}\|_{C^{N}(S_{\tau})}
\leq C_{N}e^{-2\mH\tau}
\]
for all $\tau\geq T$. Note that 
\[
\bsfg_{\infty,ij}(0)=\de_{ij},\ \ \
(\d_{l}\bsfg_{\infty,ij})(0)=0
\]
by the definition of the coordinates $\bsfx$. As a consequence, if $\bxi\in
S_{\tau}$, then 
\[
|(\d_{l}\bsfg_{\infty,ij})(\bxi)|=
\left|\int_{0}^{1}\frac{d}{ds}[(\d_{l}\bsfg_{\infty,ij})(s\bxi)]ds
\right|\leq Ce^{-\mH\tau}.
\]
Moreover, 
\[
|\bsfg_{\infty,ij}(\bxi)-\de_{ij}|\leq Ce^{-2\mH\tau}
\]
for $\tau\geq T$ and $\bxi\in S_{\tau}$. In particular, we thus have 
\[
\|e^{-2\mH\tau}\bsfk_{ij}(t,\cdot)-\mH\de_{ij}\|_{C^{0}(S_{\tau})}+
\|e^{-2\mH\tau}\bsfg_{ij}(t,\cdot)-\de_{ij}\|_{C^{0}(S_{\tau})}\leq Ce^{-2\mH\tau}
\]
for all $\tau\geq T$. Letting $\bg_{\dS}(\tau,\cdot)$ and 
$\bk_{\dS}(\tau,\cdot)$ be defined as in Definition~\ref{def:fasdes},
we conclude, in particular, that 
\[
\|\bg_{\dS}(\tau,\cdot)-\bg(\tau,\cdot)\|_{C^{0}_{\dS}(S_{\tau})}
+\|\bk_{\dS}(\tau,\cdot)-\bk(\tau,\cdot)\|_{C^{0}_{\dS}(S_{\tau})}\leq Ce^{-2\mH\tau}
\]
for all $\tau\geq T$. In fact, due to the above estimates, we have 
\[
\|\bg_{\dS}(\tau,\cdot)-\bg(\tau,\cdot)\|_{C^{N}_{\dS}(S_{\tau})}
+\|\bk_{\dS}(\tau,\cdot)-\bk(\tau,\cdot)\|_{C^{N}_{\dS}(S_{\tau})}\leq C_{N}e^{-2\mH\tau}
\]
for all $\tau\geq T$. The theorem follows. 
\end{proof}

Finally, we are in a position to prove Theorem~\ref{thm:stab}.

\begin{proof}[Theorem~\ref{thm:stab}]
The idea of the proof is to demonstrate that for late enough $t$, 
there is a neighbourhood of each point in $\{t\}\times\tn{3}$ such that 
Theorem~\ref{thm:main} applies in the neighbourhood; 
combining this observation with Cauchy stability, cf. 
Theorem~\ref{thm:caus}, then yields the desired result. 

Fixing $N$, there is an $\e>0$ and a constant $C_{N}$ such that for 
every $\bar{x}\in\tn{3}$, there are normal coordinates 
$\bar{\mathsf{x}}$ on $W=B_{\e}(\bar{x})$ with respect to $\bar{g}_{\infty}$,
where distances on $\tn{3}$ are measured using the topological metric induced
by $\bar{g}_{\infty}$. Moreover, if $\bar{g}_{\infty,ij}$ are the components of 
$\bar{g}_{\infty}$ with respect to $\bar{\mathsf{x}}$ and $\bar{g}_{\infty}^{ij}$ 
are the components of the inverse, then the derivatives of 
$\bar{g}_{\infty,ij}$ and $\bar{g}_{\infty}^{ij}$  up to order $N$ with respect to 
the coordinates $\bar{\mathsf{x}}$ on $W$ are bounded by $C_{N}$. 
Moreover, the derivatives of $\bar{\mathsf{x}}$, considered as functions
of $(\theta,x,y)$, up to order $N+1$ are bounded by $C_{N}$. Similarly, the
derivatives of $\bar{\mathsf{x}}^{-1}$ up to order $N+1$ are bounded by $C_{N}$.
The arguments required in order to prove the above statements are similar to 
those presented in the proof of \cite[Lemma~34.9, p.~650]{stab}. The
important point is that we obtain uniform bounds which hold regardless of the 
base point.

Define $K$ by the condition $e^{K}=4/\mH$ and define the coordinates 
$\bsfy=e^{-K}t^{1/2}\bsfx$ on $W$. Note that the range of $\bsfy$ is 
$B_{e^{-K}t^{1/2}\e}(0)$. For $t$ large enough (the bound being independent of the 
base point $\bx$), we consequently have $e^{-K}t^{1/2}\e>1$. From now on, we 
assume $t$ to be large enough that this is the case. Moreover, we assume the 
coordinates $\bsfy$ to be defined on the image of $B_{1}(0)$ under $\bsfy^{-1}$. 
Let $\bsfg_{ij}$ denote the components of $\bg(t,\cdot)$ with respect to the 
coordinates $\bsfy$. Moreover, let $\bsfg_{\infty,ij}$ denote the components 
of $\bg_{\infty}$ with respect to the coordinates $\bsfx$. Due to 
(\ref{eq:rsmmlm}), we have 
\[
|\d^{\a}_{\bxi}[(e^{-2K}\bsfg_{ij}-\bsfg_{\infty,ij})\circ\bsfy^{-1}](\bxi)|
\leq C_{N}t^{-1-|\a|/2}
\]
for $\bxi\in B_{1}(0)$ and $|\a|\leq N$; note that 
\[
\bsfy^{-1}(\bxi)=\bsfx^{-1}(e^{K}t^{-1/2}\bxi).
\]
Since $\bsfg_{\infty,ij}\circ\bsfy^{-1}(0)=\de_{ij}$, 
\begin{equation}\label{eq:firstver}
|e^{-2K}\bsfg_{ij}\circ\bsfy^{-1}-\de_{ij}|\leq C_{N}t^{-1/2}
\end{equation}
on $B_{1}(0)$; in particular, (\ref{eq:gijdeijsubest}) holds with a margin
for $t$ large enough. Similarly, 
\[
|\d_{m}(\bsfg_{\infty,ij}\circ\bsfy^{-1})(u)|=
\left|\sum_{l=1}^{n}\int_{0}^{1}\d_{l}\d_{m}(\bsfg_{\infty,ij}\circ\bsfy^{-1})(su)
u^{l}ds
\right|\leq Ct^{-1}
\]
on $B_{1}(0)$. To conclude, we thus have 
\begin{equation}\label{eq:secondver}
\|e^{-2K}(\d_{k}\bsfg_{ij}\circ\bsfy^{-1})\|_{C^{N-1}[B_{1}(0)]}\leq C_{N}t^{-1}.
\end{equation}
Due to (\ref{eq:kmHg}), we also have 
\begin{equation}\label{eq:thirdver}
\|(\bsfk_{ij}-\mH\bsfg_{ij})\circ\bsfy^{-1}\|_{C^{N}[B_{1}(0)]}\leq C_{N}t^{-1},
\end{equation}
where $\bsfk_{ij}$ denotes the components of the second fundamental form
$\bk$ calculated using the coordinates $\bsfy$. In the end, we shall 
choose $\krov=\ln t/2$. As a consequence, (\ref{eq:secondver}) and 
(\ref{eq:thirdver}) imply that (\ref{eq:maincon}) holds with a margin (note 
that the $\tn{2}$-symmetric background solution is such that $\bphi_{i}$, 
$i=0,1$, vanish), assuming $N\geq 5$; note that in order
to prove Theorem~\ref{thm:stab}, it is sufficient to apply
Theorem~\ref{thm:main} with $k_{0}=4$. 

Let us turn to the distribution function. First of all, recall that if 
$\Sigma$ is a spacelike hypersurface in a Lorentz manifold, and $\mff$ is a 
distribution function defined on the mass shell, then the initial datum for
the distribution function (denoted $\bmff$ and defined on $T\Sigma$) induced by 
$\mff$ on $\Sigma$ is given by 
\[
\bmff=\mff\circ\pros{\Sigma}^{-1},
\]
where $\pros{\Sigma}$ is the projection from the mass shell over $\Sigma$
to $T\Sigma$; in other words, if $p\in\mP_{r}$ for some $r\in\Sigma$
and $N_{r}$ is the future directed unit normal to $\Sigma$ at $r$, then
$\pros{\Sigma}(p)$ is the element of $T_{r}\Sigma$ corresponding
to $p+g(p,N_{r})N_{r}$.
In our case, we are interested in the hypersurface 
$\Sigma=\{ t\}\times\tn{3}$. If $z=(t,\theta,x,y)$, 
\[
\bmff\left(\bp^{i}\left. e_{i}\right|_{z}\right)=
\mff\left(p^{\a}\left. e_{\a}\right|_{z}\right)=
f(t,\theta,\bp),
\]
where $\bp=(\bp^{1},\bp^{2},\bp^{3})$, 
\[
p^{0}=[1+(\bp^{1})^{2}+(\bp^{2})^{2}+(\bp^{3})^{2}]^{1/2}
\]
and $p^{i}=\bp^{i}$. However, in the application of 
Theorem~\ref{thm:main}, we need to express
$\bmff$ with respect to the coordinates $\bsfy$. Consequently, we are 
interested in 
\[
\fb(\bz,\bp)=\bmff\left(\bp^{i}\left. \d_{\bsfy^{i}}\right|_{z}\right)
=\bmff\left(\bp^{i}A_{i}^{\phantom{i}j}(\bz)\left. e_{j}\right|_{z}\right)
=f[t,\theta,v(\bz,\bp)],
\]
where $z=(t,\bz)$, 
\[
v(\bz,\bp)=(\bp^{i}A_{i}^{\phantom{i}1}(\bz),\bp^{i}A_{i}^{\phantom{i}2}(\bz),
\bp^{i}A_{i}^{\phantom{i}3}(\bz))
\]
and $A_{i}^{\phantom{i}j}$ is defined by the requirement that 
\[
\left. \d_{\bsfy^{i}}\right|_{z}=
A_{i}^{\phantom{i}j}(\bz)\left. e_{j}\right|_{z}.
\]
Thus
\[
A_{i}^{\phantom{i}j}(\bz)=\left\langle\left. \d_{\bsfy^{i}}\right|_{z},
\left. e_{j}\right|_{z}\right\rangle
=e^{K}t^{-1/2}\left\langle\left. \d_{\bsfx^{i}}\right|_{z},
\left. e_{j}\right|_{z}\right\rangle
=e^{K}t^{-1/2}
\frac{\d\bsfz^{l}}{\d\bsfx^{i}}(z)
\left\langle\left. \d_{\bsfz^{l}}\right|_{z},
\left. e_{j}\right|_{z}\right\rangle,
\]
where $\bsfz$ correspond to the standard coordinates on the torus (which 
are locally well defined). In particular, $\d_{\bsfz^{1}}=\d_{\theta}$,
$\d_{\bsfz^{2}}=\d_{x}$ and $\d_{\bsfz^{3}}=\d_{y}$. Due to the observations
made at the beginning of the proof, (\ref{eq:ONframe}) and 
Theorem~\ref{thm:as}, it is clear that all derivatives of 
$A_{i}^{\phantom{i}j}$ up to order $N$ are uniformly bounded on the domain of 
$\bsfy$, the bound being independent of the base point $\bx$ and the time
$t$ (assuming $t$ to be sufficiently large). What we need to estimate is 
\[
\sum_{|\a|+|\b|\leq k_{0}}\irn{3}\int_{\bsfy(U)}(e^{-w})^{2|\b|}\langle
e^{w}\bp \rangle^{2\mu+2|\b|}|\d^{\a}_{\bxi}
\d^{\b}_{\bp}\sffb_{\bsfy}|^{2}(\bxi,\bp)d\bxi d\bp,
\]
cf. (\ref{eq:flocnormmth}) and (\ref{eq:fbcond}), where 
\[
\sffb_{\bsfy}(\bxi,\bp)=\fb[\bsfy^{-1}(\bxi),\bp]
=\fb[\bsfx^{-1}(e^{K}t^{-1/2}\bxi),\bp]
=f[t,\bsfz^{1}\circ\bsfx^{-1}(e^{K}t^{-1/2}\bxi),
v(\bsfx^{-1}(e^{K}t^{-1/2}\bxi),\bp)]
\]
and the constant $w$ remains to be specified. 
Note that all derivatives of $\bsfx^{-1}$ and $\bsfz^{1}\circ\bsfx^{-1}$
up to order $N$ are uniformly bounded. As a consequence, 
\[
\d^{\b}_{\bp}\sffb_{\bsfy}(\bxi,\bp)
=\sum_{|\g|=|\b|}\d^{\g}_{v}f[t,\bsfz^{1}\circ\bsfx^{-1}(e^{K}t^{-1/2}\bxi),
v(\bsfx^{-1}(e^{K}t^{-1/2}\bxi),\bp)]\psi_{\g}(t^{-1/2}\bxi)
\]
for functions $\psi_{\g}$ with bounded derivatives; note that 
\[
\frac{\d v^{i}}{\d \bp^{j}}(\bsfx^{-1}(e^{K}t^{-1/2}\bxi),\bp)=
A_{j}^{\phantom{j}i}\circ\bsfx^{-1}(e^{K}t^{-1/2}\bxi).
\]
As a consequence, $\d^{\a}_{\bxi}\d^{\b}_{\bp}\sffb_{\bsfy}(\bxi,\bp)$
consists of sums of terms of the form 
\[
t^{-|\a|/2}
\d_{\theta}^{l}\d^{\g+\de}_{v}f[t,\bsfz^{1}\circ\bsfx^{-1}(e^{K}t^{-1/2}\bxi),
v(\bsfx^{-1}(e^{K}t^{-1/2}\bxi),\bp)]\phi_{\g,\de,l}(t^{-1/2}\bxi)\bp^{\lambda},
\]
where $|\lambda|=|\de|$, $|\g|=|\b|$, $l+|\de|\leq |\a|$, $\phi_{\g,\de,l}$ are 
bounded functions and $\bp^{\lambda}=(\bp^{1})^{\lambda_{1}}(\bp^{2})^{\lambda_{2}}
(\bp^{3})^{\lambda_{3}}$. On the other hand, we know that 
\[
f(t,\theta,v)=f_{\sca}(t,\theta,t^{1/2}v),
\]
where $f_{\sca}$ converges to a smooth function with compact support with 
respect to every every $C^{k}$-norm. Moreover, we know that $f_{\sca}$ has
uniformly compact support. Note also that
\[
\d_{\theta}^{l}\d^{\g+\de}_{v}f(t,\theta,v)
=t^{(|\g|+|\de|)/2}
\d_{\theta}^{l}\d^{\g+\de}_{v}f_{\sca}(t,\theta,t^{1/2}v).
\]
Since there is a uniform constant $C>1$ (independent of $t$ (large enough)
and the base point $\bx$) such that 
\[
C^{-1}|\bp|\leq |v(\bsfx^{-1}(e^{K}t^{-1/2}\bxi),\bp)|\leq C|\bp|,
\]
we conclude that 
\begin{equation*}
\begin{split}
&t^{-|\a|/2}
|\d_{\theta}^{l}\d^{\g+\de}_{v}f[t,\bsfz^{1}\circ\bsfx^{-1}(e^{K}t^{-1/2}\bxi),
v(\bsfx^{-1}(e^{K}t^{-1/2}\bxi),\bp)]\phi_{\g,\de,l}(t^{-1/2}\bxi)\bp^{\lambda}|\\
\leq & C_{\a,\b}t^{(|\g|-|\a|)/2}\chi(t^{1/2}\bp),
\end{split}
\end{equation*}
where $\chi$ is a smooth function with compact support. As a consequence, 
\[
|\d^{\a}_{\bxi}\d^{\b}_{\bp}\sffb_{\bsfy}(\bxi,\bp)|\leq 
C_{\a,\b}t^{(|\b|-|\a|)/2}\chi(t^{1/2}\bp)
\]
for $\bxi\in B_{1}(0)$. Let us now define $w=K+\krov$, where 
$\krov=\ln t/2$. Then 
\begin{equation*}
\begin{split}
 & \irn{3}\int_{\bsfy(U)}(e^{-w})^{2|\b|}\langle
e^{w}\bp \rangle^{2\mu+2|\b|}|\d^{\a}_{\bxi}
\d^{\b}_{\bp}\sffb_{\bsfy}|^{2}(\bxi,\bp)d\bxi d\bp \\
\leq & C_{\a,\b}^{2}\irn{3}\int_{B_{1}(0)}e^{-2|\b|K}t^{-|\b|}
\langle e^{K}t^{1/2}\bp \rangle^{2\mu+2|\b|}
t^{|\b|-|\a|}\chi^{2}(t^{1/2}\bp)d\bxi d\bp\leq
C_{\mu,\a,\b}t^{-|\a|-3/2}.
\end{split}
\end{equation*}
The square root of the right hand side of this expression should be compared 
with the right hand side of (\ref{eq:fbcond}):
\[
\mH^{2}\varepsilon^{5/2}e^{-3K/2-\krov}
=\mH^{2}\varepsilon^{5/2}e^{-3K/2}t^{-1/2}.
\]
Clearly, we have a margin. As a consequence, for $t$ large enough there is,
for every $\bx\in\tn{3}$, a neighbourhood of $\bx$ such that (\ref{eq:fbcond})
holds with a margin. Due to (\ref{eq:firstver}), (\ref{eq:secondver}) and 
(\ref{eq:thirdver}), we also know that (\ref{eq:gijdeijsubest}) and 
(\ref{eq:maincon}) hold with a margin. We can thus apply 
Theorem~\ref{thm:main} with the given $\mu>5/2$ and $k_{0}=4$. 
In addition, the covering of $\tn{3}$ obtained by taking the neighbourhoods
$\bsfy^{-1}[B_{1/4}(0)]$ corresponding to varying base points $\bx$ has
a finite subcovering. Appealing to Cauchy stability, Theorem~\ref{thm:caus},
we conclude that there is an $\e>0$ with the properties stated in the theorem. 
\end{proof}

\section*{Acknowledgements}

The second author would like to acknowledge the support of the G\"{o}ran 
Gustafsson Foundation, the G\"{o}ran Gustafsson Foundation for Research in 
Natural Sciences and Medicine, and the Swedish Research Council. This article 
was in part written during a stay of both authors at the Erwin Schr\"{o}dinger 
Institute in Vienna during the programme 'Dynamics of General Relativity: 
Black Holes and Asymptotics'.

\appendix

\section{Derivation of the equations}\label{section:derofequations}

The purpose of this appendix is to compute the Einstein tensor associated with 
the metric (\ref{eq:metric}), and to derive an expression for the Vlasov 
equation. Let us begin by expressing $g$ with respect to suitable one-form 
fields. Let 
\begin{eqnarray*}
\xi^{0} & = & t^{-1/4}e^{\lambda/4}dt,\\
\xi^{1} & = & t^{-1/4}e^{\lambda/4}\a^{-1/2}d\theta,\\
\xi^{2} & = & t^{1/2}e^{P/2}(dx+Qdy+(G+QH)d\theta),\\
\xi^{3} & = & t^{1/2}e^{-P/2}(dy+Hd\theta).
\end{eqnarray*}
With respect to these one-form fields, the metric can be written 
\[
g=-\xi^{0}\otimes\xi^{0}+\sum_{i=1}^{3}\xi^{i}\otimes\xi^{i}.
\]
Using the orthonormal frame $\{e_{\a}\}$ introduced in (\ref{eq:ONframe}),
it can be verified that $\xi^{\a}(e_{\b})=\de^{\a}_{\b}$. 

\subsection{Commutators}

Let us compute the commutators, in other words the functions $\g^{\a}_{\b\zeta}$
such that 
\[
[e_{\b},e_{\zeta}]=\g^{\a}_{\b\zeta}e_{\a}.
\]
Clearly, $\g^{\a}_{\b\zeta}=-\g^{\a}_{\zeta\b}$. Consequently, it is sufficient to 
compute $\g^{\a}_{\b\zeta}$ for $\b<\zeta$. By a straightforward computation, 
\begin{eqnarray}
\g^{0}_{01} & = & \frac{1}{4}t^{1/4}e^{-\lambda/4}\a^{1/2}\lambda_{\theta},\\
\g^{1}_{01} & = & -\frac{1}{4}t^{1/4}e^{-\lambda/4}
\left(\lambda_{t}-2\frac{\a_{t}}{\a}-\frac{1}{t}\right),\\
\g^{2}_{01} & = & t^{-3/2}e^{-P/2}J,\label{eq:gtzo}\\
\g^{3}_{01} & = & t^{-3/2}e^{P/2}(K-QJ),\label{eq:gthzo}
\end{eqnarray}
where we have used the notation (\ref{eq:JKdef}). 
Turning to $\g^{\a}_{02}$, the only $\a$ for which this object is non-zero 
is $\a=2$, and we then have 
\[
\g^{2}_{02}=-\frac{1}{2}t^{1/4}e^{-\lambda/4}\left(t^{-1}+P_{t}\right).
\]
Turning to $\g^{\a}_{03}$, the only $\a$'s for which this object is non-zero 
is $\a=2$ and $\a=3$, and we then have 
\begin{eqnarray*}
\g^{2}_{03} & = & -t^{1/4}e^{-\lambda/4}e^{P}Q_{t},\\
\g^{3}_{03} & = & -\frac{1}{2}t^{1/4}e^{-\lambda/4}\left(t^{-1}-P_{t}\right).
\end{eqnarray*}
Turning to $\g^{\a}_{12}$, the only $\a$ for which this object is non-zero 
is $\a=2$, and we then have 
\[
\g^{2}_{12}=-\frac{1}{2}t^{1/4}e^{-\lambda/4}\a^{1/2}P_{\theta}.
\]
Turning to $\g^{\a}_{13}$, the only $\a$'s for which this object is non-zero 
is $\a=2$ and $\a=3$, and we then have 
\begin{eqnarray*}
\g^{2}_{13} & = & -t^{1/4}e^{-\lambda/4}\a^{1/2}e^{P}Q_{\theta},\\
\g^{3}_{13} & = & \frac{1}{2}t^{1/4}e^{-\lambda/4}\a^{1/2}P_{\theta}.
\end{eqnarray*}
Finally, $\g^{\a}_{23}=0$. For future reference, let us make the following 
observations:
\begin{eqnarray*}
\g^{A}_{0A} & = & -t^{-3/4}e^{-\lambda/4},\\
\g^{i}_{0i} & = & \g^{1}_{01}-t^{-3/4}e^{-\lambda/4},\\
\g^{\a}_{0\a} & = & \g^{1}_{01}-t^{-3/4}e^{-\lambda/4},\\
\g^{A}_{1A} & = & 0,\\
\g^{i}_{1i} & = & 0,\\
\g^{\a}_{1\a} & = & -\g^{0}_{01},
\end{eqnarray*}
where Greek indices range from $0$ to $3$, lower case Latin indices range 
from $1$ to $3$ and capital Latin indices range from $2$ to $3$; moreover, 
Einstein's summation convention is enforced. 

\textbf{Non-zero components.} Note that in order for $\g^{\a}_{\b\zeta}$ to 
be non-zero, one of the following conditions has to be satisfied:
\begin{itemize}
\item $\{\b,\zeta\}=\{0,1\}$,
\item one of $\b$ and $\zeta$ is in $\{0,1\}$; the other is in $\{2,3\}$;
and $\a$ is in $\{2,3\}$.
\end{itemize}

\subsection{Connection coefficients}

Define the connection coefficients $\G^{\a}_{\b\zeta}$ by the relation
\[
\nabla_{e_{\b}}e_{\zeta}=\G^{\a}_{\b\zeta}e_{\a},
\]
where $\nabla$ is the Levi-Civita connection associated with the metric $g$.
Note that 
\[
\G^{0}_{\b\zeta}=-\langle\nabla_{e_{\b}}e_{\zeta},e_{0}\rangle,\ \ \
\G^{i}_{\b\zeta}=\langle\nabla_{e_{\b}}e_{\zeta},e_{i}\rangle,
\]
since the frame is orthonormal. Let us begin by recording some symmetries of 
this object. 

\textbf{Symmetries of the connection coefficients.} Due to the metricity
of the connection and the fact that the basis is orthonormal, we have 
\[
\G^{\a}_{\b\a}=0\ \ \mathrm{(no\ summation\ over}\ \a).
\]
For similar reasons, $\G^{i}_{\a j}$ is anti-symmetric
in the indices $i$ and $j$. Moreover, since $[e_{i},e_{j}]$ is perpendicular 
to $e_{0}$, we have 
\[
\G^{i}_{j0}=\langle\nabla_{e_{j}}e_{0},e_{i}\rangle
=-\langle e_{0},\nabla_{e_{j}}e_{i}\rangle
=-\langle e_{0},\nabla_{e_{i}}e_{j}\rangle
=\langle\nabla_{e_{i}}e_{0},e_{j}\rangle=\G^{j}_{i0}.
\]
Thus, $\G^{i}_{j0}$ is symmetric in $i$ and $j$. Note that the computation
also demonstrates that $\G^{i}_{j0}=\G^{0}_{ij}$. Similarly, since 
$[e_{A},e_{B}]=0$, we have 
\[
\G^{A}_{B1}=\langle\nabla_{e_{B}}e_{1},e_{A}\rangle
=-\langle e_{1},\nabla_{e_{B}}e_{A}\rangle
=-\langle e_{1},\nabla_{e_{A}}e_{B}\rangle
=\langle\nabla_{e_{A}}e_{1},e_{B}\rangle=\G^{B}_{A1}.
\]
In particular, $\G^{A}_{B1}$ is symmetric in the indices $A$ and $B$, and 
$\G^{A}_{B1}=-\G^{1}_{AB}$. 

\textbf{Connection coefficients including two or more zero indices.} That 
$\G^{0}_{\a 0}=0$ follows from the above. Moreover, using the Koszul formula, 
it can be computed that 
\[
\G^{0}_{01}=\G^{1}_{00}=\g^{0}_{01}
\]
and that the remaining components satisfy $\G^{0}_{0A}=\G^{A}_{00}=0$. 

\textbf{Connection coefficients including exactly one zero index.}
As already mentioned, $\G^{i}_{0j}$ is antisymmetric and 
$\G^{i}_{j0}=\G^{0}_{ij}$ is symmetric. It is thus sufficient to 
compute $\G^{i}_{0j}$ for $i<j$ and $\G^{i}_{j0}$ for $i\leq j$. 
We have 
\begin{eqnarray*}
\G^{1}_{0A} & = & -\frac{1}{2}\g^{A}_{01},\\
\G^{2}_{03} & = & \frac{1}{2}\g^{2}_{03}.
\end{eqnarray*}
Moreover, 
\begin{eqnarray*}
\G^{i}_{i0} & = & -\g^{i}_{0i}\ \ \ \mathrm{(no\ summation\ over}\ i),\\
\G^{1}_{A0} & = & -\frac{1}{2}\g^{A}_{01},\\
\G^{2}_{30} & = & -\frac{1}{2}\g^{2}_{03}.
\end{eqnarray*}

\textbf{Connection coefficients including no zero index.}
Note that, due to the Koszul formula and the properties of the 
commutators, the only $\G^{i}_{jk}$'s which are non-zero are the 
ones that have one index equalling $1$ and two indices in $\{2,3\}$.  
Moreover, since $\G^{B}_{A1}=-\G^{1}_{AB}$ is symmetric and $\G^{A}_{1B}$ is 
antisymmetric, it is sufficient to calculate that 
\begin{eqnarray*}
\G^{2}_{13} & = & \frac{1}{2}\g^{2}_{13},\\
\G^{A}_{A1} & = & -\g^{A}_{1A}\ \ \ \mathrm{(no\ summation\ over}\ A),\\
\G^{2}_{31} & = & -\frac{1}{2}\g^{2}_{13}.
\end{eqnarray*}

For future reference, it is of interest to note that 
\begin{eqnarray}
\G^{\a}_{\a 1} & = & \g^{0}_{01},\label{eq:gaao}\\
\G^{i}_{i 1} & = & 0,\label{eq:giio}\\
\G^{\a}_{\a 0} & = & -\g^{1}_{01}+t^{-3/4}e^{-\lambda/4},\label{eq:gaaz}\\
\G^{\a}_{\a A} & = & 0.\label{eq:gaaA}
\end{eqnarray}

\subsection{Twist quantities}\label{ssection:twqua}

The quantities $J$ and $K$ have been defined in two different
ways; one definition is given in (\ref{eq:twq}) and another is given in
(\ref{eq:JKdef}). In the present subsection, we wish
to verify that these two definitions yield the same result. In the proof of 
this statement, it is useful to introduce different notation for the different
definitions. Let us therefore denote the $J$ and the $K$ defined in 
(\ref{eq:twq}) by $J_{\tw}$ and $K_{\tw}$ respectively. The quantities defined 
in (\ref{eq:JKdef}) we, however, still refer to as
$J$ and $K$. Due to the fact that we have calculated the connection 
coefficients using the orthonormal frame $\{e_{\a}\}$, it is convenient
to carry out the computations relative to this frame. Note, 
to this end, that 
\begin{eqnarray*}
X & = & t^{1/2}e^{P/2}e_{2},\\
Y & = & t^{1/2}e^{-P/2}e_{3}+t^{1/2}e^{P/2}Qe_{2},
\end{eqnarray*}
where $X=\d_{x}$ and $Y=\d_{y}$. In particular, if $X^{\a}$ and $Y^{\a}$ denote
the components of $X$ and $Y$ with respect to the frame $\{e_{\a}\}$, then
\begin{itemize}
\item $X^{2}=t^{1/2}e^{P/2}$ and $X^{\a}=0$ for $\a\neq 2$,
\item $Y^{2}=t^{1/2}e^{P/2}Q$, $Y^{3}=t^{1/2}e^{-P/2}$ and $Y^{0}=Y^{1}=0$. 
\end{itemize}
Consequently, 
\begin{equation}\label{eq:Jtw}
J_{\tw}=\e_{\a\b\zeta\de}X^{\a}Y^{\b}\nabla^{\zeta}X^{\de}
=\e_{23\zeta\de}X^{2}Y^{3}\nabla^{\zeta}X^{\de}=t(\nabla^{0}X^{1}-\nabla^{1}X^{0})
=-t(\nabla_{0}X^{1}+\nabla_{1}X^{0}),
\end{equation}
where the indices are frame indices, and we assume the orientation of 
$M$ to be such that $\e_{0123}=1$. Similarly,
\begin{equation}\label{eq:Ktw}
K_{\tw}=-t(\nabla_{0}Y^{1}+\nabla_{1}Y^{0}).
\end{equation}
What remains is thus to calculate $\nabla_{0}X^{1}$ etc. However, 
\[
\nabla_{\a}X^{\b}=\xi^{\b}(\nabla_{e_{\a}}X)
=\xi^{\b}[e_{\a}(X^{\zeta})e_{\zeta}+X^{\zeta}\nabla_{e_{\a}}e_{\zeta}]
=e_{\a}(X^{\b})+X^{\zeta}\Gamma_{\a\zeta}^{\b}.
\]
The calculation for $Y$ is the same. In particular, 
\[
\nabla_{0}X^{1}+\nabla_{1}X^{0}=X^{\zeta}(\Gamma_{0\zeta}^{1}+\Gamma_{1\zeta}^{0})
=X^{2}(\Gamma_{02}^{1}+\Gamma_{12}^{0})=-X^{2}\g^{2}_{01}.
\]
Combining this calculation with (\ref{eq:gtzo}) and (\ref{eq:Jtw})  yields 
\[
J_{\tw}=tX^{2}\g^{2}_{01}=t^{3/2}e^{P/2}t^{-3/2}e^{-P/2}J=J.
\]
Thus $J=J_{\tw}$. Next, let us calculate
\[
\nabla_{0}Y^{1}+\nabla_{1}Y^{0}=Y^{\zeta}(\Gamma_{0\zeta}^{1}+\Gamma_{1\zeta}^{0})
=Y^{2}(\Gamma_{02}^{1}+\Gamma_{12}^{0})
+Y^{3}(\Gamma_{03}^{1}+\Gamma_{13}^{0})=-Y^{2}\g^{2}_{01}-Y^{3}\g^{3}_{01}.
\]
Combining this observation with (\ref{eq:gtzo}), (\ref{eq:gthzo}) and 
(\ref{eq:Ktw}) yields
\[
K_{\tw}=tY^{2}\g^{2}_{01}+tY^{3}\g^{3}_{01}
=t^{3/2}e^{P/2}Qt^{-3/2}e^{-P/2}J+t^{3/2}e^{-P/2}t^{-3/2}e^{P/2}(K-QJ)=K.
\]
To conclude, we thus have $J_{\tw}=J$ and $K_{\tw}=K$.

\subsection{Auxiliary computations}

In order to simplify future calculations, let us make some observations
concerning the derivatives of the connection coefficients. Note, to begin
with, that all $\g^{i}_{0i}$ (no summation) can be written 
\[
\g^{i}_{0i}=-t^{1/4}e^{-\lambda/4}f_{i}
\]
for suitably chosen functions $f_{i}$. Thus, it can be computed that 
\begin{equation*}
\begin{split}
-e_{0}(\g^{i}_{0i}) = & -t^{1/4}e^{-\lambda/4}\d_{t}\left(-\exp\left(-\frac{1}{4}
\lambda+\frac{1}{2}\ln\a-\frac{3}{4}\ln t\right)t\a^{-1/2}f_{i}\right)\\
 = & \frac{1}{4}t^{1/4}e^{-\lambda/4}
\left(\lambda_{t}-2\frac{\a_{t}}{\a}-\frac{1}{t}\right)
\g^{i}_{0i}+t^{1/4}e^{-\lambda/4}t^{-1}\g^{i}_{0i}+t^{-1/2}e^{-\lambda/2}
\a^{1/2}\d_{t}(t\a^{-1/2}f_{i}),
\end{split}
\end{equation*}
where we do not sum over $i$. However, the factor in front of $\g^{i}_{0i}$
in the first term is given by $-\g^{1}_{01}$ and the factor in front of 
$\g^{i}_{0i}$ in the second term is given by $-\g^{2}_{02}-\g^{3}_{03}$. 
Consequently,
\[
e_{0}(\G^{0}_{ii})+\G^{\a}_{\a 0}\G^{0}_{ii}=
t^{-1/2}e^{-\lambda/2}\a^{1/2}\d_{t}(t\a^{-1/2}f_{i}),
\]
where we sum over $\a$ but not over $i$. In particular, we thus have
\begin{eqnarray}
e_{0}(\G^{0}_{11})+\G^{\a}_{\a 0}\G^{0}_{11} & = & 
\frac{1}{4}t^{-1/2}e^{-\lambda/2}\a^{1/2}\d_{t}[t\a^{-1/2}(\lambda_{t}
-2\a^{-1}\a_{t}-t^{-1})],\label{eq:ezla}\\
e_{0}(\G^{0}_{22})+\G^{\a}_{\a 0}\G^{0}_{22} & = & 
\frac{1}{2}t^{-1/2}e^{-\lambda/2}\a^{1/2}\d_{t}[t\a^{-1/2}(t^{-1}+P_{t})],
\label{eq:ezpp}\\
e_{0}(\G^{0}_{33})+\G^{\a}_{\a 0}\G^{0}_{33} & = & 
\frac{1}{2}t^{-1/2}e^{-\lambda/2}\a^{1/2}\d_{t}[t\a^{-1/2}(t^{-1}-P_{t})].\label{eq:ezpm}
\end{eqnarray}
Next, note that $\g^{0}_{01}$, $\g^{2}_{12}$ and $\g^{3}_{13}$ all can be 
written as $h_{i}=t^{1/4}e^{-\lambda/4}f_{i}$ for suitably chosen functions $f_{i}$. 
Moreover,
\[
e_{1}(h_{i})=t^{1/4}e^{-\lambda/4}\a^{1/2}
\d_{\theta}(t^{1/4}e^{-\lambda/4}f_{i})=t^{1/2}e^{-\lambda/2}\a^{1/2}\d_{\theta}f_{i}
-\g^{0}_{01}h_{i}.
\]
Since $\G^{\a}_{\a 1}=\g^{0}_{01}$, we conclude that 
\begin{eqnarray}
e_{1}(\G^{1}_{00})+\G^{\a}_{\a 1}\G^{1}_{00} & = & 
\frac{1}{4}t^{-1/2}e^{-\lambda/2}\a^{1/2}\d_{\theta}(t\a^{1/2}\lambda_{\theta}),
\label{eq:eola}\\
e_{1}(\G^{1}_{22})+\G^{\a}_{\a 1}\G^{1}_{22} & = & 
-\frac{1}{2}t^{-1/2}e^{-\lambda/2}\a^{1/2}\d_{\theta}(t\a^{1/2}P_{\theta}),
\label{eq:eott}\\
e_{1}(\G^{1}_{33})+\G^{\a}_{\a 1}\G^{1}_{33} & = & 
\frac{1}{2}t^{-1/2}e^{-\lambda/2}\a^{1/2}\d_{\theta}(t\a^{1/2}P_{\theta}).
\label{eq:eothth}
\end{eqnarray}
The expressions $\g^{2}_{03}$ and $\g^{2}_{13}$ require a somewhat different
treatment. However, similar arguments yield
\begin{eqnarray}
e_{1}(\G^{1}_{23})+\G^{1}_{23}\G^{\a}_{\a 1} & = & 
-\frac{1}{2}t^{-1/2}e^{-\lambda/2-P}\a^{1/2}\d_{\theta}(t\a^{1/2}
e^{2P}Q_{\theta})-\frac{1}{2}(\g^{3}_{13}-\g^{2}_{12})\g^{2}_{13},
\label{eq:eogotth}\\ 
e_{0}(\G^{0}_{23}) +\G^{0}_{23}\G^{\a}_{\a 0} & = & 
\frac{1}{2}t^{-1/2}e^{-\lambda/2-P}\a^{1/2}
\d_{t}(t\a^{-1/2}e^{2P}Q_{t})
+\frac{1}{2}(\g^{3}_{03}-\g^{2}_{02})\g^{2}_{03}.
\label{eq:ezgztth}
\end{eqnarray}

\subsection{Ricci curvature}

The Ricci curvature is given by 
\begin{equation*}
\begin{split}
\Ric(e_{\b},e_{\zeta}) = & \sum_{\a}\e_{\a}\langle R_{e_{\a}e_{\b}}e_{\zeta},e_{\a}\rangle
=\sum_{\a}\e_{\a}\langle \nabla_{e_{\a}}\nabla_{e_{\b}}e_{\zeta}
-\nabla_{e_{\b}}\nabla_{e_{\a}}e_{\zeta}-\nabla_{[e_{\a},e_{\b}]}e_{\zeta},e_{\a}\rangle\\
 = & \sum_{\a}\e_{\a}\langle \nabla_{e_{\a}}(\G_{\b\zeta}^{\de}e_{\de})
-\nabla_{e_{\b}}(\G_{\a\zeta}^{\de}e_{\de})-\g^{\de}_{\a\b}\G_{\de\zeta}^{\lambda}e_{\lambda}
,e_{\a}\rangle\\
 = & \sum_{\a}\e_{\a}\langle e_{\a}(\G_{\b\zeta}^{\de})e_{\de}
+\G_{\b\zeta}^{\de}\G_{\a\de}^{\lambda}e_{\lambda}
-e_{\b}(\G_{\a\zeta}^{\de})e_{\de}-\G_{\a\zeta}^{\de}\G_{\b\de}^{\lambda}e_{\lambda}
-\g^{\de}_{\a\b}\G_{\de\zeta}^{\lambda}e_{\lambda},e_{\a}\rangle\\
 = & e_{\a}(\G_{\b\zeta}^{\a})+\G_{\b\zeta}^{\de}\G_{\a\de}^{\a}
-e_{\b}(\G_{\a\zeta}^{\a})-\G_{\a\zeta}^{\de}\G_{\b\de}^{\a}
-\g^{\de}_{\a\b}\G_{\de\zeta}^{\a},
\end{split}
\end{equation*}
where $\e_{0}=-1$ and $\e_{i}=1$. Let us begin by computing the $00$ 
component:
\begin{equation*}
\begin{split}
\Ric(e_{0},e_{0}) = & e_{\a}(\G_{00}^{\a})+\G_{00}^{\de}\G_{\a\de}^{\a}
-e_{0}(\G_{\a 0}^{\a})-\G_{\a 0}^{\de}\G_{0\de}^{\a}
-\g^{\de}_{\a 0}\G_{\de 0}^{\a}\\
 = & e_{1}(\G_{00}^{1})+\G_{00}^{1}\G_{\a 1}^{\a}
-e_{0}(\G_{\a0}^{\a})-\G_{\a0}^{\b}\G_{0\b}^{\a}
-\g^{\b}_{\a 0}\G_{\b 0}^{\a}.
\end{split}
\end{equation*}
In order to simplify the expression, let us note that 
\[
e_{1}(\G_{00}^{1})+\G_{00}^{1}\G_{\a 1}^{\a}
=\frac{1}{4}t^{-1/2}e^{-\lambda/2}\a^{1/2}\d_{\theta}(t\a^{1/2}\lambda_{\theta}),
\]
where we have used (\ref{eq:eola}). Moreover,
due to (\ref{eq:gaaz}),
\[
-e_{0}(\G_{\a0}^{\a})=-e_{0}(-\g^{1}_{01}+t^{-3/4}e^{-\lambda/4})
=e_{0}(\g^{1}_{01})+\frac{1}{4}t^{-1/2}e^{-\lambda/2}(\lambda_{t}+3t^{-1}).
\]
Compute, using the symmetries of the connection coefficients, 
\[
-\G_{\a0}^{\b}\G_{0\b}^{\a}=-\G_{i0}^{0}\G_{00}^{i}
-\G_{00}^{i}\G_{0i}^{0}-\G_{i0}^{j}\G_{0j}^{i}=
-\G_{00}^{1}\G_{01}^{0}=-(\g^{0}_{01})^{2}.
\]
Finally, 
\begin{equation*}
\begin{split}
-\g^{\b}_{\a 0}\G_{\b 0}^{\a} = &-\g^{\b}_{i 0}\G_{\b 0}^{i}
=\g^{0}_{0i}\G_{0 0}^{i}+\g^{j}_{0i}\G_{j 0}^{i}
=(\g^{0}_{01})^{2}+\g^{1}_{01}\G_{1 0}^{1}+\g^{A}_{01}\G_{A0}^{1}
+\g^{B}_{0A}\G_{B0}^{A}\\
 = & (\g^{0}_{01})^{2}-(\g^{1}_{01})^{2}+\g^{A}_{01}\G_{A0}^{1}
+\g^{B}_{0A}\G_{B0}^{A}\\
 = & (\g^{0}_{01})^{2}-(\g^{1}_{01})^{2}-\frac{1}{2}(\g^{2}_{01})^{2}
-\frac{1}{2}(\g^{3}_{01})^{2}-(\g^{2}_{02})^{2}-(\g^{3}_{03})^{2}
-\frac{1}{2}(\g^{2}_{03})^{2}-\frac{1}{2}\g^{2}_{03}\g^{3}_{02}.
\end{split}
\end{equation*}
Since $\g^{3}_{02}=0$, we obtain 
\begin{equation*}
\begin{split}
\Ric(e_{0},e_{0}) = & \frac{1}{4}t^{-1/2}e^{-\lambda/2}\a^{1/2}
\d_{\theta}(t\a^{1/2}\lambda_{\theta})
+\frac{1}{4}t^{-1/2}e^{-\lambda/2}(\lambda_{t}+3t^{-1})
+e_{0}(\g^{1}_{01})\\
 & -(\g^{1}_{01})^{2}-\frac{1}{2}(\g^{2}_{01})^{2}
-\frac{1}{2}(\g^{3}_{01})^{2}-(\g^{2}_{02})^{2}-(\g^{3}_{03})^{2}
-\frac{1}{2}(\g^{2}_{03})^{2}.
\end{split}
\end{equation*}
In order to simplify this expression, it is useful combine 
(\ref{eq:gaaz}) and (\ref{eq:ezla}) in order to conclude
\[
e_{0}(\g^{1}_{01})-(\g^{1}_{01})^{2}
=-\frac{1}{4}t^{-1/2}e^{-\lambda/2}\a^{1/2}\d_{t}[t\a^{-1/2}(\lambda_{t}
-2\a^{-1}\a_{t}-t^{-1})]
+\frac{1}{4}t^{-1/2}e^{-\lambda/2}(\lambda_{t}
-2\a^{-1}\a_{t}-t^{-1}).
\]
Let us now compute
\begin{equation*}
\begin{split}
 & -\frac{1}{2}(\g^{2}_{01})^{2}
-\frac{1}{2}(\g^{3}_{01})^{2}-(\g^{2}_{02})^{2}-(\g^{3}_{03})^{2}
-\frac{1}{2}(\g^{2}_{03})^{2}\\
 = & -\frac{1}{2}t^{1/2}e^{-\lambda/2}
\left(P_{t}^{2}+e^{2P}Q_{t}^{2}+\frac{e^{\lambda/2-P}J^{2}}{t^{7/2}}
+\frac{e^{\lambda/2+P}(K-QJ)^{2}}{t^{7/2}}\right)
-\frac{1}{2}t^{-3/2}e^{-\lambda/2}.
\end{split}
\end{equation*}
Thus
\begin{equation*}
\begin{split}
\Ric(e_{0},e_{0}) = & \frac{1}{4}t^{-1/2}e^{-\lambda/2}\a^{1/2}
\left(\d_{\theta}(t\a^{1/2}\lambda_{\theta})
-\d_{t}[t\a^{-1/2}(\lambda_{t}
-2\a^{-1}\a_{t}-t^{-1})]\right)\\
 & -\frac{1}{2}t^{1/2}e^{-\lambda/2}
\left(P_{t}^{2}+e^{2P}Q_{t}^{2}+\frac{e^{\lambda/2-P}J^{2}}{t^{7/2}}
+\frac{e^{\lambda/2+P}(K-QJ)^{2}}{t^{7/2}}\right)\\
 & +\frac{1}{2}t^{-1/2}e^{-\lambda/2}\left(\lambda_{t}-\frac{\a_{t}}{\a}\right).
\end{split}
\end{equation*}
Let us compute, using (\ref{eq:gaao}), (\ref{eq:gaaA}) and the antisymmetry of 
$\G^{i}_{\a j}$ in $i$ and $j$, that 
\begin{equation*}
\begin{split}
\Ric(e_{0},e_{1}) = &  e_{\a}(\G^{\a}_{01})
+\G^{\b}_{01}\G^{\a}_{\a\b}-e_{0}(\G^{\a}_{\a 1})
-\G^{\b}_{\a 1}\G_{0\b}^{\a}
-\g^{\b}_{\a 0}\G^{\a}_{\b 1}\\
= &  e_{0}(\G^{0}_{01})+\G^{0}_{01}\G^{\a}_{\a 0}
-e_{0}(\g^{0}_{01})
-\G^{\b}_{\a 1}\G_{0\b}^{\a}
-\g^{\b}_{\a 0}\G^{\a}_{\b 1}\\
 = & \G^{0}_{01}\G^{\a}_{\a 0}-\G^{\b}_{\a 1}\G_{0\b}^{\a}
-\g^{\b}_{\a 0}\G^{\a}_{\b 1}.
\end{split}
\end{equation*}
Compute, using (\ref{eq:gaaz}), 
\[
\G^{0}_{01}\G^{\a}_{\a 0}=\g^{0}_{01}(-\g^{1}_{01}+t^{-3/4}e^{-\lambda/4})
=-\g^{0}_{01}\g^{1}_{01}+\g^{0}_{01}t^{-3/4}e^{-\lambda/4}.
\]
Moreover, using the symmetries of the connection coefficients, 
\[
-\G^{\b}_{\a 1}\G_{0\b}^{\a}=-\G^{0}_{11}\G^{1}_{00}-\G^{i}_{j1}\G^{j}_{0i}
=\g^{0}_{01}\g^{1}_{01}.
\]
Thus
\[
\G^{0}_{01}\G^{\a}_{\a 0}-\G^{\b}_{\a 1}\G_{0\b}^{\a}
=\g^{0}_{01}t^{-3/4}e^{-\lambda/4}
=\frac{1}{4}t^{-1/2}e^{-\lambda/2}\a^{1/2}\lambda_{\theta}.
\]
Finally, 
\begin{equation*}
\begin{split}
-\g^{\b}_{\a 0}\G^{\a}_{\b 1} = & \g^{\b}_{0i}\G^{i}_{\b 1}=
\g^{0}_{01}\G^{1}_{01}+\g^{j}_{0i}\G^{i}_{j 1}=
\g^{2}_{02}\G^{2}_{2 1}+\g^{2}_{03}\G^{3}_{2 1}+\g^{3}_{02}\G^{2}_{3 1}+
\g^{3}_{03}\G^{3}_{31}\\
 = & -\g^{2}_{02}\g^{2}_{12}-\frac{1}{2}\g^{2}_{03}\g^{2}_{13}
-\g^{3}_{03}\g^{3}_{13}
=-\frac{1}{2}t^{1/2}e^{-\lambda/2}\a^{1/2}
(P_{t}P_{\theta}+e^{2P}Q_{t}Q_{\theta}).
\end{split}
\end{equation*}
Thus
\begin{equation*}
\begin{split}
\Ric(e_{0},e_{1}) = &
\frac{1}{4}t^{-1/2}e^{-\lambda/2}\a^{1/2}[\lambda_{\theta}
-2t(P_{t}P_{\theta}+e^{2P}Q_{t}Q_{\theta})].
\end{split}
\end{equation*}
Let us compute, using (\ref{eq:gaaA}),
\begin{equation*}
\begin{split}
\Ric(e_{0},e_{A}) = & e_{\a}(\G^{\a}_{0A})
+\G^{\b}_{0A}\G_{\a \b}^{\a}
-e_{0}(\G_{\a A}^{\a})-\G_{\a A}^{\b}\G_{0\b}^{\a}
-\g^{\b}_{\a 0}\G_{\b A}^{\a}\\
 = & -\frac{1}{2}e_{1}(\g^{A}_{01})
+\G^{1}_{0A}\G_{\a 1}^{\a}-\G_{1 A}^{0}\G_{00}^{1}-\G_{0 A}^{1}\G_{01}^{0}
-\G_{j A}^{i}\G_{0i}^{j}
+\g^{0}_{01}\G_{0 A}^{1}+\g^{i}_{0j}\G_{i A}^{j}
\\
 = & -\frac{1}{2}e_{1}(\g^{A}_{01})
+(\G^{A}_{B1}+\G^{B}_{1A})\G^{B}_{01}+\g^{B}_{01}\G_{B A}^{1}.
\end{split}
\end{equation*}
Let us begin by considering the case $A=2$. Compute
\begin{eqnarray*}
-\frac{1}{2}e_{1}(\g^{2}_{01}) & = & 
-\frac{1}{2}t^{-5/4}e^{-P/2-\lambda/4}\a^{1/2}J_{\theta}
-\frac{1}{2}\g^{2}_{12}\g^{2}_{01},\\
(\G^{2}_{B1}+\G^{B}_{12})\G^{B}_{01} & = & -\frac{1}{2}\g^{2}_{12}\g^{2}_{01}
-\frac{1}{2}\g^{2}_{13}\g^{3}_{01},\\
\g^{B}_{01}\G_{B 2}^{1} & = & \g^{2}_{01}\g^{2}_{12}+\frac{1}{2}\g^{3}_{01}
\g^{2}_{13}.
\end{eqnarray*}
Thus
\[
\Ric(e_{0},e_{2})=-\frac{1}{2}t^{-5/4}e^{-P/2-\lambda/4}\a^{1/2}J_{\theta}.
\]
Compute
\begin{eqnarray*}
-\frac{1}{2}e_{1}(\g^{3}_{01}) & = & -\frac{1}{2}t^{-5/4}e^{P/2-\lambda/4}
\a^{1/2}(K_{\theta}-QJ_{\theta})+\frac{1}{2}\g^{2}_{12}\g^{3}_{01}
-\frac{1}{2}\g^{2}_{13}\g^{2}_{01},\\
(\G^{3}_{B1}+\G^{B}_{13})\G^{B}_{01} & = & -\frac{1}{2}\g^{3}_{13}\g^{3}_{01},\\
\g^{B}_{01}\G_{B 3}^{1} & = & \frac{1}{2}\g^{2}_{01}\g^{2}_{13}+\g^{3}_{01}
\g^{3}_{13}.
\end{eqnarray*}
Keeping in mind that $\g^{2}_{12}=-\g^{3}_{13}$, we conclude that 
\[
\Ric(e_{0},e_{3})=-\frac{1}{2}t^{-5/4}e^{P/2-\lambda/4}
\a^{1/2}(K_{\theta}-QJ_{\theta}).
\]
Compute, using the symmetries of the connection coefficients,
\begin{equation*}
\begin{split}
\Ric(e_{1},e_{1}) = & e_{\a}(\G_{11}^{\a})+\G_{11}^{\de}\G_{\a\de}^{\a}
-e_{1}(\G_{\a 1}^{\a})-\G_{\a 1}^{\de}\G_{1\de}^{\a}
-\g^{\de}_{\a 1}\G_{\de 1}^{\a}\\
 = & e_{0}(\G_{11}^{0})+\G_{11}^{0}\G_{\a 0}^{\a}
-e_{1}(\g^{0}_{01})-\G_{0 1}^{A}\G_{1A}^{0}
-\G_{i 1}^{0}\G_{1 0}^{i}-\G_{i 1}^{j}\G_{1j}^{i}
-\g^{0}_{0 1}\G_{0 1}^{0}-\g^{1}_{0 1}\G_{1 1}^{0}\\
 & -\g^{A}_{0 1}\G_{A 1}^{0}-\g^{B}_{A 1}\G_{B 1}^{A}\\
 = & e_{0}(\G_{11}^{0})+\G_{11}^{0}\G_{\a 0}^{\a}
-e_{1}(\g^{0}_{01})+\frac{1}{4}\g^{A}_{01}\g^{A}_{01}
-\frac{1}{4}\g^{A}_{01}\g^{A}_{01}
+\g^{1}_{01}\G^{0}_{11}
-(\g^{0}_{0 1})^{2}\\
 & -\g^{1}_{0 1}\G_{1 1}^{0}-\g^{A}_{0 1}\G_{A 1}^{0}-\g^{B}_{A 1}\G_{B 1}^{A}\\
 = & e_{0}(\G_{11}^{0})+\G_{11}^{0}\G_{\a 0}^{\a}
-e_{1}(\g^{0}_{01})-(\g^{0}_{0 1})^{2}
-\g^{A}_{0 1}\G_{A 1}^{0}-\g^{B}_{A 1}\G_{B 1}^{A}.
\end{split}
\end{equation*}
Due to (\ref{eq:ezla}) and (\ref{eq:eola}), we know that the sum of 
the first four terms is given by 
\[
\frac{1}{4}t^{-1/2}e^{-\lambda/2}\a^{1/2}
\left(\d_{t}[t\a^{-1/2}(\lambda_{t}
-2\a^{-1}\a_{t}-t^{-1})]-\d_{\theta}(t\a^{1/2}\lambda_{\theta})\right).
\]
It can also be computed that 
\begin{eqnarray*}
-\g^{A}_{0 1}\G_{A 1}^{0} & = & \frac{1}{2}t^{-3}[e^{-P}J^{2}+e^{P}(K-QJ)^{2}],\\
-\g^{B}_{A 1}\G_{B 1}^{A} & = & -\frac{1}{2}t^{1/2}e^{-\lambda/2}\a
(P_{\theta}^{2}+e^{2P}Q_{\theta}^{2}).
\end{eqnarray*}
Thus
\begin{equation*}
\begin{split}
\Ric(e_{1},e_{1}) = & \frac{1}{4}t^{-1/2}e^{-\lambda/2}\a^{1/2}
\left(\d_{t}[t\a^{-1/2}(\lambda_{t}
-2\a^{-1}\a_{t}-t^{-1})]-\d_{\theta}(t\a^{1/2}\lambda_{\theta})\right)\\
 & +\frac{1}{2}t^{1/2}e^{-\lambda/2}
\left(\frac{e^{\lambda/2-P}J^{2}}{t^{7/2}}+\frac{e^{\lambda/2+P}(K-QJ)^{2}}{t^{7/2}}
\right)-\frac{1}{2}t^{1/2}e^{-\lambda/2}\a(P_{\theta}^{2}+e^{2P}Q_{\theta}^{2}).
\end{split}
\end{equation*}
Let us turn to 
\begin{equation*}
\begin{split}
\Ric(e_{1},e_{A}) = & e_{\a}(\G_{1A}^{\a})+\G_{1A}^{\de}\G_{\a\de}^{\a}
-e_{1}(\G^{\a}_{\a A})-\G_{\a A}^{\de}\G_{1\de}^{\a}
-\g^{\de}_{\a 1}\G_{\de A}^{\a}\\
 = & e_{0}(\G_{1A}^{0})+\G_{1A}^{0}\G_{\a 0}^{\a}+\G_{1A}^{1}\G_{\a 1}^{\a}
-\G_{i A}^{0}\G_{1 0}^{i}-\G_{0 A}^{i}\G_{1i}^{0}
-\G_{j A}^{i}\G_{1i}^{j}-\g^{0}_{i 1}\G_{0 A}^{i}\\
 & -\g^{i}_{0 1}\G_{i A}^{0}-\g^{i}_{j 1}\G_{i A}^{j}\\
 = & -\frac{1}{2}e_{0}(\g^{A}_{01})+\G^{1}_{A0}\G^{B}_{B0}-\G^{B}_{A0}\G^{1}_{B0}
-\G^{1}_{0A}\G^{1}_{10}-\G^{B}_{0A}\G^{1}_{B0}
-\g^{1}_{0 1}\G_{1 A}^{0}-\g^{B}_{0 1}\G_{B A}^{0}.
\end{split}
\end{equation*}
Compute that 
\begin{eqnarray*}
-\frac{1}{2}e_{0}(\g^{2}_{01}) & = & -\frac{1}{2}t^{-5/4}e^{-\lambda/4-P/2}J_{t}
-\frac{1}{2}\g^{3}_{03}\g^{2}_{01}-\g^{2}_{02}\g^{2}_{01},\\
-\G^{1}_{02}\G^{1}_{10}-\g^{1}_{0 1}\G_{1 2}^{0} & = & 0,\\
-\G^{B}_{20}\G^{1}_{B0}-\G^{B}_{02}\G^{1}_{B0} & = & -\frac{1}{2}\g^{2}_{02}\g^{2}_{01}
-\frac{1}{2}\g^{2}_{03}\g^{3}_{01},\\
\G^{1}_{20}\G^{B}_{B0}-\g^{B}_{01}\G^{0}_{B2} & = & \frac{3}{2}\g^{2}_{01}\g^{2}_{02}
+\frac{1}{2}\g^{2}_{01}\g^{3}_{03}+\frac{1}{2}\g^{3}_{01}\g^{2}_{03}.
\end{eqnarray*}
Adding up, we obtain 
\[
\Ric(e_{1},e_{2})=-\frac{1}{2}t^{-5/4}e^{-\lambda/4-P/2}J_{t}.
\]
Compute that 
\begin{eqnarray*}
-\frac{1}{2}e_{0}(\g^{3}_{01}) & = & -\frac{1}{2}t^{-5/4}e^{P/2-\lambda/4}
(K_{t}-QJ_{t})-\frac{1}{2}\g^{2}_{03}\g^{2}_{01}-\frac{1}{2}\g^{2}_{02}\g^{3}_{01}
-\g^{3}_{03}\g^{3}_{01},\\
-\G^{1}_{03}\G^{1}_{10}-\g^{1}_{0 1}\G_{1 3}^{0} & = & 0,\\
-\G^{B}_{30}\G^{1}_{B0}-\G^{B}_{03}\G^{1}_{B0} & = & -\frac{1}{2}\g^{3}_{03}\g^{3}_{01},\\
\G^{1}_{30}\G^{B}_{B0}-\g^{B}_{01}\G^{0}_{B3} & = & \frac{3}{2}\g^{3}_{01}\g^{3}_{03}
+\frac{1}{2}\g^{3}_{01}\g^{2}_{02}+\frac{1}{2}\g^{2}_{01}\g^{2}_{03}.
\end{eqnarray*}
Adding up, we obtain 
\[
\Ric(e_{1},e_{3})=-\frac{1}{2}t^{-5/4}e^{P/2-\lambda/4}
(K_{t}-QJ_{t}).
\]
Compute 
\begin{equation*}
\begin{split}
\Ric(e_{A},e_{B}) = & e_{\a}(\G_{AB}^{\a})+\G_{AB}^{\de}\G_{\a\de}^{\a}
-\G_{\a B}^{\de}\G_{A\de}^{\a}
-\g^{\de}_{\a A}\G_{\de B}^{\a}\\
= & e_{0}(\G_{AB}^{0})+e_{1}(\G_{AB}^{1})+\G_{AB}^{0}\G_{\a 0}^{\a}
+\G_{AB}^{1}\G_{\a 1}^{\a}
-\G_{\a B}^{\de}\G_{A\de}^{\a}
-\g^{\de}_{\a A}\G_{\de B}^{\a}.
\end{split}
\end{equation*}
Note that if $A=B=2$, the first four terms can be written
\[
\frac{1}{2}t^{-1/2}e^{-\lambda/2}\a^{1/2}\left(\d_{t}[t\a^{-1/2}
(t^{-1}+P_{t})]-\d_{\theta}(t\a^{1/2}P_{\theta})\right);
\]
cf. (\ref{eq:ezpp}) and (\ref{eq:eott}). Let us therefore turn to 
\begin{equation*}
\begin{split}
-\G_{\a 2}^{\de}\G_{2\de}^{\a}
-\g^{\de}_{\a 2}\G_{\de 2}^{\a} = & -\G_{i 2}^{0}\G_{20}^{i}
-\G_{0 2}^{i}\G_{2i}^{0}-\G_{j 2}^{i}\G_{2i}^{j}
-\g^{2}_{0 2}\G_{2 2}^{0}-\g^{2}_{1 2}\G_{22}^{1}\\
= & -\G_{20}^{i}(\G_{20}^{i}+\G_{0 2}^{i})-\G_{A2}^{1}\G_{21}^{A}
-\G_{1 2}^{A}\G_{2A}^{1}
-\g^{2}_{0 2}\G_{2 2}^{0}-\g^{2}_{1 2}\G_{22}^{1}.
\end{split}
\end{equation*}
However, it can be computed that 
\begin{eqnarray*}
-\G_{20}^{i}(\G_{20}^{i}+\G_{0 2}^{i}) & = & -\frac{1}{2}t^{-3}e^{-P}J^{2}
-\frac{1}{2}t^{1/2}e^{-\lambda/2}e^{2P}Q_{t}^{2}-(\g^{2}_{02})^{2},\\
-\G_{A2}^{1}\G_{21}^{A}
-\G_{1 2}^{A}\G_{2A}^{1} & = & \frac{1}{2}t^{1/2}e^{-\lambda/2}\a e^{2P}Q_{\theta}^{2}
+(\g^{2}_{12})^{2},\\
-\g^{2}_{0 2}\G_{2 2}^{0}-\g^{2}_{1 2}\G_{22}^{1} & = & 
(\g^{2}_{02})^{2}-(\g^{2}_{12})^{2}.
\end{eqnarray*}
Adding up,
\begin{equation*}
\begin{split}
\Ric(e_{2},e_{2}) = &
\frac{1}{2}t^{-1/2}e^{-\lambda/2}\a^{1/2}\left(\d_{t}[t\a^{-1/2}
(t^{-1}+P_{t})]-\d_{\theta}(t\a^{1/2}P_{\theta})\right)
-\frac{1}{2}t^{1/2}e^{-\lambda/2}\frac{e^{\lambda/2-P}J^{2}}{t^{7/2}}\\
 & -\frac{1}{2}t^{1/2}e^{-\lambda/2}
e^{2P}(Q_{t}^{2}-\a Q_{\theta}^{2}).
\end{split}
\end{equation*}
Next, consider
\begin{equation*}
\begin{split}
\Ric(e_{2},e_{3}) = & e_{0}(\G_{23}^{0})+e_{1}(\G_{23}^{1})+\G_{23}^{0}\G_{\a 0}^{\a}
+\G_{23}^{1}\G_{\a 1}^{\a}
-\G_{\a 3}^{\de}\G_{2\de}^{\a}
-\g^{\de}_{\a 2}\G_{\de 3}^{\a}.
\end{split}
\end{equation*}
Due to (\ref{eq:eogotth}) and (\ref{eq:ezgztth}), the sum of the first four
terms can be written
\[
\frac{1}{2}t^{-1/2}e^{-\lambda/2-P}\a^{1/2}
\left[\d_{t}(t\a^{-1/2}e^{2P}Q_{t})-\d_{\theta}(t\a^{1/2}
e^{2P}Q_{\theta})\right]-\frac{1}{2}(\g^{3}_{13}-\g^{2}_{12})\g^{2}_{13}
+\frac{1}{2}(\g^{3}_{03}-\g^{2}_{02})\g^{2}_{03}.
\]
Compute
\begin{equation*}
\begin{split}
-\G_{\a 3}^{\de}\G_{2\de}^{\a}
-\g^{\de}_{\a 2}\G_{\de 3}^{\a} = & 
-\G_{i3}^{0}\G_{20}^{i}-\G_{0 3}^{i}\G_{2i}^{0}
-\G_{j3}^{i}\G_{2i}^{j}-\g^{2}_{02}\G_{2 3}^{0}
-\g^{2}_{12}\G_{2 3}^{1}\\
 = & -(\G_{30}^{i}+\G_{0 3}^{i})\G_{20}^{i}
-\G_{A3}^{1}\G_{21}^{A}-\G_{13}^{A}\G_{2A}^{1}-\g^{2}_{02}\G_{2 3}^{0}
-\g^{2}_{12}\G_{2 3}^{1}.
\end{split}
\end{equation*}
However, 
\begin{eqnarray*}
-(\G_{30}^{i}+\G_{0 3}^{i})\G_{20}^{i} & = & -\frac{1}{2}\g^{3}_{01}\g^{2}_{01}
-\frac{1}{2}\g^{3}_{03}\g^{2}_{03},\\
-\G_{A3}^{1}\G_{21}^{A}-\G_{13}^{A}\G_{2A}^{1} & = & \frac{1}{2}\g^{3}_{13}
\g^{2}_{13},\\
-\g^{2}_{02}\G_{2 3}^{0}-\g^{2}_{12}\G_{2 3}^{1} & = & \frac{1}{2}\g^{2}_{02}
\g^{2}_{03}-\frac{1}{2}\g^{2}_{12}\g^{2}_{13}.
\end{eqnarray*}
Adding up, we obtain 
\[
\Ric(e_{2},e_{3})=\frac{1}{2}t^{-1/2}e^{-\lambda/2-P}\a^{1/2}
\left[\d_{t}(t\a^{-1/2}e^{2P}Q_{t})-\d_{\theta}(t\a^{1/2}
e^{2P}Q_{\theta})\right]-\frac{1}{2}t^{-3}J(K-QJ).
\]
Finally, let us consider
\[
\Ric(e_{3},e_{3}) = e_{0}(\G_{33}^{0})+e_{1}(\G_{33}^{1})+\G_{33}^{0}\G_{\a 0}^{\a}
+\G_{33}^{1}\G_{\a 1}^{\a}
-\G_{\a 3}^{\de}\G_{3\de}^{\a}
-\g^{\de}_{\a 3}\G_{\de 3}^{\a}.
\]
Due to (\ref{eq:ezpm}) and (\ref{eq:eothth}), the sum of the first four 
terms is 
\[
\frac{1}{2}t^{-1/2}e^{-\lambda/2}\a^{1/2}\left(
\d_{t}[t\a^{-1/2}(t^{-1}-P_{t})]
+\d_{\theta}(t\a^{1/2}P_{\theta})\right).
\]
Let us therefore compute
\begin{equation*}
\begin{split}
-\G_{\a 3}^{\de}\G_{3\de}^{\a}
-\g^{\de}_{\a 3}\G_{\de 3}^{\a} = &  -\G_{i 3}^{0}\G_{30}^{i}
-\G_{0 3}^{i}\G_{3i}^{0}-\G_{j 3}^{i}\G_{3i}^{j}
-\g^{A}_{03}\G_{A 3}^{0}-\g^{A}_{13}\G_{A3}^{1}\\
= &  -(\G_{30}^{i}+\G_{0 3}^{i})\G_{30}^{i}-\G_{A3}^{1}\G_{31}^{A}
-\G_{13}^{A}\G_{3A}^{1}
-\g^{A}_{03}\G_{A 3}^{0}-\g^{A}_{13}\G_{A3}^{1}.
\end{split}
\end{equation*}
On the other hand
\begin{eqnarray*}
-(\G_{30}^{i}+\G_{0 3}^{i})\G_{30}^{i} & = & -\frac{1}{2}(\g^{3}_{01})^{2}
-(\g^{3}_{03})^{2},\\
-\G_{A3}^{1}\G_{31}^{A}-\G_{13}^{A}\G_{3A}^{1} & = & (\g^{3}_{13})^{2},\\
-\g^{A}_{03}\G_{A3}^{0}-\g^{A}_{13}\G_{A3}^{1} & = & \frac{1}{2}(\g^{2}_{03})^{2}
+(\g^{3}_{03})^{2}-\frac{1}{2}(\g^{2}_{13})^{2}-(\g^{3}_{13})^{2}.
\end{eqnarray*}
Adding up, we obtain 
\[
-\G_{\a 3}^{\de}\G_{3\de}^{\a}
-\g^{\de}_{\a 3}\G_{\de 3}^{\a}=
-\frac{1}{2}t^{-3}e^{P}(K-QJ)^{2}+\frac{1}{2}t^{1/2}e^{-\lambda/2}
e^{2P}(Q_{t}^{2}-\a Q_{\theta}^{2}).
\]
Thus
\begin{equation*}
\begin{split}
\Ric(e_{3},e_{3}) = & \frac{1}{2}t^{-1/2}e^{-\lambda/2}\a^{1/2}\left(
\d_{t}[t\a^{-1/2}(t^{-1}-P_{t})]
+\d_{\theta}(t\a^{1/2}P_{\theta})\right)\\
 & -\frac{1}{2}t^{1/2}e^{-\lambda/2}\frac{e^{\lambda/2+P}(K-QJ)^{2}}{t^{7/2}}
+\frac{1}{2}t^{1/2}e^{-\lambda/2}e^{2P}(Q_{t}^{2}-\a Q_{\theta}^{2}).
\end{split}
\end{equation*}

\subsection{The Einstein tensor}

Adding up the above, we conclude that the scalar curvature $S$ is given by 
\begin{equation*}
\begin{split}
S = & \frac{1}{2}t^{-1/2}e^{-\lambda/2}\a^{1/2}
\left(\d_{t}[t\a^{-1/2}(\lambda_{t}
-2\a^{-1}\a_{t}-t^{-1})]
-\d_{\theta}(t\a^{1/2}\lambda_{\theta})\right)\\
 & +\frac{1}{2}t^{1/2}e^{-\lambda/2}
\left[P_{t}^{2}+e^{2P}Q_{t}^{2}-\a (P_{\theta}^{2}+e^{2P}Q_{\theta}^{2})
\right]\\
& +\frac{1}{2}t^{1/2}e^{-\lambda/2}
\left(\frac{e^{\lambda/2-P}J^{2}}{t^{7/2}}
+\frac{e^{\lambda/2+P}(K-QJ)^{2}}{t^{7/2}}\right)
-\frac{1}{2}t^{-1/2}e^{-\lambda/2}\lambda_{t}.
\end{split}
\end{equation*}
As a consequence, if $\Ein_{\a\b}=\Ein(e_{\a},e_{\b})$, then 
\begin{eqnarray*}
\Ein_{00} & = & -\frac{1}{4}t^{1/2}e^{-\lambda/2}\left[
P_{t}^{2}+\a P_{\theta}^{2}+e^{2P}(Q_{t}^{2}+\a Q_{\theta}^{2})
+\frac{e^{\lambda/2-P}J^{2}}{t^{7/2}}
+\frac{e^{\lambda/2+P}(K-QJ)^{2}}{t^{7/2}}\right]\\
 & & +\frac{1}{4}t^{-1/2}e^{-\lambda/2}
\left(\lambda_{t}-2\frac{\a_{t}}{\a}\right),\\
\Ein_{11} & = & -\frac{1}{4}t^{1/2}e^{-\lambda/2}\left[
P_{t}^{2}+\a P_{\theta}^{2}+e^{2P}(Q_{t}^{2}+\a Q_{\theta}^{2})
-\frac{e^{\lambda/2-P}J^{2}}{t^{7/2}}
-\frac{e^{\lambda/2+P}(K-QJ)^{2}}{t^{7/2}}\right]\\
& & +\frac{1}{4}t^{-1/2}e^{-\lambda/2}\lambda_{t},\\
\Ein_{22} & = & \frac{1}{2}t^{-1/2}e^{-\lambda/2}\a^{1/2}
\left(\d_{t}\left[t\a^{-1/2}\left(P_{t}-\frac{1}{2}\lambda_{t}
+\frac{\a_{t}}{\a}+\frac{3}{2t}\right)\right]
-\d_{\theta}\left[t\a^{1/2}\left(P_{\theta}-\frac{1}{2}\lambda_{\theta}\right)
\right]\right)\\
& & -\frac{1}{4}t^{1/2}e^{-\lambda/2}
\left[P_{t}^{2}+3e^{2P}Q_{t}^{2}-\a (P_{\theta}^{2}+3e^{2P}Q_{\theta}^{2})
\right]\\
& & -\frac{1}{4}t^{1/2}e^{-\lambda/2}
\left(3\frac{e^{\lambda/2-P}J^{2}}{t^{7/2}}
+\frac{e^{\lambda/2+P}(K-QJ)^{2}}{t^{7/2}}\right)
+\frac{1}{4}t^{-1/2}e^{-\lambda/2}\lambda_{t}.
\end{eqnarray*}
The last diagonal component is given by 
\begin{equation*}
\begin{split}
\Ein_{33} = & -\frac{1}{2}t^{-1/2}e^{-\lambda/2}\a^{1/2}
\left(\d_{t}\left[t\a^{-1/2}\left(P_{t}+\frac{1}{2}\lambda_{t}
-\frac{\a_{t}}{\a}-\frac{3}{2t}\right)\right]
-\d_{\theta}\left[t\a^{1/2}\left(P_{\theta}+\frac{1}{2}\lambda_{\theta}\right)
\right]\right)\\
 & -\frac{1}{4}t^{1/2}e^{-\lambda/2}
\left[P_{t}^{2}-e^{2P}Q_{t}^{2}-\a (P_{\theta}^{2}-e^{2P}Q_{\theta}^{2})
\right]\\
 & -\frac{1}{4}t^{1/2}e^{-\lambda/2}
\left(\frac{e^{\lambda/2-P}J^{2}}{t^{7/2}}
+3\frac{e^{\lambda/2+P}(K-QJ)^{2}}{t^{7/2}}\right)
+\frac{1}{4}t^{-1/2}e^{-\lambda/2}\lambda_{t}.
\end{split}
\end{equation*}
It is of interest to note that 
\begin{equation*}
\begin{split}
\Ein_{22}-\Ein_{33} = & t^{-1/2}e^{-\lambda/2}\a^{1/2}
\left(\d_{t}(t\a^{-1/2}P_{t})
-\d_{\theta}(t\a^{1/2}P_{\theta})
\right)-t^{1/2}e^{-\lambda/2}e^{2P}(Q_{t}^{2}-\a Q_{\theta}^{2})\\
 & -\frac{1}{2}t^{1/2}e^{-\lambda/2}
\left(\frac{e^{\lambda/2-P}J^{2}}{t^{7/2}}
-\frac{e^{\lambda/2+P}(K-QJ)^{2}}{t^{7/2}}\right).
\end{split}
\end{equation*}
We also have 
\begin{equation*}
\begin{split}
\Ein_{22}+\Ein_{33} = & -\frac{1}{2}t^{-1/2}e^{-\lambda/2}\a^{1/2}
\left(\d_{t}\left[t\a^{-1/2}\left(\lambda_{t}
-2\frac{\a_{t}}{\a}-\frac{3}{t}\right)\right]
-\d_{\theta}\left(t\a^{1/2}\lambda_{\theta}\right)\right)\\
 & -\frac{1}{2}t^{1/2}e^{-\lambda/2}
\left[P_{t}^{2}+e^{2P}Q_{t}^{2}-\a (P_{\theta}^{2}+e^{2P}Q_{\theta}^{2})
\right]\\
 & -t^{1/2}e^{-\lambda/2}
\left(\frac{e^{\lambda/2-P}J^{2}}{t^{7/2}}
+\frac{e^{\lambda/2+P}(K-QJ)^{2}}{t^{7/2}}\right)
+\frac{1}{2}t^{-1/2}e^{-\lambda/2}\lambda_{t}.
\end{split}
\end{equation*}
The remaining components of the Einstein tensor equal the corresponding
components of the Ricci tensor, and have consequently already been 
computed. Using the above calculations, the expressions 
(\ref{eq:Ezz})--(\ref{eq:Etth}) for Einstein's equations, $\Ein+\Lambda g=T$, 
follow.

\subsection{The Vlasov equation}\label{ssection:thevlasoveq}

The distribution function $f$ characterising the Vlasov matter is defined on 
the mass shell. The mass shell, in its turn, is given by the future directed
unit timelike vectors. Since a tangent vector in this set can be written 
$v^{\a}e_{\a}$, where 
\[
v^{0}=[1+(v^{1})^{2}+(v^{2})^{2}+(v^{3})^{2}]^{1/2},
\]
we can think of $f$ as depending on $v^{i}$, $i=1,2,3$, and the base point. 
However, due to the symmetry requirements, the distribution function only
depends on the $t\theta$-coordinates of the base point. As a consequence, 
the distribution function can be considered to be a function of $(t,\theta,v)$,
where $v=(v^{1}, v^{2}, v^{3})$. In order to derive an equation for $f$, 
recall that the Vlasov equation is equivalent to $f$ being constant along 
future directed unit timelike geodesics. Consider, therefore, a future 
directed unit timelike geodesic 
\[
\g(s)=[t(s),\theta(s),x(s),y(s)]
\]
in a $\tn{2}$-symmetric spacetime. Define the functions $v^{\a}(s)$
by the equality
\[
\dot{\g}(s)=v^{\a}(s)\left. e_{\a}\right|_{\g(s)}.
\]
Note that 
\begin{eqnarray}
\frac{dt}{ds}(s) & = & t^{1/4}(s)(e^{-\lambda/4})\circ\g(s)\, v^{0}(s),
\label{eq:dtds}\\
\frac{d\theta}{ds}(s) & = & t^{1/4}(s)(e^{-\lambda/4}
\a^{1/2})\circ\g(s)\, v^{1}(s).\label{eq:dthds}
\end{eqnarray}
Let 
\[
\nu(s)=[t(s),\theta(s),v(s)],\ \ \
h=f\circ\nu,
\]
where $v(s)=[v^{1}(s), v^{2}(s), v^{3}(s)]$. The requirement that $f$ be 
constant along geodesics is equivalent to the requirement that $dh/ds=0$
regardless of the choice of future directed unit timelike geodesic $\g$. On 
the other hand, 
\[
\frac{dh}{ds}=\frac{\d f}{\d t}\circ\nu\, \frac{dt}{ds}
+\frac{\d f}{\d \theta}\circ\nu\, \frac{d\theta}{ds}
+\sum_{i=1}^{3}\frac{\d f}{\d v^{i}}\circ\nu\, \frac{dv^{i}}{ds}.
\]
Keeping (\ref{eq:dtds}) and (\ref{eq:dthds}) in mind, the requirement 
that $dh/ds=0$ is equivalent to the requirement that 
\[
\frac{\d f}{\d t}\circ\nu
+\a^{1/2}\circ\g\frac{v^{1}}{v^{0}}
\frac{\d f}{\d \theta}\circ\nu
+\sum_{i=1}^{3}\frac{\dot{v}^{i}}{t^{1/4}e^{-\lambda\circ\g/4}v^{0}}
\frac{\d f}{\d v^{i}}\circ\nu=0.
\]
In order to derive an expression for $\dot{v}$, note that 
\[
0=\ddot{\g}=\frac{d}{ds}(v^{\a}e_{\a})=\dot{v}^{\a}e_{\a}+
v^{\b}\nabla_{\dot{\g}}e_{\b}=\dot{v}^{\a}e_{\a}+
v^{\b}v^{\mu}\nabla_{e_{\mu}}e_{\b}=(\dot{v}^{\a}
+v^{\b}v^{\mu}\G^{\a}_{\mu\b})e_{\a}.
\]
The geodesic equation can thus be written 
\[
\dot{v}^{\a}=-v^{\b}v^{\mu}\G^{\a}_{\b\mu}.
\]
Using this formula, it can be calculated that 
\[
\dot{v}^{1} =-\g^{0}_{01}v^{0}v^{0}+\g^{1}_{01}v^{0}v^{1}+\g^{2}_{01}v^{0}v^{2}
+\g^{3}_{01}v^{0}v^{3}-\g^{2}_{12}v^{2}v^{2}-\g^{3}_{13}v^{3}v^{3}-\g^{2}_{13}v^{2}v^{3}.
\]
Using the formulae for $\g^{\a}_{\lambda\mu}$, we conclude that 
\begin{equation*}
\begin{split}
\frac{\dot{v}^{1}}{t^{1/4}e^{-\lambda/4}v^{0}} = & 
-\frac{1}{4}\a^{1/2}\lambda_{\theta}\,v^{0}
-\frac{1}{4}\left(\lambda_{t}-2\frac{\alpha_t}{\alpha}-\frac{1}{t}\right)v^{1}
+\a^{1/2}e^{P}Q_{\theta}\frac{v^2v^3}{v^0}
\\
& -\frac{1}{2}\a^{1/2}P_{\theta}\frac{(v^{3})^{2}
-(v^{2})^{2}}{v^{0}}
+t^{-7/4}e^{\lambda/4}\big(e^{-P/2}Jv^2+e^{P/2}(K-Q J)v^3\big),
\end{split}
\end{equation*}
where we have omitted composition with $\g$ for the sake of brevity.
We also have 
\[
\dot{v}^{2}=\g^{2}_{02}v^{0}v^{2}+\g^{2}_{12}v^{1}v^{2},
\]
so that 
\[
\frac{\dot{v}^{2}}{t^{1/4}e^{-\lambda/4}v^{0}}=
-\frac{1}{2}\left(P_{t}+\frac{1}{t}\right)v^{2}
-\frac{1}{2}\a^{1/2}P_{\theta}\frac{v^{1}v^{2}}{v^{0}}.
\]
Finally, 
\[
\dot{v}^{3}=\g^{2}_{03}v^{0}v^{2}+\g^{3}_{03}v^{0}v^{3}+\g^{2}_{13}v^{1}v^{2}
+\g^{3}_{13}v^{1}v^{3},
\]
so that 
\[
\frac{\dot{v}^{3}}{t^{1/4}e^{-\lambda/4}v^{0}}=
-\frac{1}{2}\left(\frac{1}{t}-P_{t}\right)v^{3}
+\frac{1}{2}\a^{1/2}P_{\theta}\frac{v^{1}v^{3}}{v^{0}}-e^{P}v^2
\left(Q_t+\a^{1/2}Q_{\theta}\frac{v^1}{v^0}\right).
\]
Adding up the above computations, we conclude that the Vlasov equation is 
equivalent to the requirement that (\ref{vv}) holds.

\section{Notation}\label{section:notation}

\textbf{Metric variables, twist quantities, cosmological constant,
frame, manifold.} 
\begin{enumerate}
\item $\a>0$, $\lambda$, $P$, $Q$, $G$ and $H$. These are the functions
characterising the metric; cf. (\ref{eq:metric}). 
\item $\hl$ is defined in (\ref{eq:hldef}).
\item $J$ and $K$. These are the twist quantities defined in (\ref{eq:twq}). 
They also satisfy (\ref{eq:JKdef}). 
\item $\Lambda$ and $\mH$. $\Lambda$ is the positive cosmological constant 
and $\mH=(\Lambda/3)^{1/2}$. 
\item $\{ e_{\a}\}$ is the orthonormal frame defined in (\ref{eq:ONframe}). 
\item $t_{0}$, $t_{1}$. In the situations of interest in this paper, the metric
(\ref{eq:metric}) is defined on $(t_{0},\infty)\times\tn{3}$, where $t_{0}\geq 0$.
When speaking of a $\tn{2}$-symmetric solution, we take it for granted that 
$t_{0}$ is defined in this way. Moreover, $t_{1}=t_{0}+2$. 
\end{enumerate}

\textbf{Variables for the characteristic system.} 
\begin{enumerate}
\item $\Theta$, $V^{1}$, $V^{2}$, $V^{3}$ are the basic variables of the  
characteristic system (\ref{eq:dThetads})--(\ref{eq:dVthds}). The symbols
$\Theta(s;t,\theta,v)$, $V(s;t,\theta,v)$ denote a solution to the 
characteristic system corresponding to initial data $(t,\theta,v)$. In
other words, $\Theta(s;t,\theta,v)$, $V(s;t,\theta,v)$, considered as 
functions of $s$, are solutions to (\ref{eq:dThetads})--(\ref{eq:dVthds}).
Moreover, $\Theta(t;t,\theta,v)=\theta$, $V(t;t,\theta,v)=v$. 
\item $\Psi$ and $Z^{i}$, $i=1,2,3$. Given a choice of derivative 
($\d_{t}$, $\d_{\theta}$ or $\d_{v^{i}}$), say $\d$, the variables $\Psi$ and 
$Z=(Z^{1},Z^{2},Z^{3})$ are defined by (\ref{eq:Psidef})--(\ref{eq:Zthdef}). 
\item $\hPs$ and $\hZ$ are the rescaled versions of $\Psi$ and $Z$, and
they are defined in (\ref{eq:hZhPsdef}). 
\item $\Psi_{j}$, $Z^{i}_{j}$, $V^{i}_{j}$ and $\Theta_{j}$ are the higher 
order derivatives of $\Psi$, $Z$, $V$ and $\Theta$. They are defined in 
(\ref{eq:hodchsys}). 
\item $\hZ^{1}_{N}$ and $\hPs_{N}$ are the rescaled versions of 
$Z^{1}_{N}$ and $\Psi_{N}$, and they are defined in (\ref{eq:hZoNdef}).
\end{enumerate}

\textbf{Notation, matter quantities.} 
\begin{enumerate}
\item $\rho$, $J_{i}$, $P_{i}$ and $S_{ij}$. The quantities $\rho$, $J_{i}$, 
$P_{i}$ and $S_{ij}$ are defined in general in (\ref{eq:rhodefetc}). 
In the case of Vlasov matter, they are defined in (\ref{eq:rhoPketcvl}). 
\end{enumerate}

\textbf{Notation, Vlasov matter.} 
\begin{enumerate}
\item $\mP$ denotes the mass shell (the set of future directed
unit timelike vectors). 
\item $f$ denotes the distribution function. For $\tn{2}$-symmetric solutions, 
$f$, however, denotes the symmetry reduced version of the distribution 
function. In other words, $f$ is considered to be a function of $t$, $\theta$, $v^{1}$, $v^{2}$
and $v^{3}$, where, if $p$ is an element of the mass shell, $(t,\theta,x,y)$ is the 
base point of $p$, and $v^{\a}$ are the components of $p$ relative to the orthonormal
frame $\{e_{\a}\}$. 
\item $f_{\sca}$ is the rescaled distribution function. It is  
given by $f_{\sca}(t,\theta,v)=f(t,\theta,t^{-1/2}v)$. 
\item $T^{\roV}_{\a\b}$ is the stress energy tensor associated with the Vlasov 
matter. It is given by (\ref{eq:vset}). The components of the stress energy 
tensor with respect to the frame $\{e_{\a}\}$ are given by (\ref{eq:rhodefetc})
and (\ref{eq:rhoPketcvl}); see also (\ref{eq:setvl}). 
\item $L^{i}$. The functions $L^{i}$ are given by 
(\ref{eq:Lodef})--(\ref{eq:Lthdef}).
\end{enumerate}

\textbf{The initial value formulation.} 
\begin{enumerate}
\item $\pros{\Sigma}$ is the projection defined in Remark~\ref{remark:prosdef}.
\item $\rho^{\roV}$ and $\current^{\roV}$ are defined in (\ref{eq:rhoini}) and 
(\ref{eq:jiini}). 
\end{enumerate}

\textbf{Auxiliary notation.} 
\begin{enumerate}
\item $\ldr{h}$. If $h$ is a scalar function, its mean value
is denoted $\ldr{h}$; cf. (\ref{eq:ldrhdef}). 
\item $\ldr{\bp}$. If $\bp$ is a vector, $\ldr{\bp}=(1+|\bp|^{2})^{1/2}$. 
\end{enumerate}

\textbf{Energies controlling the metric variables, the distribution function
and solutions to the characteristic system.} 
\begin{enumerate}
\item The $H^{l}_{\roV,\mu}$-norm is defined in (\ref{eq:fidnorm}). 
\item $\d_{\pm}$ and $\ma_{\pm}$ are defined in (\ref{eq:dpmapm}). They are 
given by 
\[
\d_{\pm}=\d_{t}\pm\a^{1/2}\d_{\theta}, \ \ \
\ma_{\pm}=(\d_{\pm}P)^{2}+
e^{2P}(\d_{\pm}Q)^{2}.
\]
\item $E_{\bas}$ is the $L^{2}$-based energy introduced in (\ref{eq:ebasdef}). 
\item $\mQ^{1}$ controls the size of the support of $f$ in the 
$v^{1}$-direction; cf. (\ref{eq:mQodef}). It is given by 
\[
\mQ^{1}(t):=\sup\{|v^{1}|: (t,\theta,v^{1},v^{2},v^{3})\in \mathrm{supp}f\}.
\]
\item $F$. The sup-norm energy $F$ is introduced in (\ref{eq:Fdef}). It is 
given by 
\[
F(t)=\sup_{\theta\in\so}\ma_{+}(t,\theta)+\sup_{\theta\in\so}\ma_{-}(t,\theta).
\]
\item $\hma_{\pm}$ and $\hF$ are introduced in (\ref{eq:hmadef}) and 
(\ref{eq:hFdef}) respectively. They are given by 
\[
\hma_{\pm}=t^{4}\ma_{\pm}+t,\ \ \
\hF(t)=\sup_{\theta\in \so}\hma_{+}(t,\theta)
+\sup_{\theta\in \so}\hma_{-}(t,\theta).
\]
\item $R^{1}$ and $\hQ^{1}$ are defined in (\ref{eq:RohQodef}). They are given 
by 
\[
R^{1}(s)=[s(V^{1}(s))^{2}+1]^{1/2}, \ \ \
\hQ^{1}(s)=[s(\mQ^{1}(s))^{2}+1]^{1/2}.
\]
\item $\mG$ is defined in (\ref{eq:mGdef}). 
\item $\hE$ is introduced in (\ref{eq:hEdef}). It is given by 
\[
\hE=\sum_{i=1}^{3}(\hZ^{i})^{2}+(\hPs)^{2}.
\]
\item $\ma_{N+1,\pm}$ is introduced in (\ref{eq:manpopm}). It is given by
\[
\ma_{N+1,\pm}=[\d_{\theta}^N P_{t}\pm\d_{\theta}^N(\a^{1/2}P_{\theta})]^{2}
+e^{2P}[\d_{\theta}^N Q_{t}\pm\d_{\theta}^N(\a^{1/2}Q_{\theta})]^{2}.
\]
\item $\hma_{N+1,\pm}$ and $\hF_{N+1}$ are introduced in 
(\ref{eq:hmaNdefhFNdef}). They are given by 
\[
\hma_{N+1,\pm}=t^{7/2}\ma_{N+1,\pm}+t^{1/2},\ \ \
\hF_{N+1}=\sup_{\theta\in\so}\hma_{N+1,+}+\sup_{\theta\in\so}\hma_{N+1,-}.
\]
\item $\hE_{N}$ is defined in (\ref{eq:hENdef}). It is given by 
\[
\hE_{N}=(\hPs_{N})^{2}+(\hZ^{1}_{N})^{2}.
\]
\item $E_{k}$ and $E$. These energies control suitably weighted
Sobolev norms of the distribution function. $E_{k}$ is defined in 
(\ref{eq:ekdef}). Moreover, $E=E_{0}$. 
\end{enumerate}

\end{document}